\let\originalleft\left
\let\originalright\right
\renewcommand{\left}{\mathopen{}\mathclose\bgroup\originalleft}
\renewcommand{\right}{\aftergroup\egroup\originalright}
  \setlist[itemize]{leftmargin=*}
  \setlist[enumerate]{leftmargin=*}
\theoremstyle{plain}
\newtheorem{thm}{Theorem}[section]
\newtheorem{lem}[thm]{Lemma}
\newtheorem{cor}[thm]{Corollary}
\newtheorem{obs}[thm]{Observation}
\crefname{lem}{Lemma}{Lemmas}
\crefname{thm}{Theorem}{Theorems}
\crefname{cor}{Corollary}{Corollaries}
\crefname{prp}{Proposition}{Propositions}
\theoremstyle{definition}
\newtheorem{dfn}[thm]{Definition}
\newtheorem{xmp}[thm]{Example}
\crefname{xmp}{Example}{Examples}
\newcommand{\eps}{\varepsilon}
\newcommand{\N}{\mathbb{N}}
\newcommand{\R}{\mathbb{R}}
\newcommand{\C}{\mathbb{C}}
\newcommand{\norm}[1]{{\|#1\|}}
\newcommand{\bits}{\{0,1\}}
\DeclareMathOperator*{\E}{\mathbb{E}}
\DeclarePairedDelimiter{\ceil}{\lceil}{\rceil}
\newcommand*\fid[2]{\mathrm{F}\Paren{#1, #2}}
\newcommand*{\Mag}[1]{\left| #1 \right|}
\newcommand*{\adj}[1]{#1^\dagger}
\newcommand*{\ketbra}[2]{|#1\rangle\!\langle #2|}
\newcommand*{\kb}[1]{\ketbra{#1}{#1}}
\newcommand*{\braket}[2]{\langle #1| #2\rangle}
\newcommand*{\Norm}[1]{\left\|#1\right\|}
\newcommand{\Paren}[1]{\left(#1\right)}
\newcommand*\dens[1]{\mathsf{D}(#1)}
\newcommand*\Dens[1]{\mathsf{D}\Paren{#1}}
\newcommand*\super[2]{\mathsf{S}(#1,#2)}
\newcommand*\spr[1]{\super #1 *}
\newcommand*\cptp[2]{\mathsf{C}(#1,#2)}
\newcommand*\sym[1]{\Pi^{\mrm{sym}}_{#1}}
\newcommand*\swap[1]{\mathrm{SWAP}_{#1}}
\newcommand*{\pr}[1]{\mathrm{Pr}(#1)}
\newcommand*{\PR}[1]{\mathrm{Pr}\left(#1\right)}
\newcommand*{\mrm}[1]{\mathrm{#1}}
\newcommand{\Brac}[1]{\left[#1\right]}
\DeclareMathOperator*{\tr}{tr}
\DeclareMathOperator*{\rank}{rank}
\newcommand*{\din}{d_{\mathrm{in}}}
\newcommand*{\dout}{d_{\mathrm{out}}}
\newcommand*{\daux}{d_{\mathrm{anc}}}
\newcommand*{\sphere}[1]{\mathbb S^{#1}}
\newcommand*\reg\mathsf
\newcommand*\chan\mathcal
\newcommand*{\dual}[1]{\overline{\chan #1}}
\newcommand*\bs\boldsymbol
\newcommand*\lsd{J}
\newcommand*\choi[1]{J_{\chan{#1}}}
\newcommand{\idsupop}{\mathcal{I}}
\newcommand*{\bra}[1]{\ensuremath{\langle #1|}}
\newcommand*{\ket}[1]{\ensuremath{|#1\rangle}}
\title{Quantum Channel Testing in Average-Case Distance}
\author{Gregory Rosenthal\footnote{University of Cambridge, University of Warwick. gar52@cam.ac.uk.}
\and Hugo Aaronson\footnote{University of Cambridge. ha406@cam.ac.uk.}
\and Sathyawageeswar Subramanian\footnote{University of Cambridge. ss2310@cam.ac.uk.}
\and Animesh Datta\footnote{University of Warwick. animesh.datta@warwick.ac.uk.}
\and Tom Gur\footnote{University of Cambridge. tom.gur@cl.cam.ac.uk.}}
\date{}
\begin{document}

\maketitle

\begin{abstract}
We study the complexity of testing properties of quantum channels. First, we show that testing identity to \emph{any} channel $\chan N: \C^{\din \times \din} \to \C^{\dout \times \dout}$ in diamond norm distance requires $\Omega(\sqrt{\din}/\eps)$ queries, even in the strongest algorithmic model that admits ancillae, coherence, and adaptivity. This is due to the worst-case nature of the distance induced by the diamond norm.

Motivated by this limitation and other theoretical and practical applications, we introduce an average-case analogue of the diamond norm, which we call the \emph{average-case imitation diamond} (ACID) norm. In the weakest algorithmic model without ancillae, coherence, or adaptivity, we prove that testing identity to certain types of channels in ACID distance can be done with complexity \emph{independent of the dimensions of the channel}, while for other types of channels the complexity depends on both the input and output dimensions. Building on previous work, we also show that identity to \emph{any} fixed channel can be tested with $\tilde{O}(\din\dout^{3/2}/\eps^2)$ queries in ACID distance and $\tilde{O}(\din^2\dout^{3/2}/\eps^2)$ queries in diamond distance in this model. Finally, we prove tight bounds on the complexity of channel tomography in ACID distance.
\end{abstract}

\newpage

\setcounter{tocdepth}{3}
\thispagestyle{empty}
\newpage
\tableofcontents
\newpage
\pagenumbering{arabic}

\section{Introduction} \label{sec:intro}

Property testing is concerned with the task of efficiently distinguishing whether a large object satisfies a given property or is far from all objects with that property, with respect to a meaningful notion of distance. In the setting of quantum computing, one may seek quantum testers for both classical objects such as Boolean functions, and quantum objects such as states or unitary transformations, as discussed in surveys by Montanaro and de Wolf~\cite{MdW13} and O'Donnell and Wright~\cite{OW21}.

Unlike most previous work, this paper is concerned with testing properties of \emph{quantum channels}, which capture the most general dynamics of quantum systems. The state of a $d$-dimensional quantum system is described by a \emph{density matrix} in $\C^{d \times d}$, meaning a positive semidefinite matrix with unit trace. A quantum channel (henceforth just ``channel") is a \emph{superoperator} or linear transformation from $\C^{\din \times \din}$ to $\C^{\dout \times \dout}$, all of whose trivial extensions are required to map every input density matrix to an output density matrix.

Fawzi, Flammarion, Garivier and Oufkir~\cite{FFGO23} considered the problem of testing whether a given blackbox implements a fixed channel $\chan N$ or is $\eps$-far from $\chan N$ in the diamond norm. This task is called \emph{testing identity to $\chan N$}, and also called \emph{channel certification}. In the weakest algorithmic model without ancillae or adaptivity, they proved that $d/\eps^{\Theta(1)}$ queries to the blackbox are necessary and sufficient to test identity to a fixed unitary channel. They also showed that $\tilde\Theta \Paren{\din^2 \dout^{3/2} / \eps^2}$ queries are necessary and sufficient to test identity to the completely depolarizing channel, which maps every $\din$-dimensional input state to the $\dout$-dimensional maximally mixed state.

However, the polynomial dependence on $\din$ and $\dout$ in the complexity of these channel testers is unsatisfactory. The goal of property testing is to obtain ultra-fast algorithms that only probe a tiny portion of their input. Indeed, a property is said to be ``testable'' if it can be tested with complexity that depends only on the proximity parameter $\eps$ and not on the size or dimension of the object. Quantum objects are large, as the dimension of the state space of a collection of $n$ quantum systems scales exponentially in $n$, so it is critical to obtain channel testers that (at worst) query the blackbox channel a number of times polylogarithmic in the dimensions of that channel.

The problem here is that diamond distance is a \emph{worst-case} distance, defined via a maximization over all input states, so two channels can be far apart even if they behave similarly except near a single input state. It is natural that such channels cannot be distinguished by a tester that does not consider the action of the blackbox channel on a large part of its input domain. In contrast, testers for Boolean functions measure distance by the fraction of the domain on which two functions differ, and this notion of statistical distance inherently captures average-case behavior. Property testing algorithms in general capitalize on local-to-global phenomena that typically arise in such average-case settings. 

This motivates the central theme of our work. We investigate the limitations of channel testing with respect to the diamond norm, introduce an average-case analogue of the diamond norm, and demonstrate the power of channel testing in this average-case distance.

\subsection{Hardness of channel testing in diamond distance}
\label{subsec:diamond-hardness}

Our first result is a $\din^{\Omega(1)}/\eps$ lower bound for testing identity to \emph{any} fixed channel in diamond distance, even in the strongest query model that allows ancillae, coherence and adaptivity. (By \emph{coherence} we mean entanglement between subsystems associated with different queries; see \cref{sec:query-models} for formal definitions of the different query models that we consider.) This provides motivation to test with respect to an average-case distance where dimension-independent complexity may be achieved.

To make this precise, recall that the trace norm $\norm{X}_1$ of a matrix $X$ equals the sum of its singular values. The trace distance $\frac12 \norm{\rho-\sigma}_1$ between states $\rho$ and $\sigma$ generalizes the notion of statistical distance between probability distributions. The trace norm for matrices induces a corresponding trace norm for superoperators, defined by $\norm{\chan L}_1 \coloneqq \max_{\norm{X}_1\leq 1} \norm{\chan L(X)}_1$ for a superoperator $\chan L$. The completely bounded trace norm, more commonly known as the \emph{diamond norm}, is defined similarly but with the maximum taken over all trivial extensions of the superoperator: for $\chan L:\C^{\din \times \din}\to\C^{\dout \times \dout}$,
\begin{equation}
\label{eq:diamond}
    \norm{\chan L}_{\diamond}
    \coloneqq \norm{\chan L \otimes \chan I_{\din}}_1
    = \displaystyle\max_{\norm{X}_1\leq 1}\norm{(\chan L \otimes \idsupop_{\din}) \cdot X}_1,
\end{equation}
where $\idsupop_{\din}: \C^{\din \times \din} \to \C^{\din \times \din}$ is the identity map. This modification of the trace norm is particularly appealing, as the distance induced by the diamond norm has a natural operational interpretation, quantifying the distinguishability between two channels when arbitrary input states and measurements are allowed. We prove the following:\footnote{By ``success probability at least $2/3$" in the theorem statement, we mean that the tester accepts with probability at least $2/3$ if $\chan M = \chan N$ and rejects with probability at least $2/3$ if $\norm{\chan M - \chan N}_\diamond \ge \eps$.

Inspection of the proof of \cref{thminf:diamond-LB} reveals that it also holds with the induced trace norm in place of the diamond norm; however, we will focus our discussion on the diamond norm for simplicity.}

\begin{restatable}[Lower bound for channel certification in diamond distance]{thm}{diamondlb}
\label{thminf:diamond-LB}
    For all fixed channels $\chan N: \C^{\din \times \din} \to \C^{\dout \times \dout}$ with $\dout \ge 2$ and all $\eps > 0$, every ancilla-assisted, coherent, adaptive algorithm requires $\Omega(\sqrt{\din} / \eps)$ queries to a channel $\chan M$ to decide whether $\chan M = \chan N$ or $\norm{\chan M - \chan N}_\diamond \ge \eps$ with success probability at least $2/3$.
\end{restatable}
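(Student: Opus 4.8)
The plan is to exhibit, for every fixed channel $\chan N$ and every unit vector $v\in\C^\din$, a channel $\chan M_v$ that is $\eps$-far from $\chan N$ in diamond distance but becomes statistically indistinguishable from $\chan N$ once $v$ is drawn Haar-randomly, and then to run a hybrid argument in the style of the Bennett--Bernstein--Brassard--Vazirani optimality proof for Grover search. (We may assume $\eps$ is at most a small constant; this is the regime of interest.) For the hard instances I will use the slack afforded by $\dout\ge 2$: for each $v$ pick a state $\sigma_v$ on $\C^\dout$ that is maximally far from $\chan N(\kb v)$ --- e.g.\ a rank-one projector onto a least eigenvector of $\chan N(\kb v)$, so that $\tfrac12\norm{\sigma_v-\chan N(\kb v)}_1\ge 1-1/\dout\ge 1/2$ --- and let $\chan R_v$ be the channel that measures $\{\kb v,\ \idop-\kb v\}$ on its input, outputs $\sigma_v$ on the first outcome, and feeds the post-measurement state into $\chan N$ on the second:
\begin{equation*}
  \chan R_v(\xi)\;=\;\bra v\xi\ket v\,\sigma_v\;+\;\chan N\Paren{(\idop-\kb v)\,\xi\,(\idop-\kb v)},
\end{equation*}
which is manifestly completely positive and trace preserving. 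I then set $\chan M_v\coloneqq(1-\eps)\chan N+\eps\,\chan R_v$. Feeding $\kb v$ into $\chan M_v-\chan N=\eps(\chan R_v-\chan N)$ gives $\eps\Paren{\sigma_v-\chan N(\kb v)}$, so $\norm{\chan M_v-\chan N}_\diamond\ge\eps\norm{\sigma_v-\chan N(\kb v)}_1\ge\eps$, and each $\chan M_v$ is a legitimate ``far'' instance. The essential structural feature is that $\chan M_v$ and $\chan N$ agree on every input supported on $v^\perp$, so the perturbation lives only in the one-dimensional direction $v$.

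For the lower bound I will model an ancilla-assisted, coherent, adaptive $q$-query tester --- which subsumes parallel and coherent strategies --- as a fixed initial state followed by $q$ rounds alternating a query $\chan M\otimes\idsupop_{\reg W}$ on the $\din$-dimensional query register (with $\reg W$ denoting all other registers) and an arbitrary unitary on the whole system, ending in a two-outcome measurement. Write $\rho_t$ for the state after $t$ queries when the channel is the reference $\chan N$ (this is independent of $v$) and $\rho_t^v$ when it is $\chan M_v$. The crux is the per-query estimate: for any state $\rho$ on the query register tensored with $\reg W$, writing $\delta\coloneqq\tr\Brac{(\kb v\otimes\idop_{\reg W})\,\rho}$,
\begin{equation*}
  \norm{\Paren{(\chan M_v-\chan N)\otimes\idsupop_{\reg W}}(\rho)}_1\;\le\;4\eps\sqrt{\delta}.
\end{equation*}
This follows by expanding $\chan M_v-\chan N=\eps(\chan R_v-\chan N)$: the ``output $\sigma_v$'' branch contributes trace norm exactly $\eps\delta$, while the remaining term is $\eps$ times $(\chan N\otimes\idsupop)$ applied to $\rho-(\Pi\otimes\idop)\rho(\Pi\otimes\idop)$ with $\Pi=\idop-\kb v$, whose trace norm is at most $\delta+2\sqrt{\delta}$ by a Cauchy--Schwarz (gentle-measurement) bound and which $\chan N$ only contracts.

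The hybrid argument, run at the level of density matrices, pivots each step through the \emph{reference} trajectory: using that channels are trace-norm contractive,
\begin{equation*}
  \norm{\rho_{t+1}^v-\rho_{t+1}}_1\;\le\;\norm{\rho_t^v-\rho_t}_1+\norm{\Paren{(\chan M_v-\chan N)\otimes\idsupop}(\rho_t)}_1\;\le\;\norm{\rho_t^v-\rho_t}_1+4\eps\sqrt{\delta_t},
\end{equation*}
where $\delta_t=\tr\Brac{(\kb v\otimes\idop)\,\rho_t}$ depends on $\chan N$ alone. Telescoping yields $\norm{\rho_q^v-\rho_q}_1\le 4\eps\sum_{t<q}\sqrt{\delta_t}$. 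Now average over $v$ Haar-distributed on the unit sphere of $\C^\din$: since $\E_v\kb v=\idop_\din/\din$ we get $\E_v\delta_t=1/\din$, hence $\E_v\sqrt{\delta_t}\le 1/\sqrt{\din}$ by Jensen, so $\E_v\norm{\rho_q^v-\rho_q}_1\le 4\eps q/\sqrt{\din}$. On the other hand, a tester with success probability $\ge 2/3$ has accept probabilities on $\chan N$ and on any $\chan M_v$ differing by at least $1/3$, so $\tfrac12\norm{\rho_q^v-\rho_q}_1\ge 1/3$ for every $v$ and hence $\E_v\norm{\rho_q^v-\rho_q}_1\ge 2/3$. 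Combining the two bounds gives $q=\Omega(\sqrt{\din}/\eps)$.

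I expect the main obstacle to be handling adaptivity correctly. A priori $\rho_t^v$ depends on the hidden direction $v$, since earlier queries already used $\chan M_v$, and a naive hybrid that bounds the $t$-th step's error along the $\chan M_v$-trajectory produces a self-referential inequality whose per-query errors feed back on themselves and, once $\eps q\gg 1$, overwhelm the estimate. Two ingredients resolve this: (i) the BBBV device of evaluating each hybrid step's error on the $v$-independent reference state $\rho_t$, which breaks the feedback; and (ii) running the argument with density matrices and the trace norm rather than with purified state vectors --- with purified vectors the $\chan M_v$-versus-$\chan N$ discrepancy partly hides in the channel's environment, which gives only a useless $O(\sqrt{\eps})$ per-query bound, whereas the trace-norm computation gives the sharp $O(\eps\sqrt{\delta})$, and the square root in $\sqrt{\delta}$ is exactly what converts the $1/\din$ average overlap into the $\sqrt{\din}$ in the final bound, just as in Grover lower bounds.
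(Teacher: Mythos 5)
Your proposal is correct and takes essentially the same route as the paper's proof: the identical hard instances $(1-\eps)\chan N + \eps\,\chan R_v$ (measure $\{\kb v, I-\kb v\}$, output a least eigenvector of $\chan N(\kb v)$ on the first outcome, apply $\chan N$ to the post-measurement state otherwise) with $v$ Haar random, the same hybrid that pivots each step through the $v$-independent reference trajectory, and the same $O(\eps\sqrt{\delta})$ per-query trace-norm bound whose Haar average yields $O(\eps/\sqrt{\din})$ per query. The only cosmetic difference is that you state the per-query bound pointwise in $v$ and average at the end, while the paper averages over $v$ inside the per-query estimate via an eigendecomposition; both give the same $\Omega(\sqrt{\din}/\eps)$ conclusion.
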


\cref{thminf:diamond-LB} generalizes the observation of Montanaro and de Wolf~\cite[Section 5.1.1]{MdW13} that testing identity to a unitary channel in diamond distance requires $\Omega(\sqrt{d})$ queries, by a reduction to the lower bound for unstructured search. We conjecture that the lower bound in \cref{thminf:diamond-LB} can be improved to $\Omega(\din/\eps)$, as we achieve for even the extremely simple channel that always outputs a fixed pure state regardless of its input:

\begin{restatable}[Lower bound for pure state replacement channel certification in diamond distance]{thm}{replacementlb} \label{rmk:ReplacementLB}
    Let $\chan N: \C^{\din \times \din} \to \C^{\dout \times \dout}$ be a pure state replacement channel, i.e.\ $\chan N(X) = \tr(X) \theta$ for some fixed pure state $\theta$ of dimension $\dout \ge 2$, and let $\eps>0$. Then every ancilla-assisted, coherent, adaptive algorithm requires $\Omega(\din/\eps)$ queries to a channel $\chan M$ to decide whether $\chan M = \chan N$ or $\norm{\chan M - \chan N}_\diamond \ge \eps$ with success probability at least $2/3$.
\end{restatable}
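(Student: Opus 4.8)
The plan is to exhibit a family of channels $\{\chan M_i\}_{i \in \{1,\dots,\din\}}$, each at diamond distance exactly $\eps$ from $\chan N$, such that no algorithm in the model of \cref{sec:query-models} making $o(\din/\eps)$ queries can distinguish $\chan N$ from a uniformly random $\chan M_i$; a standard averaging argument then yields the lower bound. We may assume $\eps\le 2$, since otherwise no channel is $\eps$-far from $\chan N$ and the problem is vacuous. Fix a pure state $\theta'$ of dimension $\dout$ orthogonal to $\theta$ (which exists since $\dout\ge 2$), put $\sigma \coloneqq (1-\eps/2)\theta + (\eps/2)\theta'$, and define the measure-and-prepare channel $\chan M_i(X) \coloneqq \tr((\idop - \kb{i})X)\,\theta + \bra{i}X\ket{i}\,\sigma$. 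Being a projective measurement in a basis containing $\ket{i}$ followed by preparation of $\theta$ or $\sigma$, $\chan M_i$ is CPTP. A direct computation gives $(\chan M_i - \chan N)(X) = \bra{i}X\ket{i}\,(\sigma - \theta) = \tfrac{\eps}{2}\bra{i}X\ket{i}(\theta'-\theta)$, so $((\chan M_i - \chan N)\otimes \idsupop_{\din})(X) = (\sigma-\theta)\otimes\big[(\bra{i}\otimes\idop)X(\ket{i}\otimes\idop)\big]$, whose trace norm is $\norm{\sigma-\theta}_1$ times the trace of a PSD matrix of trace at most $\norm{X}_1$; hence $\norm{\chan M_i - \chan N}_\diamond = \norm{\sigma-\theta}_1 = \eps$ (the maximum being attained already without an ancilla, at $X=\kb{i}$).

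For indistinguishability, model a $T$-query ancilla-assisted coherent adaptive algorithm as a fixed input state, an alternation of blackbox-independent superoperators with applications of $\chan B\otimes\idsupop$ (where $\chan B$ is the blackbox), and a final two-outcome measurement. I would run the hybrid argument by comparing against the all-$\chan N$ run rather than telescoping between consecutive hybrids: let $G_k$ be the final state of the process that uses $\chan N$ for its first $k$ queries and $\chan M_i$ for the rest, so $G_0$ and $G_T$ are the true runs against $\chan M_i$ and against $\chan N$. Since $\chan N$ discards its input, the joint state just before query $k$ in $G_{k-1}$ and in $G_k$ is the same and, crucially, \emph{independent of $i$}; writing $\rho^{(k)}_{\mathsf A}$ for the reduced state fed to that query, contractivity of the trace norm under the common subsequent operations together with the computation above gives $\norm{G_{k-1}-G_k}_1 \le \norm{\big((\chan M_i - \chan N)\otimes\idsupop\big)(\omega^{(k-1)})}_1 = \eps\,\bra{i}\rho^{(k)}_{\mathsf A}\ket{i}$, with $\rho^{(k)}_{\mathsf A}$ not depending on $i$. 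Summing over $k$,
\begin{equation*}
  \Norm{G_0 - G_T}_1 \;\le\; \eps \sum_{k=1}^{T} \bra{i}\rho^{(k)}_{\mathsf A}\ket{i}.
\end{equation*}

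The last step is the averaging that neutralizes any attempt to concentrate queries on a few directions: since each $\rho^{(k)}_{\mathsf A}$ is a density matrix and is independent of $i$,
\begin{equation*}
  \E_{i}\Norm{G_0 - G_T}_1 \;\le\; \eps \sum_{k=1}^{T} \frac{1}{\din}\tr\big(\rho^{(k)}_{\mathsf A}\big) \;=\; \frac{T\eps}{\din},
\end{equation*}
where $i$ is uniform. On the other hand, a tester with success probability at least $2/3$ accepts $\chan N$ with probability at least $2/3$ and accepts each (exactly $\eps$-far) channel $\chan M_i$ with probability at most $1/3$, so its final measurement witnesses $\norm{G_0 - G_T}_1 \ge 2/3$ for every $i$, hence also in expectation over $i$. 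Comparing the two displays gives $T\eps/\din \ge 2/3$, i.e.\ $T = \Omega(\din/\eps)$.

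I expect the main obstacle to be the hybrid step: the naive telescoping between consecutive hybrids forces the query inputs to depend on the hidden index $i$ (earlier queries there use $\chan M_i$), after which averaging over $i$ gains nothing; the fix is precisely to telescope against the all-$\chan N$ run, exploiting that $\chan N$ erases its input so the ``$\chan N$-side'' states are $i$-independent — which is also the reason coherence and adaptivity are powerless against a replacement channel. The remaining work (checking that $\chan M_i$ is a channel and computing its diamond distance exactly) is routine.
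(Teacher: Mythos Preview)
Your proof is correct and follows essentially the same approach as the paper: a hybrid argument telescoped against the all-$\chan N$ execution, exploiting that the pre-query state on that side is independent of the hidden randomness so that averaging yields the $1/\din$ factor per query. The only difference is cosmetic---you randomize over standard-basis states $\ket{i}$ whereas the paper uses a Haar-random pure state $\bs\phi$, and you present the hybrid as an explicit telescoping sum while the paper writes it as an induction on $\E\norm{\rho_j-\bs\tau_j}_1$.
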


\subsection{An average-case analogue of the diamond norm}
\label{subsec:main-results}
Thus motivated, we now introduce an average-case analogue of the diamond norm. A natural approach is to replace the maximum in the definition \cref{eq:diamond} of the diamond norm with an expectation: for a superoperator $\chan L:\C^{\din \times \din} \to \C^{\dout \times \dout}$, let
\begin{equation*}
    \norm{\chan L}_{\mathrm{avg}} \coloneqq \displaystyle\E_{\bs\psi}\norm{(\chan L \otimes \idsupop_{\din}) \cdot\bs\psi}_1,
\end{equation*}
where $\bs\psi \in \Paren{\C^{\din\times\din}}^{\otimes 2}$ is a Haar random (pure) state.\footnote{Throughout the paper, we will use boldface font to denote random variables.} However, $\norm\cdot_{\mathrm{avg}}$ has the undesirable feature of being sensitive to the dimension of the ancillary register. In the definition \cref{eq:diamond} of the diamond norm, this register may have dimension $\din$ without loss of generality~\cite[Theorem 3.46]{Wat18}, in the sense that if $X$ ranges over $\C^{\din \times \din} \otimes \C^{\daux \times \daux}$ for some $\daux \ge \din$ then
\begin{equation*}
    \norm{\chan L}_\diamond = \max_{\norm{X}_1 \le 1} \Norm{(\chan L \otimes \chan I_{\daux}) \cdot X}_1.
\end{equation*}
If an analogous statement were to fail to hold for $\norm\cdot_{\mrm{avg}}$, then it would not be clear why any one value of $\daux$ should be better motivated than any other. It is also not immediately clear that $\norm{(\chan L \otimes \chan I_{\daux}) \cdot\bs\psi}_1$ is concentrated around its mean, even when $\daux = \din$, and this condition is necessary for $\norm\cdot_{\mrm{avg}}$ to describe the behavior of $\chan L$ on ``typical" inputs (unlike the diamond norm).

Luckily though, for a wide range of values of $\daux$, the quantity $\norm{(\chan L \otimes \chan I_{\daux}) \cdot\bs\psi}_1$ \emph{does} concentrate around its mean, and furthermore its mean is \emph{independent} of $\daux$ (up to a universal constant factor). To state this result more precisely, let
\begin{equation*}
    \Phi_d = \frac1d \sum_{i,j=1}^d \ketbra{ii}{jj}
\end{equation*}
denote the maximally entangled state, and let
\begin{equation*}
    \choi L \coloneqq \Paren{\chan L \otimes \idsupop_{\din}} \cdot \Phi_{\din}
\end{equation*}
denote the \emph{Choi operator} of a superoperator $\chan L: \C^{\din \times \din} \to \C^{\dout \times \dout}$. (The $J$ notation alludes to the \emph{Choi--Jamiołkowski isomorphism} between $\chan L$ and $\choi L$.) In \cref{sec:E-Phi-def} we prove that $\Norm{(\chan L \otimes \chan I_{\daux}) \otimes \bs\psi}_1$ is concentrated around $\Norm{\choi L}_1$ for $\daux \ge \Omega(\din)$:

\begin{thm}[Informal compilation of \cref{cor:haar-e,thm:levy-app,thm:up-tail-haar}] \label{thminf:compil}
     Let $\chan L: \C^{\din \times \din} \to \C^{\dout \times \dout}$ be a superoperator, let $\daux \ge \Omega(\din)$, and let $\bs\psi \in \C^{\din \times \din} \otimes \C^{\daux \times \daux}$ be a Haar random state. Then $\E \norm{(\chan L \otimes \chan I_{\daux}) \cdot \bs\psi}_1 = \Theta(\norm{\choi L}_1)$, with high probability $\norm{(\chan L \otimes \chan I_{\daux}) \cdot \bs\psi}_1 \le O(\norm{\choi L}_1)$, and (under a slightly stronger assumption\footnote{Specifically, assuming that $\norm{\chan L}_{\diamond} \le o(\din \norm{\choi L}_1)$, which holds in almost all cases by \cref{thm:diamond-acid-relat}. Or alternatively, assuming $\daux \ge \omega(\din)$ rather than just $\daux \ge \Omega(\din)$.}) with high probability $\norm{(\chan L \otimes \chan I_{\daux}) \cdot \bs\psi}_1 \ge \Omega(\norm{\choi L}_1)$, where the asymptotic notation hides universal multiplicative constants.
\end{thm}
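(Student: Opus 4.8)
The plan is to pass to a Gaussian model of the Haar measure and reduce everything to a random-matrix estimate. Write $\ket{\bs\psi} = \norm{\bs A}_F^{-1}\,(\idop_{\din}\otimes\bs A)\ket{\Omega_{\din}}$, where $\ket{\Omega_d} = \sum_{i=1}^{d}\ket{ii}$ and $\bs A\in\C^{\daux\times\din}$ has i.i.d.\ standard complex Gaussian entries; since $\mathrm{vec}(\bs A)/\norm{\bs A}_F$ is uniform on the unit sphere of $\C^{\din\daux}$, this $\bs\psi$ is Haar-random. Using $\ketbra{\Omega_\din}{\Omega_\din} = \din\,\Phi_\din$ and the fact that $\hat{\bs A}\coloneqq\bs A/\norm{\bs A}_F$ acts on the reference factor, disjoint from $\chan L$, one gets $(\chan L\otimes\idsupop_\daux)\cdot\bs\psi = \din\,(\idop_\dout\otimes\hat{\bs A})\,\choi L\,(\idop_\dout\otimes\adj{\hat{\bs A}})$, hence
\[
\Norm{(\chan L\otimes\idsupop_\daux)\cdot\bs\psi}_1 \;=\; \frac{\din}{\norm{\bs A}_F^2}\;\Norm{(\idop_\dout\otimes\bs A)\,\choi L\,(\idop_\dout\otimes\adj{\bs A})}_1 .
\]
Because the direction $\hat{\bs A}$ and the magnitude $\norm{\bs A}_F$ of a Gaussian vector are independent and $\E\norm{\bs A}_F^2 = \din\daux$, taking expectations collapses the right-hand side to $\tfrac1\daux\,\E_{\bs A}\Norm{(\idop_\dout\otimes\bs A)\,\choi L\,(\idop_\dout\otimes\adj{\bs A})}_1$. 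It therefore suffices to show that this Gaussian quantity is $\Theta(\daux\norm{\choi L}_1)$ — in expectation, and (under the stated hypotheses) with high probability — together with the routine fact that $\norm{\bs A}_F^2$, a sum of $\din\daux$ i.i.d.\ exponentials, lies in $(1\pm o(1))\din\daux$ with probability $1-e^{-\Omega(\din\daux)}$.

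For the upper bounds I would use $\Norm{(\idop_\dout\otimes\bs A)\,\choi L\,(\idop_\dout\otimes\adj{\bs A})}_1 \le \sigma_{\max}(\bs A)^2\,\norm{\choi L}_1$, combined with the standard facts that $\E\,\sigma_{\max}(\bs A)^2 = O(\max(\daux,\din)) = O(\daux)$ (using $\daux\ge\Omega(\din)$) and that $\sigma_{\max}(\bs A) = O(\sqrt\daux)$ with probability $1-e^{-\Omega(\daux)}$ by Gaussian concentration of the $1$-Lipschitz map $\bs A\mapsto\sigma_{\max}(\bs A)$. Together with the concentration of $\norm{\bs A}_F^2$ from the previous paragraph, this yields $\E\Norm{(\chan L\otimes\idsupop_\daux)\cdot\bs\psi}_1 = O(\norm{\choi L}_1)$ and $\Norm{(\chan L\otimes\idsupop_\daux)\cdot\bs\psi}_1 = O(\norm{\choi L}_1)$ with high probability, using no hypothesis beyond $\daux\ge\Omega(\din)$.

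The crux is the matching lower bound, for which I would exhibit an explicit dual witness. Write a singular value decomposition $\choi L = \sum_k s_k\,\ketbra{u_k}{v_k}$ and set $W_0 = \sum_k\ketbra{v_k}{u_k}$, a contraction with $\tr(\choi L\,W_0) = \sum_k s_k = \norm{\choi L}_1$. For $\daux\ge\din$ the operator $B\coloneqq\idop_\dout\otimes\bs A$ almost surely has a left inverse $B^{+}\coloneqq\idop_\dout\otimes\bs A^{+}$; testing $B\,\choi L\,\adj B$ against the contraction $\sigma_{\min}(\bs A)^2\,\adj{(B^{+})}\,W_0\,B^{+}$ and using $\adj B\,\adj{(B^{+})} = \idop = B^{+}B$, the trace collapses to $\sigma_{\min}(\bs A)^2\,\tr(\choi L\,W_0)$, which gives the pointwise bound
\[
\Norm{(\idop_\dout\otimes\bs A)\,\choi L\,(\idop_\dout\otimes\adj{\bs A})}_1 \;\ge\; \sigma_{\min}(\bs A)^2\,\norm{\choi L}_1 .
\]
Taking expectations and using Gordon's inequality $\E\,\sigma_{\min}(\bs A)\ge\sqrt\daux-\sqrt\din$ together with Jensen gives $\Omega(\daux\norm{\choi L}_1)$ for the Gaussian expectation as soon as $\daux$ exceeds $\din$ by a sufficiently large constant factor — which is exactly where the hypothesis $\daux\ge\Omega(\din)$, with its hidden constant strictly larger than $1$, is used — and hence $\E\Norm{(\chan L\otimes\idsupop_\daux)\cdot\bs\psi}_1 = \Omega(\norm{\choi L}_1)$. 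Moreover, if $\daux\ge\omega(\din)$ then $\sigma_{\min}(\bs A)^2\ge(1-o(1))\daux$ with high probability (Gaussian concentration of $\sigma_{\min}$ again), so the pointwise bound already delivers $\Norm{(\chan L\otimes\idsupop_\daux)\cdot\bs\psi}_1\ge(1-o(1))\norm{\choi L}_1$ with high probability. I expect finding this witness — and, in particular, extracting a universal constant out of the $\sigma_{\min}$ estimate — to be the main obstacle: the naive route of approximating $\adj{\bs A}\bs A\approx\daux\,\idop_\din$ fails, since the fluctuations of $\adj{\bs A}\bs A$ are of order $\sqrt{\daux\din}$, which is comparable to $\daux$ exactly in the regime $\daux = \Theta(\din)$.

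It remains to prove the high-probability lower bound in the complementary regime, where one only assumes $\norm{\chan L}_\diamond = o(\din\norm{\choi L}_1)$ (which holds for generic $\chan L$ by \cref{thm:diamond-acid-relat}); here I would invoke L\'evy's lemma. The map $\ket\psi\mapsto\Norm{(\chan L\otimes\idsupop_\daux)\cdot\kb{\psi}}_1$ on the unit sphere of $\C^{\din\daux}$ is $O(\norm{\chan L}_\diamond)$-Lipschitz, since $\Norm{(\chan L\otimes\idsupop_\daux)\cdot X}_1\le\norm{\chan L}_\diamond\norm X_1$ (the induced trace norm of $\chan L\otimes\idsupop_\daux$ is at most $\norm{\chan L}_\diamond$) and $\norm{\kb{\psi}-\kb{\phi}}_1\le 2\norm{\ket\psi-\ket\phi}_2$. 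Hence it concentrates around its mean $\Theta(\norm{\choi L}_1)$ — computed in the previous paragraphs — with fluctuations of order $\norm{\chan L}_\diamond/\sqrt{\din\daux}$, so taking the deviation parameter to be a constant fraction of the mean, the failure probability is $\exp\!\Paren{-\Omega\!\Paren{\din\daux\,\norm{\choi L}_1^2/\norm{\chan L}_\diamond^2}}$. This is $o(1)$ precisely when $\norm{\chan L}_\diamond = o\!\Paren{\sqrt{\din\daux}\,\norm{\choi L}_1}$, and since $\sqrt{\din\daux}\ge\Omega(\din)$ when $\daux\ge\Omega(\din)$, the assumed bound $\norm{\chan L}_\diamond = o(\din\norm{\choi L}_1)$ is more than sufficient; the other allowed alternative, $\daux\ge\omega(\din)$, was already handled by the $\sigma_{\min}$ argument above.
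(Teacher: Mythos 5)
Your proposal is correct in its essentials but takes a genuinely different route from the paper, and it is worth spelling out the trade. The paper never touches extreme singular values of a Gaussian matrix: it works with the reduced state $\bs\rho$ on the input register, proves the structural bound $\norm{\chan L}_\rho \le \tr(\rho\sigma)\,\din\norm{\chan L}_\lsd$ for a fixed $\sigma$ depending on $\chan L$ (\cref{lem:ip}, via an SVD of the Choi operator and AM--GM), and obtains the expectation bounds by an \emph{exact} second-moment computation over the unitary group: $\E\bigl[(\sqrt{\bs\rho}\otimes I)\Psi(\sqrt{\bs\rho}\otimes I)\bigr]$ is resolved into a linear combination of $\Psi$ and $I$, yielding $\frac{\daux}{2\din+\daux}\norm{\chan L}_\lsd \le \E\norm{\chan L}_{\bs\rho} \le \norm{\chan L}_\lsd$ (\cref{prp:u-inv,cor:haar-e}) --- valid for \emph{every} $\daux$ on the upper side and for every constant ratio $\daux/\din$, however small, on the lower side; the upper tail (\cref{thm:up-tail-haar}) then follows from \cref{lem:ip}, a union bound over the eigenvectors of $\sigma$, and $\chi^2$ tails, and needs only $\daux \ge \omega(\log \din)$. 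Your Gaussian representation $(\chan L\otimes\chan I)\bs\psi = \din\,(\idop\otimes\hat{\bs A})\choi L(\idop\otimes\adj{\hat{\bs A}})$ is the same object as the paper's \cref{prp:equiv-def} in disguise, but your pointwise sandwich $\sigma_{\min}(\bs A)^2\norm{\choi L}_1 \le \norm{(\idop\otimes\bs A)\choi L(\idop\otimes\adj{\bs A})}_1 \le \sigma_{\max}(\bs A)^2\norm{\choi L}_1$ (the pseudoinverse dual witness is valid --- $\adj B(B^+)^\dagger = B^+B = \idop$ and the witness is a contraction) is a genuinely different and arguably slicker mechanism: in the regime $\daux \ge \omega(\din)$ it delivers the two-sided high-probability statement in one stroke, with no diamond-norm hypothesis and no L\'evy argument. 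The price, which you candidly flag, is that $\sigma_{\min}$ degenerates as $\daux \downarrow \din$ and vanishes for $\daux < \din$, so your lower bounds only kick in once $\daux \ge C\din$ for a sufficiently large constant $C$; this still discharges the literal statement (the hidden constant in ``$\daux \ge \Omega(\din)$'' is the prover's to choose), but it is strictly weaker than \cref{cor:haar-e}. Two small points to tighten if you write this up: the expectation you actually need to lower-bound is $\E[\din\,\sigma_{\min}(\bs A)^2/\norm{\bs A}_F^2]$ rather than $\E[\sigma_{\min}(\bs A)^2]/\E[\norm{\bs A}_F^2]$ (routine to fix by conditioning on $\norm{\bs A}_F^2 \le 2\din\daux$, since $\sigma_{\min}^2 \le \norm{\bs A}_F^2/\din$ pointwise controls the discarded event), and Gordon's inequality as you quote it is the real-Gaussian statement, so you should either cite its complex analogue or pass to the real $2\daux\times 2\din$ model. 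Your final L\'evy paragraph coincides with the paper's \cref{thm:levy-app}.
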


For sufficiently large values of $\daux$, a Haar random state will be close to maximally entangled and therefore \cref{thminf:compil} will follow immediately from the triangle inequality, but the threshold $\daux \ge \Omega(\din)$ is far too low for such an argument to go through (as we show in \cref{app:S5-nontrivial}) so \cref{thminf:compil} is nontrivial. The lack of explicit averaging in $\norm{\choi L}_1$ makes it a more convenient quantity to work with than $\norm{\chan L}_{\mrm{avg}}$, and with $\norm{\choi L}_1$ there is no ambiguity regarding the dimension of the ancillary register, so we take $\norm{\choi L}_1$ as our \emph{definition} of the average-case norm:

\begin{dfn}[ACID norm] \label{dfn:acid}
    The \emph{average-case imitation diamond (ACID) norm} of a superoperator $\chan L: \C^{\din \times \din} \to \C^{\dout \times \dout}$ is the quantity
    \begin{equation*}
        \norm{\chan L}_{\lsd} \coloneqq 
        \norm{\choi L}_1 =
        \norm{(\chan L \otimes \chan I_{\din}) \cdot \Phi_{\din}}_1.
    \end{equation*}
\end{dfn}

Montanaro and de Wolf~\cite[Section 5.2]{MdW13} briefly proposed property testing of arbitrary channels in the ACID norm as well, albeit not by this name and without the motivations we give. That the ACID norm is indeed a norm follows from the fact that the trace norm is a norm. The ACID norm is defined similarly to the diamond norm, except that instead of maximizing over all bipartite input states, the input is fixed to the maximally entangled state. However this does \emph{not} mean that optimal channel testing in ACID distance is as simple as optimal state testing in trace distance for the corresponding property of the Choi state, as we will see in \cref{subsec:acid-upper-bounds}.

In \cref{sec:acid-definitions} we relate the ACID norm to other quantities of interest. We show that it generalizes average-case distances used in property testing of Boolean functions (i.e.\ statistical distance) and in property testing of unitary transformations~\cite{Low2009Avg,MO10,Wang2011Unitary,MdW13,Chen2023Junta,Zhao+23}. This further motivates our definition of the ACID norm, especially since the ACID norm already has the ``right" multiplicative constant for some of these generalizations (unlike $\norm\cdot_{\mrm{avg}}$). Additionally, Montanaro and de Wolf~\cite[Lemma 25]{MdW13} proved that ACID distance is quadratically related to a distance used by Wang~\cite{wang2012} in POVM testing. We also compare the ACID norm to the diamond norm and to the ``average-case induced trace norm" $\E\norm{\chan L(\bs\psi)}_1$ of a superoperator $\chan L$. The latter quantity is also an ``average-case norm", but it seems to lack most of the other motivations that we give for the ACID norm. Finally, we observe that the ACID norm shares certain convenient mathematical properties with the diamond norm, and discuss the prospect of proving a version of the quantum fault-tolerance theorem with the ACID norm in place of the diamond norm.

Besides \cref{thminf:compil}, another sense in which the ACID norm is ``average-case" is that the reduced state on the first register of $\Phi_{\din}$ is maximally mixed, and this is the input to $\chan L$ in the definition of $\choi L$. One can also define variants of the ACID norm with an arbitrary bipartite pure state $\psi$ in place of $\Phi$, i.e.\ the quantity $\norm{(\chan L \otimes \chan I) \cdot \psi}_1$, and each possible reduced state on the first register of $\psi$ can be thought of as specifying a different average-case problem~\cite[top of page 23]{Bos+23}. In this sense \cref{thminf:compil} says that the ACID norm is the ``\emph{average} average-case norm".

Finally, there are also practical motivations for channel testing in the ACID norm. A primary application of channel testing is to determine whether a quantum device built in a laboratory or supplied by a third party actually implements the target channel it was allegedly designed to implement. In some applications the device will always take as input half of a maximally entangled state---examples include nonlocal games~\cite{CHSH69}, quantum teleportation~\cite[Sec.\ 6.2.4]{Wil19}, the encoding scheme in superdense coding~\cite[Sec.\ 6.2.3]{Wil19}, entanglement dilution~\cite[Sec.\ 19]{Wil19}, and various protocols for quantum communication over a noisy channel~\cite[Part VI]{Wil19}---and in these cases ACID distance describes the trace distance between the actual and desired states of the bipartite system arising from the faultiness of the quantum device.

\subsection{Channel certification and tomography in ACID distance}
\label{subsec:acid-upper-bounds}
A \emph{channel tester} is an algorithm that makes queries to a channel $\chan M$ and tries to decide whether $\chan M$ satisfies or is far from some property. We consider three resources which a channel tester may or may not have access to, given the tendency for quantum systems to decohere over time and lose their quantum properties such as entanglement and superposition. First, \emph{ancillae}: does the tester have access to a system of arbitrarily large dimension, or only to a $\din$-dimensional system, barely large enough to apply $\chan M$ to (and which is reset after measuring the output of $\chan M$)? Second, \emph{coherence}: if the tester does have ancillae, can it apply $\chan M$ on different subsystems of an entangled input state and then perform an entangled measurement on the entire output? Or must the tester partition its system as the tensor product of always-unentangled subsystems with only one query to $\chan M$ made within any given subsystem? And third, \emph{adaptivity}: can the input to subsequent queries depend on the output of previous queries, or must all queries be made in parallel?

We now present a series of results on channel testing in ACID distance, which we prove in \cref{sec:acid-certification} and which we summarize and compare to previous work in \cref{tab:Results}. Consider the task of testing identity to a fixed channel $\chan N$. Bădescu, O’Donnell and Wright~\cite{BOW19} proved that for all states $\sigma \in \C^{d \times d}$, there is an algorithm that performs an entangled measurement on $O(d/\eps^2)$ copies of an unknown state $\rho \in \C^{d \times d}$ and decides whether $\rho = \sigma$ or $\norm{\rho - \sigma}_1 \ge \eps$ with success probability at least $2/3$.\footnote{In fact, they proved the stronger statement that given $O(d/\eps^2)$ copies of \emph{two} unknown states $\rho$ and $\sigma$, an entangled measurement can decide whether $\rho = \sigma$ or $\norm{\rho - \sigma}_1 \ge \eps$ with success probability at least $2/3$. Thus \cref{thm:trivial} generalizes to testing equality between two unknown channels given query access to both of them.} Since $\norm{\chan M - \chan N}_{\lsd} = \norm{\choi M - \choi N}_1$ by definition and since $\choi M$ can be constructed using one query to $\chan M$, the following is immediate by applying the above algorithm with $\sigma = \choi N$ and $\rho = \choi M$ (and $d = \din \dout$):

\begin{thm}[Coherent channel certification] \label{thm:trivial}
    For all fixed channels $\chan N: \C^{\din \times \din} \to \C^{\dout \times \dout}$ and $\eps>0$, there is an ancilla-assisted, coherent, non-adaptive algorithm that makes $O(\din \dout / \eps^2)$ queries to a channel $\chan M$, and decides whether $\chan M = \chan N$ or $\norm{\chan M - \chan N}_\lsd \ge \eps$ with success probability at least $2/3$.
\end{thm}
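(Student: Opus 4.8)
The plan is to reduce channel certification in ACID distance directly to state certification in trace distance for the corresponding Choi states, and then invoke the state certification algorithm of Bădescu, O'Donnell and Wright~\cite{BOW19} quoted above. First I would observe that a single query to $\chan M$ suffices to prepare one copy of its Choi operator: prepare the maximally entangled state $\Phi_{\din}$ on two $\din$-dimensional registers (the second of which serves as an ancilla), apply $\chan M$ to the first register, and the resulting state on the $\dout$-dimensional output register together with the $\din$-dimensional ancilla is exactly $\choi M = (\chan M \otimes \idsupop_{\din}) \cdot \Phi_{\din}$, a density matrix of dimension $d \coloneqq \din\dout$. Running this subroutine $k$ times on disjoint registers produces $k$ copies of $\choi M$ using $k$ queries, with no adaptivity and no entanglement between the queries themselves.

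Next, since $\chan N$ is fixed, the tester knows the description of $\sigma \coloneqq \choi N$, which is also a density matrix of dimension $d$. By \cref{dfn:acid} we have $\norm{\chan M - \chan N}_\lsd = \norm{\choi M - \choi N}_1$, and by the Choi--Jamiołkowski isomorphism $\chan M = \chan N$ if and only if $\choi M = \choi N$. Hence deciding whether $\chan M = \chan N$ or $\norm{\chan M - \chan N}_\lsd \ge \eps$ is precisely the task of deciding whether $\rho \coloneqq \choi M$ equals $\sigma$ or satisfies $\norm{\rho - \sigma}_1 \ge \eps$. Feeding the $O(d/\eps^2) = O(\din\dout/\eps^2)$ copies of $\rho$ produced above into the BOW19 algorithm for $\sigma$ yields a procedure that succeeds with probability at least $2/3$. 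The overall algorithm is ancilla-assisted (it uses the $\din$-dimensional second half of each $\Phi_{\din}$, and the final collective measurement acts jointly on all copies), coherent (that measurement is entangled across copies), and non-adaptive (all queries are the $k$ parallel Choi-preparation subroutines, performed before any measurement).

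Essentially every ingredient is already in place, so there is no real technical obstacle; the only points requiring care are (i) the dimension bookkeeping --- the Choi state lives in dimension $\din\dout$ rather than $\dout$, which is why the copy and hence query complexity is $O(\din\dout/\eps^2)$ --- and (ii) checking that the constructed algorithm genuinely conforms to the ancilla-assisted, coherent, non-adaptive query model as formalized in \cref{sec:query-models}. I would additionally note, following the accompanying footnote, that since the BOW19 result in fact certifies equality of \emph{two} unknown states from $O(d/\eps^2)$ copies of each, the same reduction gives a tester for equality of two unknown channels given query access to both.
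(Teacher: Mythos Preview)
Your proposal is correct and matches the paper's argument essentially verbatim: prepare copies of $\choi M$ using one query each, note that $\norm{\chan M - \chan N}_\lsd = \norm{\choi M - \choi N}_1$, and apply the $O(d/\eps^2)$ state certification algorithm of~\cite{BOW19} with $d = \din\dout$. The paper presents this as a one-line observation preceding the theorem statement, and your added remarks about the query model and the two-unknown-channels extension are exactly the points the paper makes in the surrounding text and footnote.
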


More generally, the query complexity of testing identity to a channel $\chan N$ in this model is at most the sample complexity of testing identity to $\choi N$, which may be $o(\din \dout / \eps^2)$ depending on $\chan N$. However, even \emph{with} coherence, this blackbox reduction to state certification may be far from optimal for channel certification. For example, consider the channel $\chan N: \C^{d \times d} \to \C^{1 \times 1}$ that traces out its entire input, i.e.\ $\chan N(X) = \tr(X)$. Since $\chan N$ is the \emph{only} channel of these dimensions, testing identity to $\chan N$ trivially requires zero queries, whereas its Choi state is maximally mixed and so the blackbox reduction to state testing would require $\Omega(d/\eps^2)$ queries~\cite{OW15}. The key observation is that regardless of the dimensions of a channel $\chan M$, the reduced state on the second subsystem of $\choi M$ is guaranteed to be maximally mixed, a fact which the blackbox reduction to state certification does not take advantage of. Furthermore, channel certification algorithms may query $\chan M$ in ways besides constructing $\choi M$, analogously to how classical property testing algorithms may be allowed to query a function on explicitly chosen inputs rather than random inputs; we leave it as an open problem whether there exists a channel $\chan N$ for which an optimal certification algorithm must query $\chan M$ in ways besides constructing its Choi state.

For all states $\sigma \in \C^{d \times d}$, there is also an algorithm that performs \emph{unentangled}, non-adaptive measurements on $O\Paren{d^{3/2}/\eps^2}$ copies of an unknown state $\rho \in \C^{d \times d}$, and decides whether $\rho = \sigma$ or $\norm{\rho - \sigma}_1 \ge \eps$ with success probability at least $2/3$~\cite{BCL20,CLO22}. Similarly to the above, this implies an $O\Paren{\din^{3/2} \dout^{3/2} \big/ \eps^2}$ upper bound for testing identity to an arbitrary channel in ACID distance in the ancilla-assisted, incoherent, non-adaptive setting. We nontrivially improve on this upper bound by a $\din^{1/2}$ factor, even \emph{without} ancillae:

\begin{restatable}[Ancilla-free channel certification in ACID distance]{thm}{generalub} \label{thminf:ub-general}
    For all fixed channels $\chan N: \C^{\din \times \din} \to \C^{\dout \times \dout}$ and $\eps>0$, there is an ancilla-free, non-adaptive algorithm that makes $\tilde{O}\Paren{\din \dout^{3/2} \big/ \eps^2}$ queries to a channel $\chan M$, and decides whether $\chan M = \chan N$ or $\norm{\chan M - \chan N}_{\lsd} \ge \eps$ with success probability at least $2/3$.
\end{restatable}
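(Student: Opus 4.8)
The plan is to reduce ancilla-free certification in ACID distance to a \emph{product-measurement}, \emph{marginal-aware} form of trace-distance certification of the Choi state $\choi{N}$. By \cref{dfn:acid} the task is exactly to distinguish $\choi{M}=\choi{N}$ from $\norm{\choi{M}-\choi{N}}_1\ge\eps$, and the only reason we cannot immediately invoke incoherent state certification of the $\din\dout$-dimensional state $\choi{N}$ is that, without an ancilla, the tester cannot prepare $\choi{M}$ (which would need half of a $\din$-dimensional maximally entangled state); moreover that black-box route would cost $\tilde{O}\Paren{(\din\dout)^{3/2}/\eps^2}$ queries, an extra $\din^{1/2}$ factor over the claimed bound. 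Both issues will be handled by exploiting the rigid structure of Choi operators: the marginal of $\choi{M}$ on the input-copy register (the $\din$-dimensional register on which $\idsupop_{\din}$ acts) is the \emph{known} maximally mixed state $\idop_{\din}/\din$, for \emph{every} channel $\chan{M}$.

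First I would establish a simulation lemma. For any POVM $\{B_k\}_k$ on the input-copy register and any POVM $\{A_\ell\}_\ell$ on the output register, an ancilla-free tester can, with a single query to $\chan{M}$, output a sample $(k,\ell)$ drawn exactly from the distribution obtained by measuring $\choi{M}$ with the product POVM $\{A_\ell\otimes B_k\}$. Indeed, because the input-copy marginal is $\idop_{\din}/\din$, a measurement of that register with $\{B_k\}$ yields $k$ with probability $\tr(B_k)/\din$, and conditioned on $k$ the output register is left in the state $\chan{M}\Paren{B_k^{\mathsf{T}}/\tr(B_k)}$ (a one-line computation with $\Phi_{\din}$, using that $B_k^{\mathsf{T}}/\tr(B_k)$ is a valid $\din$-dimensional density operator). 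So the tester samples $k$ with probability $\tr(B_k)/\din$, prepares the input $B_k^{\mathsf{T}}/\tr(B_k)$, queries $\chan{M}$, and measures the output with $\{A_\ell\}$. Hence any non-adaptive state-certification strategy for $\choi{N}$ that uses only product measurements across the output / input-copy cut can be run ancilla-free at one channel query per copy.

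It then remains to exhibit such a product-measurement strategy with sample complexity $\tilde{O}\Paren{\din\dout^{3/2}/\eps^2}$. I would take the input-copy register to be measured in a Haar-random basis (or an approximate design) and the output register to be measured with the single-copy incoherent state-certification procedure of \cite{BCL20,CLO22}; by the simulation lemma this amounts to feeding Haar-random pure states $\ket{\bs\phi}\in\C^{\din}$ into $\chan{M}$, running an output-register certification subroutine on $\chan{M}\Paren{\ketbra{\bs\phi}{\bs\phi}}$, and recording the low-order statistics of $\bs\phi$ together with the outcome. In expectation over $\bs\phi$ these joint statistics pin down $\choi{M}$, hence $\norm{\choi{M}-\choi{N}}_1$, since the only unknown part of $\choi{M}$ is the ``channel part'' correlating input directions with outputs. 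The improved rate should then come from two sources: the output register is only $\dout$-dimensional, contributing the $\dout^{3/2}$ exactly as in $\C^{\dout}$-state certification, while $\din$ enters only \emph{linearly} rather than as $\din^{3/2}$ precisely because the input-copy marginal is known and is not being learned---the $\din^{3/2}$ one would pay for a generic $\din\dout$-dimensional state is absent.

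The hard part will be the analysis of this last step: showing that the restricted (product, marginal-aware) strategy still certifies $\choi{N}$, and does so at the claimed rate. This should require re-running the second-moment and variance estimates underlying the state-certification analysis of \cite{BCL20,CLO22} in the product-measurement model while simultaneously averaging over the Haar-random input direction $\bs\phi$, and tracking dimensions carefully: the cross-terms must be shown to contract so that the known, traceless deviation $\choi{M}-\choi{N}$ on the input-copy register contributes only its ``$\din$-worth'' of variance and no extra $\din^{1/2}$, leaving $\din\dout^{3/2}/\eps^2$ after relating $\norm{\cdot}_1$ to $\norm{\cdot}_2$ on the $\dout$-dimensional output slices. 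The remaining ingredients---union bounds over the pieces of the estimator, amplification to success probability $2/3$, and replacing Haar-random unitaries by an approximate $4$-design for efficiency---I expect to be routine.
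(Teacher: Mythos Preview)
Your high-level architecture is right---feed Haar-random pure states $\bs\psi$ into $\chan M$, run an incoherent certification subroutine on the outputs---and this is exactly what the paper's algorithm does. Your simulation lemma (that ancilla-free queries can simulate any product POVM on $\choi M$ across the output/input-copy cut) is correct and is a nice way to see why the algorithm is ancilla-free, though the paper does not frame it this way.

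The substantive difference is in the analysis, and here you are missing the key shortcut. You propose to re-derive the second-moment and variance estimates of \cite{BCL20,CLO22} for a product-measurement, known-marginal bipartite certification problem, and to hope that the known marginal saves exactly $\din^{1/2}$. That program may be executable but it is neither standard nor easy, and you correctly flag it as ``the hard part.'' The paper avoids it entirely by passing through the \emph{$\ell_2$ distance between Choi states}. The crucial identity is \cref{lem:BaoYao} (due to Bao--Yao): for $\chan L = \chan M - \chan N$ and Haar-random $\bs\psi \in \dens{\din}$,
\[
\frac{\din+1}{\din}\,\E\Brac{\norm{\chan L(\bs\psi)}_2^2} \;=\; \norm{\choi L}_2^2 + \norm{\chan L(I/\din)}_2^2 \;\ge\; \norm{\choi L}_2^2.
\]
Combined with Cauchy--Schwarz ($\norm{\choi L}_1 \le \sqrt{\din\dout}\,\norm{\choi L}_2$), this means that if $\norm{\chan M-\chan N}_\lsd \ge \eps$ then $\E\Brac{\norm{\chan M(\bs\psi) - \chan N(\bs\psi)}_2^2} \ge \eps^2/(2\din\dout)$. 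A short bucketing argument then shows that for some level $k$ one has $\norm{\chan M(\bs\psi) - \chan N(\bs\psi)}_2^2 \gtrsim 2^k \eps^2/(\din\dout)$ with probability $\gtrsim 2^{-k}$, at which point the \emph{existing} $\dout$-dimensional $\ell_2$-certification routine $\textsc{CertifyL2}$ of \cite{CLO22} detects the difference using $O\bigl(\sqrt{\dout}\cdot \din\dout/(2^k\eps^2)\bigr)$ samples. Summing over $O(\log(1/\eps))$ levels gives $\tilde O(\din\dout^{3/2}/\eps^2)$.

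So the paper never re-analyzes any state-certification procedure; $\textsc{CertifyL2}$ is used as a black box on $\dout$-dimensional output states $\chan M(\bs\psi)$ versus the \emph{known} target $\chan N(\bs\psi)$. The $\din$ enters linearly not because of a delicate marginal-aware variance calculation, but simply because Cauchy--Schwarz costs a factor $\sqrt{\din\dout}$ when passing from $\ell_1$ to $\ell_2$, and the Bao--Yao identity absorbs the Haar average cleanly. Your ``joint statistics of $\bs\phi$ together with the outcome'' are never needed: conditioning on each $\bs\psi$ separately already suffices.
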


Our proof of \cref{thminf:ub-general} goes through an analogous statement where the distance between channels $\chan M$ and $\chan N$ is measured by the $\ell_2$ distance between their Choi states, i.e.\ the quantity $\norm{\choi M - \choi N}_2$. This quantity is related to the ACID distance between $\chan M$ and $\chan N$ by Cauchy-Schwarz, and so \cref{thminf:ub-general} follows as a corollary. Fawzi et al.~\cite{FFGO23} related the $\ell_2$ distance between Choi states to the diamond distance between the corresponding channels, so we also obtain an analogue of \cref{thminf:ub-general} with respect to the diamond norm:

\begin{restatable}[Ancilla-free channel certification in diamond distance]{thm}{generalubdiamond}
\label{thm:ub-general-diamond}
    For all fixed channels $\chan N: \C^{\din \times \din} \to \C^{\dout \times \dout}$ and $\eps>0$, there is an ancilla-free, non-adaptive algorithm that makes $\tilde{O}\Paren{\din^{2} \dout^{3/2} \big/ \eps^2}$ queries to a channel $\chan M$, and decides whether $\chan M = \chan N$ or $\norm{\chan M - \chan N}_\diamond \ge \eps$ with success probability at least $2/3$.
\end{restatable}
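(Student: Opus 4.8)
The plan is to obtain \cref{thm:ub-general-diamond} from two facts that are already available: the $\ell_2$-Choi-distance version of \cref{thminf:ub-general}, and Fawzi et al.'s inequality relating $\ell_2$-Choi distance to diamond distance. It is tempting to instead combine \cref{thminf:ub-general} with the bound $\norm{\chan M - \chan N}_\diamond \le \din\,\norm{\chan M - \chan N}_\lsd$ (which holds for all channels), but this would force running the ACID tester at proximity $\eps/\din$, giving $\tilde O(\din^3\dout^{3/2}/\eps^2)$ and thus losing a factor of $\din$; so instead I would route through the $\ell_2$ distance between Choi states directly. Inspecting the proof of \cref{thminf:ub-general} shows that it in fact establishes the following: for every $\delta>0$ there is an ancilla-free, non-adaptive tester that makes $\tilde O(\dout^{1/2}/\delta^2)$ queries to $\chan M$ and decides whether $\chan M = \chan N$ or $\norm{\choi M - \choi N}_2 \ge \delta$ with success probability at least $2/3$ --- indeed, \cref{thminf:ub-general} is the special case $\delta = \eps/\sqrt{\din\dout}$, using $\norm{X}_1 \le \sqrt{\rank(X)}\,\norm{X}_2 \le \sqrt{\din\dout}\,\norm{X}_2$ for $X = \choi M - \choi N$.

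Next I would invoke the relation of Fawzi et al.~\cite{FFGO23}: for channels $\chan M, \chan N : \C^{\din\times\din}\to\C^{\dout\times\dout}$,
\begin{equation*}
    \norm{\chan M - \chan N}_\diamond \le \din\sqrt{\dout}\,\norm{\choi M - \choi N}_2 .
\end{equation*}
If a self-contained argument is wanted, this is short: for $\chan L = \chan M - \chan N$, write a pure input $\kb\psi$ on $\C^{\din}\otimes\C^{\din}$ as $(I_\dout\otimes M_\psi)$ applied to the unnormalized maximally entangled vector $\sum_i\ket{ii}$, so that $\tr(M_\psi^\dagger M_\psi)=1$ and $(\chan L\otimes\idsupop_{\din})\cdot\kb\psi = \din\,(I_\dout\otimes M_\psi)\,\choi L\,(I_\dout\otimes M_\psi)^\dagger$; expanding $\choi L$ in its spectral decomposition and applying Cauchy--Schwarz, with $\norm{M_\psi}_\infty \le \norm{M_\psi}_2 = 1$ and $\tr(I_\dout\otimes M_\psi^\dagger M_\psi) = \dout$, bounds the trace norm of the right-hand side by $\din\sqrt\dout\,\norm{\choi L}_2$, and one then maximizes over $\psi$. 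Taking the contrapositive, $\norm{\chan M - \chan N}_\diamond \ge \eps$ implies $\norm{\choi M - \choi N}_2 \ge \eps/(\din\sqrt\dout)$.

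The theorem then follows at once: run the $\ell_2$-Choi-distance tester above with threshold $\delta \coloneqq \eps/(\din\sqrt\dout)$. Completeness holds because $\chan M = \chan N$ forces $\choi M = \choi N$; soundness holds because $\norm{\chan M - \chan N}_\diamond \ge \eps$ forces $\norm{\choi M - \choi N}_2 \ge \delta$; the algorithm is ancilla-free and non-adaptive; and its query complexity is $\tilde O(\dout^{1/2}/\delta^2) = \tilde O(\dout^{1/2}\cdot\din^2\dout/\eps^2) = \tilde O(\din^2\dout^{3/2}/\eps^2)$, as claimed.

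I do not anticipate a serious obstacle here: the reduction itself is routine, and the real content is entirely in the two imported ingredients --- the $\din$-independence of the query complexity of $\ell_2$-Choi-distance certification (the genuinely new contribution, established inside \cref{thminf:ub-general} by exploiting that every Choi state is maximally mixed on its reference system) and the $\din\sqrt\dout$ dimension loss in the Fawzi et al.\ inequality. The one thing requiring care is the bookkeeping of dimension factors: I would double-check that Fawzi et al.'s inequality really is $\norm{\chan M - \chan N}_\diamond \le \din\sqrt\dout\,\norm{\choi M - \choi N}_2$ for differences of channels, since any extra $\sqrt\din$ (for instance from the cruder chain $\norm{\chan L}_\diamond \le \din\norm{\choi L}_1 \le \din\sqrt{\din\dout}\,\norm{\choi L}_2$) would inflate the exponent of $\din$ in the final bound from $2$ to $3$.
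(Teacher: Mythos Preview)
Your proposal is correct and matches the paper's proof essentially line for line: the paper isolates the $\ell_2$-Choi-distance tester as a standalone result (\cref{thm:Inco_l2_chan_cert}) with query complexity $\tilde O(\dout^{1/2}/\delta^2)$, then applies it with $\delta = \eps/(\din\dout^{1/2})$ after invoking Fawzi et al.'s inequality $\norm{\chan M - \chan N}_\diamond \le \din\sqrt{\dout}\,\norm{\choi M - \choi N}_2$~\cite[Lemma C.1]{FFGO23}. Your caution about the dimension bookkeeping is well placed but unnecessary here---the $\din\sqrt{\dout}$ factor is exactly what Fawzi et al.\ prove.
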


\cref{thm:ub-general-diamond} generalizes a result of Fawzi et al.~\cite{FFGO23}, who proved the same upper bound without log factors in the case where $\chan N$ is the completely depolarizing channel. We also remove the log factors from \cref{thminf:ub-general} when $\chan N$ is the completely depolarizing channel.

We also give \emph{dimension-independent} upper bounds for testing identity to certain channels:

\begin{restatable}[Erasure, unitary, and pure state replacement channel certification]{thm}{erasureub}
\label{thminf:erasure-etc-upper-bounds}
    Let $\chan N$ be any of the following types of channels:
    \begin{itemize}[leftmargin=*]
        \item an \emph{erasure channel}, i.e.\ $\chan N(X \otimes Y) = X \tr(Y)$ for all $X \in \C^{\dout \times \dout}, Y \in \C^{\din/\dout \times \din/\dout}$, with the definition extended to arbitrary inputs by linearity;
        \item a \emph{unitary channel}, i.e.\ $\chan N(X) = U X \adj U$ for all $X \in \C^{d \times d}$, for some unitary $U \in \C^{d \times d}$ (independent of $X$);
        \item a \emph{pure state replacement channel}, i.e.\ $\chan N(X) = \tr(X) \psi$ for all $X \in \C^{\din \times \din}$, for some pure state $\psi \in \C^{\dout \times \dout}$ (independent of $X$).
    \end{itemize}
    Then there is an ancilla-free, non-adaptive algorithm that makes $O(1/\eps^2)$ queries to a channel $\chan M$, accepts with probability $1$ if $\chan M = \chan N$, and accepts with probability at most $1/2$ if $\norm{\chan M - \chan N}_{\lsd} \ge \eps$.
\end{restatable}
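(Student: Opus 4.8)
The plan is to exploit a structural feature shared by all three channel families: in each case the Choi operator $\choi N$ is supported on the range of a projector $\Pi = \ketbra{\phi}{\phi}\otimes\idop$, acting on (a subsystem of output $\otimes$ copy) tensored with the rest of the copy register, for a suitable pure state $\ket{\phi}$, with $\tr(\Pi\choi N)=1$ and in fact $\choi N=\ketbra{\phi}{\phi}\otimes\tau$ for a fixed maximally mixed state $\tau$. The point is that $\tr(\Pi\choi M)$ will (i) control $\norm{\chan M-\chan N}_{\lsd}=\norm{\choi M-\choi N}_1$ up to a square-root loss, and (ii) be estimable by an ancilla-free non-adaptive tester that moreover accepts with probability exactly $1$ when $\chan M=\chan N$. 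Concretely: for a pure state replacement channel $\chan N(X)=\tr(X)\ketbra{v}{v}$, one has $\choi N=\ketbra{v}{v}_{\mathrm{out}}\otimes(\idop_{\din}/\din)$ and takes $\Pi=\ketbra{v}{v}_{\mathrm{out}}\otimes\idop_{\mathrm{copy}}$; for a unitary channel $\chan N(X)=UXU^{\dagger}$, one has $\choi N=\ketbra{v}{v}$ with $\ket{v}=(U\otimes\idop)\ket{\Phi_{\din}}$ and takes $\Pi=\ketbra{v}{v}$; for an erasure channel, writing $\din=\dout k$ and splitting the copy register as $\C^{\dout}\otimes\C^{k}$, one computes $\choi N=\Phi_{\dout}\otimes(\idop_k/k)$ and takes $\Pi=\ketbra{\Phi_{\dout}}{\Phi_{\dout}}\otimes\idop_k$.

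First I would prove the structural reduction. The crucial fact is that trace preservation forces the copy-register marginal of $\choi M$ to equal $\idop_{\din}/\din$ for \emph{every} channel $\chan M$ of the given dimensions. Writing $\delta=1-\tr(\Pi\choi M)$, the gentle measurement lemma gives $\norm{\choi M-\Pi\choi M\Pi}_1=O(\sqrt{\delta})$; and $\Pi\choi M\Pi=\ketbra{\phi}{\phi}\otimes\sigma$, where $\sigma=(\bra{\phi}\otimes\idop)\choi M(\ket{\phi}\otimes\idop)$ is subnormalized with $\tr\sigma=1-\delta$ and $\sigma\le\sigma_{\mathrm{rest}}$, the reduced state of $\choi M$ on the subsystems untouched by $\ket{\phi}$. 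By the marginal constraint, $\sigma_{\mathrm{rest}}$ is exactly the fixed maximally mixed state $\tau$; since $\sigma\le\tau$ with $\tr\sigma=\tr\tau-\delta$, we get $\norm{\sigma-\tau}_1=\delta$, hence $\norm{\Pi\choi M\Pi-\choi N}_1=\norm{\sigma-\tau}_1=\delta$. Therefore $\norm{\choi M-\choi N}_1=O(\sqrt{\delta})$, so $\norm{\chan M-\chan N}_{\lsd}\ge\eps$ forces $\tr(\Pi\choi M)\le 1-\Omega(\eps^2)$. (In the unitary case $\ket{\phi}=\ket{v}$ spans the whole space, the ``rest'' register is trivial, and this collapses to the Fuchs--van de Graaf inequality.)

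Next I would express $\tr(\Pi\choi M)$ as an affine function of a quantity $G(\chan M)=\E_{\bs x}\bra{y(\bs x)}\chan M(\ketbra{\bs x}{\bs x})\ket{y(\bs x)}$, where $\bs x$ is drawn from an ancilla-free distribution over pure inputs, $y(\cdot)$ sends inputs to unit output vectors, and $\bra{y(x)}\chan N(\ketbra{x}{x})\ket{y(x)}=1$ for every $x$ in the support (so that $G(\chan N)=1$ and every trial accepts with certainty when $\chan M=\chan N$). For pure state replacement, $\bs x$ is a uniformly random computational basis state of $\C^{\din}$ and $y(x)\equiv v$, whence $G(\chan M)=\bra{v}\chan M(\idop_{\din}/\din)\ket{v}=\tr(\Pi\choi M)$. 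For a unitary channel, $\bs x$ is a Haar-random pure state of $\C^{d}$ (a state $2$-design suffices) and $y(x)=Ux$, so $G(\chan M)$ is the average input--output fidelity of $\chan M'(\cdot):=U^{\dagger}\chan M(\cdot)U$; the standard relation between average and entanglement fidelity gives $G(\chan M)=\frac{d\,F_e(\chan M')+1}{d+1}$, while $F_e(\chan M')=\bra{v}\choi M\ket{v}=\tr(\Pi\choi M)$. For an erasure channel, $\bs x=\bs\psi\otimes\ket{\bs b}$ with $\bs\psi$ a Haar-random (or $2$-design) pure state on $\C^{\dout}$, $\bs b$ uniform on $[k]$, and $y(\cdot)$ returning the $\C^{\dout}$-component; then $G(\chan M)=\frac1k\sum_b\bar F(\chan M_b)$ with $\chan M_b(\cdot):=\chan M(\cdot\otimes\ketbra{b}{b})$, and a short computation gives $\tr_2\choi M=\frac1k\sum_b(\chan M_b\otimes\idsupop)(\Phi_{\dout})$, so $\tr(\Pi\choi M)=\frac1k\sum_b F_e(\chan M_b)=\frac{(\dout+1)G(\chan M)-1}{\dout}$. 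In every case the only loss is a factor $\frac{\dout}{\dout+1}\ge\frac12$ for unitary and erasure channels, so $\tr(\Pi\choi M)\le 1-\Omega(\eps^2)$ implies $G(\chan M)\le 1-\Omega(\eps^2)$.

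The tester then runs $T=O(1/\eps^2)$ independent trials: in trial $t$, sample $\bs x_t$ as above, prepare $\ket{\bs x_t}$ on the $\din$-dimensional system (no ancilla), apply $\chan M$ once, measure the output with $\{\ketbra{y(\bs x_t)}{y(\bs x_t)},\,\idop-\ketbra{y(\bs x_t)}{y(\bs x_t)}\}$, and record a success on the first outcome; accept iff all $T$ trials succeed. This tester is ancilla-free and non-adaptive. If $\chan M=\chan N$, every trial succeeds with probability $1$, so it accepts with probability $1$; if $\norm{\chan M-\chan N}_{\lsd}\ge\eps$, then since the trials are i.i.d.\ and outcome $t$ depends only on $\bs x_t$, $\Pr[\text{accept}]=G(\chan M)^T\le(1-\Omega(\eps^2))^T\le 1/2$ for a suitable $T=O(1/\eps^2)$. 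I expect the main obstacle to be the erasure case of the first two steps: verifying the Choi decomposition $\choi N=\Phi_{\dout}\otimes(\idop_k/k)$, the identity $\tr_2\choi M=\frac1k\sum_b(\chan M_b\otimes\idsupop)(\Phi_{\dout})$, and the application of the average-gate-fidelity formula to the restricted channels $\chan M_b$; the unitary and pure-state-replacement cases are comparatively routine once the structural reduction is in place. (Note that the naive alternative of building $\choi M$ and testing it against $\choi N$ as a state would both need ancillae and, since $\choi N$ here always has a maximally mixed marginal, cost $\Omega(\din/\eps^2)$ queries, so exploiting the channel structure is essential.)
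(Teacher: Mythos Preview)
Your proposal is correct and takes essentially the same approach as the paper: the key tools are the gentle-measurement structural reduction (the paper's Lemma~6.5 is exactly your $\Pi=\ketbra{\phi}{\phi}\otimes I$ step), the average-vs-entanglement fidelity relation (the paper's Lemma~6.6), and the marginal constraint on the copy register of the Choi state. The only difference is organizational: the paper factors the erasure and unitary cases through an intermediate reduction (Theorem~6.7, reducing certification of $\chan I_{\daux}\otimes\chan N$ to certification of $\chan N$, stated as being of independent interest) and then specializes, whereas you treat all three cases directly within a unified framework; the resulting algorithms and bounds coincide.
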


For comparison, recall that channel certification in diamond distance requires $\Omega(\sqrt\din/\eps)$ queries for erasure channels (\cref{thminf:diamond-LB}), $d/\eps^{\Theta(1)}$ queries for unitary channels~\cite{FFGO23}, and $\Omega(\din/\eps)$ queries for pure state replacement channels (\cref{rmk:ReplacementLB}). Along the way to proving \cref{thminf:erasure-etc-upper-bounds}, we also show that for \emph{every} channel $\chan N$, testing identity to $\chan I \otimes \chan N$ in ACID distance efficiently reduces to testing identity to $\chan N$ in ACID distance (\cref{thm:N-red-IN}); we consider this observation to be of independent interest as progress toward instance optimality (see \cref{subsec:discussion}). We also remark that Montanaro and de Wolf~\cite[Section 5.2.1]{MdW13} gave an $O(1/\eps^2)$ bound upper bound for testing whether a channel $\chan M$ satisfies the \emph{property} of being unitary or is far from that property in ACID distance, by a blackbox reduction to purity testing on $\choi M$.

The case of \cref{thminf:erasure-etc-upper-bounds} where $\chan N$ is the identity channel on $\C^{d \times d}$ is particularly interesting. By the Fuchs--van de Graaf inequalities, the ACID distance $\frac12 \norm{\chan M - \chan N}_{\lsd} = \frac12 \norm{\choi M - \Phi_d}_1$ is quadratically related to the \emph{entanglement fidelity}~\cite[Definition 9.5.1]{Wil19} $\tr\Paren{\choi M \Phi_d}$ between $\chan M$ and the identity channel with respect to the maximally entangled state. Fawzi et al.~\cite[Lemma A.1]{FFGO23} proved that if $d$ is large, then $\tr\Paren{\choi M \Phi_d}$ is a close approximation of $\E\Brac{\tr(\chan M(\bs\psi) \bs\psi)}$ (where $\bs\psi$ is Haar random), a quantity which is a standard measure for quantifying errors in physical implementations of quantum gates~\cite[Eq.\ 1]{KLDF16}.

\cref{thminf:erasure-etc-upper-bounds} does not generalize to arbitrary channels $\chan N$ however. For example, let $\chan N$ be the channel that replaces its input with a known state $\sigma$ (i.e.\ $\chan N(X) = \tr(X) \sigma$), and suppose that $\chan M$ is promised to replace its input with some unknown state $\rho$ (i.e.\ $\chan M(X) = \tr(X) \rho$). It is straightforward to verify that $\norm{\chan M - \chan N}_{\lsd} = \norm{\rho - \sigma}_1$, and query access to $\chan M$ is equivalent to sample access to $\rho$, so testing identity to $\chan N$ in ACID distance is no easier than testing identity to $\sigma$ in trace distance. If $\sigma$ is the maximally mixed state for example, i.e.\ if $\chan N$ is the completely depolarizing channel, then this requires $\Omega\Paren{\dout^{3/2} / \eps^2}$ queries in the ancilla-free, adaptive model~\cite[Theorem 6.1]{CLHL22}. This is why we specifically considered \emph{pure} state replacement channels in \cref{thminf:erasure-etc-upper-bounds}.

The above discussion shows that a dependence on the \emph{output} dimension is sometimes unavoidable. We also prove that a dependence on the \emph{input} dimension is sometimes unavoidable, again in the case of the completely depolarizing channel, and even for $\dout$ as small as $2$:

\begin{restatable}[Lower bound for the completely depolarizing channel]{thm}{lbdepol} \label{thm:lb-depol}
    Let $\chan N: \C^{\din \times \din} \to \C^{\dout \times \dout}$ be the completely depolarizing channel, i.e.\ $\chan N(X) = \tr(X) I/\dout$, and assume for simplicity that $\din$ and $\dout$ are even. Then every ancilla-free, non-adaptive channel tester requires $\Omega(\din/\eps^2)$ queries to decide whether $\chan M = \chan N$ or $\norm{\chan M - \chan N}_{\lsd} \ge \eps$ with success probability at least $2/3$.
\end{restatable}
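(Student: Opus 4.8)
I would prove this by exhibiting a family of channels that are exactly $\eps$-far from the depolarizing channel $\chan N$ in ACID distance but that an ancilla-free, non-adaptive tester cannot tell apart from $\chan N$ with $o(\din/\eps^2)$ queries. Work with $\dout=2$; the general even case is identical once the Pauli $Z=\ketbra{0}{0}-\ketbra{1}{1}$ below is replaced by a Hermitian $\Pi$ on $\C^{\dout\times\dout}$ with $\dout/2$ eigenvalues $+1$ and $\dout/2$ eigenvalues $-1$. Let $D=\sum_{j\le\din/2}\ketbra{j}{j}-\sum_{j>\din/2}\ketbra{j}{j}$ on $\C^{\din\times\din}$, and for a unitary $U\in\C^{\din\times\din}$ let $\chan M_U$ be the channel whose Choi operator is $\tfrac{1}{2\din}I_{2\din}+\tfrac{\eps}{2\din}\,Z\otimes UDU^\dagger$. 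Since $\norm{Z\otimes UDU^\dagger}_\infty=1$, this operator is positive semidefinite with trace $1$ and has reduced state $I_{\din}/\din$ on the input register, so it is a legitimate channel for every $\eps\le 1$; and since $J_{\chan N}=I_{2\din}/(2\din)$ we get $\norm{\chan M_U-\chan N}_{\lsd}=\tfrac{\eps}{2\din}\norm{Z\otimes UDU^\dagger}_1=\eps$ exactly. The Choi--Jamio{\l}kowski isomorphism gives $\chan M_U(\kb{\psi})=\tfrac12\Paren{I_2+\eps\,\bra{\bar\psi}UDU^\dagger\ket{\bar\psi}\,Z}$ for every pure input (with $\bar\psi$ the entrywise conjugate). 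The point of conjugating the ``$\pm1$ reflection'' $D$ by a Haar-random $U$ is that it spreads the perturbation out: on any fixed input the bias it produces is typically only $O(\eps/\sqrt\din)$, whereas a perturbation diagonal in the standard basis would be detectable with $O(1/\eps^2)$ queries. (For $\eps$ larger than a fixed constant the claim is just $\Omega(\din)$ and is handled separately; I take $\eps$ below a small constant.)

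Next I would reduce a general tester to a classical coin problem. An ancilla-free, non-adaptive tester is given by fixed input states $\rho_1,\dots,\rho_t$, fixed POVMs on the outputs (any number of outcomes), and a decision rule, and it observes independent outcomes with $\Pr[k_i=k]=\tr(E_i^k\,\chan M(\rho_i))$. A mixed $\rho_i$ can be simulated by sampling a pure state from a convex decomposition of $\rho_i$, so we may assume each $\rho_i$ is pure. Then, for every channel in $\{\chan N\}\cup\{\chan M_U\}$, the output on a pure input has the form $\tfrac12(I_2+zZ)$ with $\abs{z}\le\eps$; for such states the outcome distribution of any POVM $\{E^k\}$ is reproduced by first measuring $Z$ to obtain a bit $b$ and then outputting $k$ with probability $\bra{b}E^k\ket{b}$, so we may assume each query measures $Z$. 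What remains is: distinguish ``$t$ independent fair coins'' (the case $\chan M=\chan N$) from ``coins with biases $\eps f_i(\bs U)$'' for one Haar-random $\bs U$ (the case $\chan M=\chan M_{\bs U}$, which is $\eps$-far), where $f_i(U)=\bra{\bar\psi_i}UDU^\dagger\ket{\bar\psi_i}\in[-1,1]$ and $\Pr[k_i=0\mid\bs U]=\tfrac12(1+\eps f_i(\bs U))$.

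For the lower bound on this coin problem I would use the chi-squared method. Writing $P_0,P_1$ for the outcome distributions in the null and (mixed) alternative cases, the standard identity for testing against a mixture of product distributions gives $\chi^2(P_1\Vert P_0)+1=\E_{\bs U,\bs U'}\prod_{i=1}^t\Paren{1+\eps^2 f_i(\bs U)f_i(\bs U')}$ for independent Haar $\bs U,\bs U'$. Expanding over subsets $S\subseteq[t]$ and using independence, this is $\sum_S\eps^{2\abs{S}}\Paren{\E_{\bs U}\prod_{i\in S}f_i}^2$. By Jensen and then H\"older with all exponents $\abs{S}$, $\Paren{\E_{\bs U}\prod_{i\in S}f_i}^2\le\E_{\bs U}\prod_{i\in S}f_i^2\le\prod_{i\in S}\Paren{\E_{\bs U}f_i^{\,2\abs{S}}}^{1/\abs{S}}$. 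Crucially, for each fixed unit vector $\bar\psi_i$ the vector $U^\dagger\bar\psi_i$ is Haar-uniform on the unit sphere, so $\E_{\bs U}f_i^{\,2m}$ does not depend on $i$ and equals $\E_{\bs\phi}\bra{\bs\phi}D\ket{\bs\phi}^{2m}$; and $\bra{\bs\phi}D\ket{\bs\phi}=2Y-1$ where $Y\sim\mathrm{Beta}(\din/2,\din/2)$, whose $(2m)$-th moment about $\tfrac12$ is $(1/2)_m/(\din/2+1/2)_m\le(2m-1)!!/\din^m$. Hence $\Paren{\E_{\bs U}\prod_{i\in S}f_i}^2\le(2\abs{S}-1)!!/\din^{\abs{S}}$, and using $\binom{t}{s}\le t^s/s!$ together with $(2s-1)!!/s!=\binom{2s}{s}2^{-s}\le 2^s$,
\[
  \chi^2(P_1\Vert P_0)+1 \;\le\; \sum_{s=0}^{t}\binom{t}{s}\eps^{2s}\frac{(2s-1)!!}{\din^s} \;\le\; \sum_{s\ge 0}\Paren{\frac{2t\eps^2}{\din}}^{s}.
\]
If $t\le c\,\din/\eps^2$ for a small absolute constant $c$, the sum is at most $1+4/9$, so $d_{\mathrm{TV}}(P_0,P_1)\le\tfrac12\sqrt{\chi^2(P_1\Vert P_0)}<1/3$; this holds for each fixed choice of the tester's pure inputs and measurements, hence also after averaging over any internal randomness. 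Since a successful tester accepts $\chan N$ with probability $\ge 2/3$ and accepts each $\eps$-far $\chan M_{\bs U}$ with probability $\le 1/3$, and these two probabilities differ by at most $d_{\mathrm{TV}}(P_0,P_1)<1/3$, no such tester can use fewer than $c\,\din/\eps^2$ queries.

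The main obstacle is the family design together with the moment estimate that certifies it: the perturbations seen across all $t$ queries must be slaved to a single hidden unitary yet remain jointly small and nearly uncorrelated, which is exactly the content of $\Paren{\E_{\bs U}\prod_{i\in S}f_i}^2\le(2\abs{S}-1)!!/\din^{\abs{S}}$. Routing this through H\"older's inequality to a single Haar-sphere (Beta) moment is what keeps it elementary and sidesteps a full Weingarten computation; the secondary technical point is the pair of WLOG reductions (to pure inputs and to $Z$-basis measurements), which turn the quantum tester into an honest coin-testing problem and let us avoid tracking the signs of $\tr(E_i^k Z)$ in the chi-squared expansion.
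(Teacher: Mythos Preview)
Your proof is correct and follows the same overall strategy as the paper: the hard-instance family is essentially identical (your $UDU^\dagger$ is precisely $2\bs\Pi - I$ for the paper's Haar-random rank-$\din/2$ projection $\bs\Pi$, and your $\Pi$ on the output side is their $\kb0-\kb1$ tensored with $I_{\dout/2}$), and both arguments reduce the tester to a classical biased-coin problem via the observation that every output state lies in the commutant of the output reflection, so measuring in that eigenbasis is a sufficient statistic.

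The difference is only in how the coin problem is closed out. The paper bounds the \emph{expected pointwise} total variation distance $\E_{\bs\Pi}\bigl[d_{\mrm{TV}}(P_0, P_{\bs\Pi})\bigr]$ using a Hellinger-type product inequality (their Lemma~\ref{lem:hellinger}) followed by Cauchy--Schwarz, which requires only the variance $\E\bigl[(\bs p_j - 1/2)^2\bigr] = 1/(4(\din+1))$; this is a two-line Haar computation via \cref{lem:haar-moment}. You instead bound the total variation of the \emph{mixture} via the $\chi^2$ method, which forces you to control all even moments $\E_U\bigl[f_i^{2m}\bigr]$; you handle this cleanly through H\"older and the closed-form $\mathrm{Beta}(\din/2,\din/2)$ moments. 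Both routes are standard, but the paper's is shorter since it never leaves the second-moment world, while your bound on $d_{\mrm{TV}}(P_0, \E_U P_U)$ is in principle the tighter quantity (convexity of TV gives $d_{\mrm{TV}}(P_0,\E_U P_U)\le\E_U d_{\mrm{TV}}(P_0,P_U)$), though both yield the same $\Omega(\din/\eps^2)$ here. Your parenthetical about handling large $\eps$ separately is unnecessary: the geometric series in your $\chi^2$ bound converges whenever $2t\eps^2/\din<1$, which covers all $\eps\le 1$ uniformly.
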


The $\Omega(\din/\eps^2)$ lower bound from \cref{thm:lb-depol} matches the dependence on $\din$ and $\eps$ from \cref{thminf:ub-general} in the same query model, and along with the above discussion implies an $\Omega\Paren{\Paren{\din + \dout^{3/2}} \big/ \eps^2}$ lower bound for testing identity to the completely depolarizing channel in this model. We conjecture that this lower bound can be improved to $\Omega\Paren{\din \dout^{3/2} \big/ \eps^2}$, which would match our upper bound.

We also briefly consider a nonstandard query model, where it turns out that channel certification \emph{can} always be done with complexity independent of the input dimension. King, Wan and McClean~\cite{KWM24} proposed a model of quantum state testing with sample access to both $\rho$ and $\rho^\top$, and gave several examples~\cite[Appendix D]{KWM24} where this may be a physically realistic assumption. Analogously, for a channel $\chan M$ we define $\dual M(X) \coloneqq M\Paren{X^\top}^\top$. The fact that $\dual M$ is a channel is most easily seen by considering its Kraus decomposition (see \cref{eq:kraus}), which also illustrates that $\dual M$ is the element-wise complex conjugate of $\chan M$. For example, if $\chan M$ is defined by evolving a real-valued Hamiltonian forward in time, then $\dual M$ is defined by evolving that same Hamiltonian backward in time. If $\chan M$ is implemented by a quantum circuit over the gate set $\{H, T, \mrm{Toffoli}\}$, then $\dual M$ can be implemented by substituting $\adj T$ for $T$ throughout that circuit.\footnote{However, if our motivation is to test whether an alleged circuit implementation of $\chan N$ is accurate, then there is no guarantee that faulty implementations of $\chan N$ and $\dual N$ would be $\chan M$ and $\dual M$ respectively for the \emph{same} channel $\chan M$.} We prove the following:

\begin{restatable}[Channel certification using $\chan M$ and $\dual M$]{thm}{ubdual} \label{thm:ub-dual}
    For all fixed channels $\chan N: \C^{\din \times \din} \to \C^{\dout \times \dout}$ and $\eps > 0$, there is an ancilla-assisted, coherent, non-adaptive algorithm that makes $O\Paren{\dout^4 / \eps^4}$ queries to channels $\chan M$ and $\dual M$, and decides whether $\chan M = \chan N$ or $\norm{\chan M - \chan N}_{\lsd} \ge \eps$ with success probability at least $2/3$.
\end{restatable}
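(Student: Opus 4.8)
The starting point is that $\norm{\chan M - \chan N}_{\lsd} = \norm{\choi M - \choi N}_1$, and that $\choi M - \choi N$ is a Hermitian operator on $\C^{\din\dout \times \din\dout}$, so its rank is at most $\din\dout$ and hence by Cauchy--Schwarz
\[
    \tr\Brac{(\choi M - \choi N)^2} \;=\; \norm{\choi M - \choi N}_2^2 \;\ge\; \frac{\norm{\choi M - \choi N}_1^2}{\din\dout},
\]
which is $0$ in the ``equal'' case and at least $\eps^2/(\din\dout)$ in the ``far'' case. So it suffices to estimate $\norm{\choi M - \choi N}_2^2$ to additive error $\Theta(\eps^2/(\din\dout))$. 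Now $\tr(\choi M^2)$ and $\tr(\choi M \choi N)$ are each estimable by swap-type tests with a number of queries \emph{independent of the dimensions} --- but only up to a \emph{constant} additive error, so this naive route costs $\Theta\Paren{(\din\dout/\eps^2)^2}$ queries. The plan is to use the query $\dual M$ to instead estimate an \emph{amplified} quantity, namely $\tfrac{\din}{\dout}\norm{\choi M - \choi N}_2^2$, which only has to be estimated to additive error $\Theta(\eps^2/\dout^2)$; this will turn out to cost only $O(\dout^4/\eps^4)$ queries.

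Concretely, let $\rho_{\chan M} \coloneqq (\chan M \otimes \dual M)\cdot\Phi_{\din}$, a density matrix on $\C^{\dout^2 \times \dout^2}$ preparable with two queries (apply $\chan M$ to one half of $\Phi_{\din}$ and $\dual M$ to the other half). Writing $\chan M$ in Kraus form $\chan M(X) = \sum_a K_a X \adj{K_a}$ and recalling that $\dual M$ has Kraus operators $\overline{K_a}$, a direct computation gives
\[
    \tr\Brac{\rho_{\chan M}\,\Phi_{\dout}} \;=\; \frac{1}{\din\dout}\sum_{a,b}\abs{\tr(\adj{K_b}K_a)}^2 \;=\; \frac{\din}{\dout}\,\tr(\choi M^2),
\]
the second equality being the standard Kraus-operator expression for the purity of a Choi state. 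The same computation with $\chan M$ in one tensor slot and the \emph{known} channel $\dual N$ in the other --- which we may apply for free, with no queries, since $\chan N$ is fixed --- yields $\tr\Brac{(\chan M \otimes \dual N)\cdot\Phi_{\din}\cdot\Phi_{\dout}} = \tfrac{\din}{\dout}\tr(\choi M \choi N)$. Thus $\tfrac{\din}{\dout}\tr(\choi M^2)$ and $\tfrac{\din}{\dout}\tr(\choi M \choi N)$ are each the expectation value of the fixed rank-one projector $\Phi_{\dout}$ against a $\dout^2$-dimensional state obtained from $O(1)$ queries, while $\tfrac{\din}{\dout}\tr(\choi N^2)$ is a constant that we compute directly.

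The tester estimates $\tr\Brac{\rho_{\chan M}\,\Phi_{\dout}}$ and $\tr\Brac{(\chan M\otimes\dual N)\cdot\Phi_{\din}\cdot\Phi_{\dout}}$, each to additive error $\Theta(\eps^2/\dout^2)$ --- which by Hoeffding takes $O(\dout^4/\eps^4)$ queries in total, since each is the expectation of a $\{0,1\}$-valued measurement --- and forms the corresponding estimate $\widehat X$ of $\tfrac{\din}{\dout}\norm{\choi M - \choi N}_2^2$, which then has additive error $\Theta(\eps^2/\dout^2)$; it accepts iff $\widehat X$ is below a suitable threshold $\Theta(\eps^2/\dout^2)$. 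Completeness is immediate, and soundness follows from the first display: if $\norm{\chan M - \chan N}_{\lsd} \ge \eps$ then $\tfrac{\din}{\dout}\norm{\choi M - \choi N}_2^2 \ge \eps^2/\dout^2$, comfortably above the threshold. The measurement of $\Phi_{\dout}$ is an entangled measurement on the outputs of two distinct queries, so the model must be coherent; no ancilla and no adaptivity are needed. I expect the main obstacle --- really the only non-routine step --- to be spotting the amplification identity in the second display: that conjugating by $\dual M$ and projecting the two outputs onto the maximally entangled state returns the Choi purity inflated by exactly $\din/\dout$. Verifying it is the short Kraus-operator computation above (one should also check it holds for all $\din,\dout$, including $\din<\dout$, and pin down the constants in the threshold), and everything else is standard state estimation.
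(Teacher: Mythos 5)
Your proposal is correct and follows essentially the same route as the paper: the paper's Algorithm~5 estimates $\bs p = \tr\Paren{\Phi_{\dout}\cdot(\chan M\otimes\dual M)\Phi_{\din}}$ and $\bs q = \tr\Paren{\Phi_{\dout}\cdot(\chan M\otimes\dual N)\Phi_{\din}}$, uses exactly your amplification identity (proved, as you do, by a Kraus-operator computation reducing to $\Mag{\tr(\adj A B)}^2$) to get $\E\Brac{\bs p - 2\bs q + \tfrac{\din}{\dout}\norm{\choi N}_2^2} = \tfrac{\din}{\dout}\norm{\choi M-\choi N}_2^2 \ge \norm{\chan M-\chan N}_{\lsd}^2/\dout^2$, and thresholds at $0.5\eps^2/\dout^2$ with a Chernoff bound over $O(\dout^4/\eps^4)$ repetitions. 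No gaps.
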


Finally we consider the complexity of channel tomography in ACID distance, as a benchmark against which to compare our results about channel testing (as testing trivially reduces to tomography). We prove the following by a blackbox reduction to state tomography on $\choi M$, followed by post-processing to ensure that the output is a channel:

\begin{restatable}[Upper bound for coherent channel tomography]{thm}{UBtom} \label{thm:UBCoherentTom}
    There is an ancilla-assisted, coherent, non-adaptive algorithm that makes $O\Paren{\din^2 \dout^2 / \eps^2}$ queries to a channel $\chan M: \C^{\din \times \din} \to \C^{\dout \times \dout}$, and with probability at least $2/3$ outputs the description of a channel $\chan N$ such that $\norm{\chan M - \chan N}_{\lsd} \le \eps$.
\end{restatable}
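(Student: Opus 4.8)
The plan is to reduce channel tomography in ACID distance to state tomography of the Choi state, followed by a projection onto the set of valid Choi states.

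By definition $\norm{\chan M - \chan N}_\lsd = \norm{\choi M - \choi N}_1$, so it suffices to produce the description of a channel whose Choi state lies within trace distance $\eps$ of $\choi M \coloneqq (\chan M \otimes \idsupop_\din) \cdot \Phi_\din$. A single copy of the state $\choi M \in \C^{\din\dout \times \din\dout}$ can be prepared with one query to $\chan M$, by applying $\chan M$ to the first register of a freshly prepared $\Phi_\din$ and retaining the $\din$-dimensional reference register; doing this $N \coloneqq O\Paren{(\din\dout)^2/\eps^2}$ times in parallel (non-adaptively, using the reference register as an ancilla) yields $N$ copies of $\choi M$. Running the optimal full-state tomography algorithm---which learns a $d$-dimensional state to within trace distance $\eps/2$ from $O(d^2/\eps^2)$ copies, here with $d = \din\dout$---on these copies produces, with probability at least $2/3$, a density matrix $\hat\rho$ with $\norm{\hat\rho - \choi M}_1 \le \eps/2$. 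The joint measurement across the $N$ copies is what makes the algorithm coherent; it uses ancillae and is non-adaptive, and the number of queries is $N = O\Paren{\din^2\dout^2/\eps^2}$, as claimed.

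For the post-processing, let $\mathcal C$ denote the set of density matrices $\tau$ on $\C^{\dout\times\dout}\otimes\C^{\din\times\din}$ with $\tr_{\out}(\tau) = I_\din / \din$, where $\tr_{\out}$ traces out the (first, $\dout$-dimensional) output register. By the Choi--Jamiołkowski isomorphism, $\mathcal C$ is precisely the set of Choi states of channels $\C^{\din\times\din}\to\C^{\dout\times\dout}$, and it is nonempty, compact, and convex. Let $\hat\rho'$ be any element of $\mathcal C$ minimizing $\norm{\tau - \hat\rho}_1$ (a minimizer exists by compactness and continuity, and can be computed via a semidefinite program). Since $\choi M \in \mathcal C$, minimality gives $\norm{\hat\rho' - \hat\rho}_1 \le \norm{\choi M - \hat\rho}_1 \le \eps/2$, whence $\norm{\hat\rho' - \choi M}_1 \le \eps$ by the triangle inequality. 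The algorithm outputs the channel $\chan N$ with $\choi N = \hat\rho'$; this is always a valid channel, and when tomography succeeds it satisfies $\norm{\chan M - \chan N}_\lsd = \norm{\choi M - \hat\rho'}_1 \le \eps$.

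The one point requiring care is the post-processing step. The naive alternative---directly renormalizing $\hat\rho$, e.g.\ by conjugating with $(\tr_{\out}\hat\rho)^{-1/2}$ to force the correct reduced state---breaks down when the empirical reduced state $\tr_{\out}\hat\rho$ is ill-conditioned, which is possible whenever $\eps$ is not small compared with $1/\din$; repairing this by mixing in the maximally mixed state costs too many additional queries in general. Projecting onto $\mathcal C$ sidesteps the issue entirely, precisely because $\choi M$ itself already lies in $\mathcal C$, so the projection can only decrease the distance to $\choi M$ (up to a factor of $2$), with no conditioning hypothesis needed. The only other thing to double-check is that the invoked tomography bound achieves $O(d^2/\eps^2)$ copies with an error guarantee in trace distance (as opposed to infidelity or Hilbert--Schmidt distance).
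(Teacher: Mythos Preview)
Your proof is correct and essentially identical to the paper's: both reduce to state tomography on $\choi M$ with accuracy $\eps/2$ (using the $O(d^2/\eps^2)$ bound in trace distance, which the paper attributes to O'Donnell--Wright), and then project onto the set of valid Choi states by outputting a channel $\chan N$ minimizing $\norm{\hat\rho - \choi N}_1$, finishing with the same triangle-inequality argument. Your additional remarks on why naive renormalization fails and why the projection is computable are helpful commentary but not needed for the proof.
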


We also prove a nontrivial matching lower bound for fixed $\eps$, even for adaptive algorithms:

\begin{restatable}[Lower bound for coherent channel tomography]{thm}{CoherentTomAcid}\label{thm:CoherentTomAcid}
   For all $\din \ge 1$ and $\dout \ge 4$, every ancilla-assisted, coherent, adaptive algorithm requires $\Omega\Paren{\din^2 \dout^2 / \log(\din \dout)}$ queries to a channel $\chan M: \C^{\din \times \din} \to \C^{\dout \times \dout}$ to output the description of a channel $\chan N$ such that $\norm{\chan M - \chan N}_{\lsd} < 1/16$ with probability at least $2/3$.
\end{restatable}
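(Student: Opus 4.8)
The plan is to prove the lower bound via a packing/counting argument combined with Holevo's bound (or Fano's inequality), reducing channel tomography to distinguishing one of many well-separated channels. First I would construct a large family $\{\chan N_1, \dots, \chan N_K\}$ of channels $\C^{\din \times \din} \to \C^{\dout \times \dout}$ that are pairwise $\Omega(1)$-separated in ACID distance, i.e.\ $\norm{\choi{N_i} - \choi{N_j}}_1 \ge 1/8$ for $i \ne j$, with $\log K = \Omega(\din^2 \dout^2)$. Since the Choi operators of channels are exactly the density matrices on $\C^{\din \dout \times \din \dout}$ whose partial trace over the output register is $I_{\din}/\din$, this amounts to packing an $\epsilon$-net into this affine slice of the set of density matrices; a standard volumetric/metric-entropy argument (as in the state tomography lower bounds of Haah et al.\ or O'Donnell--Wright) gives a packing of size $\exp(\Omega((\din\dout)^2))$ in trace distance, and one checks the constraint costs only a constant factor in the exponent, or alternatively builds the net directly inside the constrained manifold whose real dimension is $\din^2\dout^2 - \din^2$. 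A successful tomography algorithm that outputs $\chan N$ with $\norm{\chan M - \chan N}_{\lsd} < 1/16$ can then identify which $\chan N_i$ was the input (by outputting the nearest one in the net), so it solves a $K$-ary hypothesis testing problem.

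Next I would invoke the information-theoretic bottleneck. An adaptive, coherent, ancilla-assisted algorithm making $T$ queries to $\chan M$ produces, in the worst case, a quantum state on which the measurement is performed; by Holevo's theorem the accessible information about the uniformly random index $\bs i \in [K]$ is at most the Holevo quantity of the ensemble of these states. The key step is to bound this Holevo quantity by $O(T \log(\din\dout))$: each query to $\chan N_i$ can be simulated by applying $\chan N_i \otimes \idsupop$ to some state of the algorithm's choosing on a space of dimension at most $\din \cdot (\text{ancilla dim})$, but since $\chan N_i$ has output dimension $\dout$ and input dimension $\din$, a single application of the channel cannot increase the Holevo information by more than roughly $2\log \din + 2\log\dout = O(\log(\din\dout))$ — this is where one uses that a quantum channel, viewed as a map that can be dilated through an environment, has bounded "information-carrying capacity per use", formalized e.g.\ via a chain rule for Holevo information across the $T$ queries together with subadditivity and the fact that the channel output lives in dimension $\dout$ (and the ancilla that gets entangled with it through one query has effective dimension controlled by $\din$). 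Combining, successful identification forces $H(\bs i) = \log K \le O(T\log(\din\dout)) + O(1)$, hence $T \ge \Omega(\log K / \log(\din\dout)) = \Omega(\din^2\dout^2/\log(\din\dout))$.

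The main obstacle I expect is making the per-query information bound fully rigorous in the \emph{adaptive} setting: one must argue that even though the algorithm interleaves arbitrary unitaries and fresh ancillae between calls to $\chan M$, the mutual information with $\bs i$ grows by only $O(\log(\din\dout))$ per call, uniformly over the adaptive strategy. The clean way to handle this is the "hybrid argument" / information-chaining technique from quantum query lower bounds (as used by Haah--Kothari--O'Donnell--Tang for state tomography and by Chen--Li--Liu--Liu or Aharonov et al.\ for channel/Pauli learning): track $I(\bs i : \text{algorithm's register})$ after each query, show the increment per query is bounded because the channel factors through a register of dimension $\dout$ (so by Holevo/quantum-data-processing the "new" information is at most $2\log\dout$) together with the input-side constraint that only a $\din$-dimensional subsystem is fed in (contributing at most $2\log\din$); summing the $T$ increments yields the bound. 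A secondary technical point is verifying that the Choi-state packing can be taken to consist of \emph{bona fide} channels with the exact partial-trace constraint while retaining $\log K = \Omega(\din^2\dout^2)$ and pairwise separation bounded below by a constant comfortably larger than $2 \times \frac{1}{16}$, so that the reduction from tomography to identification goes through by the triangle inequality; this is routine but must be stated. The hypothesis $\dout \ge 4$ presumably enters to guarantee enough room in the output space for the packing dimension count to be clean, and $\din \ge 1$ is just for non-degeneracy.
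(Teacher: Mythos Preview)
Your proposal is correct and follows essentially the same route as the paper. The paper cites Oufkir~\cite{O23} for the packing $C$ of channels with $|C|\ge\exp(\Omega(\din^2\dout^2))$ and pairwise ACID distance $>1/8$, reduces tomography to exact identification, and then tracks the conditional von Neumann entropy $S(\reg A\mid\reg B_k\reg C_k)$ across queries (equivalent to your Holevo/mutual-information increment), bounding each increment by $2\log(\din\dout)$ via subadditivity and the entropy triangle inequality applied to the $\din$- and $\dout$-dimensional query registers --- exactly the mechanism you outline.
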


Since the ACID norm is trivially at most the diamond norm, results of Oufkir~\cite{O23} resolve the complexity of incoherent, non-adaptive channel tomography in both ACID and diamond distances:

\begin{thm}[{Incoherent channel tomography~\cite[Theorems 3.3 and 2.1\protect\footnote{The lower bound is stated in terms of diamond distance, but inspection of the proof reveals that it holds for ACID distance.}]{O23}}] \label{thm:DiamondTomTight}
    There is an ancilla-free, non-adaptive algorithm that makes $\tilde O\Paren{\din^3 \dout^3 \big/ \eps^2}$ queries to a channel $\chan M: \C^{\din \times \din} \to \C^{\dout \times \dout}$, and outputs the description of a channel $\chan N$ such that $\norm{\chan M - \chan N}_\diamond \le \eps$ with probability at least $2/3$. Furthermore $\Omega\Paren{\din^3 \dout^3 \big/ \eps^2}$ queries are necessary for this task when $\dout \ge 4$, even using ancillae (but not coherence or adaptivity) and with the ACID norm in place of the diamond norm.
\end{thm}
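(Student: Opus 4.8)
The plan is to derive both bounds from Oufkir~\cite{O23}, together with two small observations that move them from the diamond norm to the ACID norm. For the upper bound, since $\norm{\cdot}_\lsd \le \norm{\cdot}_\diamond$, it suffices to output a channel $\chan N$ with $\norm{\chan M - \chan N}_\diamond \le \eps$; the same estimate then automatically has $\norm{\chan M - \chan N}_\lsd \le \eps$. Without ancillae I would run process tomography: feed a fixed family of pure input states through $\chan M$, run incoherent non-adaptive state tomography on each $\dout$-dimensional output, and assemble the results into the action of $\chan M$ on a basis of $\C^{\din \times \din}$. Since incoherent non-adaptive tomography of a $D$-dimensional state to trace distance $\delta$ costs $\ot{D^3/\delta^2}$ copies and the Choi operator lives in dimension $\din\dout$, the target is $\ot{(\din\dout)^3/\eps^2}$ queries. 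The delicate point is turning estimation error into diamond-norm error without loss: the crude bound $\norm{\chan M - \chan N}_\diamond \le \din\norm{\choi M - \choi N}_1$ costs a factor of $\din$, and diamond distance genuinely can be a full factor of $\din$ larger than ACID distance (e.g.\ for a measure-and-prepare channel versus a pure-state replacement channel), so a black-box reduction to state tomography on $\choi M$ is wasteful; one instead has to control the diamond-norm error directly through the choice of input ensemble and estimator. Finally I would post-process the estimate by projecting onto the set of Choi operators of channels, which guarantees the output is a genuine channel at the cost of only a constant factor in the error, and output that channel.

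For the lower bound, I would exhibit a large family of channels that are pairwise far apart in ACID distance but indistinguishable to incoherent queries. Take the channels whose Choi operators are $\choi M = \idop_{\din\dout}/(\din\dout) + \eps\Delta$, where $\Delta$ ranges over a packing of the real vector space of Hermitian matrices on $\C^{\din\dout}$ satisfying $\tr_{\dout}\Delta = 0$ (so the reduced state on the input register is maximally mixed and $\choi M$ is the Choi operator of a genuine channel), scaled so that $\norm{\Delta}_\infty \le \oo{1/(\eps\din\dout)}$ (so $\choi M \ge 0$). This space has real dimension $\din^2(\dout^2 - 1) = \Theta(\din^2\dout^2)$ once $\dout \ge 4$, so a standard packing argument yields $\exp(\Omega(\din^2\dout^2))$ such $\Delta$'s that are $\Omega(1)$-separated in trace distance, hence a family of channels with $\norm{\chan M - \chan M'}_\lsd = \norm{\choi M - \choi{M'}}_1 = \eps\norm{\Delta - \Delta'}_1 = \Omega(\eps)$ pairwise. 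By the Choi--Jamio\l{}kowski isomorphism, the $t$-th query of an ancilla-assisted incoherent non-adaptive tester is a single-copy measurement of the state $\din(\idop \otimes A_t^\top)\choi M(\idop \otimes \overline{A_t})$ for some matrix $A_t$ with $\norm{A_t}_2 = 1$ (mixed or larger-ancilla inputs only weaken this), so the whole tester amounts to $n$ single-copy measurements on distorted copies of $\choi M$. A chi-square / Fisher-information argument of the kind underlying the $\Omega(D^3/\eps^2)$ lower bound for incoherent state tomography bounds the information a single query reveals about $\Delta$ by $\oo{\eps^2/(\din\dout)}$, uniformly over $A_t$ and the measurement; since identifying an element of the packing needs $\Omega(\din^2\dout^2)$ bits, this forces $n = \Omega(\din^3\dout^3/\eps^2)$. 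Because the packing is $\Omega(\eps)$-separated in ACID distance, taking $\eps$ a small constant shows that any algorithm outputting $\chan N$ with $\norm{\chan M - \chan N}_\lsd < 1/16$ pins down the planted channel, so the lower bound holds for the ACID norm, and a fortiori for the diamond norm since $\norm{\cdot}_\lsd \le \norm{\cdot}_\diamond$.

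I expect the main obstacle on the upper-bound side to be exactly the diamond-norm error analysis: one must avoid paying the factor of $\din$ separating diamond distance from ACID distance, which is precisely what a black-box reduction to Choi-state tomography would cost, so the estimator and input ensemble have to be designed with the diamond norm in mind. On the lower-bound side the main obstacle is making the per-query information bound hold uniformly over \emph{all} ancilla-assisted inputs $A_t$, not just the maximally entangled one, while keeping the Choi operators positive; granting that, the ACID-distance separation of the packing comes for free, and together with $\norm{\cdot}_\lsd \le \norm{\cdot}_\diamond$ it lets a single construction and a single algorithm handle both norms.
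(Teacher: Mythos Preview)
The paper does not prove this theorem; it is stated purely as a citation of Oufkir~\cite[Theorems 3.3 and 2.1]{O23}. The only argument the paper supplies is the sentence preceding the theorem (``Since the ACID norm is trivially at most the diamond norm, results of Oufkir~\cite{O23} resolve the complexity\dots'') together with the footnote remark that inspection of Oufkir's lower-bound proof shows the hard family is already separated in ACID distance. In other words, the paper's ``proof'' consists of exactly the two small observations you identify in your first and last paragraphs: $\norm{\cdot}_{\lsd} \le \norm{\cdot}_\diamond$ transfers Oufkir's diamond-norm upper bound to the ACID norm, and Oufkir's packing is pairwise far in $\norm{\choi M - \choi{M'}}_1 = \norm{\chan M - \chan M'}_{\lsd}$, so the lower bound holds for ACID distance as well.

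Your proposal goes much further and sketches the content of Oufkir's arguments themselves. The sketch is broadly reasonable and correctly identifies the genuine difficulties (controlling the diamond-norm error without paying the factor of $\din$ that a black-box reduction to Choi-state tomography would cost; making the per-query information bound uniform over all ancilla-assisted inputs), but none of this is in the present paper. If the assignment is to compare against the paper's own proof, then the right answer is that the paper offers no proof beyond the citation and the two one-line reductions between norms, both of which you state correctly.
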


Similarly, the following upper bound of Haah, Kothari, O’Donnell and Tang~\cite[Theorem 1.1]{HKOT23} and lower bound of Zhao, Lewis, Kannan, Quek, Huang and Caro~\cite[$G=d^2$ case of Theorem 4]{Zhao+23}\footnote{In \cref{subsec:avg-unitary} we explain why Zhao et al.'s distance is equivalent to ACID distance.} resolve the complexity of \emph{unitary} tomography in both ACID and diamond distances:

\begin{thm}[{Unitary tomography~\cite{HKOT23,Zhao+23}}]
    There is an ancilla-free, adaptive algorithm that makes $O(d^2/\eps)$ queries to a unitary channel $\chan M: \C^{d \times d} \to \C^{d \times d}$, and outputs the description of a unitary channel $\chan N$ such that $\norm{\chan M - \chan N}_\diamond \le \eps$ with probability at least $2/3$. Furthermore $\Omega(d^2/\eps)$ queries are necessary for this task, even using ancillae and coherence, and even with the ACID norm in place of the diamond norm.
\end{thm}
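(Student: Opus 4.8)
Because this statement merely compiles two prior results, the plan is to assemble it from those results using two elementary glue facts: the inequality $\norm{\chan L}_\lsd \le \norm{\chan L}_\diamond$ (immediate, since $\Phi_{\din}$ is one admissible input in the maximization defining $\norm\cdot_\diamond$), and the identification, on unitary channels, of ACID distance with the distance used by Zhao et al., which I would establish in \cref{subsec:avg-unitary}.

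For the upper bound I would invoke the ancilla-free adaptive unitary process tomography algorithm of Haah, Kothari, O'Donnell and Tang~\cite{HKOT23}, which makes $O(d^2/\eps)$ queries to $\chan M$ and returns a unitary channel $\chan N$ with $\norm{\chan M - \chan N}_\diamond \le \eps$. Since $\norm{\chan M - \chan N}_\lsd \le \norm{\chan M - \chan N}_\diamond \le \eps$, the very same algorithm and output also satisfy the ACID guarantee, so the $O(d^2/\eps)$ bound holds for both norms at once, with no further work.

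For the lower bound the one substantive step is to put ACID distance between unitary channels into closed form. If $\chan M(X) = U X \adj U$ and $\chan N(X) = V X \adj V$, then $\choi{M}$ and $\choi{N}$ are the \emph{pure} states $\kb{\Phi_U}$ and $\kb{\Phi_V}$ with $\ket{\Phi_U} \coloneqq (U \otimes I)\frac{1}{\sqrt d}\sum_i \ket{ii}$, so $\braket{\Phi_U}{\Phi_V} = \tr(\adj U V)/d$ and hence $\norm{\chan M - \chan N}_\lsd = 2\sqrt{1 - \abs{\tr(\adj U V)}^2/d^2}$. As I would verify in \cref{subsec:avg-unitary}, this coincides, up to a universal constant, with Zhao et al.'s distance; consequently their $\Omega(d^2/\eps)$ lower bound (the $G = d^2$ case of~\cite{Zhao+23}) for outputting a unitary within that distance of an unknown unitary channel transfers to ACID distance up to rescaling $\eps$ by a constant, keeping its tolerance of ancillae and coherence. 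Finally, because $\norm\cdot_\lsd \le \norm\cdot_\diamond$, any algorithm solving the tomography task to diamond-precision $\eps$ also solves it to ACID-precision $\eps$, so the ACID lower bound immediately implies the diamond lower bound, giving $\Omega(d^2/\eps)$ in both norms.

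The only real obstacle lies in this last glue step: one must confirm that Zhao et al.'s distance is related to ACID distance \emph{linearly} rather than, say, quadratically --- a quadratic relation would turn their bound into $\Omega(d^2/\eps^2)$ after rescaling $\eps$, which would fail to match~\cite{HKOT23} --- and that their lower bound holds, or can be restated, for the small-$\eps$ regime we need. Beyond these two citations and the one-line Choi-state computation above, there is no new mathematical content.
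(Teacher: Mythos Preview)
Your proposal is correct and matches the paper's approach essentially line for line: the paper also presents this theorem as a citation-only compilation, invoking \cite[Theorem 1.1]{HKOT23} for the upper bound and the $G=d^2$ case of \cite[Theorem 4]{Zhao+23} for the lower bound, with the same two glue facts you identify (the inequality $\norm{\cdot}_\lsd \le \norm{\cdot}_\diamond$, and the identification of Zhao et al.'s distance with ACID distance on unitaries carried out in \cref{subsec:avg-unitary}, specifically \cref{eq:MdW-dist,eq:acid-unitary-l2}). Your closed-form Choi computation is exactly \cref{eq:MdW-dist}, and your caveat about needing a \emph{linear} relation is resolved there as well.
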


\begin{table}
    \centering
    \begin{tabular}{|Sc|Sc||Sc|Sc|Sc|}
    \cline{3-5}
    \multicolumn2{c||}{}
    & Ancilla-free
    & \makecell{Ancilla-assisted, \\ Incoherent}
    & \makecell{Coherent}
    \\ \hline\hline
    \multirow2*{\makecell{Generic \\ channel \\ certification}}
    & $\diamond$
    & \makecell{$\bs{\tilde O \Paren{\din^2 \dout^{3/2} / \eps^2}}$ \\ \cref{thm:ub-general-diamond}}
    & 
    &  \makecell{$O\Paren{\din^2 \dout / \eps^2}$ \\ \cite[Thm.\ 1.4]{BOW19} \\ \cite[Lem.\ C.1]{FFGO23} \\[+1.5mm] *$\bs{\Omega\Paren{\din^{1/2} / \eps}}$ \\ \cref{thminf:diamond-LB}}
    \\ \cline{2-5}
    & $J$
    & \makecell{$\bs{\tilde O \Paren{\din \dout^{3/2} / \eps^2}}$ \\ \cref{thminf:ub-general}}
    & 
    & \makecell{$O \Paren{\din \dout / \eps^2}$ \\ \cref{thm:trivial} \\[+1.5mm] $\bs{O\Paren{\dout^4 / \eps^4}}$ with $\chan M, \dual M$ \\ \cref{thm:ub-dual}}
    \\ \hline
    \multirow2*{\makecell{Completely \\ depolarizing \\ channel}}
    & $\diamond$
    & \makecell{$O \Paren{\din^2 \dout^{3/2} / \eps^2}$ \\ \cite[Thm.\ 4.4]{FFGO23}}
    & \makecell{$\tilde\Omega\Paren{\din^2 \dout^{3/2} / \eps^2}$ \\ \cite[Thm.\ 4.5]{FFGO23}}
    & 
    \\ \cline{2-5}
    & $J$
    & \makecell{$\bs{O \Paren{\din \dout^{3/2} / \eps^2}}$ \\ \cref{thm:ub-depol} \\[+1.5mm] $\bs{\Omega\Paren{\din/\eps^2}}$ \\ \cref{thm:lb-depol} \\[+1.5mm] *$\Omega\Paren{\dout^{3/2} / \eps^2}$ \\ \cite[Thm.\ 6.1]{CLHL22}}
    & 
    & 
    \\ \hline
    \multirow2*{\makecell{Unitary \\ channel}}
    & $\diamond$
    & \makecell{$O(d/\eps^4)$ \\ \cite[Thm.\ 3.1]{FFGO23}}
    & \makecell{*$\Omega\Paren{d/\eps^2}$ \\ \cite[Thm.\ 3.1]{FFGO23}}
    &
    \\ \cline{2-5}
    & $J$
    & \makecell{$\bs{O(1/\eps^2)}$ \\ \cref{thminf:erasure-etc-upper-bounds}}
    & 
    & 
    \\ \hline
    \multirow2*{\makecell{Pure state \\ replacement \\ channel}}
    & $\diamond$
    & 
    & 
    & \makecell{*$\bs{\Omega\Paren{\din/\eps}}$ \\ \cref{rmk:ReplacementLB}}
    \\ \cline{2-5}
    & $J$
    & \makecell{$\bs{O(1/\eps^2)}$ \\ \cref{thminf:erasure-etc-upper-bounds}}
    & 
    & 
    \\ \hline
    \multirow2*{\makecell{\\[-1.5mm] Erasure \\ channel}}
    & $\diamond$
    & 
    & 
    & 
    \\ \cline{2-5}
    & $J$
    & \makecell{$\bs{O(1/\eps^2)}$ \\ \cref{thminf:erasure-etc-upper-bounds}}
    & 
    & 
    \\ \hline
    \multirow2*{Tomography}
    & $\diamond$
    & \makecell{$\tilde O\Paren{\din^3 \dout^3 \big/ \eps^2}$ \\ \cref{thm:DiamondTomTight}}
    & 
    & 
    \\ \cline{2-5}
    & $J$
    & 
    & \makecell{$\Omega\Paren{\din^3 \dout^3 \big/ \eps^2}$ \\ \cref{thm:DiamondTomTight}}
    & \makecell{$\bs{O\Paren{\din^2 \dout^2 / \eps^2}}$ \\ \cref{thm:UBCoherentTom} \\[+1.5mm] *$\bs{\tilde\Omega\Paren{\din^2 \dout^2}}$ \\ \cref{thm:CoherentTomAcid}}
    \\ \hline
    \end{tabular}
    \caption{Query complexity of channel certification and tomography in both diamond ($\diamond$) and ACID ($J$) distances. A star denotes adaptivity. Nontrivial results from this paper (i.e.\ excluding direct reductions to state certification and state tomography) are in bold font.}
    \label{tab:Results}
\end{table}

\subsection{Open problems}
\label{subsec:discussion}

\paragraph*{Instance optimality} The sample complexity of testing identity to a fixed state $\sigma \in \C^{d \times d}$ using unentangled measurements is roughly $d^{3/2}/\eps^2$ times the (square) fidelity of $\sigma$ with the maximally mixed state~\cite{CLO22,CLHL22}. Analogously, what is the query complexity of testing identity to a fixed channel $\chan N$ in any of the query models that we have discussed? One may approach this question by trying to close some of the gaps between the upper and lower bounds in \cref{tab:Results}. What if we consider \emph{tolerant testing}, where the goal is to decide whether $\norm{\chan M - \chan N}_{\lsd} \le \delta$ or $\norm{\chan M - \chan N}_{\lsd} \ge \eps$? What if we also require our protocols to be \emph{computationally} efficient, for example by sampling states from a locally scrambled ensemble~\cite[Definition 1]{Zhao+23} instead of the Haar measure?

\paragraph*{Testing and tomography of channels with bounded gate complexity}

Zhao et al.~\cite[Theorem 4]{Zhao+23} proved that $\tilde O\Paren{G/\eps \cdot \min\Paren{1/\eps, \sqrt{d}}}$ queries suffice and $\Omega(G/\eps)$ queries are necessary to learn in ACID distance a $d$-dimensional unitary channel comprised of $G$ two-qubit gates. Does a similar statement hold for arbitrary channels? What about for testing rather than tomography?

\paragraph*{Junta testing and tomography}
A \emph{$k$-junta} is a channel from $\Paren{\C^{2 \times 2}}^{\otimes n}$ to $\Paren{\C^{2 \times 2}}^{\otimes n}$ that acts nontrivially on at most $k$ qubits. Chen, Nadimpalli and Yuen~\cite{Chen2023Junta} proved that $\tilde\Theta(\sqrt k)$ queries to a unitary channel are necessary and sufficient to test whether it is a $k$-junta or far from all $k$-juntas, and that $\tilde\Theta(4^k)$ queries are necessary and sufficient to learn a unitary $k$-junta. (We have suppressed the dependence on $\eps$ for simplicity.) In \cref{subsec:avg-unitary} we show that their distance is proportional to ACID distance, so it is natural to ask whether their results generalize to the case where the blackbox channel is not necessarily unitary, with distance measured in the ACID norm. Bao and Yao~\cite{Bao23Junta} proved similar results (except with only an $\tilde O(k)$ upper bound for testing) when the blackbox channel is not necessarily unitary, but they measured distance between channels by the $\ell_2$ distance between their Choi states, a quantity which is only loosely related to the ACID norm via Cauchy-Schwarz and the fact that the 2-norm is at most the 1-norm.

\paragraph*{Fault-tolerance}
The \emph{quantum fault-tolerance theorem} (also called the \emph{threshold theorem}) says that if each gate in a quantum circuit introduces limited error, then under certain physically realistic assumptions it is possible to design quantum circuits that achieve low error overall~\cite[Section 10.6]{NC10}. Here, errors in individual gates and in the overall circuit are measured in the diamond norm. Does the same statement hold with respect to the ACID norm? For individual gates that act on a constant number of qubits each, the ACID and diamond norms are equivalent ways of measuring error up to a constant factor (see \cref{thm:diamond-acid-relat} for precise bounds), but this constant factor can still make a difference in practice. Furthermore, scaling a general-purpose quantum computer to millions of physical qubits will require partitioning it into modules of tens or hundreds of qubits each where good control has been achieved~\cite{Ach+24}, and one may wish to verify the accuracy of the overall quantum computer by certifying each module individually and then applying a version of the fault-tolerance theorem where the ``gates" are these large modules. We discuss this question further in \cref{subsec:fault-tolerance}.

\section{Preliminaries}
\label{sec:prelims}

We write $\pr\cdot$ to denote probability, $\E[\cdot]$ to denote expected value, $\tr(\cdot)$ to denote trace, and $[n]$ to denote the set $\{1, 2, \dotsc, n\}$ for $n\in \N$. Logarithms in this paper are base 2. We write random variables in boldface font. A statement about a random variable $\bs X$ holds \emph{pointwise} if it holds for all fixed values in the support of $\bs X$.

\subsection{Quantum states and transformations}

We denote the identity matrix in $\C^{d \times d}$ by $I_d$, or just $I$ when $d$ is implicit. The \emph{maximally entangled state} in $\C^d \otimes \C^d$ is the state $\ket{\Phi_d} \coloneqq \frac1{\sqrt d} \sum_{i=1}^d \ket{ii}$, or just $\ket\Phi$ when $d$ is implicit. We also write
\begin{equation*}
    \Phi = \Phi_d \coloneqq \kb{\Phi_d} = \frac1d \sum_{i,j=1}^d \ketbra{ii}{jj}.
\end{equation*}
For a matrix $A$, let $A^*$ denote its element-wise complex conjugate. It is well known that for all matrices $A \in \C^{m \times n}$,
\begin{equation} \label{eq:Phi-trans}
    \sqrt{n} (A \otimes I_n) \ket{\Phi_n} = \sqrt{m} \Paren{I_m \otimes A^\top} \ket{\Phi_m}.
\end{equation}

A matrix is \emph{positive semidefinite} (PSD) if it is Hermitian and its eigenvalues are all nonnegative. A \emph{density matrix} is a PSD matrix whose trace is 1. We denote the set of density matrices in $\C^{d \times d}$ by $\dens d$. A \emph{positive operator-valued measure} (POVM) is a tuple of PSD matrices summing to the identity; if $\rho$ is a density matrix and $(P_1, \dotsc, P_n)$ is a POVM, then $(\tr(P_1 \rho), \dotsc, \tr(P_n \rho))$ is a probability distribution that can physically be sampled from given a copy of $\rho$. A \emph{projection-valued measure} (PVM) is a POVM whose elements are projections onto orthogonal subspaces.

For a pure state $\ket\psi$ we write $\psi = \kb\psi$, for example to denote the rank-1 density matrix or PVM element corresponding to $\ket\psi$. Often we will not need to refer to $\ket\psi$ at all except as part of $\kb\psi$, and in these cases we may \emph{define} $\psi$ to be a pure state, with the lack of a ket symbol indicating that $\psi$ is a rank-1 density matrix rather than a column vector. In particular, a ``Haar random state $\bs\psi$" means $\kb{\bs\psi}$ for a Haar random state $\ket{\bs\psi}$. We will often implicitly use the fact that if $\bs\psi \in \dens d$ is Haar random then $\E[\bs\psi] = I/d$.

A \emph{superoperator} is a linear transformation from $\C^{\din \times \din}$ to $\C^{\dout \times \dout}$. We denote the set of superoperators of these dimensions by $\super \din \dout$, and also define
\begin{equation*}
    \spr \din \coloneqq \bigcup_{\mathclap{\dout \in \N}} \super \din \dout.
\end{equation*}
We denote a superoperator $\chan L$ applied to an
input $X$ by any of $\chan L(X)$ or $\chan L \cdot X$ or $\chan L X$.\footnote{No relation to the Pauli $X$ matrix.} If we define a superoperator $\chan L \in \spr d$ by its action on an unspecified matrix $X$, then $X$ implicitly ranges over all matrices in $\C^{d \times d}$. We write superoperators in mathcal font.

We denote the identity superoperator in $\super d d$ by $\chan I_d$, or just $\chan I$ when $d$ is implicit. The \emph{Choi operator} of a superoperator $\chan L \in \spr d$ is the matrix
\begin{equation*}
    \choi L
    \coloneqq \Paren{\chan L \otimes \chan I_d} \Phi_d
    = \frac1d \sum_{i,j=1}^d \chan L\biggl(\ketbra{i}j\biggr) \otimes \ketbra{i}j.
\end{equation*}

A \emph{channel} is a superoperator $\chan N$ that is completely positive and trace-preserving. \emph{Completely positive} means that $\chan N \otimes \chan I_d$ maps every PSD input to a PSD output for all $d$, or equivalently that $\choi N$ is PSD~\cite[Theorem 2.22]{Wat18}. \emph{Trace-preserving} means that $\tr(\chan N(X)) = \tr(X)$ for all $X$. We denote the set of channels from $\C^{\din \times \din}$ to $\C^{\dout \times \dout}$ by $\cptp \din \dout$. The Choi operator of a channel is called a \emph{Choi state}.

We write $\tr_d \in \cptp d 1$ to denote the channel that traces out its entire $d$-dimensional input, i.e.\ $\tr_d(X) = \tr(X)$. (This is the exception to our criterion that superoperators are written in mathcal font.) Thus $\chan I \otimes \tr_d$ denotes a partial trace.

A superoperator is called \emph{Hermitian-preserving} if it maps every Hermitian input to a Hermitian output. For example, a channel is Hermitian-preserving, as is the difference between two channels. Every Hermitian-preserving superoperator $\chan L$ can be expressed as
\begin{equation} \label{eq:kraus}
    \chan L(X) = \sum_j \pm A_j X \adj A_j
\end{equation}
for some matrices $A_j$~\cite[{Theorems 2.22 and 2.25\protect\footnote{Specifically, Theorem 2.25 says that every Hermitian-preserving superoperator can be expressed as the difference between two completely positive superoperators, and Theorem 2.22 says that every completely positive superoperator can be expressed as in \cref{eq:kraus} without the plus-or-minus signs.}}]{Wat18}.

A \emph{register} is a finite-dimensional complex Hilbert space. We write $\reg{AB}$ to denote the tensor product of registers $\reg A$ and $\reg B$, and $\dens{\reg A}$ to denote the set of density matrices in a register $\reg A$. We also write $\dens{d_1 \otimes d_2}$ to denote the set of density matrices in $\C^{d_1 \times d_1} \otimes \C^{d_2 \times d_2}$, and similarly for $\super \cdot \cdot$ and $\cptp \cdot \cdot$.

\begin{lem} \label{lem:tr-choi-prod}
    Define superoperators $\chan K, \chan L \in \spr d$ by $\chan K(X) = A X \adj A$ and $\chan L(X) = B X \adj B$ for some matrices $A,B$. Then $\tr(\choi K \choi L) = \Mag{\tr\Paren{\adj A B}}^2 / d^2$.
\end{lem}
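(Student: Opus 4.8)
The plan is to compute $\choi K$ and $\choi L$ explicitly from the Kraus form and then use the identity \cref{eq:Phi-trans} to collapse the trace to a scalar. First I would write, using the definition of the Choi operator,
\begin{equation*}
    \choi K = (\chan K \otimes \chan I_d) \Phi_d = (A \otimes I_d)\, \Phi_d\, (\adj A \otimes I_d),
\end{equation*}
and likewise $\choi L = (B \otimes I_d)\, \Phi_d\, (\adj B \otimes I_d)$, since $\Phi_d = \kb{\Phi_d}$ and applying a single-Kraus channel to a pure state just conjugates the vector. Hence $\choi K = \kb{u}$ and $\choi L = \kb{v}$ where $\ket u = (A \otimes I_d)\ket{\Phi_d}$ and $\ket v = (B \otimes I_d)\ket{\Phi_d}$, so $\tr(\choi K \choi L) = \abs{\braket u v}^2$.

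Next I would evaluate $\braket u v = \bra{\Phi_d}(\adj A B \otimes I_d)\ket{\Phi_d}$. The standard fact that $\bra{\Phi_d}(M \otimes I_d)\ket{\Phi_d} = \tr(M)/d$ for any $M \in \C^{d\times d}$ — which one gets immediately by expanding $\ket{\Phi_d} = \frac1{\sqrt d}\sum_i \ket{ii}$ — gives $\braket u v = \tr(\adj A B)/d$. Therefore $\tr(\choi K \choi L) = \abs{\tr(\adj A B)}^2/d^2$, as claimed.

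There is essentially no obstacle here; the only point requiring a little care is making sure the conjugate/transpose bookkeeping is right (e.g.\ that it is $\tr(\adj A B)$ rather than $\tr(A \adj B)$ or $\tr(A^\top B)$), which the modulus-squared makes irrelevant anyway since $\abs{\tr(\adj A B)} = \abs{\tr(\adj B A)}$. If one prefers to avoid the rank-one shortcut, the same computation goes through by writing $\Phi_d = \frac1d\sum_{i,j}\ketbra{ij}{ij}$ — using here the notation where the two tensor factors are displayed as $\ket{i}\otimes\ket{j}$, so $\Phi_d = \frac1d\sum_{i,j}\ketbra{ii}{jj}$ as in the preliminaries — substituting into $\choi K$ and $\choi L$, and multiplying out, but the one-line version above is cleaner. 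Alternatively one could invoke \cref{eq:Phi-trans} to rewrite $(A\otimes I_d)\ket{\Phi_d}$ as $(I_d \otimes A^\top)\ket{\Phi_d}$ and proceed symmetrically, but this is not needed.
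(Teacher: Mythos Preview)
Your proof is correct and follows essentially the same approach as the paper's: both recognize $\choi K$ and $\choi L$ as rank-one operators $(A\otimes I)\Phi(\adj A\otimes I)$ and $(B\otimes I)\Phi(\adj B\otimes I)$, reduce $\tr(\choi K \choi L)$ to $\abs{\bra{\Phi}(\adj A B\otimes I)\ket{\Phi}}^2$, and evaluate this as $\abs{\tr(\adj A B)}^2/d^2$. The paper expands the inner product explicitly as a sum over basis states, while you invoke the identity $\bra{\Phi_d}(M\otimes I_d)\ket{\Phi_d}=\tr(M)/d$ directly; these are the same computation.
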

\begin{proof}
    We have
    \begin{align*}
        \tr(\choi K \choi L)
        &= \tr\Paren{(A \otimes I) \Phi \Paren{\adj A B \otimes I} \Phi \Paren{\adj B \otimes I}} \\
        &= \Mag{\bra\Phi \Paren{\adj A B \otimes I} \ket\Phi}^2 \\
        &= \Mag{\frac1d \sum_{j,k=1}^d \bra{jj} \Paren{\adj A B \otimes I} \ket{kk}}^2 \\
        &= \Mag{\frac1d \sum_{j=1}^d \bra{j} \adj A B \ket{j}}^2 \\
        &= \frac1{d^2} \Mag{\tr(\adj A B)}^2. \qedhere
    \end{align*}
\end{proof}

For $d \in \N$ let $\swap d = \sum_{i,j=1}^d \ketbra{ij}{ji}$. This matrix is Hermitian and unitary, so its eigenvalues are all $\pm1$. The +1 eigenspace of $\swap d$ is known as the \emph{symmetric subspace} of $\C^d \otimes \C^d$. Let $\sym d$ denote the projection onto this subspace.

It follows immediately that $\swap d = 2 \sym d - I$. Furthermore $\sym d = d(d+1)/2 \cdot \E\Brac{\bs\psi^{\otimes 2}}$ for Haar random $\bs\psi \in \dens d$~\cite[Proposition 6]{Har13}, and combining these equations yields
\begin{equation} \label{lem:sym}
    \E\Brac{\bs\psi^{\otimes 2}}
    = \frac2{d(d+1)} \sym d
    = \frac1{d(d+1)} \Paren{I + \swap d}
    = \frac1{d(d+1)} \Paren{I + \sum_{i,j=1}^d \ketbra{ij}{ji}},
\end{equation}
one consequence of which is the fact~\cite[Lemma 4.2.4]{HP00} that for all $i,j \in [d]$,
\begin{equation} \label{lem:haar-moment}
    \E\Brac{\Mag{\braket{i}{\bs\psi}}^2 \cdot \Mag{\braket{j}{\bs\psi}}^2}
    = \begin{cases}
        2/d(d+1) &\text{if } i=j,\\
        1/d(d+1) &\text{if } i\neq j.
    \end{cases}
\end{equation}

\subsection{Matrix norms and fidelity}

For $1 \le p \le \infty$, the \emph{Schatten $p$-norm} of a matrix $A$ is the $p$-norm of the vector of singular values of $A$, and is denoted $\norm A_p$. In particular, we use that $\norm{A}_\infty$ equals the largest singular value of $A$ and that $\norm{A}_2^2 = \tr\Paren{A \adj A}$. The quantity $\norm{A}_1$ is called the \emph{trace norm} of $A$, and has the equivalent definition~\cite[Eq.\ 1.173]{Wat18}
\begin{equation} \label{eq:trace-norm-var-def}
    \norm{A}_1 = \max_{\mathclap{\norm{B}_\infty = 1}} |\tr(AB)|,
\end{equation}
with the maximum achieved by a Hermitian matrix $B$ when $A$ is Hermitian. We use the fact~\cite[Eq.\ 1.186]{Wat18} that for all pure states $\ket\psi$ and $\ket\phi$,
\begin{equation} \label{eq:pure-state-td}
    \norm{\psi - \phi}_1
    = 2\sqrt{1 - \tr(\psi \phi)}.
\end{equation}
It follows from \cref{eq:pure-state-td} that
\begin{equation} \label{eq:td2d}
    \norm{\psi - \phi}_1 \le 2 \norm{\ket\psi - \ket\phi}_2,
\end{equation}
since $1 - \tr(\psi\phi) = (1 + \Mag{\braket\psi\phi}) (1 - \Mag{\braket\psi\phi}) \le 2(1 - \mrm{Re}[\braket\psi\phi]) = \norm{\ket\psi - \ket\phi}_2^2$.

The (square) \emph{fidelity} of density matrices $\rho$ and $\sigma$ is the quantity $\fid \rho \sigma \coloneqq \norm{\sqrt\rho \sqrt\sigma}_1^2$. In particular, if $\rho \in \dens d$ is an arbitrary density matrix then $\fid {\rho} {I/d} = \tr(\sqrt\rho)^2 / d$, and if furthermore $\psi \in \dens d$ is a pure state then $\fid \rho \psi = \tr(\rho \psi)$. We also use the following half of the Fuchs–-van de Graaf inequalities: for all density matrices $\rho$ and $\sigma$,
\begin{equation} \label{eq:fvdg}
    \frac12 \norm{\rho - \sigma}_1 \le \sqrt{1 - \fid \rho \sigma}.
\end{equation}

Finally, recall from \cref{subsec:diamond-hardness,subsec:main-results} that the \emph{induced trace norm}, \emph{diamond norm}, and \emph{ACID norm} of a superoperator $\chan L \in \spr d$ are respectively defined by
\begin{align*}
    \norm{\chan L}_1
    &\coloneqq \max_{\norm{X}_1 = 1} \norm{\chan L(X)}_1, \\
    \norm{\chan L}_\diamond
    &\coloneqq \norm{\chan L \otimes \chan I_d}_1
    = \max_{\norm{X}_1 = 1} \norm{(\chan L \otimes \chan I_d) X}_1, \\
    \norm{\chan L}_{\lsd}
    &\coloneqq \Norm{\choi L}_1
    = \norm{(\chan L \otimes \chan I_d) \Phi_d}_1.
\end{align*}
When $\chan L$ is Hermitian-preserving, the maxima in the definitions of the induced trace norm and diamond norm are achieved when $X$ is Hermitian, and therefore (by convexity) when $X$ is a pure state. It is well known that $\norm{\chan L}_1 \le \norm{\chan L}_\diamond$ for all superoperators $\chan L$, and also that $\norm{\chan N}_1 = \norm{\chan N}_\diamond = 1$ for all channels $\chan N$~\cite[Corollary 3.40]{Wat18}.

\subsection{Query models for channel testers} \label{sec:query-models}

We now formally define the models of channel testers that we consider. The following definitions describe what we call \emph{deterministic} channel testers; a \emph{randomized} channel tester is a convex combination of deterministic ones. For tomography algorithms we replace the set $\{\mrm{Accept}, \mrm{Reject}\}$ with an arbitrarily large finite set of descriptions of channels in the following definitions.

\begin{dfn}[Ancilla-free, non-adaptive channel tester] \label{dfn:tester000}
    A (deterministic) \emph{ancilla-free, non-adaptive channel tester} making $n$ queries to a channel $\chan M \in \cptp \din \dout$ consists of the following:
    \begin{itemize}
        \item pure states $\psi_1, \dotsc, \psi_n \in \dens{\din}$;
        \item POVMs $P^{(1)}, \dotsc, P^{(n)}$ on $\C^{\dout \times \dout}$, where the elements of each $P^{(j)}$ are denoted $P^{(j)}_1, \dotsc, P^{(j)}_{m_j}$;
        \item a function $f: [m_1] \times \dotsb \times [m_n] \to \{\mrm{Accept}, \mrm{Reject}\}$.
    \end{itemize}
    The tester performs $P^{(j)}$ on $\chan M(\psi_j)$ for all $j \in [n]$, yielding a string $\bs x$ of measurement outcomes, and then outputs $f(\bs x)$.
\end{dfn}

The requirement that the input states $\psi_j$ be pure is without loss of generality, because a randomized channel tester can simulate the action of $\chan M$ on a mixed state $\rho$ by writing $\rho$ as a convex combination of pure states.

\begin{dfn}[Ancilla-assisted, incoherent, non-adaptive channel tester] \label{dfn:tester100}
    A (deterministic) \emph{ancilla-assisted, incoherent, non-adaptive channel tester} making $n$ queries to a channel $\chan M \in \cptp \din \dout$ consists of the following:
    \begin{itemize}
        \item pure states $\psi_1, \dotsc, \psi_n \in \dens{\din \otimes \daux}$, for some $\daux \in \N$;
        \item POVMs $P^{(1)}, \dotsc, P^{(n)}$ on $\C^{\dout \times \dout} \otimes \C^{\daux \times \daux}$, where the elements of each $P^{(j)}$ are denoted $P^{(j)}_1, \dotsc, P^{(j)}_{m_j}$;
        \item a function $f: [m_1] \times \dotsb \times [m_n] \to \{\mrm{Accept}, \mrm{Reject}\}$.
    \end{itemize}
    The tester performs $P^{(j)}$ on $(\chan M \otimes \chan I)\psi_j$ for all $j \in [n]$, yielding a string $\bs x$ of measurement outcomes, and then outputs $f(\bs x)$.
\end{dfn}

It is without loss of generality that all $n$ of the unentangled subsystems have the same dimension $\din \daux$, because operations on a larger system can always simulate operations on a smaller one.

\begin{dfn}[Ancilla-assisted, coherent, non-adaptive channel tester] \label{dfn:tester110}
    A (deterministic) \emph{ancilla-assisted, coherent, non-adaptive channel tester} making $n$ queries to a channel $\chan M \in \cptp \din \dout$ consists of the following:
    \begin{itemize}
        \item a pure state $\psi \in \Dens{\din^{\otimes n} \otimes \daux}$, for some $\daux \in \N$;
        \item a two-outcome POVM $P = (P_{\mrm{accept}}, P_{\mrm{reject}})$ on $(\C^{\dout \times \dout})^{\otimes n} \otimes \C^{\daux \times \daux}$.
    \end{itemize}
    The tester performs $P$ on $(\chan M^{\otimes n} \otimes \chan I) \psi$ and accepts or rejects according to the measurement outcome.
\end{dfn}

We do not formally define ancilla-free, adaptive channel testers or ancilla-assisted, incoherent, adaptive channel testers since we do not prove any results in these models. Informally however, they are the same as their non-adaptive counterparts except that the choice of $\psi_j$ and $P^{(j)}$ may depend on the classical information obtained from the previous $j-1$ measurement outcomes.

\begin{dfn}[Ancilla-assisted, coherent, adaptive channel tester] \label{dfn:tester111}
    A (deterministic) \emph{ancilla-assisted, coherent, adaptive channel tester} making $n$ queries to a channel $\chan M \in \cptp \din \dout$ consists of the following:
    \begin{itemize}
        \item channels $\chan V_1, \dotsc, \chan V_n \in \cptp {\dout \otimes \daux} {\din \otimes \daux}$, for some $\daux \in \N$;
        \item a two-outcome POVM $P = (P_{\mrm{accept}}, P_{\mrm{reject}})$ on $\C^{\dout \times \dout} \otimes \C^{\daux \otimes \daux}$.
    \end{itemize}
    The tester performs $P$ on
        $(\chan M \otimes \chan I) \chan V_n
        (\chan M \otimes \chan I) \chan V_{n-1}
        \dotsb
        (\chan M \otimes \chan I) \chan V_1 (\kb0)$
    and accepts or rejects according to the measurement outcome.
\end{dfn}

One may think of a randomized channel tester as a random variable taking values in the space of deterministic channel testers. We make the standard observation that if a randomized channel tester outputs the correct answer with high probability on worst-case channels, then some deterministic channel tester in its support outputs the correct answer with high probability on random channels:

\begin{lem} \label{lem:derand-tester}
    Let $\bs T$ be a randomized channel tester, let $A$ be a set of channels such that $\bs T(\chan A)$ accepts with probability at least $p$ for all $\chan A \in A$, and let $B$ be a set of channels such that $\bs T(\chan B)$ accepts with probability at most $q$ for all $\chan B \in B$, where the probabilities are over both the choice of $\bs T$ and over the randomness of the output measurement. Let $\bs{\chan A}$ and $\bs{\chan B}$ be random channels with support in $A$ and $B$ respectively. Then there exists a deterministic channel tester $T$ in the support of $\bs T$ such that $\PR{T(\bs{\chan A}) \text{ accepts}} - \PR{T(\bs{\chan B}) \text{ accepts}} \ge p-q$, where the probability is over both the choice of $\bs{\chan A}$ and $\bs{\chan B}$ and over the randomness of the output measurement.
\end{lem}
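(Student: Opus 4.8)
The plan is to prove this by a standard averaging (derandomization) argument. For each deterministic channel tester $T$ in the support of $\bs T$, define
\[
    g(T) \coloneqq \PR{T(\bs{\chan A}) \text{ accepts}} - \PR{T(\bs{\chan B}) \text{ accepts}},
\]
where the probabilities are over the choices of $\bs{\chan A}$ and $\bs{\chan B}$ and over the randomness of $T$'s output measurement. It suffices to show that $\E_{\bs T}[g(\bs T)] \ge p - q$, since then some $T$ in the support of $\bs T$ must satisfy $g(T) \ge p - q$.

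To bound $\E_{\bs T}[g(\bs T)]$, I would expand it by linearity of expectation as $\E_{\bs T}\PR{\bs T(\bs{\chan A}) \text{ accepts}} - \E_{\bs T}\PR{\bs T(\bs{\chan B}) \text{ accepts}}$, where the randomness in each term is now over $\bs T$, over the relevant random channel, and over the output measurement. Since $\bs T$ is independent of $\bs{\chan A}$, I can interchange the expectation over $\bs T$ with the expectation over $\bs{\chan A}$, obtaining $\E_{\bs{\chan A}}\PR{\bs T(\bs{\chan A}) \text{ accepts}}$, where the inner probability is over $\bs T$ and the measurement for a fixed value of $\bs{\chan A}$. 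By hypothesis, for every fixed channel in the support of $\bs{\chan A}$ (which lies in $A$) this inner probability is at least $p$, so the whole expression is at least $p$. Symmetrically, $\E_{\bs T}\PR{\bs T(\bs{\chan B}) \text{ accepts}} = \E_{\bs{\chan B}}\PR{\bs T(\bs{\chan B}) \text{ accepts}} \le q$, since the inner probability is at most $q$ for every fixed channel in the support of $\bs{\chan B} \subseteq B$. Combining the two bounds gives $\E_{\bs T}[g(\bs T)] \ge p - q$, and then picking $T$ in the support of $\bs T$ with $g(T) \ge p - q$ completes the proof.

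The argument is entirely routine; the only point that requires any care is the interchange of the expectation over the random tester $\bs T$ with the expectation over the random input channel. This is justified because these two sources of randomness---together with the randomness of the output measurement---are mutually independent, so the interchange is simply Fubini's theorem (or, if one models a randomized tester as a finite convex combination of deterministic testers, a finite reordering of sums). I do not anticipate any genuine obstacle here.
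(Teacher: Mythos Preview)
Your proposal is correct and follows essentially the same averaging argument as the paper: both use the pointwise hypotheses on $A$ and $B$ together with independence of $\bs T$ from $\bs{\chan A},\bs{\chan B}$ to conclude that the expected gap $\E_{\bs T}[g(\bs T)]$ is at least $p-q$, and then fix an appropriate $T$. The paper's version is just a terser statement of the same Fubini/averaging step you spell out.
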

\begin{proof}
    For all fixed channels $\chan A \in A$ and $\chan B \in B$, by definition
    \begin{equation*}
        \PR{\bs T(\chan A) \text{ accepts}} - \PR{\bs T(\chan B) \text{ accepts}} \ge p-q.
    \end{equation*}
    Sampling $\bs T$ independently of $\bs{\chan A}$ and $\bs{\chan B}$, it follows that
    \begin{equation*}
        \PR{\bs T(\bs{\chan A}) \text{ accepts}} - \PR{\bs T(\bs{\chan B}) \text{ accepts}} \ge p-q,
    \end{equation*}
    and the result follows by fixing $T$ appropriately.
\end{proof}

\subsection{Von Neumann entropy}

We will use von Neumann entropy to prove our results about tomography in \cref{sec:tom}.

\begin{dfn}[Von Neumann entropy]
    The \emph{von Neumann entropy} of a density matrix $\rho$ is the quantity $S(\rho) \coloneqq -\tr(\rho \log \rho)$, i.e.\ the Shannon entropy of the spectrum of $\rho$. If $\rho$ is implicit and is in a register $\reg{A}$, then we sometimes refer to this quantity as $S(\reg{A})$. Similarly if $\rho$ is in registers $\reg{AB}$, then $S(\reg A)$ denotes the von Neumann entropy of the reduced state of $\rho$ on $\reg A$. We sometimes write $S_\rho(\cdot)$ to clarify $\rho$.
\end{dfn}

It holds for all density matrices $\rho \in \dens d$ that~\cite[Theorem 11.8(2)]{NC10}
\begin{equation} \label{thm:maxEntropy}
    S(\rho) \le \log d.
\end{equation}
Von Neumann entropy satisfies a property known as subadditivity~\cite[Eq.\ 11.72]{NC10}, i.e.\
\begin{equation} \label{thm:Subadd}
    S(\reg{AB}) \le S(\reg{A}) + S(\reg{B}),
\end{equation}
and a property known as the triangle inequality~\cite[Eq.\ 11.73]{NC10}, i.e.\
\begin{equation} \label{thm:Triangle}
    \Mag{S(\reg A)-S(\reg B)} \le S(\reg{AB}).
\end{equation}
If density matrices $\rho_1, \dotsc, \rho_n \in \dens d$ are supported on orthogonal subspaces, then~\cite[Theorem 11.10]{NC10}
\begin{equation} \label{lem:EntMix}
    S\Paren{\frac1n \sum_{j=1}^n \rho_j}
    = \frac1n \sum_{j=1}^n S\Paren{\rho_j}
    + \log n.
\end{equation}

\begin{dfn}[Conditional von Neumann entropy]\label{dfn:CondEnt}
    The \emph{conditional von Neumann entropy} of a state in registers $\reg{A}$ and $\reg{B}$ is the quantity $S(\reg A|\reg B) \coloneqq S(\reg{AB}) - S(\reg{B})$.
\end{dfn}

If a channel transforms a register $\reg B$ into a register $\reg B^\prime$, leaving another register $\reg A$ untouched, then~\cite[Theorem 11.5(3) and Eq.\ 11.64]{NC10}
\begin{equation} \label{thm:StrongSubadd}
    S(\reg A | \reg B) \le S(\reg A | \reg B^\prime).
\end{equation}

\begin{lem} \label{lem:FAapp}
    Let $\rho, \sigma \in \dens{\reg{AB}}$ be density matrices where $\reg A$ is a $d$-dimensional register and $\reg B$ is an $m$-dimensional register. Then
    \begin{equation*}
        S_\rho(\reg A | \reg B) \le S_\sigma(\reg A | \reg B) + \Norm{\rho - \sigma}_1 \Paren{\frac12 \log(d) + \log(m)} + 2.
    \end{equation*}
\end{lem}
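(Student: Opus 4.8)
The plan is to reduce, via the definition of conditional entropy, to a Fannes-type continuity estimate applied twice: once to the joint state on $\reg{AB}$ and once to its marginal on $\reg B$. Write $t \coloneqq \tfrac12\Norm{\rho-\sigma}_1$ and let $\rho_{\reg B},\sigma_{\reg B}$ denote the reduced states on $\reg B$; by contractivity of the trace norm under the partial-trace channel, $\tfrac12\Norm{\rho_{\reg B}-\sigma_{\reg B}}_1 \le t$. The input I would use is the Fannes--Audenaert inequality: for density matrices $\tau,\tau'$ on a $D$-dimensional register with $\tfrac12\Norm{\tau-\tau'}_1 = s$, one has $\Mag{S(\tau)-S(\tau')} \le s\log(D-1) + h(s) \le s\log D + 1$, where $h(s) = -s\log s - (1-s)\log(1-s) \le 1$. (If a self-contained argument is preferred, this estimate can be reproved by the Alicki--Fannes--Winter trick: split $\rho-\sigma = \Delta_+ - \Delta_-$ into positive and negative parts, observe that $\Omega \coloneqq (\rho+\Delta_-)/(1+t) = (\sigma+\Delta_+)/(1+t)$ is a density matrix, and combine concavity of $S$ with the mixing bound $S(\sum_i p_i\tau_i) \le \sum_i p_i S(\tau_i) + H(\{p_i\})$ to obtain $\Mag{S(\tau)-S(\tau')} \le s\log D + (1+s)h\!\Paren{\tfrac{s}{1+s}} \le s\log D + 2$.)

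Granting this, I would use Definition~\ref{dfn:CondEnt} to write
\begin{equation*}
    S_\rho(\reg A|\reg B) - S_\sigma(\reg A|\reg B) = \Paren{S(\rho) - S(\sigma)} + \Paren{S(\sigma_{\reg B}) - S(\rho_{\reg B})},
\end{equation*}
and bound the two brackets separately. Applying Fannes--Audenaert to $\rho,\sigma$ on the $dm$-dimensional register $\reg{AB}$ bounds the first bracket above by $t\log(dm) + 1 = t\log d + t\log m + 1$. Applying it to $\rho_{\reg B},\sigma_{\reg B}$ on the $m$-dimensional register $\reg B$ (the case $m=1$ being trivial, as then both entropies vanish) bounds the second bracket above by $\tfrac12\Norm{\rho_{\reg B}-\sigma_{\reg B}}_1\log m + 1 \le t\log m + 1$. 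Summing and using $t = \tfrac12\Norm{\rho-\sigma}_1$,
\begin{equation*}
    S_\rho(\reg A|\reg B) - S_\sigma(\reg A|\reg B) \le t\log d + 2t\log m + 2 = \Norm{\rho-\sigma}_1\Paren{\tfrac12\log d + \log m} + 2,
\end{equation*}
which is exactly the claim.

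There is no serious obstacle: the lemma is a routine repackaging of Fannes--Audenaert, and the only points requiring care are (i) tracking the factor of $2$ between $\Norm{\cdot}_1$ and the trace distance $t$, so that the coefficient comes out as $\tfrac12\log d + \log m$ rather than $\log d + 2\log m$; (ii) controlling the $\reg B$-term by the $\reg{AB}$-distance via monotonicity of the trace norm under partial trace; and (iii) the degenerate cases ($m=1$, or a trivial $\reg{AB}$-register). All genuine ``content'' lies in whichever form of the Fannes-type estimate one invokes, and the Alicki--Fannes--Winter derivation noted above keeps even that within reach of concavity of $S$ and the classical mixing inequality.
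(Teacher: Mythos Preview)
Your proof is correct and follows essentially the same approach as the paper: both apply the Fannes--Audenaert inequality (in the weakened form $|S(\tau)-S(\tau')| \le s\log D + 1$) once to the joint state on $\reg{AB}$ and once to the marginals on $\reg B$, combine via $S_\rho(\reg A|\reg B) - S_\sigma(\reg A|\reg B) = (S(\rho)-S(\sigma)) + (S(\sigma_{\reg B})-S(\rho_{\reg B}))$, and use monotonicity of trace distance under partial trace. Your parenthetical Alicki--Fannes--Winter alternative and explicit handling of degenerate cases are extras not in the paper's proof, but the core argument is the same.
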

\begin{proof}
    The Fannes–-Audenaert inequality~{\cite[Theorem 1]{A07}} states that if $\rho^\prime, \sigma^\prime \in \Dens{d^\prime}$ are density matrices and $x = \frac12 \Norm{\rho^\prime - \sigma^\prime}_1$, then
    \begin{equation*}
        \Mag{S\Paren{\rho^\prime} - S\Paren{\sigma^\prime}}
        \le x\log\Paren{d^\prime-1} -x\log(x) - (1-x)\log(1-x),
    \end{equation*}
    from which it follows that
    \begin{equation*}
        \Mag{S\Paren{\rho^\prime} - S\Paren{\sigma^\prime}}
        \le x \log\Paren{d^\prime} + 1.
    \end{equation*}
    In particular,
    \begin{equation*}
        S_\rho(\reg{AB}) - S_\sigma(\reg{AB}) \le \frac12 \norm{\rho - \sigma}_1 \log(dm) + 1.
    \end{equation*}
    Similarly, letting $\rho_{\reg B}$ and $\sigma_{\reg B}$ respectively denote the reduced states of $\rho$ and $\sigma$ in $\reg B$, and since tracing out a register cannot increase the trace distance between two states, it holds that
    \begin{equation*}
        S_\sigma (\reg B) - S_\rho (\reg B)
        \le \frac12 \norm{\sigma_{\reg B} - \rho_{\reg B}}_1 \log(m) + 1
        \le \frac12 \norm{\rho - \sigma}_1 \log(m) + 1.
    \end{equation*}
    Therefore
    \begin{align*}
        S_\rho(\reg A | \reg B) - S_\sigma(\reg A | \reg B)
        &= S_\rho(\reg{AB}) - S_\rho(\reg B) - S_\sigma(\reg{AB}) + S_\sigma(\reg B) \\
        &\le \frac12 \norm{\rho - \sigma}_1 (\log(dm) + \log(m)) + 2 \\
        &= \Norm{\rho - \sigma}_1 \Paren{\frac12 \log(d) + \log(m)} + 2. \qedhere
    \end{align*}
\end{proof}

\section{Lower bounds for channel certification in diamond distance}
\label{sec:diamond}

\diamondlb*

\begin{proof}
We define a random channel $\bs{\chan M} \in \cptp \din \dout$ as follows: let $\bs\phi \in \dens \din$ be Haar random, let $\bs\psi \in \dens \dout$ be the eigenstate corresponding to the smallest eigenvalue\footnote{By making an arbitrarily small perturbation to $\chan N$, it can be guaranteed that $\chan N(\bs\phi)$ has a unique smallest eigenvalue almost surely.} of $\chan N(\bs\phi)$, and let
\begin{equation*}
    \bs{\chan M}(X) = (1-\eps) \chan N(X) + \eps \tr(\bs\phi X) \bs\psi + \eps \chan N((I - \bs\phi) X (I - \bs\phi)).
\end{equation*}
It is straightforward to verify that $\bs{\chan M}$ is completely positive and trace-preserving. One may alternatively verify that $\bs{\chan M}$ is a channel by interpreting it as the following sequence of physical operations: with probability $1-\eps$ apply $\chan N$, and with probability $\eps$ apply the channel that first performs the PVM $(\bs\phi, I-\bs\phi)$ on the input state, and then outputs $\bs\psi$ if the measurement outcome was $\bs\phi$ and outputs $\chan N$ applied to the post-measurement state if the measurement outcome was $I-\bs\phi$. Thus $\bs{\chan M}$ behaves similarly to $\chan N$ except on inputs near $\bs\phi$.

It follows from the definition of $\bs{\chan M}$ that
\begin{equation} \label{eq:MN}
(\bs{\chan M} - \chan N) X
= \eps \Paren{\chan N(\bs\phi X \bs\phi - X \bs\phi - \bs\phi X) + \tr(\bs\phi X) \bs\psi}
\end{equation}
pointwise for all $X$. Consequently,
\begin{align*}
    \norm{\bs{\chan M} - \chan N}_\diamond
    &\ge \norm{\bs{\chan M} - \chan N}_1 \\
    &\ge \norm{(\bs{\chan M} - \chan N) \bs\phi}_1 \\
    &= \eps \norm{\bs\psi - \chan N(\bs\phi)}_1 
    &&\text{\cref{eq:MN}} \\
    &\ge \eps \tr\Paren{(\bs\psi - \chan N(\bs\phi)) \cdot (2\bs\psi - I)} &&\text{\cref{eq:trace-norm-var-def}} \\
    &= 2 \eps \Paren{1 - \tr(\chan N(\bs\phi) \bs\psi)} \\
    &\ge 2 \eps (1 - 1/\dout)
    &&\text{definition of $\bs\psi$} \\
    &\ge \eps
    &&\dout \ge 2.
\end{align*}
Therefore by \cref{lem:derand-tester} it suffices to prove that every deterministic, ancilla-assisted, coherent, adaptive channel tester $T$ requires $\Omega(\sqrt\din / \eps)$ queries in order to satisfy the following inequality:
\begin{equation} \label{eq:diff-acc-prob}
    \PR{T(\chan N) \text{ accepts}} - \PR{T(\bs{\chan M}) \text{ accepts}} \ge 1/3,
\end{equation}
where the probability is over both the choice of $\bs{\chan M}$ and the randomness of the output measurement. Recalling \cref{dfn:tester111}, write $T = (\chan V_1,\dotsb, \chan V_n, P)$ where $n$ is the number of queries made by $T$; our goal is to prove that $n \ge \Omega(\sqrt\din / \eps)$.

Let $\rho_0 = \bs\tau_0 = \kb0$, and for $j \in [n]$ let
\begin{align*}
    &\rho_j = (\chan N \otimes \chan I) \chan V_j \rho_{j-1},
    &&\bs\tau_{j} = (\bs{\chan M} \otimes \chan I) \chan V_j \bs\tau_{j-1},
    &&\bs\sigma_j = (\bs{\chan M} \otimes \chan I) \chan V_j \rho_{j-1}.
\end{align*}
In particular, $\rho_n$ and $\bs\tau_n$ are the pre-measurement states in the executions of $T(\chan N)$ and $T(\bs{\chan M})$ respectively, so by conditioning on the choice of $\bs{\chan M}$ it follows from \cref{eq:diff-acc-prob} that
\begin{equation} \label{eq:third}
    1/3 \le \frac12 \E\norm{\rho_n - \bs\tau_n}_1.
\end{equation}

For $j \in [n]$, by the triangle inequality
\begin{equation*}
    \E \norm{\rho_j - \bs\tau_j}_1
    \leq \E \norm{\rho_j - \bs\sigma_j}_1 + \E \norm{\bs\sigma_j - \bs\tau_j}_1.
\end{equation*}
We now bound both terms in the latter expression. First, writing $\xi = \chan V_j \rho_{j-1}$, we have that
\begin{align*}
    \E \norm{\rho_j - \bs\sigma_j}_1
    &= \E \Norm{((\chan N - \bs{\chan M}) \otimes \chan I) \cdot \xi}_1 \\
    &= \eps \E \Norm{\chan N \Paren{(\bs\phi \otimes I) \xi (\bs\phi \otimes I) - \xi (\bs\phi \otimes I) - (\bs\phi \otimes I) \xi} + \tr\Paren{(\bs\phi \otimes I) \xi} \psi}_1 &&\text{\cref{eq:MN}} \\
    &\le \eps \E \Brac{\Norm{\chan N \Paren{(\bs\phi \otimes I) \xi (\bs\phi \otimes I) - \xi (\bs\phi \otimes I) - (\bs\phi \otimes I) \xi}}_1 + \tr\Paren{(\bs\phi \otimes I) \xi}}
    &&\text{triangle ineq.} \\
    &\le \eps \Paren{ \E \Norm{(\bs\phi \otimes I) \xi (\bs\phi \otimes I) - \xi (\bs\phi \otimes I) - (\bs\phi \otimes I) \xi}_1 + 1/\din}
    &&\norm{\chan N}_1 = 1 \\
    &\le \eps \Paren{3 \E \norm{(\bs\phi \otimes I) \xi}_1 + 1/\din}
    &&\text{triangle ineq.}\footnotemark,
\end{align*}
\footnotetext{And also using that by H\"older's inequality, $\norm{(\bs\phi \otimes I) \xi (\bs\phi \otimes I)}_1 \le \norm{(\bs\phi \otimes I) \xi}_1 \norm{\bs\phi \otimes I}_\infty = \norm{(\bs\phi \otimes I) \xi}_1$. }
and if $\xi = \sum_j \lambda_j \kb{\eta_j}$ is an eigendecomposition of $\xi$ then by convexity
\begin{align*}
    \E\norm{(\bs\phi \otimes I) \xi}_1
    &\le \sum_j \lambda_j \E\norm{(\bs\phi \otimes I) \eta_j}_1 \\
    &= \sum_j \lambda_j \E \sqrt{\bra{\eta_j} (\bs\phi \otimes I) \ket{\eta_j}} \\
    &\le \sum_j \lambda_j \sqrt{\E\Brac{\bra{\eta_j} (\bs\phi \otimes I) \ket{\eta_j}}} \\
    &= \sum_j \lambda_j \sqrt{\bra{\eta_j} (I/\din \otimes I) \ket{\eta_j}} \\
    &= 1/\sqrt\din,
\end{align*}
so
\begin{equation*}
    \E \norm{\rho_j - \bs\sigma_j}_1
    \le \eps\Paren{3/\sqrt\din + 1/\din}
    \le 4\eps/\sqrt\din.
\end{equation*}
Second, since applying a channel to two states cannot increase the trace distance between them,
\begin{equation*}
    \E\Norm{\bs\sigma_j - \bs\tau_j}_1
    = \E\Norm{(\bs{\chan M} \otimes \chan I) \chan V_j (\rho_{j-1} - \bs\tau_{j-1})}_1
    \le \E\Norm{\rho_{j-1} - \bs\tau_{j-1}}_1.
\end{equation*}
Combining the above inequalities yields
\begin{equation*}
    \E \norm{\rho_j - \bs\tau_j}_1
    \le 4\eps/\sqrt\din + \E\Norm{\rho_{j-1} - \bs\tau_{j-1}}_1,
\end{equation*}
so by induction $\E \norm{\rho_n - \bs\tau_n}_1 \le 4n\eps/\sqrt\din$, and comparing with \cref{eq:third} reveals that $n \ge \Omega(\sqrt\din / \eps)$ as desired.
\end{proof}

\replacementlb*
\begin{proof}
    The proof is the same as that of \cref{thminf:diamond-LB}, except using the stronger bound $\E\norm{\rho_j - \bs\sigma_j} \le 2\eps/\din$ in place of $\E\norm{\rho_j - \bs\sigma_j} \le 4\eps/\sqrt\din$. The stronger bound holds because by \cref{eq:MN},
    \begin{equation*}
        (\bs{\chan M} - \chan N) X = \eps \tr(\bs\phi X) \cdot (\bs\psi - \theta)
    \end{equation*}
    for all $X$, so by the triangle inequality
    \begin{align*}
        \E \norm{\rho_j - \bs\sigma_j}_1
        &= \E \Norm{((\chan N - \bs{\chan M}) \otimes \chan I) \cdot \xi}_1 \\
        &= \eps \E \Brac{\norm{\bs\psi - \theta}_1 \Norm{\Paren{\mrm{tr}_{\din} \otimes \chan I} \cdot ((\bs\phi \otimes I) \xi)}_1} \\
        &\le 2\eps \E \Norm{\Paren{\mrm{tr}_{\din} \otimes \chan I} \cdot ((\bs\phi \otimes I) \xi)}_1 \\
        &= 2\eps \E \tr\Paren{\Paren{\mrm{tr}_{\din} \otimes \chan I} \cdot ((\bs\phi \otimes I) \xi)} \\
        &= 2 \eps/\din. \qedhere
    \end{align*}
\end{proof}

\section{The ACID norm}
\label{sec:acid-definitions}

\subsection{Relation to statistical distance between Boolean functions}\label{sec:Boolclass}

The \emph{statistical distance} between Boolean functions $f, g: [d] \to \bits$ is the quantity
\begin{equation} \label{eq:stat-dist}
    |f - g| = \frac1d \sum_{j=1}^d |f(j) - g(j)|.
\end{equation}
This is the fraction of inputs on which $f$ and $g$ disagree, and is the standard notion of distance used in (classical or quantum) property testing of Boolean functions.

Let $\chan F, \chan G \in \cptp d 2$ be the channels that measure their input in the standard basis, yielding a measurement outcome $\bs j \in [d]$, and then output $f(\bs j), g(\bs j)$ respectively. Formally,
\begin{align*}
    &\chan F(X) = \sum_{j=1}^d \ketbra{f(j)}{j} X \ketbra{j}{f(j)},
    &&\chan G(X) = \sum_{j=1}^d \ketbra{g(j)}{j} X \ketbra{j}{g(j)}.
\end{align*}
This encoding of $f$ and $g$ as channels captures the setting where only \emph{classical} queries may be made to $f$ and $g$; in \cref{subsec:QBF} we will consider encodings that allow quantum queries.

It follows from definitions that
\begin{align*}
    &\choi F = \frac1d \sum_{j=1}^d \kb{f(j)} \otimes \kb{j},
    &&\choi G = \frac1d \sum_{j=1}^d \kb{g(j)} \otimes \kb{j},
\end{align*}
so
\begin{equation*}
    \frac12 \norm{\chan F - \chan G}_{\lsd}
    = \frac1{2d} \Norm{\sum_{j=1}^d \Paren{\kb{f(j)} - \kb{g(j)}} \otimes \kb{j}}_1
    = |f-g|,
\end{equation*}
i.e.\ ACID distance generalizes statistical distance between Boolean functions.

\subsection{Relation to average-case distance between unitaries} \label{subsec:avg-unitary}

Throughout this subsection let $U, V \in \C^{d \times d}$ be arbitrary unitaries. Low~\cite[Definition 10 and Eq.\ 7]{Low2009Avg} used the distance
\begin{equation*}
    D(U,V)
    \coloneqq \sqrt{1 - \frac1{d^2} \Mag{\tr\Paren{\adj U V}}^2}
    = \frac1{\sqrt{2} d} \norm{U \otimes \adj U - V \otimes \adj V}_2
\end{equation*}
in the context of unitary testing and tomography. Montanaro and de Wolf~\cite[Proposition 21]{MdW13} proved that
\begin{equation} \label{eq:acid-unitary-l2}
    D(U,V)
    = \sqrt{\frac{d+1}{4d} \E\Brac{\Norm{U \bs\psi \adj U - V \bs\psi \adj V}_1^2}}
\end{equation}
where $\bs\psi \in \dens d$ is Haar random, giving an interpretation of $D$ as an ``average-case distance". ACID distance generalizes $D$ because if channels $\chan U, \chan V \in \cptp d d$ conjugate by $U$ and $V$ respectively, then by \cref{eq:pure-state-td,lem:tr-choi-prod},
\begin{equation} \label{eq:MdW-dist}
    \frac12 \norm{\chan U - \chan V}_{\lsd}
    = \frac12 \norm{\choi U - \choi V}_1
    = \sqrt{1 - \tr(\choi U \choi V)}
    = D(U,V).
\end{equation}
Zhao et al.~\cite[Lemma 22 and its proof]{Zhao+23} independently observed \cref{eq:acid-unitary-l2,eq:MdW-dist} as well. They used the distance $\sqrt{\frac14 \E\Brac{\Norm{U \bs\psi \adj U - V \bs\psi \adj V}_1^2}}$ in the context of unitary tomography, a quantity which is within a universal constant factor of $D(U,V)$ by \cref{eq:acid-unitary-l2} and the fact that $1 \le (d+1)/d \le 2$.

Wang~\cite[Eq.\ 4]{Wang2011Unitary} and Chen, Nadimpalli and Yuen~\cite[Definition 7]{Chen2023Junta} used the distance
\begin{equation*}
    D^\prime (U,V) \coloneqq \frac1{\sqrt{2d}} \min_{\substack{\phi \in \C \\ |\phi|=1}} \norm{\phi U - V}_2
\end{equation*}
in the context of unitary testing and tomography. Wang~\cite[Eq.\ 6]{Wang2011Unitary} and Zhao et al.~\cite[Lemma 4(1) and its proof]{Zhao+23} independently observed that
\begin{equation*}
    D^\prime (U,V)^2
    = 1 - \frac1d \Mag{\tr\Paren{\adj U V}},
\end{equation*}
and since $\frac1d \Mag{\tr\Paren{\adj U V}} \le 1$ by Cauchy-Schwarz, it follows that
\begin{equation*}
    D(U,V)^2
    = D^\prime(U,V)^2 \cdot \Paren{1 + \frac1d \Mag{\tr\Paren{\adj U V}}}
    \le 2 D^\prime(U,V)^2
    \le 2 D(U,V)^2,
\end{equation*}
i.e.\ $D^\prime$ is within a constant factor of $D$ (and hence of ACID distance).

\subsection{Relation to distance between quantum Boolean functions} \label{subsec:QBF}

A \emph{quantum Boolean function} is a Hermitian unitary transformation. This definition was introduced by Montanaro and Osborne, and generalizes the standard encodings of (classical) Boolean functions as unitaries~\cite[Section 3]{MO10}. Montanaro and Osborne defined the distance between quantum Boolean functions $F,G \in \C^{d \times d}$ as $\Delta(F,G) \coloneqq \norm{F-G}_2^2 / 4d$~\cite[Definition 11 and Eq.\ 5]{MO10} in the context of property testing and tomography~\cite[Sections 6 and 7]{MO10}. This notion of distance generalizes statistical distance between (classical) Boolean functions (i.e.\ \cref{eq:stat-dist}), in the sense that if $f,g: [d] \to \bits$ are Boolean functions and
\begin{align*}
    &F = \sum_{j=1}^d (-1)^{f(j)} \kb{j},
    &&G = \sum_{j=1}^d (-1)^{g(j)} \kb{j},
\end{align*}
then a straightforward calculation shows that $\Delta(F,G) = |f-g|$. If we use the alternative encoding
\begin{align*}
    &F^\prime = \sum_{j=1}^d \kb{j} \otimes \Paren{\ketbra01 + \ketbra10}^{f(j)},
    &&G^\prime = \sum_{j=1}^d \kb{j} \otimes \Paren{\ketbra01 + \ketbra10}^{g(j)},
\end{align*}
then similarly $\Delta(F^\prime, G^\prime) = 2|f-g|$.

Now consider arbitrary quantum Boolean functions $F, G \in \C^{d \times d}$. Since $F,G$ are Hermitian it holds that $\tr(FG)$ is real. Up to a $\pm1$ global phase, we may further assume that $\tr(FG)$ is nonnegative, and then $\Delta(F,G) = D^\prime(F,G)^2 / 2$ for $D^\prime$ defined as in \cref{subsec:avg-unitary}. Recalling that $D^\prime$ is proportional to ACID distance, it follows that $\Delta$ is proportional to squared ACID distance.

\subsection{Relation to the diamond norm}

\begin{thm}[{Brand\~ao, Piani and Horodecki~\cite[Lemma 6]{BPH15}}] \label{thm:diamond-acid-relat}
    For all Hermitian-preserving superoperators $\chan L \in \spr d$, it holds that
        $\frac1d \norm{\chan L}_\diamond
        \le \norm{\chan L}_{\lsd}
        \le \norm{\chan L}_\diamond$.
\end{thm}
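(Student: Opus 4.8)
The plan is to prove both inequalities by relating the diamond norm, which maximizes over all bipartite inputs, to the ACID norm, which fixes the input to the maximally entangled state $\Phi_d$. The upper bound $\norm{\chan L}_{\lsd} \le \norm{\chan L}_\diamond$ is immediate: by \cref{dfn:acid} we have $\norm{\chan L}_{\lsd} = \norm{(\chan L \otimes \chan I_d)\Phi_d}_1$, and since $\Phi_d$ is a density matrix with $\norm{\Phi_d}_1 = 1$ it is a feasible point in the maximization $\norm{\chan L}_\diamond = \max_{\norm{X}_1 \le 1}\norm{(\chan L \otimes \chan I_d)X}_1$ (using the variant of the diamond norm with ancilla dimension exactly $\din = d$, which is without loss of generality by \cite[Theorem 3.46]{Wat18} as recalled in \cref{subsec:main-results}). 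So the only real content is the lower bound $\frac1d\norm{\chan L}_\diamond \le \norm{\chan L}_{\lsd}$, equivalently $\norm{\chan L}_\diamond \le d\norm{\choi L}_1$.

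For the lower bound, I would start from the fact that since $\chan L$ is Hermitian-preserving the diamond norm is achieved on a pure state input: $\norm{\chan L}_\diamond = \norm{(\chan L \otimes \chan I_d)\psi}_1$ for some pure $\ket\psi \in \C^d \otimes \C^d$. Write a Schmidt decomposition $\ket\psi = (A \otimes I_d)\ket{\Phi_d}$ for some matrix $A \in \C^{d\times d}$ with $\norm{A}_2 = 1$ (equivalently, $A$ carries the Schmidt coefficients); note $\norm{A}_\infty \le \norm{A}_2 = 1$. Then $(\chan L \otimes \chan I_d)\psi = (I_d \otimes A)\bigl[(\chan L \otimes \chan I_d)\Phi_d\bigr](I_d \otimes \adj A) \cdot (\text{scalar})$ — more precisely, since the $A$ acts only on the second (ancilla) register on which $\chan L$ acts as identity, we can pull it outside: $(\chan L \otimes \chan I_d)\bigl((I\otimes A)\Phi_d(I\otimes \adj A)\bigr) = (I \otimes A)\choi L (I \otimes \adj A)$. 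Now apply submultiplicativity of the trace norm under multiplication by bounded operators (Hölder's inequality, as already used in the proof of \cref{thminf:diamond-LB}): $\norm{(I\otimes A)\choi L(I\otimes \adj A)}_1 \le \norm{I \otimes A}_\infty \norm{\choi L}_1 \norm{I \otimes \adj A}_\infty = \norm{A}_\infty^2\norm{\choi L}_1 \le \norm{\choi L}_1$.

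The subtlety — and where the factor of $d$ enters — is normalization: $\ket\psi$ as written above is not a unit vector unless $\ket{\Phi_d}$ is rescaled appropriately, and conversely $\Phi_d = \frac1d\sum_{ij}\ketbra{ii}{jj}$ already carries a $\frac1d$ while a general pure state does not. Tracking this carefully: the pure state $\psi$ with $\norm{\psi}_1 = 1$ corresponds to $A$ with $\norm{A}_2 = 1$, but $\ketbra\psi\psi = d \cdot (I\otimes A)\Phi_d(I \otimes \adj A)$ because $\Phi_d$'s prefactor is $\frac1d$. Hence $\norm{(\chan L\otimes \chan I_d)\psi}_1 = d\,\norm{(I\otimes A)\choi L(I\otimes \adj A)}_1 \le d\norm{\choi L}_1 = d\norm{\chan L}_{\lsd}$, which is exactly the claimed bound. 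I expect the main obstacle to be getting this factor-of-$d$ bookkeeping exactly right, together with cleanly justifying the ``pull $A$ through the identity factor'' step; both are routine once the conventions are pinned down, so the proof should be short.
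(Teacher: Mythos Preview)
Your proposal is correct and follows essentially the same approach as the paper's proof: achieve the diamond norm on a pure state (using Hermitian-preservation), write that state as a matrix $A$ with $\norm{A}_\infty \le 1$ applied to $\sqrt{d}\,\ket{\Phi_d}$ on the ancilla register, commute $A$ past $\chan L$, and apply H\"older's inequality. The paper places $A$ on the second register from the outset (writing $A = \sum_i \sqrt{p_i}\ketbra{v_i}{u_i^*}$ so that $\ket\psi = \sqrt{d}(I\otimes A)\ket\Phi$), which avoids the small notational wobble in your write-up between $(A\otimes I)$ and $(I\otimes A)$; but as you note, this is handled by \cref{eq:Phi-trans} and is routine.
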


We reproduce their proof below:

\begin{proof}
    The second inequality follows directly from the definitions of the ACID and diamond norms. For the first inequality, let $\ket\psi \in \C^d \otimes \C^d$ be a pure state such that $\norm{\chan L}_\diamond = \norm{(\chan L \otimes \chan I_d) \psi}_1$. Let $\ket\psi = \sum_i \sqrt{p_i} \ket{u_i} \ket{v_i}$ be a Schmidt decomposition of $\ket\psi$, i.e.\ the $p_i$ form a probability distribution, the $\ket{u_i}$ form an orthonormal basis for $\C^d$, and the $\ket{v_i}$ also form an orthonormal basis for $\C^d$. Let $A = \sum_i \sqrt{p_i} \ketbra{v_i}{u_i^*}$. Then
    \begin{equation*}
        \ket\psi
        = (I \otimes A) \sum_i \ket{u_i} \ket{u_i^*}
        = \sqrt{d} (I \otimes A) \ket\Phi,
    \end{equation*}
    so by the definition of $\ket\psi$,
    \begin{equation*}
        \norm{\chan L}_\diamond
        = \norm{(\chan L \otimes \chan I)\psi}_1
        = d \norm{(\chan L \otimes \chan I) \cdot (I \otimes A) \Phi (I \otimes \adj A)}_1
        = d \norm{(I \otimes A) \choi L (I \otimes \adj A)}_1,
    \end{equation*}
    where the last equality holds because conjugating by $A$ on the second register commutes with applying $\chan L$ on the first register. The expression $A = \sum_i \sqrt{p_i} \ketbra{v_i}{u_i^*}$ is a singular value decomposition of $A$, and therefore $\norm{A}_\infty = \max_i \sqrt{p_i} \le 1$, so by H\"older's inequality $\norm{\chan L}_\diamond \le d \norm{\choi L}_1 = d \norm{\chan L}_{\lsd}$.
\end{proof}

The first inequality in \cref{thm:diamond-acid-relat} may be tight, for example if $\chan L(X) = \bra0 X \ket0$. The second inequality in \cref{thm:diamond-acid-relat} may also be tight, for example if $\chan L$ is a channel, or if $\chan L(X) = \tr(X) A$ for some fixed matrix $A$, or if $\chan L$ is the transpose superoperator $\chan L(X) = X^\top$.

Jen{\v{c}}ov{\'a} and Pl{\'a}vala~\cite{JP16} proved the following inequality, where $|A| \coloneqq \sqrt{A^2}$ denotes the matrix absolute value of a Hermitian matrix $A$:

\begin{thm}[{\cite[Eq.\ 11]{JP16}}] \label{thm:JP}
    Let $\chan L = \lambda \chan M - (1-\lambda) \chan N$ for some $\lambda \in (0,1)$ and channels $\chan M, \chan N \in \cptp \din \dout$. Then
    \begin{equation*}
    \Norm{\chan L}_\diamond
    \le \Paren{1 + \Norm{\frac{\din} {\norm{\chan L}_{\lsd}} \Paren{\mrm{tr}_{\dout} \otimes \chan I_{\din}} |\choi L| - I_{\din}}_\infty} \norm{\chan L}_{\lsd},
\end{equation*}
\end{thm}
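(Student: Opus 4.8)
The plan is to re-run the argument in the proof of \cref{thm:diamond-acid-relat}, but to keep track of the partial trace of $|\choi L|$ instead of crudely bounding by the operator norm of $A$. First, since $\chan L$ is a real linear combination of channels it is Hermitian-preserving, so $\choi L$ is Hermitian and $|\choi L|$ is well-defined; assume $\chan L \ne 0$, so that $\norm{\chan L}_{\lsd} > 0$. Following the proof of \cref{thm:diamond-acid-relat}, the diamond norm of $\chan L$ is attained at a pure-state input $\ket\psi \in \C^{\din}\otimes\C^{\din}$, and from a Schmidt decomposition of $\ket\psi$ one obtains a matrix $A \in \C^{\din\times\din}$ with $\ket\psi = \sqrt{\din}\,(I_{\din}\otimes A)\ket{\Phi_{\din}}$ and $\tr(\adj A A) = \braket\psi\psi = 1$, so that
\begin{equation*}
    \norm{\chan L}_\diamond
    = \din\,\Norm{(I_{\dout}\otimes A)\,\choi L\,(I_{\dout}\otimes\adj A)}_1 .
\end{equation*}

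Next I would bound this trace norm. Decompose the Hermitian operator $\choi L$ into its positive and negative parts, writing $\choi L = P - Q$ with $P, Q \succeq 0$ and $|\choi L| = P + Q$. Conjugation by $I_{\dout}\otimes A$ preserves positive semidefiniteness, so $(I_{\dout}\otimes A)P(I_{\dout}\otimes\adj A)$ and $(I_{\dout}\otimes A)Q(I_{\dout}\otimes\adj A)$ are positive semidefinite and hence have trace norm equal to their trace; by the triangle inequality,
\begin{align*}
    \Norm{(I_{\dout}\otimes A)\,\choi L\,(I_{\dout}\otimes\adj A)}_1
    &\le \tr\Paren{(I_{\dout}\otimes A)\,(P + Q)\,(I_{\dout}\otimes\adj A)} \\
    &= \tr\Paren{|\choi L|\,(I_{\dout}\otimes\adj A A)}
    = \tr\Paren{R\,\adj A A},
\end{align*}
where $R \coloneqq (\mrm{tr}_{\dout}\otimes\chan I_{\din})\,|\choi L| \succeq 0$ and the final equality is the defining property of the partial trace. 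Since $\adj A A \succeq 0$ with $\tr(\adj A A) = 1$, we get $\tr(R\,\adj A A) \le \norm R_\infty \tr(\adj A A) = \norm R_\infty$, and therefore $\norm{\chan L}_\diamond \le \din\,\norm R_\infty$.

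Finally I would repackage this bound into the stated form. Setting $c \coloneqq \norm{\chan L}_{\lsd} = \norm{\choi L}_1 = \tr|\choi L| = \tr R > 0$, the triangle inequality for $\norm\cdot_\infty$ gives
\begin{equation*}
    \din\,\norm R_\infty
    \le \din\,\Norm{R - \tfrac{c}{\din}\,I_{\din}}_\infty + c
    = \Paren{1 + \Norm{\tfrac{\din}{c}\,R - I_{\din}}_\infty}c ,
\end{equation*}
as claimed. The only nontrivial ingredient is the first display, that an optimal diamond-norm input can be written as $\sqrt{\din}\,(I\otimes A)\ket\Phi$ with $\tr(\adj A A) = 1$, and that is already contained in the proof of \cref{thm:diamond-acid-relat}; everything afterward is the triangle inequality, positivity of conjugation and partial trace, and the identity $\tr(X(I_{\dout}\otimes B)) = \tr\big((\mrm{tr}_{\dout}\otimes\chan I_{\din})(X)\,B\big)$. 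I would also remark that the argument uses only that $\chan L$ is Hermitian-preserving and nonzero, and that it in fact yields the slightly stronger bound $\norm{\chan L}_\diamond \le \din\,\Norm{(\mrm{tr}_{\dout}\otimes\chan I_{\din})|\choi L|}_\infty$.
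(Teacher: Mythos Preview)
Your proof is correct. The paper does not actually prove \cref{thm:JP}; it merely quotes the inequality from Jen\v{c}ov\'a and Pl\'avala~\cite{JP16} and then explains why it implies \cref{thm:diamond-acid-relat}. So there is no ``paper's own proof'' to compare against here.

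That said, your argument is clean and in the spirit of the paper's proof of \cref{thm:diamond-acid-relat}: you use the same parametrization $\ket\psi = \sqrt{\din}\,(I\otimes A)\ket{\Phi}$ of the optimal input, but instead of invoking H\"older via $\norm A_\infty \le 1$, you split $\choi L = P - Q$ into its positive and negative parts and use $\tr(\adj A A) = 1$ to bound by $\din\,\norm{R}_\infty$ with $R = (\mrm{tr}_{\dout}\otimes\chan I_{\din})|\choi L|$. Your closing observation is also apt: the hypothesis $\chan L = \lambda\chan M - (1-\lambda)\chan N$ is never used beyond ensuring that $\chan L$ is Hermitian-preserving and nonzero, and the intermediate bound $\norm{\chan L}_\diamond \le \din\,\norm{(\mrm{tr}_{\dout}\otimes\chan I_{\din})|\choi L|}_\infty$ is indeed slightly sharper than the stated inequality.
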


Jen{\v{c}}ov{\'a} and Pl{\'a}vala~\cite{JP16} also observed that \cref{thm:JP} gives a stronger bound than \cref{thm:diamond-acid-relat} does. To see this, note that $\tr(|\choi L|) = \norm{\choi L}_1 = \norm{\chan L}_{\lsd}$, so $|\choi L| / \norm{\chan L}_{\lsd}$ is a density matrix, and therefore its partial trace $(\tr_{\dout} \otimes \chan I_{\din}) |\choi L| / \norm{\chan L|}_{\lsd}$ is also a density matrix. Since the eigenvalues of a density matrix are between 0 and 1, the infinity norm appearing in \cref{thm:JP} is at most $\din-1$ (assuming $\din \ge 2$), and therefore the upper bound from \cref{thm:JP} is at most $\din \norm{\chan L}_{\lsd}$.

\subsection{Relation to the induced trace norm and its average-case analogue} \label{subsec:avg-case-induced-trace}

Recall that the \emph{induced trace norm} of a superoperator $\chan L$ is the quantity $\norm{\chan L}_1 \coloneqq \max_{\norm{X}_1 = 1} \norm{\chan L(X)}_1$. The following example shows that the induced trace distance between two channels can be much less than their diamond distance:

\begin{xmp}[{Watrous~\cite[Example 3.36]{Wat18}}] \label{xmp:watrous}
    Define channels $\chan M, \chan N \in \cptp d d$ by
    \begin{align*}
        &\chan M(X) = \frac{\tr(X) I_d + X^\top}{d+1},
        &&\chan N(X) = \frac{\tr(X) I_d - X^\top}{d-1}.
    \end{align*}
    These are in fact channels because they are clearly trace-preserving, and because their Choi states
    \begin{align*}
        &\choi M = \frac2{d(d+1)} \sym d,
        &&\choi N = \frac2{d(d-1)} \Paren{I-\sym d}
    \end{align*}
    are PSD. Observe that
    \begin{equation*}
        \norm{\chan M - \chan N}_1
        = \max_\psi \norm{\chan M\psi - \chan N\psi}_1
        = \max_\psi \Norm{\frac{I + \psi^\top}{d+1} - \frac{I - \psi^\top}{d-1}}_1
        = \max_\psi \Norm{\frac{2(d\psi^\top - I)}{(d+1)(d-1)}}_1
        = \frac4{d+1},
    \end{equation*}
    where the last equality holds because $d\psi^\top - I$ has one eigenvalue equal to $d-1$ and $d-1$ eigenvalues equal to $1$. On the other hand,
    \begin{equation*}
        \norm{\chan M - \chan N}_\diamond
        \ge \norm{(\chan M \otimes \chan I) \Phi - (\chan N \otimes \chan I) \Phi}_1
        = \norm{\choi M - \choi N}_1
        = 2,
    \end{equation*}
    where the last equality holds because $\choi M$ and $\choi N$ are supported on orthogonal subspaces.
\end{xmp}

We note that \cref{xmp:watrous} holds equally well with the ACID norm in place of the diamond norm, and with an ``average-case induced trace norm" in place of the induced trace norm:

\begin{obs} \label{obs:watrous}
    It is implicit in \cref{xmp:watrous} that $\norm{\chan M - \chan N}_\lsd = 2$, and that $\E \norm{(\chan M - \chan N) \bs\psi}_1 = 4/(d+1)$ for Haar random $\bs\psi$.
\end{obs}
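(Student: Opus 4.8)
The plan is to observe that both equalities are immediate re-readings of computations already carried out inside \cref{xmp:watrous}, so there is essentially nothing new to do; I would simply make the two observations explicit.

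For the ACID-norm equality: by definition of the ACID norm, $\norm{\chan M - \chan N}_\lsd = \norm{(\chan M - \chan N)\otimes \chan I_d \cdot \Phi_d}_1 = \norm{\choi M - \choi N}_1$. \cref{xmp:watrous} already exhibits $\choi M = \frac{2}{d(d+1)}\sym d$ and $\choi N = \frac{2}{d(d-1)}\Paren{I-\sym d}$ as density matrices supported on the symmetric and antisymmetric subspaces of $\C^d \otimes \C^d$, which are orthogonal and each of trace $1$; this is precisely why \cref{xmp:watrous} concludes $\norm{\choi M - \choi N}_1 = 2$. Hence $\norm{\chan M - \chan N}_\lsd = 2$, the same quantity the example used to lower-bound $\norm{\chan M - \chan N}_\diamond$.

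For the average-case induced trace distance: the displayed evaluation of $\norm{\chan M - \chan N}_1$ in \cref{xmp:watrous} in fact establishes the pointwise fact that for \emph{every} pure state $\psi \in \dens d$,
\begin{equation*}
    \norm{(\chan M - \chan N)\psi}_1
    = \Norm{\frac{2\Paren{d\psi^\top - I}}{(d+1)(d-1)}}_1
    = \frac{4}{d+1},
\end{equation*}
using $\tr(\psi)=1$ and the fact that $\psi^\top$ is again a rank-one projection, so that $d\psi^\top - I$ has eigenvalues $d-1$ (once) and $-1$ ($d-1$ times) and thus trace norm $2(d-1)$. Since this value does not depend on $\psi$, averaging over any distribution on pure states --- in particular the Haar measure --- leaves it unchanged, giving $\E\norm{(\chan M - \chan N)\bs\psi}_1 = 4/(d+1)$. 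There is no obstacle here beyond checking the two routine facts just used (unit trace of a pure state, and that transpose preserves rank-one projections); together the computations show that \cref{xmp:watrous} holds verbatim with $\norm{\cdot}_\lsd$ in place of $\norm{\cdot}_\diamond$ and with the average-case induced trace norm in place of the induced trace norm.
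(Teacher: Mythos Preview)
Your proposal is correct and matches the paper's intent exactly: the paper gives no separate proof of this observation, precisely because---as you spell out---both claims are literal re-readings of the computations already displayed in \cref{xmp:watrous}. Your only addition is to make explicit that the quantity $\Norm{2(d\psi^\top - I)/(d+1)(d-1)}_1$ is independent of the pure state $\psi$, which is indeed implicit in the example's eigenvalue count and is what justifies replacing $\max_\psi$ by $\E_{\bs\psi}$.
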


Thus, ancillae can be information-theoretically useful for distinguishing between two channels in the average-case setting as well as in the worst-case setting. Recall that \cref{thminf:compil} says that if $\chan L \in \spr d$ is a superoperator and $m \ge \Omega(d)$, then $\norm{(\chan L \otimes \chan I_m) \bs\psi}_1$ concentrates around $\norm{\chan L}_J$; \cref{obs:watrous} implies that this statement does not generalize to arbitrary values of $m$.

In \cref{sec:E-Phi-def} (specifically \cref{prp:u-inv}) we will prove a generalization of the fact that $\E\norm{\chan L(\bs\psi)}_1 \le \norm{\chan L}_J$ for all superoperators $\chan L$, where $\bs\psi$ is Haar random. Here we give an alternate proof of this fact in the case where $\chan L$ is the difference between two unitary channels, i.e.\ $\chan L(X) = U X \adj U - V X \adj V$ for some unitaries $U, V \in \C^{d \times d}$. If $D$ denotes the average-case distance between unitaries from \cref{subsec:avg-unitary}, then by Cauchy-Schwarz and \cref{eq:acid-unitary-l2,eq:MdW-dist},
\begin{equation*}
    \E\norm{\chan L(\bs\psi)}_1
    \le \sqrt{\E\Brac{\norm{\chan L(\bs\psi)}_1^2}}
    = \sqrt{\frac{4d}{d+1}} D(U,V)
    \le 2 D(U,V)
    = \norm{\chan L}_J.
\end{equation*}

Finally we note that unlike ACID distance, average-case induced trace distance fails to generalize statistical distance between Boolean functions, at least according to the encoding of functions $f$ and $g$ as channels $\chan F$ and $\chan G$ used in \cref{sec:Boolclass}. Specifically, if $\bs\psi \in \dens d$ is Haar random and $\bs p_j = \bra{j} \bs\psi \ket{j}$, then
\begin{equation*}
    \E\norm{(\chan F - \chan G) \bs\psi}_1
    = \E \Norm{\sum_{j=1}^d (\kb{f(j)} - \kb{g(j)}) \bs p_j}_1
    = 2 \E\Mag{\sum_{j=1}^d (f(j) - g(j)) \bs p_j},
\end{equation*}
where the last equality holds because $\kb{f(j)} - \kb{g(j)} = (f(j) - g(j)) \cdot (\kb1 - \kb0)$. If $f(j)=0, g(j)=1$ and for half of the inputs $j$ and $f(j)=1, g(j)=0$ for the other half, then $|f-g|=1$, but $\sum_j (f(j)-g(j)) \bs p_j$ concentrates around 0 and so $\E\norm{(\chan F - \chan G) \bs\psi}_1$ is close to 0.

\subsection{Relation to quantum fault-tolerance and experiments} \label{subsec:fault-tolerance}

We continue the discssion from the end of \cref{subsec:discussion}. Gilchrist, Langford and Nielsen~\cite{GLN05} proposed six properties that any distance $\Delta(\chan M, \chan N)$ between channels $\chan M$ and $\chan N$ should have in order to be suitable for measuring the error of a quantum computation: it should be a metric, be easy to calculate, be easy to experimentally measure, have a well-motivated physical interpretation, satisfy \emph{stability} (i.e.\ $\Delta(\chan I \otimes \chan M, \chan I \otimes \chan N) = \Delta(\chan M, \chan N)$), and satisfy \emph{chaining} (i.e.\ $\Delta(\chan M_2 \chan M_1, \chan N_2 \chan N_1) \le \Delta(\chan M_1, \chan N_1) + \Delta(\chan M_2, \chan N_2)$). Kueng, Long, Doherty and Flammia~\cite[Eqs.\ 2 and 3]{KLDF16} noted the significance of stability and chaining as well. Out of many candidate distances, Gilchrist, Langford and Nielsen~\cite{GLN05} identified four that satisfy these criteria: ACID distance (which they call Jamiołkowski process distance or $J$ distance), related distances arising from fidelity (i.e.\ Jamiolkowski process fidelity or $J$ fidelity), diamond distance (i.e.\ stabilized process distance or $S$ distance), and related distances arising from fidelity (i.e.\ stabilized process fidelity or $S$ fidelity).\footnote{In particular, they rejected the ``average-case induced trace distance" from \cref{subsec:avg-case-induced-trace} as a candidate distance~\cite[Eq.\ 13]{GLN05}, for only seeming to satisfy the metric and chaining criteria out of the six.} They also gave an operational interpretation of the ACID norm as a bound on the ``average probability of error experienced during quantum computation of a function, or as a bound on the distance between the real and ideal joint distributions of the quantum computer in a sampling computation"~\cite[Section VI.(i)]{GLN05}.

This suggests that one may hope to prove a fault-tolerance theorem with respect to the ACID norm. Aharonov, Kitaev and Nisan~\cite[Lemma 12]{Aharonov98} listed five\footnote{As well as a sixth, $\norm{\chan L(X)}_1 \le \norm{\chan L}_\diamond \norm{X}_1$, which follows immediately from the first property and the definition of the induced trace norm.} properties of the diamond norm that are used in the proof of the fault-tolerance theorem: for all superoperators $\chan K, \chan L$:
\begin{enumerate}
    \item $\norm{\chan L}_\diamond = \norm{\chan L \otimes \chan I_{\daux}}_1 \ge \norm{\chan L}_1$ for all $\daux \ge \din$, where $\din$ is the input dimension of $\chan L$.
    \item $\norm{\chan K \chan L}_\diamond \le \norm{\chan K}_\diamond \norm{\chan L}_\diamond$, i.e.\ the diamond norm is submultiplicative.
    \item $\norm{\chan K \otimes \chan L}_\diamond = \norm{\chan K}_\diamond \norm{\chan L}_\diamond$.
    \item If $\chan L$ is a channel then $\norm{\chan L}_\diamond = 1$.
    \item If $A,B$ are matrices of the same dimensions with $\norm{A}_\infty, \norm{B}_\infty \le 1$, and if $\chan L(X) = A X \adj A - B X \adj B$, then $\norm{\chan L}_\diamond \le 2\norm{A - B}_\infty$.
\end{enumerate}

Unfortunately not all of these properties hold with the ACID norm in place of the diamond norm. However, analogous properties may hold if we \emph{also} replace other worst-case quantities besides just the diamond norm with their average-case analogues:
\begin{enumerate}
    \item If $\chan L(X) = \bra0 X \ket0$ for example, then $\norm{\chan L}_J = 1/\din$ while $\norm{\chan L \otimes \chan I_{\daux}}_1 = \norm{\chan L}_1 = \chan L(\kb0) = 1$ for all $\daux$. However, if we also replace the induced trace norm with the ``average-case induced trace norm" from \cref{subsec:avg-case-induced-trace}, then we recall that $\norm{\chan L}_J \ge \E\norm{\chan L(\bs\psi)}_1$ for Haar random $\bs\psi$, and furthermore $\norm{\chan L}_J$ is proportional to $\E\norm{(\chan L \otimes \chan I_{\daux}) \bs\psi}_1$ for $\daux \ge \din$ by \cref{thminf:compil}.
    \item The ACID norm is not submultiplicative: for example, if $\chan K, \chan L \in \super d d$ with $d>1$ and $\chan K(X) = \chan L(X) = \kb0 X \kb0$, then $\norm{\chan K \chan L}_J = \norm{\chan L}_J = 1/d > 1/d^2 = \norm{\chan K}_J \norm{\chan L}_J$. The problem is that the ACID norm of $\chan K$ describes its behavior on average-case inputs, whereas the output of $\chan L$ is proportional to the ``worst-case input" $\kb0$. However, this issue may conceivably be circumvented if we only consider circuits where the input is average-case, and where individual gates map average-case inputs to average-case outputs. Specifically, we propose a model of computation using only unitary gates, gates that initialize new qubits in the maximally mixed state (as opposed to the all-zeros state), and gates that trace out qubits; this generalizes ancilla-free computation and is related to the ``one clean qubit" model~\cite{KL98}.
    \item It holds that $\norm{\chan K \otimes \chan L}_J = \Norm{\choi K \otimes \choi L}_1 = \norm{\chan K}_J \norm{\chan L}_J$.
    \item If $\chan L$ is a channel then $\norm{\chan L}_J = 1$ because $\choi L$ is a density matrix.
    \item Since $\norm{\chan L}_J \le \norm{\chan L}_\diamond$, the analogous property with the ACID norm in place of the diamond norm follows immediately.
\end{enumerate}
This leaves open the possibility of a ``fully average-case" version of the fault-tolerance theorem.

A related question is how to efficiently test whether a quantum gate (or module of many gates) achieves a level of error below the threshold required for such a fault-tolerance theorem. Even when we can achieve dimension-independent upper bounds for this task, a remaining problem is that what is typically measurable are fidelities, which are only quadratically related to trace norm-based quantities via the Fuchs--van de Graaf inequalities (see the discussion after \cref{thminf:erasure-etc-upper-bounds}). This presents a serious problem~\cite{KLDF16} because as the quality of quantum hardware improves and fidelities rise, the square root leaves a significant gap between the experimentally measured and the theoretically prescribed quantities.

\section{Proof that the ACID norm is \texorpdfstring{``}{"}average-case"} \label{sec:E-Phi-def}

In this section we prove \cref{thminf:compil}, i.e.\ we give conditions under which $\norm{(\chan L \otimes \chan I_{\daux}) \bs\psi}_1$ concentrates around $\norm{\chan L}_\lsd$ for a superoperator $\chan L \in \cptp \din \dout$ and Haar random $\bs\psi \in \dens{\din \otimes \daux}$. For technical reasons it will be convenient to refer to an \emph{unnormalized} version of the maximally entangled state in this section:

\begin{dfn}
    Let $\ket{\Psi_d} = \sum_{i=1}^d \ket{ii} = \sqrt{d} \ket{\Phi_d}$, and $\Psi_d = \kb{\Psi_d} = \sum_{i,j=1}^d \ketbra{ii}{jj} = d \Phi_d$. When the dimension $d$ is implicit we will simply write $\ket\Psi$ or $\Psi$.
\end{dfn}

It will also be convenient to have a shorthand notation for the quantity $\norm{(\chan L \otimes \chan I_{\daux}) \psi}_1$ which we are relating to $\norm{\chan L}_{\lsd}$. In particular, since this quantity depends only on the reduced state $\rho$ on the first register of $\psi$, it will be convenient to have a shorthand notation in terms of $\chan L$ and $\rho$ only. One purification\footnote{I.e.\ a pure state whose reduced state on the first register equals $\rho$.} of $\rho$ is $(\sqrt \rho \otimes I_{\din}) \ket{\Psi_{\din}}$, as can be straightforwardly verified using the fact that the partial trace over the second register of $\Psi$ equals $I$. This motivates the following definition, which is equivalent to the trace norm of the operator defined by applying $\chan L \otimes \chan I$ to this purification of $\rho$, and which generalizes the ACID norm (by taking $\rho = I/d$):

\begin{dfn}[$\rho$ norm] \label{def:rho-norm}
    For a density matrix $\rho \in \dens d$ and superoperator $\chan L \in \spr d$, let
	\begin{equation*}
		\norm{\chan L}_\rho = \norm{(\chan L \otimes \chan I_d) \cdot (\sqrt{\rho} \otimes I_d) \Psi_d (\sqrt{\rho} \otimes I_d)}_1.
	\end{equation*}
    We call $\norm\cdot_\rho$ the \emph{$\rho$ norm}.
\end{dfn}

The rest of this section is organized as follows. In \cref{sec:ub-acid} we prove some useful (in)equalities involving the $\rho$ norm for fixed $\rho$. In \cref{sec:unit-inv} we prove that if a random density matrix $\bs\rho$ is \emph{unitarily invariant}, meaning $U \bs\rho \adj U$ is distributed identically to $\bs \rho$ for all fixed unitaries $U$, and if furthermore $\bs \rho$ has constant expected fidelity with the maximally mixed state, then $\E \norm{\chan L}_{\bs\rho} = \Theta(\norm{\chan L}_{\lsd})$. In \cref{sec:ev-haar} we specialize this result to the case where $\bs\rho$ is the reduction of a Haar random state, by bounding the expected fidelity of the reduction of a Haar random state with the maximally mixed state. Finally, in \cref{sec:tail} we prove tail bounds on $\norm{\chan L}_{\bs\rho}$ when $\bs\rho$ is the reduction of a Haar random state.

\subsection{The \texorpdfstring{$\rho$}{rho} norm} \label{sec:ub-acid}

The following will turn out to be a more convenient phrasing of \cref{def:rho-norm}:

\begin{lem} \label{prp:equiv-def}
For all $\rho \in \dens d$ and $\chan L \in \spr d$,
    \begin{equation*}
        \norm{\chan L}_\rho 
        = \Norm{\Paren{I \otimes \sqrt\rho^\top} \cdot (\chan L \otimes \chan I) \Psi \cdot \Paren{I \otimes \sqrt\rho^\top}}_1.
    \end{equation*}
    \end{lem}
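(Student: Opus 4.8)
The plan is to use the identity \eqref{eq:Phi-trans} (equivalently $\sqrt{d}(A \otimes I)\ket{\Phi} = \sqrt{d}(I \otimes A^\top)\ket{\Phi}$ for square $A$, i.e.\ $(A \otimes I)\ket{\Psi} = (I \otimes A^\top)\ket{\Psi}$) to move the factor of $\sqrt{\rho}$ from the first register to the second. Starting from \cref{def:rho-norm}, write $\norm{\chan L}_\rho = \norm{(\chan L \otimes \chan I)\bigl((\sqrt\rho \otimes I)\Psi(\sqrt\rho \otimes I)\bigr)}_1$. Since $\ket{\Psi_d} = \sum_i \ket{ii}$ satisfies $(\sqrt\rho \otimes I)\ket\Psi = (I \otimes \sqrt\rho^\top)\ket\Psi$ (this is \eqref{eq:Phi-trans} with $m=n=d$ and $A = \sqrt\rho$, using that $\sqrt\rho^* = \sqrt\rho^\top$ as $\sqrt\rho$ is Hermitian — or one can just verify it directly by expanding $\sum_i \sqrt\rho\ket i \otimes \ket i = \sum_i \ket i \otimes \sqrt\rho^\top \ket i$), we get
\begin{equation*}
    (\sqrt\rho \otimes I)\Psi(\sqrt\rho \otimes I)
    = (I \otimes \sqrt\rho^\top)\,\Psi\,(I \otimes \sqrt\rho^\top),
\end{equation*}
where on the right we used the identity on $\ket\Psi$ and on the left its adjoint on $\bra\Psi$, noting $(I \otimes \sqrt\rho^\top)$ is Hermitian.

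Next I would observe that $\chan L \otimes \chan I$ acts only on the first register while $I \otimes \sqrt\rho^\top$ is supported entirely on the second register, so conjugation by $I \otimes \sqrt\rho^\top$ commutes past the application of $\chan L \otimes \chan I$. Concretely, writing $\chan L(\cdot) = \sum_j \pm A_j (\cdot) \adj{A_j}$ via \eqref{eq:kraus} (valid since the claim is only nontrivial for Hermitian-preserving $\chan L$, and in any case one can expand a general superoperator in a Kraus-like form), each term $(A_j \otimes I) X (\adj{A_j} \otimes I)$ manifestly commutes with left- and right-multiplication by $I \otimes \sqrt\rho^\top$. Hence
\begin{equation*}
    (\chan L \otimes \chan I)\Bigl((I \otimes \sqrt\rho^\top)\Psi(I \otimes \sqrt\rho^\top)\Bigr)
    = (I \otimes \sqrt\rho^\top)\,\bigl((\chan L \otimes \chan I)\Psi\bigr)\,(I \otimes \sqrt\rho^\top).
\end{equation*}
Taking trace norms of both sides gives exactly the claimed formula.

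I do not expect any real obstacle here — this is a routine manipulation. The only point requiring a sliver of care is justifying that conjugation on the second register commutes with $\chan L \otimes \chan I$: the cleanest way is the Kraus-form argument above (mirroring how the identical step is handled in the proof of \cref{thm:diamond-acid-relat}, where "conjugating by $A$ on the second register commutes with applying $\chan L$ on the first register"), or alternatively one can invoke the general fact that $(\chan L \otimes \chan I) \circ (\idsupop \otimes \chan K) = (\idsupop \otimes \chan K) \circ (\chan L \otimes \chan I)$ for superoperators acting on disjoint tensor factors. Everything else is bookkeeping with $\ket\Psi$ and the transpose trick.
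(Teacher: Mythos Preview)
Your proposal is correct and follows essentially the same approach as the paper's proof: use \eqref{eq:Phi-trans} to move $\sqrt\rho$ from the first register to $\sqrt\rho^\top$ on the second, then observe that applying $\chan L$ on the first register commutes with conjugating by $\sqrt\rho^\top$ on the second. The paper states this in two lines without elaborating on the commutativity step; your Kraus-form justification is a reasonable (though unnecessary) expansion, and your alternative via $(\chan L \otimes \chan I) \circ (\idsupop \otimes \chan K) = (\idsupop \otimes \chan K) \circ (\chan L \otimes \chan I)$ is the cleaner route since the lemma is stated for all superoperators, not just Hermitian-preserving ones.
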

\begin{proof}
    By \cref{eq:Phi-trans} it holds that
	\begin{equation*}
		\norm{\chan L}_\rho
		= \Norm{(\chan L \otimes \chan I) \cdot \Paren{I \otimes \sqrt\rho^\top} \Psi \Paren{I \otimes \sqrt\rho^\top}}_1
		= \Norm{\Paren{I \otimes \sqrt\rho^\top} \cdot (\chan L \otimes \chan I) \Psi \cdot \Paren{I \otimes \sqrt\rho^\top}}_1,
	\end{equation*}
	where the last equality holds because applying $\chan L$ on the first register commutes with conjugating by $\sqrt\rho^\top$ on the second register.
\end{proof}

The significance of \cref{prp:equiv-def} is that it characterizes $\norm{\chan L}_\rho$ in terms of the (unnormalized) Choi operator $(\chan L \otimes \chan I) \Psi$, which also appears (normalized) in the definition \cref{dfn:acid} of $\norm{\chan L}_\lsd$. We will use this observation to relate the $\rho$ and ACID norms, starting with the following bound:

\begin{lem} \label{lem:ip}
    For all superoperators $\chan L \in \spr d$, there exists a density matrix $\sigma \in \dens d$ such that for all density matrices $\rho \in \dens d$, it holds that $\norm{\chan L}_\rho \le \tr(\rho \sigma) d \norm{\chan L}_\lsd$.
\end{lem}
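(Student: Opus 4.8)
The plan is to decompose the (unnormalized) Choi operator into rank-one pieces and then push the conjugation by $\sqrt\rho^\top$ through each piece separately. Set $J \coloneqq (\chan L \otimes \chan I_d) \Psi_d$, so that $J = d\,\choi L$ and hence $\norm J_1 = d \norm{\chan L}_\lsd$, and recall from \cref{prp:equiv-def} that $\norm{\chan L}_\rho = \Norm{\Paren{I \otimes \sqrt\rho^\top} J \Paren{I \otimes \sqrt\rho^\top}}_1$. If $\chan L = 0$ the claim is vacuous, so assume $J \neq 0$.

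First I would fix a singular value decomposition $J = \sum_k s_k \ketbra{u_k}{v_k}$, where the $s_k > 0$, the $\ket{u_k}$ form an orthonormal set and the $\ket{v_k}$ form an orthonormal set (vectors in $\C^\dout \otimes \C^d$), so that $\sum_k s_k = \norm J_1$. Using the triangle inequality for the trace norm together with the identity $\Norm{\ket a \bra b}_1 = \norm a_2 \norm b_2$, one obtains
\begin{equation*}
    \norm{\chan L}_\rho
    \le \sum_k s_k \Norm{\Paren{I \otimes \sqrt\rho^\top} \ket{u_k}}_2 \Norm{\Paren{I \otimes \sqrt\rho^\top} \ket{v_k}}_2.
\end{equation*}
Next, letting $A_k \coloneqq \Paren{\tr_\dout \otimes \chan I_d} \kb{u_k}$ and $B_k \coloneqq \Paren{\tr_\dout \otimes \chan I_d} \kb{v_k}$ denote the reduced density matrices on the $d$-dimensional register, a one-line computation using $\Paren{\sqrt\rho^\top}^\dagger \sqrt\rho^\top = \rho^\top$ gives $\Norm{\Paren{I \otimes \sqrt\rho^\top} \ket{u_k}}_2^2 = \tr\Paren{\Paren{I \otimes \rho^\top} \kb{u_k}} = \tr\Paren{\rho^\top A_k} = \tr\Paren{\rho A_k^\top}$, and likewise $\Norm{\Paren{I \otimes \sqrt\rho^\top} \ket{v_k}}_2^2 = \tr\Paren{\rho B_k^\top}$.

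Finally, applying the AM--GM inequality $\sqrt{xy} \le (x+y)/2$ term by term converts the bound above into
\begin{equation*}
    \norm{\chan L}_\rho
    \le \frac12 \sum_k s_k \Paren{\tr\Paren{\rho A_k^\top} + \tr\Paren{\rho B_k^\top}}
    = \tr\Paren{\rho \, \sigma'},
    \qquad \sigma' \coloneqq \frac12 \sum_k s_k \Paren{A_k^\top + B_k^\top}.
\end{equation*}
The matrix $\sigma'$ depends only on $\chan L$, is PSD (each $A_k, B_k$ is PSD, hence so is its transpose), and has trace $\frac12 \sum_k s_k(1+1) = \norm J_1 = d \norm{\chan L}_\lsd$ since $\tr(A_k) = \tr(B_k) = 1$; thus $\sigma \coloneqq \sigma'/\tr(\sigma') \in \dens d$ does the job, yielding $\norm{\chan L}_\rho \le \tr\Paren{\rho\,\sigma'} = \tr(\rho\sigma)\, d \norm{\chan L}_\lsd$ for every $\rho \in \dens d$. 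The only steps requiring any care are the transpose-and-partial-trace bookkeeping in the middle step and making sure the normalization comes out so that $\tr(\sigma') = d\norm{\chan L}_\lsd$ exactly (this is what pins down the constant in the statement); the rest is just the triangle inequality and AM--GM, so I do not anticipate a genuine obstacle.
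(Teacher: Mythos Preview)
Your proof is correct and follows essentially the same approach as the paper: take a singular value decomposition of the unnormalized Choi operator, apply the triangle inequality and AM--GM to reduce to $\tr(\rho\,\sigma')$ for a PSD matrix $\sigma'$ built from the partial traces of the singular vectors, and normalize. The only cosmetic difference is that you invoke \cref{prp:equiv-def} up front and carry $\rho^\top$ through the computation (taking transposes of $A_k,B_k$ at the end), whereas the paper instead bounds $\norm{\chan L}_{\rho^\top}$ and substitutes $\rho\mapsto\rho^\top$ in the last line.
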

\begin{proof}
    Let $(\chan L \otimes \chan I) \Psi = \sum_j s_j \ketbra{u_j}{v_j}$ be a singular value decomposition of $(\chan L \otimes \chan I) \Psi$.  Then for all density matrices $\rho \in \dens d$,
    \begin{align*}
        \norm{\chan L}_{\rho^\top}
        &= \Norm{(I \otimes \sqrt\rho) \cdot \sum_j s_j \ketbra{u_j}{v_j} \cdot (I \otimes \sqrt\rho)}_1
        &&\text{\cref{prp:equiv-def}} \\
        &\le \sum_j s_j \Norm{(I \otimes \sqrt\rho) \ketbra{u_j}{v_j} (I \otimes \sqrt\rho)}_1
        &&\text{triangle inequality} \\
        &= \sum_j s_j \Norm{(I \otimes \sqrt\rho) \ket{u_j}}_2  \Norm{(I \otimes \sqrt\rho) \ket{v_j}}_2 \\
        &\le \frac12 \sum_j s_j \Norm{(I \otimes \sqrt\rho) \ket{u_j}}_2^2 +  \Norm{(I \otimes \sqrt\rho) \ket{v_j}}_2^2
        &&\text{AM-GM inequality} \\
        &= \tr\Paren{(I \otimes \rho) \cdot \frac12 \sum_j s_j (u_j + v_j)} \\
        &= \tr(\rho M),
    \end{align*}
    where in the last step we define $M = (\tr_d \otimes \chan I_d) \cdot \frac12 \sum_j s_j (u_j + v_j)$. Since $M$ is PSD and
    \begin{equation*}
        \tr(M) = \sum_j s_j = \norm{(\chan L \otimes \chan I) \Psi}_1 = d \norm{\chan L}_\lsd,
    \end{equation*}
    we may write $M = d \norm{\chan L}_\lsd \sigma^\top$ where $\sigma$ is a density matrix. Finally,
    \begin{equation*}
        \norm{\chan L}_{\rho^\top}
        \le \tr\Paren{(\rho M)^\top}
        = \tr\Paren{M^\top \rho^\top}
        = \tr\Paren{\sigma \rho^\top} d \norm{\chan L}_\lsd. \qedhere
    \end{equation*}
\end{proof}

We remark that \cref{lem:ip} implies an alternate proof of \cref{thm:diamond-acid-relat}, as $\norm{\chan L}_\diamond = \max_\rho \norm{\chan L}_\rho$ and $\tr(\rho \sigma) \le 1$.

\subsection{Bounds on the expected \texorpdfstring{$\bs\rho$}{rho} norm for unitarily invariant \texorpdfstring{$\bs\rho$}{rho}} \label{sec:unit-inv}

Call a random density matrix $\bs\rho$ \emph{unitarily invariant} if $U \bs \rho \adj U$ is distributed identically to $\bs\rho$ for all fixed unitaries $U$. In other words, the spectrum of $\bs\rho$ may be sampled arbitrarily, but conditioned on the spectrum the eigenvectors are Haar random. We prove the following, where the expectation is over both the eigenvalues and eigenvectors of $\bs\rho$:

\begin{restatable}{thm}{uinv} \label{prp:u-inv}
    Let $\bs\rho \in \dens d$ be a unitarily invariant random density matrix, where $d>1$. Then for all superoperators $\chan L \in \spr d$,
    \begin{equation*}
        \frac{d^2 \E[\fid{\bs\rho}{I/d}] - 1}{d^2(2 - \E[\fid{\bs\rho}{I/d}]) - 1} \norm{\chan L}_\lsd
        \le \E\norm{\chan L}_{\bs\rho}
        \le \norm{\chan L}_\lsd.
    \end{equation*}
\end{restatable}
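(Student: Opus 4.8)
The plan is to prove the two inequalities separately: the upper bound is essentially immediate from \cref{lem:ip}, while the lower bound is the substantive part.

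\textbf{Upper bound.} By \cref{lem:ip} there is a density matrix $\sigma \in \dens d$, depending only on $\chan L$, with $\norm{\chan L}_\rho \le \tr(\rho \sigma)\, d \norm{\chan L}_\lsd$ for every $\rho \in \dens d$. Taking expectations over $\bs\rho$, using that $\E[\bs\rho] = I/d$ (conditioned on its spectrum, a unitarily invariant density matrix has a Haar-random eigenbasis, so its conditional expectation is $\frac{\tr(\bs\rho)}{d}I = I/d$), and that $\tr\bigl((I/d)\sigma\bigr) = 1/d$, we get $\E\norm{\chan L}_{\bs\rho} \le \tr\bigl((I/d)\sigma\bigr)\, d \norm{\chan L}_\lsd = \norm{\chan L}_\lsd$.

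\textbf{Lower bound: reduction to an averaging map.} Abbreviate $f = \E[\fid{\bs\rho}{I/d}]$ and write $J = (\chan L \otimes \chan I_d)\Psi_d$ for the unnormalized Choi operator, so that $\norm J_1 = d\norm{\chan L}_\lsd$. By \cref{prp:equiv-def}, $\norm{\chan L}_{\bs\rho} = \norm{(\chan I_d \otimes \mc K_{\bs\rho})(J)}_1$, where $\mc K_\rho(Y) \coloneqq (\sqrt\rho)^\top Y (\sqrt\rho)^\top$ acts on the second register. Since $\norm\cdot_1$ is convex, Jensen's inequality gives $\E\norm{\chan L}_{\bs\rho} \ge \norm{(\chan I_d \otimes \mc R)(J)}_1$, where $\mc R \coloneqq \E[\mc K_{\bs\rho}]$, i.e.\ $\mc R(Y) = \E\bigl[(\sqrt{\bs\rho})^\top Y (\sqrt{\bs\rho})^\top\bigr]$. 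Now $\bs\rho$, and hence $\bs\rho^\top$, is unitarily invariant (if $\bs\rho \overset{d}{=} V\bs\rho\adj V$ for all unitaries $V$, then taking $V = \overline W$ gives $\bs\rho^\top \overset{d}{=} W\bs\rho^\top \adj W$ for all unitaries $W$), so $\mc R(UYU^\dagger) = U\mc R(Y)U^\dagger$ for all unitaries $U$. As the only Hermitian-preserving linear maps on $\C^{d \times d}$ with this covariance property have the form $Y \mapsto c_1 Y + c_2 \tr(Y) I$ (Schur's lemma; here we use $d > 1$), we have $\mc R(Y) = c_1 Y + c_2 \tr(Y) I$ for reals $c_1, c_2$. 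Evaluating at $Y = I$ gives $\mc R(I) = \E[\bs\rho^\top] = I/d$, hence $c_1 + c_2 d = 1/d$; and a short computation using $\Psi_d(I \otimes B)\Psi_d = \tr(B)\,\Psi_d$ (or \cref{lem:tr-choi-prod}) shows that $\tr\bigl((\chan I_d \otimes \mc R)(\Psi_d)\cdot \Psi_d\bigr)$ equals both $c_1 d^2 + c_2 d$ (using $(\chan I_d \otimes \mc R)(\Psi_d) = c_1 \Psi_d + c_2\, I_d \otimes I_d$, $\tr(\Psi_d^2) = d^2$, $\tr(\Psi_d) = d$) and $\E[\tr(\sqrt{\bs\rho})^2] = d f$ (using $\fid{\rho}{I/d} = \tr(\sqrt\rho)^2/d$). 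Solving these two linear equations yields $c_1 = \frac{d^2 f - 1}{d(d^2-1)}$ and $c_2 = \frac{1-f}{d^2-1}$; in particular $c_1 > 0$ since $f \ge 1/d$.

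\textbf{Lower bound: inverting the averaging map.} Since $c_1 > 0$, the map $\mc R$ is invertible, with inverse $\mc R^{-1}(Z) = \frac{1}{c_1}\bigl(Z - c_2 d\,\tr(Z) I\bigr) = \frac{1}{c_1}\bigl(\chan I_d - c_2 d^2\, \mc D_d\bigr)(Z)$, where $\mc D_d(Z) = \tr(Z) I/d$ is the completely depolarizing channel. Applying $\chan I_d \otimes \mc R^{-1}$ to $(\chan I_d \otimes \mc R)(J)$ recovers $J$, so, using that the induced trace norm of $\chan I_d \otimes \mc R^{-1}$ equals $\norm{\mc R^{-1}}_\diamond$ (the ancillary register has dimension $d$, equal to the input dimension of $\mc R^{-1}$, \cite[Theorem 3.46]{Wat18}), we get $\norm{(\chan I_d \otimes \mc R)(J)}_1 \ge \norm J_1 / \norm{\mc R^{-1}}_\diamond$. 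By the triangle inequality for the diamond norm, the facts $\norm{\chan I_d}_\diamond = \norm{\mc D_d}_\diamond = 1$, and $c_2 d^2 = \frac{d^2(1-f)}{d^2-1} \ge 0$, we obtain $\norm{\mc R^{-1}}_\diamond \le \frac{1}{c_1}(1 + c_2 d^2)$. Combining everything with $\norm J_1 = d\norm{\chan L}_\lsd$ gives $\E\norm{\chan L}_{\bs\rho} \ge \frac{c_1 d}{1 + c_2 d^2}\,\norm{\chan L}_\lsd$, and substituting the values of $c_1$ and $c_2$ and simplifying yields exactly $\frac{d^2 f - 1}{d^2(2-f) - 1}\,\norm{\chan L}_\lsd$.

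\textbf{Main obstacle.} I expect the delicate part to be extracting the precise constant. The naive route --- writing $(\sqrt{\bs\rho})^\top = \alpha I + B$ with $B$ traceless and bounding the cross terms $\norm{(I \otimes B) J}_1$ and $\norm{(I \otimes B) J (I \otimes B)}_1$ pointwise via $\norm B_\infty$, or equivalently peeling off the ``$\chan L(I) \otimes I$'' summand of $\E[(\chan I_d \otimes \mc K_{\bs\rho})(J)]$ and applying the reverse triangle inequality --- loses a factor of order $d$ and falls well short of the claimed bound. The key idea is instead to take the expectation \emph{first}, so that the averaging map $\mc R$ collapses to depolarizing type by unitary invariance, and then to \emph{invert} $\mc R$ and bound $\norm{\mc R^{-1}}_\diamond$ by the triangle inequality; this works because $\mc R^{-1}$ is a scalar multiple of $\chan I_d$ minus a scalar multiple of $\mc D_d$, both of diamond norm $1$, and the resulting constant $\frac{c_1 d}{1 + c_2 d^2}$ happens to simplify to precisely $\frac{d^2 f - 1}{d^2(2-f)-1}$.
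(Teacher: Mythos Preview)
Your proof is correct and follows the same overall strategy as the paper: apply Jensen's inequality to pass to the averaged operator, identify the averaging map as an isotropic/depolarizing-type map, and then invert to recover $\norm{\chan L}_\lsd$ with the precise constant. The execution differs from the paper's in two places, both of which streamline the argument.

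First, to identify the form of $\mc R$, you invoke unitary covariance and Schur's lemma, whereas the paper computes $\E\bigl[(\sqrt{\bs\rho}\otimes I)\Psi(\sqrt{\bs\rho}\otimes I)\bigr]$ explicitly by writing $\bs\rho = \bs U\bs D\adj{\bs U}$, expanding in the eigenbasis, and solving for the operators $A = \E[\kb{\bs\psi}\otimes\kb{\bs\psi^*}]$ and $B = \E[\ketbra{\bs\psi}{\bs\phi}\otimes\ketbra{\bs\psi^*}{\bs\phi^*}]$ via the formula for $\sym d$. Both routes yield the same coefficients $c_1,c_2$; yours is shorter and more conceptual, the paper's is self-contained.

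Second, to extract the constant, the paper rearranges $(\chan I\otimes\mc R)(J) = c_1 J + c_2\,\chan L(I)\otimes I$ to solve for $J$, applies the triangle inequality, and then must \emph{separately} bound $\norm{\chan L(I/d)}_1 \le \E\norm{\chan L}_{\bs\rho}$ using the auxiliary observation that $\norm{\chan L(\rho)}_1 \le \norm{\chan L}_\rho$ for all $\rho$. Your approach absorbs this into a single step by bounding $\norm{\mc R^{-1}}_\diamond \le \frac{1}{c_1}(1+c_2 d^2)$ via $\mc R^{-1} = \frac{1}{c_1}(\chan I_d - c_2 d^2\,\mc D_d)$ and the triangle inequality for the diamond norm, which eliminates the need for that auxiliary inequality entirely. (One tiny remark: the ancillary register carrying $J$ has dimension $\dout$, not $d$, so strictly you want $\norm{\chan I_{\dout}\otimes\mc R^{-1}}_1 \le \norm{\mc R^{-1}}_\diamond$; this inequality holds for all ancilla dimensions, so the argument is unaffected.)
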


In \cref{app:u-inv-tight} we give examples where the bounds in \cref{prp:u-inv} are (approximately) tight, and in \cref{app:u-inv-random} we give examples where the bounds fail to hold if $\bs\rho$ is replaced with a fixed density matrix. We remark that the lower bound in \cref{prp:u-inv} may be improved by up to a constant factor if $\norm{\chan L(I)}_1$ is given, by an easy modification of the following proof.

\begin{proof}
First we prove the upper bound on $\E \norm{\chan L}_{\bs\rho}$. By \cref{lem:ip} there exists a density matrix $\sigma \in \dens d$ such that for all density matrices $\rho \in \dens d$,
\begin{equation*}
    \norm{\chan L}_\rho \le \tr(\rho \sigma) d \norm{\chan L}_\lsd.
\end{equation*}
Since $\bs\rho$ is unitarily invariant we have $\E[\bs\rho] = I/d$, and therefore
\begin{equation*}
    \E \norm{\chan L}_{\bs\rho} \le \tr(\E[\bs\rho] \sigma) d \norm{\chan L}_\lsd = \norm{\chan L}_\lsd.
\end{equation*}

Now we prove the lower bound on $\E\norm{\chan L}_{\bs\rho}$. Write $\bs\rho = \bs U \bs D \adj{\bs U}$ where $\bs U$ is a Haar random unitary independent of the random diagonal density matrix $\bs D$, and write $\bs D = \sum_{i=1}^d \bs \lambda_i \kb{i}$. Let $F = \E[\fid{\bs\rho}{I/d}]$, and note that
\begin{align*}
    dF = \E\Brac{(\tr{\sqrt{\bs\rho}})^2}
    = \E\Brac{\Paren{\sum_i \sqrt{\bs\lambda_i}}^2}
    = \sum_{i \neq j} \E\Brac{\sqrt{\bs\lambda_i \bs\lambda_j}} + 1,
\end{align*}
where the last equality uses linearity of expectation and the fact that $\bs\rho$ has unit trace.

Let
\begin{align*}
    &A = \E[\kb{\bs\psi} \otimes \kb{\bs\psi^*}],
    &B = \E[\ketbra{\bs\psi}{\bs\phi} \otimes \ketbra{\bs\psi^*}{\bs\phi^*}],
\end{align*}
where $\ket{\bs\psi}, \ket{\bs\phi} \in \C^d$ are orthogonal Haar random states. Then the quantity $(\sqrt{\bs\rho} \otimes I) \Psi (\sqrt{\bs\rho} \otimes I)$ appearing in the definition \cref{def:rho-norm} of $\norm{\chan L}_{\bs\rho}$ satisfies
\begin{align*}
    \E\Brac{(\sqrt{\bs\rho} \otimes I) \Psi (\sqrt{\bs\rho} \otimes I)}
    &= \E\left[\Paren{\bs U \sqrt{\bs D} \adj{\bs U} \otimes I} \Psi \left(\bs U \sqrt{\bs D} \adj{\bs U} \otimes I\right)\right] \\
    &= \E\left[\left(\bs U \otimes \bs U^*\right) \left(\sqrt{\bs D} \otimes I\right) \Psi \left(\sqrt{\bs D} \otimes I\right) \left(\adj{\bs U} \otimes \bs U^\top\right)\right] \\
    &= \sum_{i,j=1}^d \E\left[\left(\bs U \otimes \bs U^*\right) \left(\sqrt{\bs D} \otimes I\right) \ketbra{ii}{jj} \left(\sqrt{\bs D} \otimes I\right) \left(\adj{\bs U} \otimes \bs U^\top\right)\right] \\
    &= \sum_{i,j=1}^d \E\Brac{\sqrt{\bs\lambda_i \bs\lambda_j}} \E\left[\left(\bs U \otimes \bs U^*\right) \ketbra{ii}{jj} \left(\adj{\bs U} \otimes \bs U^\top\right)\right] \\
    &= \sum_i \E[\bs\lambda_i] A + \sum_{i \neq j} \E \Brac{\sqrt{\bs\lambda_i \bs\lambda_j}} B \\
    &= A + (dF-1) B,
\end{align*}
where the second equality uses \cref{eq:Phi-trans}, and the fourth equality uses that $\bs U$ and $\bs D$ are independent.

We now solve for $A$ and $B$. Taking the transpose of the second register on both sides of \cref{lem:sym} gives
\begin{equation*}
    A = \frac{\Psi + I}{d(d+1)}.
\end{equation*}
Next, similarly to the above,
\begin{align*}
    \Psi
    &= \E\Brac{\Paren{\bs U \adj{\bs U} \otimes I} \Psi \Paren{\bs U {\adj{\bs U}} \otimes I}}
    = \E\Brac{\Paren{\bs U \otimes \bs U^*} \Psi \Paren{\adj{\bs U} \otimes \bs U^\top}} \\
    &= \sum_{i,j=1}^d \E\Brac{\Paren{\bs U \otimes \bs U^*} \ketbra{ii}{jj} \Paren{\adj{\bs U} \otimes \bs U^\top}}
    = \sum_i A + \sum_{i \neq j} B \\
    &= dA + d(d-1)B = \frac{\Psi + I}{d+1} + d(d-1)B,
\end{align*}
and rearranging gives
\begin{equation*}
    B = \frac\Psi{(d+1)(d-1)} - \frac{I}{(d+1)d(d-1)}.
\end{equation*}

Therefore
\begin{align*}
    \E\Brac{(\sqrt{\bs\rho} \otimes I) \Psi (\sqrt{\bs\rho} \otimes I)}
    &= A + (dF-1) B \\
    &= \Paren{\frac1{d(d+1)} + \frac{dF-1}{(d+1)(d-1)}} \Psi + \Paren{\frac1{d(d+1)} + \frac{1-dF}{(d+1)d(d-1)}} I \\
    &= \frac{d^2 F - 1}{(d+1)d(d-1)} \cdot \Psi + \frac{1-F}{(d+1)(d-1)} \cdot I.
\end{align*}
Rearranging gives
\begin{equation*}
    (d^2 F - 1) \frac\Psi{d}
    = (d^2-1) \E\Brac{(\sqrt{\bs\rho} \otimes I) \Psi (\sqrt{\bs\rho} \otimes I)} - (1-F) I,
\end{equation*} 
implying
\begin{equation*}
    (d^2 F - 1) \cdot (\chan L \otimes \chan I) (\Psi/d)
    = (d^2-1) \E\Brac{(\chan L \otimes \chan I) \cdot (\sqrt{\bs\rho} \otimes I) \Psi (\sqrt{\bs\rho} \otimes I)} - (1-F) (\chan L(I) \otimes I).
\end{equation*}

Therefore by the triangle inequality,
\begin{align*}
    (d^2 F - 1) \norm{\chan L}_\lsd
    &\le (d^2 - 1) \Norm{\E\Brac{(\chan L \otimes \chan I) \cdot (\sqrt{\bs\rho} \otimes I) \Psi (\sqrt{\bs\rho} \otimes I)}}_1 + (1-F) \Norm{\chan L(I) \otimes I}_1 \\
    &\le (d^2 - 1) \E \norm{\chan L}_{\bs\rho} + d^2(1-F) \norm{\chan L(I/d)}_1.
\end{align*}
Furthermore, since $\bs\rho$ is unitarily invariant,
\begin{equation*}
    \norm{\chan L(I/d)}_1
    = \norm{\chan L(\E[\bs\rho])}_1
    \le \E \norm{\chan L(\bs\rho)}_1,
\end{equation*}
and by \cref{eq:trace-norm-var-def} it holds for all fixed density matrices $\rho \in \dens d$ that
\begin{equation*}
    \norm{\chan L(\rho)}_1
    = \max_{\norm{B}_\infty = 1} \Mag{\tr\Paren{\chan L(\rho) B}}
    = \max_{\norm{B}_\infty = 1} \Mag{\tr\Paren{(\chan L \otimes \chan I)((\sqrt\rho \otimes I) \Psi (\sqrt\rho \otimes I)) \cdot (B \otimes I)}}
    \le \norm{\chan L}_\rho,
\end{equation*}
so
\begin{equation*}
    \Paren{d^2 F - 1} \norm{\chan L}_\lsd
    \le \Paren{d^2 - 1 + d^2(1-F)} \E\norm{\chan L}_{\bs\rho}
    = \Paren{d^2(2-F) - 1} \E\norm{\chan L}_{\bs\rho}.
\end{equation*}
Finally, since $d>1$ the quantity $d^2(2-F) - 1$ is strictly positive, so we may divide both sides of the above inequality by $d^2(2-F) - 1$, yielding
\begin{equation*}
    \frac{d^2 F - 1}{d^2(2-F) - 1} \norm{\chan L}_\lsd \le \E\norm{\chan L}_{\bs\rho}. \qedhere
\end{equation*}
\end{proof}

\subsection{Bounds on the expected \texorpdfstring{$\bs\rho$}{rho} norm when \texorpdfstring{$\bs\rho$}{rho} is the reduction of a Haar random state} \label{sec:ev-haar}

We now apply \cref{prp:u-inv} to the case where $\bs\rho$ is the reduction of a Haar random state. We will use the following two lemmas:

\begin{lem} \label{lem:fid2}
    For all density matrices $\rho \in \dens d$, it holds that $\fid\rho{I/d} \ge 1 / d\norm\rho_2^2$.
\end{lem}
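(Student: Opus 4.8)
The plan is to reduce the claim to an elementary inequality about the eigenvalues of $\rho$ and then prove that inequality by applying Cauchy--Schwarz twice. Let $\lambda_1, \dotsc, \lambda_d \ge 0$ denote the eigenvalues of $\rho$, so that $\sum_i \lambda_i = 1$. Recall from the preliminaries that $\fid\rho{I/d} = \tr(\sqrt\rho)^2/d = \frac1d\Paren{\sum_i \sqrt{\lambda_i}}^2$ and that $\norm\rho_2^2 = \tr(\rho^2) = \sum_i \lambda_i^2$. So the statement to be proved is equivalent to the homogeneous inequality
\[
    \Paren{\sum_i \sqrt{\lambda_i}}^2 \cdot \sum_i \lambda_i^2 \;\ge\; 1 \;=\; \Paren{\sum_i \lambda_i}^3,
\]
which no longer refers to $\rho$, $d$, or fidelity.

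To prove this inequality I would bound $\sum_i \lambda_i$ from above in two steps. First, writing $\lambda_i = \lambda_i^{3/4}\cdot\lambda_i^{1/4}$, Cauchy--Schwarz gives $\sum_i \lambda_i \le \sqrt{\sum_i \lambda_i^{3/2}}\cdot\sqrt{\sum_i \sqrt{\lambda_i}}$. Second, writing $\lambda_i^{3/2} = \lambda_i\cdot\lambda_i^{1/2}$ and using $\sum_i \lambda_i = 1$, another application of Cauchy--Schwarz gives $\sum_i \lambda_i^{3/2} \le \sqrt{\sum_i \lambda_i^2}\cdot\sqrt{\sum_i \lambda_i} = \sqrt{\sum_i \lambda_i^2}$. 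Substituting the second bound into the first yields $1 \le \Paren{\sum_i \lambda_i^2}^{1/4}\Paren{\sum_i \sqrt{\lambda_i}}^{1/2}$, and raising both sides to the fourth power gives exactly the displayed inequality, which rearranges to $\fid\rho{I/d} \ge 1/d\norm\rho_2^2$. One could alternatively obtain the same bound directly from log-convexity of $t\mapsto\log\sum_i\lambda_i^t$ evaluated at $1 = \tfrac23\cdot\tfrac12 + \tfrac13\cdot 2$, but the two-step Cauchy--Schwarz argument is fully self-contained.

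I do not expect any genuine obstacle; the only thing requiring care is choosing the fractional exponents so the two Cauchy--Schwarz applications compose correctly---the first must isolate $\sum_i \lambda_i^{3/2}$ and $\sum_i \sqrt{\lambda_i}$, and the second must collapse $\sum_i \lambda_i^{3/2}$ to $\sqrt{\norm\rho_2^2}$ via the trace-one condition. Vanishing eigenvalues cause no issue, since every term involved is nonnegative and $\sqrt 0 = 0$.
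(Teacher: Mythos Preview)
Your proof is correct and takes a genuinely different route from the paper's. The paper instead establishes the pointwise scalar inequality $\sqrt{x} \ge \tfrac32 x - \tfrac12 x^2$ for $x \ge 0$, applies it to $r\rho$ in the Loewner order, takes traces to obtain
\[
    \tr\sqrt{r\rho} \;\ge\; \tfrac32 r - \tfrac12 r^2 \norm\rho_2^2,
\]
and then optimizes over the free scaling parameter $r$ (the maximum is at $r = 1/\norm\rho_2^2$). Your argument instead recognizes the claim as the log-convexity of $t \mapsto \log\sum_i \lambda_i^t$ at $1 = \tfrac23\cdot\tfrac12 + \tfrac13\cdot 2$ and proves that special case by two Cauchy--Schwarz steps. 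Your approach is more standard and arguably cleaner---it is just Lyapunov's inequality for power sums, and it makes the equality case (all nonzero $\lambda_i$ equal, i.e.\ $\rho$ maximally mixed on a subspace) transparent from the Cauchy--Schwarz equality condition. The paper's approach has the minor advantage of being phrased as an operator inequality, which could in principle be combined with other operator manipulations, but for this lemma that buys nothing.
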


We remark that \cref{lem:fid2} is tight when $\rho$ is maximally mixed on some subspace.

\begin{proof}
    For all $x \ge 0$,
    \begin{equation*}
        0 \le (\sqrt x - 1)^2 (\sqrt x + 2) \sqrt x
        = x^2 - 3x + 2 \sqrt x,
    \end{equation*}
    and rearranging gives
    \begin{equation*}
        \sqrt{x} \ge \frac32 x - \frac12 x^2.
    \end{equation*}
    Therefore it holds for all $r \ge 0$ that
	\begin{equation*}
		\sqrt{r \rho} \ge \frac32 r\rho - \frac12 (r\rho)^2
	\end{equation*}
	in the Loewner order, and therefore
	\begin{equation*}
		\sqrt{\fid\rho{I/d}}
		= \frac1{\sqrt{rd}} \tr\sqrt{r\rho}
		\ge \frac1{\sqrt{d}} \Paren{\frac32 r^{1/2} - \frac12 r^{3/2} \norm\rho_2^2}.
	\end{equation*}
	The result follows by plugging in $r = 1 / \norm\rho_2^2$, which maximizes the above bound.
\end{proof}

\begin{lem}[{Lubkin~\cite[after Eq.\ 15]{Lub78}}] \label{lem:S2-haar-red}
    Let $\bs\rho \in \dens d$ be the reduction of a Haar random state in $\C^d \otimes \C^m$. Then $\E \left[\norm{\bs\rho}_2^2\right] = (d+m)/(dm+1)$.
\end{lem}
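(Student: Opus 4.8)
The plan is to compute $\E\Brac{\norm{\bs\rho}_2^2} = \E\Brac{\tr(\bs\rho^2)}$ directly via the swap trick. Write $\ket{\bs\psi} \in \C^d \otimes \C^m$ for the Haar random state, so $\bs\rho = \tr_m(\kb{\bs\psi})$, and recall the identity $\tr(\bs\rho^2) = \tr\Paren{\swap d \cdot (\bs\rho \otimes \bs\rho)}$, where the SWAP acts on two copies of the $d$-dimensional register. Pulling the partial traces out of the SWAP, this equals $\tr\Paren{(\swap d \otimes I_{m^2}) \cdot \Paren{\kb{\bs\psi} \otimes \kb{\bs\psi}}}$, where now $\swap d$ swaps the two $d$-dimensional subregisters and $I_{m^2}$ acts on the two $m$-dimensional subregisters.

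Next I would take the expectation and invoke \cref{lem:sym} in dimension $dm$: since $\ket{\bs\psi}$ is a Haar random state in $\C^{dm}$,
\begin{equation*}
    \E\Brac{\kb{\bs\psi} \otimes \kb{\bs\psi}}
    = \frac2{dm(dm+1)} \sym{dm}
    = \frac1{dm(dm+1)} \Paren{I + \swap{dm}}.
\end{equation*}
The key structural observation is that the SWAP on $(\C^{dm})^{\otimes 2} = \Paren{\C^d \otimes \C^m}^{\otimes 2}$, which exchanges the first copy with the second copy, factors as $\swap d \otimes \swap m$ once we regroup the registers as (the two $d$-dimensional subregisters) $\otimes$ (the two $m$-dimensional subregisters). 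Substituting this in, $\E\Brac{\tr(\bs\rho^2)}$ becomes
\begin{equation*}
    \frac1{dm(dm+1)} \Paren{\tr\Paren{\swap d \otimes I_{m^2}} + \tr\Paren{(\swap d \otimes I_{m^2})(\swap d \otimes \swap m)}}.
\end{equation*}

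Finally I would evaluate the two traces using $\tr(\swap k) = k$ and $\swap d^2 = I$: the first term is $\tr(\swap d)\tr(I_{m^2}) = d m^2$, and the second is $\tr(\swap d^2)\tr(\swap m) = \tr(I_{d^2})\tr(\swap m) = d^2 m$. Hence $\E\Brac{\tr(\bs\rho^2)} = \frac{dm^2 + d^2 m}{dm(dm+1)} = \frac{d+m}{dm+1}$, as claimed. There is essentially no obstacle here — the only point requiring a moment's care is the register bookkeeping that identifies the full SWAP as $\swap d \otimes \swap m$; everything else is a two-line trace computation.
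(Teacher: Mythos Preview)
Your argument is correct. It differs from the paper's proof: the paper expands $\ket{\bs\psi} = \sum_{i,s} \bs\alpha_{is}\ket{is}$ in coordinates, writes $\norm{\bs\rho}_2^2 = \sum_{i,j,s,t} \bs\alpha_{is}\bs\alpha_{js}^*\bs\alpha_{it}^*\bs\alpha_{jt}$, and then evaluates the expectation term-by-term using the second-moment formula \cref{lem:haar-moment}, followed by a short counting of how many index tuples fall into each case. Your route is the coordinate-free swap-trick computation: rewrite $\tr(\bs\rho^2)$ as $\tr((\swap d \otimes I)\,\bs\psi^{\otimes 2})$, average using \cref{lem:sym} in dimension $dm$, factor $\swap{dm} = \swap d \otimes \swap m$ after regrouping registers, and read off two traces. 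Both proofs rest on the same second-moment input (indeed \cref{lem:haar-moment} is derived from \cref{lem:sym} in the paper), so the difference is purely one of packaging; your version avoids the index bookkeeping and case count at the cost of the one register-reordering step you already flagged.
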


We include a proof below for completeness:

\begin{proof}
	Let $\ket{\bs\psi} \in \C^d \otimes \C^m$ be the Haar random state of which $\bs\rho$ is the reduction. Throughout this proof, sums over the variables $i,j$ are from $1$ to $d$, and sums over the variables $s,t$ are from $1$ to $m$. Write
	\begin{equation*}
		\ket{\bs\psi} = \sum_{i,s} \bs\alpha_{is} \ket{is}.
	\end{equation*}
	Then
	\begin{align*}
		&\bs\psi = \sum_{\mathclap{i,j,s,t}} \bs\alpha_{is} \bs\alpha_{jt}^* \ketbra{is}{jt},
		&&\bs\rho = \sum_{i,j,s} \bs\alpha_{is} \bs\alpha_{js}^* \ketbra{i}j,
	\end{align*}
	so
	\begin{equation*}
		\norm{\bs\rho}_2^2
		= \sum_{i,j} \Mag{\sum_s \bs\alpha_{is} \bs\alpha_{js}^*}^2
		= \sum_{\mathclap{i,j,s,t}} \bs\alpha_{is} \bs\alpha_{js}^* \bs\alpha_{it}^* \bs\alpha_{jt}.
	\end{equation*}
	Therefore by \cref{lem:haar-moment},
	\begin{align*}
		\E \left[\norm{\bs\rho}_2^2\right]
		&= \sum_{\mathclap{i,j,s,t}}
		\begin{cases}
			2/dm(dm+1) & \text{if $i=j$ and $s=t$} \\
			1/dm(dm+1) & \text{if $i=j$ xor $s=t$} \\
			0 & \text{otherwise}.
		\end{cases} \\
		&= dm \cdot \frac{2}{dm(dm+1)} + dm(d+m-2) \cdot \frac1{dm(dm+1)} \\
		&= \frac{d+m}{dm+1}. \qedhere
	\end{align*}
\end{proof}

Now we combine the above results to prove the following:

\begin{cor} \label{cor:haar-e}
Let $\bs\rho \in \dens d$ be the reduction of a Haar random state in $\C^d \otimes \C^m$, where $d>1$. Then for all superoperators $\chan L \in \spr d$,
\begin{equation*}
    \frac{m}{2d+m} \norm{\chan L}_\lsd
    \le \E\norm{\chan L}_{\bs\rho}
    \le \norm{\chan L}_\lsd.
\end{equation*}
\end{cor}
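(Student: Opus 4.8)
The plan is to obtain \cref{cor:haar-e} as a direct specialization of \cref{prp:u-inv}, whose two-sided bound is phrased entirely in terms of $F \coloneqq \E[\fid{\bs\rho}{I/d}]$. First I would check that $\bs\rho$ is unitarily invariant: for any fixed unitary $U$, applying $U \otimes I$ to the underlying Haar random state in $\C^d \otimes \C^m$ leaves its distribution unchanged while sending the reduced state $\bs\rho$ to $U \bs\rho \adj U$, so these are identically distributed. Hence \cref{prp:u-inv} applies; its upper bound is exactly the upper bound claimed here, so only the lower bound needs work.

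For the lower bound, the idea is to lower-bound $F$ and substitute into the coefficient $\frac{d^2 F - 1}{d^2(2-F)-1}$ from \cref{prp:u-inv}. By \cref{lem:fid2} we have $\fid{\rho}{I/d} \ge 1/(d\norm{\rho}_2^2)$ pointwise, so taking expectations and applying Jensen's inequality to the convex function $t \mapsto 1/t$ on $(0,\infty)$ gives
\[
    F \ge \E\Brac{\frac1{d\norm{\bs\rho}_2^2}} \ge \frac1{d\,\E[\norm{\bs\rho}_2^2]} = \frac{dm+1}{d(d+m)},
\]
where the last step is \cref{lem:S2-haar-red}.

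Finally I would substitute $F_0 \coloneqq \frac{dm+1}{d(d+m)}$ into $\frac{d^2 F - 1}{d^2(2-F)-1}$. This is legitimate because that expression is increasing in $F$ --- its derivative in $F$ has numerator $2d^2(d^2-1) > 0$ since $d > 1$ --- so replacing $F$ by the smaller value $F_0$ only weakens the bound; and $F_0$ lies in $(1/d^2, 1]$, the regime where the coefficient is well-defined and nonnegative. A short computation then gives $d^2 F_0 - 1 = \frac{m(d^2-1)}{d+m}$ and $d^2(2-F_0) - 1 = \frac{(d^2-1)(2d+m)}{d+m}$, whose ratio is $\frac{m}{2d+m}$, completing the proof.

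I do not expect a genuine obstacle here: the corollary is essentially \cref{prp:u-inv} combined with two lemmas already established. The only points that require care are ensuring both applications of Jensen's inequality point in the favorable direction and carrying out the closing algebra, where the factorizations $2d^3 + d^2 m - 2d - m = (d^2-1)(2d+m)$ and $d^2 m - m = m(d^2-1)$ keep the arithmetic clean.
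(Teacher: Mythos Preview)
Your proposal is correct and follows essentially the same route as the paper: apply \cref{prp:u-inv}, lower-bound $F$ via \cref{lem:fid2}, Jensen, and \cref{lem:S2-haar-red}, then substitute and simplify to $\frac{m}{2d+m}$. You are slightly more explicit than the paper in verifying unitary invariance of $\bs\rho$ and in justifying the substitution via monotonicity of the coefficient in $F$, but the argument is the same.
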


\begin{proof}
The upper bound $\E\norm{\chan L}_{\bs\rho} \le \norm{\chan L}_\lsd$ is that in \cref{prp:u-inv}. The lower bound holds because by \cref{lem:fid2}, Jensen's inequality, and \cref{lem:S2-haar-red},
\begin{equation} \label{eq:lb-fid}
    \E[\fid{\bs\rho}{I/d}]
    \ge \frac1d \E\Brac{\frac1{\norm{\bs\rho}_2^2}}
    \ge \frac1{d \E\Brac{\norm{\bs\rho}_2^2}}
    = \frac{dm+1}{d(d+m)},
\end{equation}
so by \cref{prp:u-inv},
\begin{equation*}
    \E\norm{\chan L}_{\bs\rho}
    \ge \frac{d^2 (dm+1)/d(d+m) - 1} {d^2 \Paren{2 - (dm+1)/d(d+m)} - 1} \norm{\chan L}_\lsd
    = \frac{d^2 m - m}{2d^3 + d^2 m - 2d - m} \norm{\chan L}_\lsd
    = \frac{m}{2d+m} \norm{\chan L}_\lsd. \qedhere
\end{equation*}
\end{proof}

We remark that \cref{eq:lb-fid} is tight to within a factor of 2, as can be shown using the fact that $\rank(\bs\rho) \le m$.

\subsection{Tail bounds on the \texorpdfstring{$\bs\rho$}{rho} norm when \texorpdfstring{$\bs\rho$}{rho} is the reduction of a Haar random state} \label{sec:tail}

We now prove tail bounds on $\norm{\chan L}_{\bs\rho}$ to complement \cref{cor:haar-e}, where again $\bs\rho \in \dens d$ is the reduction of a Haar random state in $\C^d \otimes \C^m$. Since \cref{cor:haar-e} implies that $\E \norm{\chan L}_{\bs\rho} = \Theta(\norm{\chan L}_\lsd)$ assuming $m \ge \Omega(d)$, our goal here is to prove that $\norm{\chan L}_{\bs\rho} = \Theta(\norm{\chan L}_\lsd)$ \emph{with high probability} under the same assumption. Unfortunately we fall slightly short of this goal, and instead prove two complementary tail bounds that approach it in different ways. The first tail bound, proved using L\'evy’s lemma, implies that $\norm{\chan L}_{\bs\rho} = \Theta(\norm{\chan L}_\lsd)$ with high probability provided that either $m \ge \omega(d)$ or $\norm{\chan L}_\diamond \le o(d \norm{\chan L}_\lsd)$ (in the latter case, still assuming $m \ge \Omega(d)$). For comparison, recall from \cref{thm:diamond-acid-relat} that $\norm{\chan L}_\diamond \le O(d \norm{\chan L}_\lsd)$, which falls just short of the latter criterion for worst-case superoperators $\chan L$. The second tail bound, proved using \cref{lem:ip}, implies the one-sided inequality $\norm{\chan L}_{\bs\rho} \le O(\norm{\chan L}_\lsd)$ with high probability assuming only that $m \ge \omega(\log d)$.

Let $\sphere{d-1} = \{x \in \R^d: \norm{x}_2 = 1\}$ denote the $d$-dimensional unit sphere. A function $f: \sphere{d-1} \to \R$ is \emph{$L$-Lipschitz} if $|f(x) - f(y)| \le \norm{x - y}_2$ for all $x,y \in \sphere{d-1}$, and such functions obey the following concentration inequality:

\begin{lem}[{L\'evy’s lemma~\cite[Corollary 5.4]{Mec19}}] \label{thm:levy}
Let $f: \sphere{d-1} \to \R$ be $L$-Lipschitz, and let $\bs x \in \sphere{d-1}$ be uniform random. Then for all $t \ge 0$,
\begin{equation*}
    \PR{\Mag{f(\bs x) - \E f(\bs x)} \ge t} \le \exp\Paren{\pi - \frac{d t^2}{4 L^2}}.
\end{equation*}
\end{lem}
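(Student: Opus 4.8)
The statement is the classical concentration-of-measure inequality for Lipschitz functions on the sphere; it is quoted here from \cite{Mec19}, so rather than invent something new the plan is to reproduce (a clean form of) the standard argument. The route I would take first goes through the spherical isoperimetric inequality of L\'evy and Schmidt: among all Borel subsets $A \subseteq \sphere{d-1}$ of a prescribed measure, geodesic caps minimize the measure of the open $t$-enlargement $A_t = \{y : \mrm{dist}(y,A) < t\}$, where $\mrm{dist}$ is geodesic distance and $\mu$ is the uniform probability measure.

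First, since $f$ is $L$-Lipschitz I may rescale and assume $L = 1$ (Euclidean distance is bounded above by geodesic distance, so $f$ is also $1$-Lipschitz for the geodesic metric). Let $M$ be a median of $f$ and put $A = \{x : f(x) \le M\}$, so $\mu(A) \ge 1/2$. By $1$-Lipschitzness, $A_t \subseteq \{f \le M + t\}$, hence $\{f > M + t\} \subseteq \sphere{d-1}\setminus A_t$; by isoperimetry $\mu(A_t) \ge \mu(C_t)$ for a cap $C$ of measure $1/2$ (a hemisphere), and the standard estimate for the measure of the complement of the $t$-enlargement of a hemisphere gives $\mu\Paren{\sphere{d-1}\setminus C_t} \le \exp\Paren{-\Omega(dt^2)}$. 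Applying the same to $-f$ bounds the lower tail, so $f$ concentrates around its median. Finally I would pass from median to mean: integrating the median tail bound shows $\Mag{M - \E f} = O(1/\sqrt d)$, which is absorbed into slightly worse constants and yields a bound of the advertised shape $\exp\Paren{\pi - dt^2/(4L^2)}$ — the additive $\pi$ and the factor $4$ being exactly the slack accumulated in the isoperimetric cap estimate, the geodesic-versus-Euclidean comparison, and the median-to-mean step.

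An alternative that is often cleaner for pinning down explicit constants is the Bakry--\'Emery/Herbst route: the unit sphere $\sphere{d-1}$ has Ricci curvature $(d-2)g$, so the uniform probability measure satisfies a logarithmic Sobolev inequality with constant of order $d$; Herbst's argument then gives $\E\exp\Paren{\lambda(f - \E f)} \le \exp\Paren{O(\lambda^2/d)}$ for $1$-Lipschitz $f$, and a Chernoff bound produces the two-sided sub-Gaussian tail directly with no median-to-mean conversion. I would adopt whichever of these yields the constants used in \cite{Mec19}.

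The main obstacle is that the essential input in either route — the spherical isoperimetric inequality, or the curvature-implies-log-Sobolev theorem — is itself a nontrivial result that I would import wholesale; granting that, the remainder is the routine cap-measure computation and the median-to-mean bookkeeping needed to land exactly on the constants $\pi$ and $4$. I would note, though, that every later use of this lemma in the paper would be equally well served by any two-sided bound of the form $\exp\Paren{C - c\,dt^2/L^2}$ with universal $C,c>0$, so matching the precise constants is a convenience rather than a necessity.
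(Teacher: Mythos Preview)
The paper does not prove this lemma at all; it simply cites it as Corollary~5.4 of \cite{Mec19} and uses it as a black box in the proof of \cref{thm:levy-app}. Your sketch is a correct outline of the standard concentration-of-measure argument (either via spherical isoperimetry plus median-to-mean, or via log-Sobolev and Herbst), and your closing remark that only a bound of the form $\exp(C - c\,dt^2/L^2)$ is actually needed downstream is accurate --- but there is no in-paper proof to compare against.
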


By showing that $\norm{\chan L}_\rho$ is $2 \norm{\chan L}_\diamond$-Lipschitz\footnote{When $m < d$, we may replace $\norm{\chan L}_\diamond$ with $\max_\psi \norm{(\chan L \otimes \chan I) \cdot \psi}_1$ in \cref{thm:levy-app}, where $\psi$ ranges over all pure states in $\dens{d \otimes m}$. (When $m \ge d$ this quantity equals $\norm{\chan L}_\diamond$.)} as a function of a purification of $\rho$, we prove the following:

\begin{thm} \label{thm:levy-app}
Let $\bs\rho \in \dens d$ be the reduction of a Haar random state in $\C^d \otimes \C^m$. Then for all superoperators $\chan L \in \spr d$ and all $t \ge 0$,
\begin{equation*}
    \PR{\Mag{\norm{\chan L}_{\bs\rho} - \E \norm{\chan L}_{\bs\rho}} \ge t \norm{\chan L}_\lsd}
    \le \exp\Paren{\pi - \frac{dm t^2 \norm{\chan L}_\lsd^2}{8 \norm{\chan L}_\diamond^2}}
    \le \exp\Paren{\pi - \frac{m t^2}{8d}}.
\end{equation*}
\end{thm}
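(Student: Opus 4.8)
The plan is to apply L\'evy's lemma (\cref{thm:levy}) to a suitable function on the unit sphere of $\C^d \otimes \C^m$. \emph{Step 1: reduce to a random pure state.} Let $\ket{\bs\psi} \in \C^d \otimes \C^m$ be the Haar random state of which $\bs\rho$ is the reduction. Both $\ket{\bs\psi}$ and the canonical purification $(\sqrt{\bs\rho} \otimes I_d)\ket{\Psi_d}$ appearing in \cref{def:rho-norm} are purifications of $\bs\rho$, and since every purification has Schmidt rank $\rank(\bs\rho) \le \min(d,m)$ they are related by a partial isometry acting on the second register only. That isometry commutes with applying $\chan L$ on the first register, and conjugation by an isometry preserves the trace norm, so $\norm{\chan L}_{\bs\rho} = \norm{(\chan L \otimes \chan I_m)\kb{\bs\psi}}_1$ pointwise. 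Identifying $\C^{dm}$ with $\R^{2dm}$ as real inner product spaces, $\ket{\bs\psi}$ is a uniform random point of $\sphere{2dm-1}$, so it suffices to bound the Lipschitz constant of $g\colon \sphere{2dm-1} \to \R$ defined by $g(\ket\psi) \coloneqq \norm{(\chan L \otimes \chan I_m)\kb\psi}_1$ (well defined, being phase-invariant), and then apply \cref{thm:levy} with ambient dimension $2dm$.

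\emph{Step 2: the Lipschitz bound.} For unit vectors $\ket\psi, \ket\phi$, the triangle inequality for the seminorm $X \mapsto \norm{(\chan L \otimes \chan I_m)X}_1$ gives $\Mag{g(\ket\psi) - g(\ket\phi)} \le \norm{(\chan L \otimes \chan I_m)(\kb\psi - \kb\phi)}_1 \le \norm{\chan L \otimes \chan I_m}_1 \cdot \norm{\kb\psi - \kb\phi}_1$. Here $\norm{\chan L \otimes \chan I_m}_1 \le \norm{\chan L}_\diamond$, which is immediate from the definition of the diamond norm as a maximum over trivial extensions (for $m < d$ one may even replace $\norm{\chan L}_\diamond$ by the smaller quantity $\max_\psi \norm{(\chan L \otimes \chan I_m)\psi}_1$, cf.\ the footnote to the theorem), and $\norm{\kb\psi - \kb\phi}_1 \le 2\norm{\ket\psi - \ket\phi}_2$ by \cref{eq:td2d}. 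Hence $g$ is $2\norm{\chan L}_\diamond$-Lipschitz with respect to the Euclidean distance on $\sphere{2dm-1}$.

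\emph{Step 3: conclude.} Applying \cref{thm:levy} to $g$ with Lipschitz constant $L = 2\norm{\chan L}_\diamond$ and ambient sphere dimension $2dm$, with deviation parameter $t\norm{\chan L}_\lsd$, yields
\begin{equation*}
    \PR{\Mag{\norm{\chan L}_{\bs\rho} - \E\norm{\chan L}_{\bs\rho}} \ge t\norm{\chan L}_\lsd} \le \exp\Paren{\pi - \frac{2dm\, t^2 \norm{\chan L}_\lsd^2}{4\,(2\norm{\chan L}_\diamond)^2}} = \exp\Paren{\pi - \frac{dm\, t^2 \norm{\chan L}_\lsd^2}{8\norm{\chan L}_\diamond^2}},
\end{equation*}
which is the first inequality. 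For the second, \cref{thm:diamond-acid-relat} (for Hermitian-preserving $\chan L$; in general, \cref{lem:ip} together with $\norm{\chan L}_\diamond = \max_\rho \norm{\chan L}_\rho$) gives $\norm{\chan L}_\diamond \le d \norm{\chan L}_\lsd$, whence $\norm{\chan L}_\lsd^2 / \norm{\chan L}_\diamond^2 \ge 1/d^2$ and so $dm\, t^2 \norm{\chan L}_\lsd^2 / (8\norm{\chan L}_\diamond^2) \ge m t^2 / (8d)$; the bound follows since $\exp$ is increasing.

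\emph{Main obstacle.} None of the steps is deep; the points that need care are the passage in Step 1 from the canonical purification of \cref{def:rho-norm} to the random purification $\ket{\bs\psi}$ (i.e.\ purification-independence of $\norm{(\chan L \otimes \chan I)[\,\cdot\,]}_1$), and the bookkeeping of constants — in particular that the complex unit sphere of $\C^{dm}$ is the \emph{real} sphere $\sphere{2dm-1}$, and the factor $2$ in the Lipschitz constant coming from \cref{eq:td2d}.
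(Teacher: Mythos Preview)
Your proof is correct and follows essentially the same approach as the paper: define the function on the sphere $\sphere{2dm-1}$, show it is $2\norm{\chan L}_\diamond$-Lipschitz via the triangle inequality and \cref{eq:td2d}, apply L\'evy's lemma, and then invoke $\norm{\chan L}_\diamond \le d\norm{\chan L}_\lsd$ for the second inequality. The only difference is cosmetic: you explicitly justify in Step~1 that $\norm{\chan L}_{\bs\rho} = \norm{(\chan L \otimes \chan I_m)\bs\psi}_1$ via the purification argument, whereas the paper simply writes $f(\ket\psi) = \norm{\chan L}_\rho$ and then silently uses this identity when computing $f(\ket\psi) - f(\ket\phi) = \norm{(\chan L \otimes \chan I)\psi}_1 - \norm{(\chan L \otimes \chan I)\phi}_1$.

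One small remark on Step~3: your alternative route to $\norm{\chan L}_\diamond \le d\norm{\chan L}_\lsd$ for general $\chan L$ via \cref{lem:ip} and ``$\norm{\chan L}_\diamond = \max_\rho \norm{\chan L}_\rho$'' is not quite right, since the latter identity is only asserted for Hermitian-preserving $\chan L$ (the diamond norm maximum is attained at a pure state only in that case). The inequality $\norm{\chan L}_\diamond \le d\norm{\chan L}_\lsd$ does in fact hold for arbitrary superoperators---one can mimic the proof of \cref{thm:diamond-acid-relat} using that the maximum in the diamond norm is attained at a rank-one operator $\ketbra{u}{v}$---but your stated justification needs this tweak. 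The paper itself just cites \cref{thm:diamond-acid-relat} without comment.
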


\begin{proof}
For a pure state $\ket\psi \in \C^d \otimes \C^m$, let $f(\ket\psi) = \norm{\chan L}_\rho$ where $\rho$ is the reduced state on the first register of $\psi$. By identifying $\C^d \otimes \C^m$ with $\R^{2dm}$ in the natural way, we can identify the domain of $f$ with the sphere $\sphere{2dm-1} \subseteq \R^{2dm}$. Thus for all pure states $\ket\psi, \ket\phi \in \C^d \otimes \C^m$,
\begin{align*}
    f(\ket\psi) - f(\ket\phi)
    &= \norm{(\chan L \otimes \chan I) \psi}_1 - \norm{(\chan L \otimes \chan I) \phi}_1 \\
    &\le \norm{(\chan L \otimes \chan I) (\psi - \phi)}_1
    &&\text{triangle inequality} \\
    &\le \norm{\chan L}_\diamond \norm{\psi - \phi}_1 \\
    &\le 2 \norm{\chan L}_\diamond \norm{\ket\psi - \ket\phi}_2
    &&\text{\cref{eq:td2d}.}
\end{align*}
In other words $f$ is $2 \norm{\chan L}_\diamond$-Lipschitz, so by \cref{thm:levy}
\begin{equation*}
    \PR{\Mag{f(\ket{\bs\psi}) - \E f(\ket{\bs\psi})} \ge t} \le \exp\Paren{\pi - \frac{dmt^2}{8 \norm{\chan L}_\diamond^2}}
\end{equation*}
for Haar random $\ket{\bs\psi}$ and $t \ge 0$, which is equivalent to the first inequality in the theorem statement. The second inequality follows from \cref{thm:diamond-acid-relat}.
\end{proof}

Now we prove our second tail bound:

\begin{thm} \label{thm:up-tail-haar}
Let $\bs\rho \in \dens d$ be the reduction of a Haar random state in $\C^d \otimes \C^m$. Then for all superoperators $\chan L \in \spr d$ and all $t \ge 10$,
\begin{equation*}
    \PR{\norm{\chan L}_{\bs\rho} \ge t \norm{\chan L}_\lsd}
    \le 2d \exp(-tm/8).
\end{equation*}
\end{thm}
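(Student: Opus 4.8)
The plan is to use \cref{lem:ip} to reduce this tail bound on $\norm{\chan L}_{\bs\rho}$ to a tail bound on a single linear functional of $\bs\rho$. That lemma produces a density matrix $\sigma \in \dens d$, depending only on $\chan L$, with $\norm{\chan L}_{\bs\rho} \le \tr(\bs\rho\sigma)\, d\,\norm{\chan L}_\lsd$ pointwise; so, assuming $\norm{\chan L}_\lsd > 0$ (the zero superoperator being degenerate), it suffices to show $\PR{\tr(\bs\rho\sigma) \ge t/d} \le 2d\exp(-tm/8)$. Moreover $\tr(\bs\rho\sigma) \le \norm{\bs\rho}_\infty \le 1 < t/d$ whenever $d < t$, so this probability vanishes unless $d \ge t$, and we may assume $d \ge t \ge 10$ from now on.

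The next step is to diagonalize $\sigma = \sum_{i=1}^d p_i\kb{e_i}$ with $p_i \ge 0$ and $\sum_i p_i = 1$, and to write $\ket{\bs\psi} \in \C^d \otimes \C^m$ for the Haar random state whose reduced state on the first register is $\bs\rho$. Setting $\bs q_i = \tr(\bs\rho\kb{e_i}) = \bra{\bs\psi}(\kb{e_i}\otimes I_m)\ket{\bs\psi}$, we have $\tr(\bs\rho\sigma) = \sum_i p_i\bs q_i \le \max_i\bs q_i$, so a union bound gives $\PR{\tr(\bs\rho\sigma) \ge t/d} \le \sum_{i=1}^d\PR{\bs q_i \ge t/d} = d\,\PR{\bs q_1 \ge t/d}$, the last equality holding because the $\bs q_i$ are identically distributed by unitary invariance of the Haar measure. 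This union bound --- equivalently, the observation that the extremal case for $\tr(\bs\rho\sigma)$ is $\sigma$ of rank one --- is where the factor $d$ in the statement comes from.

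It remains to show $\PR{\bs q_1 \ge t/d} \le 2\exp(-tm/8)$ for $d \ge t \ge 10$. Expanding $\ket{\bs\psi}$ in the orthonormal basis $\{\ket{e_i}\}$ of $\C^d$ tensored with the standard basis of $\C^m$, its coefficient vector is uniform on the unit sphere of $\C^{dm}$, so $\bs q_1$ is distributed as the squared norm of a fixed $m$-dimensional complex coordinate block of such a vector. Using that a uniform unit vector in $\C^{dm}$ is distributed as $\bs g/\norm{\bs g}_2$ for a standard complex Gaussian vector $\bs g$ with $|\bs g_j|^2 \sim \mathrm{Exp}(1)$, we get that $\bs q_1$ is distributed as $A/B$, where $B = \sum_{j=1}^{dm}|\bs g_j|^2$ is a sum of $dm$ i.i.d.\ $\mathrm{Exp}(1)$ variables and $A$ is the partial sum over the $m$ coordinates of the block. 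The key inclusion is $\{\bs q_1 \ge t/d\} \subseteq \{A \ge tm/2\}\cup\{B < dm/2\}$, since $A < tm/2$ together with $B \ge dm/2$ forces $A/B < t/d$. A Chernoff bound with parameter $\lambda = 1/2$ gives $\PR{A \ge tm/2} \le e^{-tm/4}2^m = \exp(m\ln 2 - tm/4) \le \exp(-tm/8)$, using $t \ge 8\ln 2$; the optimized lower-tail Chernoff bound gives $\PR{B < dm/2} \le (e^{1/2}/2)^{dm} \le \exp(-0.19\,dm) \le \exp(-tm/8)$, using $d \ge t$ and $0.19 > 1/8$; summing these two estimates finishes the proof.

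I expect this to be essentially mechanical once \cref{lem:ip} is available, with no serious obstacle. The only points that need care are (i) confirming that it is enough to handle $\sigma$ of rank one, which the bound $\tr(\bs\rho\sigma) \le \max_i\bs q_i$ does automatically, and (ii) checking that each of the two Chernoff tail bounds comes in below $\exp(-tm/8)$ with the stated constants --- in particular the lower-tail bound on $B$, which works exactly because we have reduced to the regime $d \ge t$.
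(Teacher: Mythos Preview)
Your proposal is correct and follows essentially the same approach as the paper: invoke \cref{lem:ip}, reduce to rank-one $\sigma$ via the union bound $\tr(\bs\rho\sigma)\le\max_i\bs q_i$, represent $\bs q_1$ as a ratio of Gaussian norms, and control numerator and denominator separately by exponential tail bounds after reducing to $d\ge t$. The only differences are cosmetic --- the paper phrases the Gaussian sums as $\chi^2(2m)$ and $\chi^2(2dm)$ and cites the Laurent--Massart inequalities rather than computing Chernoff bounds directly --- but the structure and constants line up.
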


We did not attempt to optimize the constants in \cref{thm:up-tail-haar}.

\begin{proof}
By \cref{lem:ip} there exists a density matrix $\sigma \in \dens d$ such that for all density matrices $\rho \in \dens d$, it holds that $\norm{\chan L}_\rho \le \tr(\rho \sigma) d \norm{\chan L}_\lsd$. At this point we could note that by Hölder's inequality $\tr(\bs\rho \sigma) \le \norm{\bs\rho}_\infty \norm\sigma_1 = \norm{\bs\rho}_\infty$ and bound $\norm{\bs\rho}_\infty$ using a matrix Chernoff bound, but this is wasteful: we don't need to bound $\norm{\bs\rho \ket\phi}_2$ for \emph{every} pure state $\ket\phi$, but only for those $\ket\phi$ that are eigenvectors of $\sigma$. Concretely, let $\sigma = \sum_{j=1}^d \lambda_j \kb{\phi_j}$ be an eigendecomposition of $\sigma$. Then $\tr(\rho \sigma) \le \max_j \tr(\rho \phi_j)$ for all density matrices $\rho \in \dens d$, so by a union bound\footnote{A tighter but more complicated bound follows from a result of Hsu, Kakade and Zhang~\cite[Proposition 1]{HKZ12}.}
\begin{equation*}
    \PR{\norm{\chan L}_{\bs\rho} \ge t \norm{\chan L}_\lsd}
    \le \PR{\tr(\bs\rho \sigma) \ge t/d}
    \le \sum_{j=1}^d \PR{\tr(\bs\rho \phi_j) \ge t/d}
    = d \cdot \PR{\bra0 \bs\rho \ket0 \ge t/d}.
\end{equation*}

Let
\begin{equation*}
    \ket{\bs g} = \sum_{i=1}^d \sum_{j=1}^m \bs g_{ij} \ket{ij}
\end{equation*}
be a vector with independent standard complex Gaussian elements $\bs g_{ij}$, and let $\ket{\bs g} / \norm{\ket{\bs g}}_2$ be the Haar random pure state of which $\bs\rho$ is the reduction. Also write
\begin{equation*}
    \ket{\bs g} =\frac{\ket{\bs a} + i \ket{\bs b}}{\sqrt 2}
\end{equation*}
where $\ket{\bs a}, \ket{\bs b}$ are vectors with independent standard real Gaussian elements. Then
\begin{equation*}
    \bra0 \bs\rho \ket0
    = \frac{\sum_{j=1}^m \Mag{\bs g_{0j}}^2}{\sum_{i=1}^d \sum_{j=1}^m \Mag{\bs g_{ij}}^2}
    = \frac{\sum_{j=1}^m \Paren{\Mag{\bs a_{0j}}^2 + \Mag{\bs b_{0j}}^2}}{\sum_{i=1}^d \sum_{j=1}^m \Paren{\Mag{\bs a_{ij}}^2 + \Mag{\bs b_{ij}}^2}}.
\end{equation*}

The numerator $\bs N$ and denominator $\bs D$ of the latter expression are respectively $\chi^2(2m)$ and $\chi^2(2dm)$ random variables, where we write $\chi^2(k)$ to denote the $\chi^2$ distribution with $k$ degrees of freedom. A $\chi^2(k)$ random variable $\bs X$ obeys the tail bounds~\cite[Eqs.\ (4.3) and (4.4)]{LM00}
\begin{align*}
    &\PR{\bs X/k \le 1 - 2\sqrt{s}} \le \exp(-ks),
    &&\PR{\bs X/k \ge 1 + 2\sqrt{s} + 2s} \le \exp(-ks),
\end{align*}
for all $s \ge 0$, and if $s \ge 1$ then the latter bound implies
\begin{equation*}
    \PR{\bs X/k \ge 5s} \le \exp(-ks).
\end{equation*}
Therefore by a union bound,
\begin{align*}
    \PR{\bra0 \bs\rho \ket0 \ge t/d}
    &= \PR{\bs N / \bs D \ge t/d} \\
    &\le \PR{\bs D/2dm \le 1/2} + \PR{\bs N/2m \ge t/2} \\
    &\le \exp(-dm/8) + \exp(-tm/5) \\
    &\le 2 \exp(-tm/8),
\end{align*}
where the latter inequality assumes $t \le d$ (if $t > d$ then $\PR{\bra0 \bs\rho \ket0 \ge t/d} = 0$ trivially).
\end{proof}

\section{Channel certification and tomography in ACID distance}

\label{sec:acid-certification}

\subsection{Upper bounds for arbitrary channels}

We first prove the following:

\begin{thm}[Channel certification in $\ell_2$ distance between Choi states] \label{thm:Inco_l2_chan_cert}
    For all fixed channels $\chan N \in \cptp \din \dout$ and $\eps>0$, there exists an ancilla-free, non-adaptive algorithm that makes $O\Paren{\dout^{1/2} \log^3(1/\eps) \big/ \eps^2}$ queries to a channel $\chan M$, and decides whether $\chan M = \chan N$ or $\norm{\choi M - \choi N}_2 \ge \eps$ with success probability at least $2/3$.
\end{thm}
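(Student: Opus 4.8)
The plan is to reduce the task, in two steps, to estimating the expected squared $\ell_2$ distance between a samplable distribution and a fully known distribution over $[\dout]$, and then to solve the latter with a collision-based statistic; this is where the $\dout^{1/2}$ will come from and why there is no dependence on $\din$.

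\emph{Step 1 (from Choi $\ell_2$ distance to a second moment of outputs).} Set $\chan D \coloneqq \chan M - \chan N$; this superoperator is Hermitian-preserving and satisfies $\tr(\chan D(X)) = 0$ for all $X$. The first thing I would prove is the identity
\begin{equation*}
    \E_{\bs\psi}\tr\!\Paren{\chan D(\bs\psi)^2}
    = \frac{\Norm{\chan D(I_{\din})}_2^2}{\din(\din+1)} + \frac{\din}{\din+1}\,\Norm{\choi D}_2^2
\end{equation*}
for Haar random $\bs\psi \in \dens{\din}$, using $\tr(A^2) = \tr\!\Paren{(A\otimes A)\swap{\dout}}$, the moment formula $\E[\bs\psi^{\otimes 2}] = \Paren{I+\swap{\din}}/\Paren{\din(\din+1)}$ from \cref{lem:sym}, and the identities $(\chan D\otimes\chan D)(\swap{\din}) = \sum_{k,l}\chan D(\ketbra{k}{l})\otimes\chan D(\ketbra{l}{k})$, $\chan D(\ketbra{l}{k}) = \chan D(\ketbra{k}{l})^\dagger$, and $\Norm{\choi D}_2^2 = \din^{-2}\sum_{k,l}\Norm{\chan D(\ketbra{k}{l})}_2^2$. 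Writing $T \coloneqq \E_{\bs\psi}\tr\!\Paren{\chan D(\bs\psi)^2}$, it follows that $T = 0$ whenever $\chan M = \chan N$, while $T \ge \tfrac{\din}{\din+1}\Norm{\choi M - \choi N}_2^2 \ge \eps^2/2$ whenever $\Norm{\choi M - \choi N}_2 \ge \eps$. So it suffices to estimate $T$ up to additive error $\eps^2/4$ with confidence $2/3$.

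\emph{Step 2 (from a second moment of outputs to distribution testing).} Fix an input $\psi$ and a uniformly random orthonormal basis $\{\bs U^\dagger\ket{i}\}_{i\in[\dout]}$ of $\C^{\dout}$, and let $\bs r$ and $\bs s$ be the distributions on $[\dout]$ obtained by measuring $\chan M(\psi)$ and $\chan N(\psi)$ in this basis. Here $\bs s$ is explicitly computable (we know $\chan N$, $\psi$, and $\bs U$), whereas one sample from $\bs r$ costs one query to $\chan M$. Since $\E_{\bs U}\sum_i \Paren{\bs U^\dagger\ketbra{i}{i}\bs U}^{\otimes 2} = \Paren{I+\swap{\dout}}/(\dout+1)$ (again \cref{lem:sym}) and $\tr\chan D(\psi) = 0$, one gets $\E_{\bs U}\Norm{\bs r - \bs s}_2^2 = \tr\!\Paren{\chan D(\psi)^2}/(\dout+1)$, hence $\E_{\bs\psi,\bs U}\Norm{\bs r - \bs s}_2^2 = T/(\dout+1) =: Q$, with $Q = 0$ in the ``yes'' case and $Q \ge \eps^2/\Paren{2(\dout+1)}$ in the ``no'' case. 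The problem thus becomes: estimate $Q = \E_{\bs\psi,\bs U}\Norm{\bs r - \bs s}_2^2$ to additive accuracy $\Theta(\eps^2/\dout)$.

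\emph{Step 3 (the estimator) and the main obstacle.} For a single context $(\psi,U)$ I would draw $m$ samples $\bs x_1,\dots,\bs x_m$ from $\bs r$ and form the standard unbiased $\ell_2$-closeness statistic for $\Norm{\bs r - \bs s}_2^2 = \Norm{\bs r}_2^2 - 2\langle \bs r,\bs s\rangle + \Norm{\bs s}_2^2$, namely $\tfrac{1}{m(m-1)}\sum_{j\neq k}\mathbf 1[\bs x_j = \bs x_k] - \tfrac2m\sum_j \bs s_{\bs x_j} + \Norm{\bs s}_2^2$, then average over $B$ independently sampled contexts and threshold near $\eps^2/\Paren{4(\dout+1)}$. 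The collision term is responsible for the $\dout^{1/2}$: $\bs r$ and $\bs s$ are close to uniform on $[\dout]$, so their collision probabilities are $\Theta(1/\dout)$ and resolving $\Norm{\bs r}_2^2$ to the needed precision costs $\tilde\Theta(\sqrt{\dout})$ samples, exactly as in the classical collision tester, whereas the $\langle \bs r,\bs s\rangle$ and $\Norm{\bs s}_2^2$ terms are cheap because $\bs s$ is known and the relevant estimators are bounded by $\Norm{\chan N(\psi)}_\infty$. Because only the second moments of the input ensemble and of the measurement ensemble are used, the Haar measures can be replaced by finitely-supported projective $2$-designs on $\C^{\din}$ and $\C^{\dout}$ (which exist in every dimension), making this a bona fide ancilla-free, non-adaptive tester. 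The delicate part — and the step I expect to be the main obstacle — is the variance analysis of the averaged estimator: one has to split its variance into the within-context sampling noise of $\chan M$'s outputs and the between-context fluctuation of $\Norm{\bs r - \bs s}_2^2$, bound each in terms of sup-norms of the randomly-rotated output distributions (which needs a high-probability bound like $\Norm{\bs r}_\infty,\Norm{\bs s}_\infty \le O(\log\dout/\dout)$, or a suitable truncation, since even a pure output has a logarithmic spike in a random basis), and then tune $B$ and $m$ so that $Bm = O\!\Paren{\dout^{1/2}\log^3(1/\eps)/\eps^2}$; the logarithmic factors should come from the median-of-means confidence boosting together with these sup-norm controls, and getting the $1/\eps^2$ (rather than a larger power) appears to require the sharp form of the $\ell_2$-closeness-tester variance bound rather than a crude one.
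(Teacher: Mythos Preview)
Your Step~1 is correct and is exactly the paper's \cref{lem:BaoYao}. Your Step~2 is also correct (and the paper implicitly does something similar inside the black-box state tester), but it is Step~3 where your plan diverges from the paper and where there is a real gap.

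The issue is the between-context variance. The random variable $Y\coloneqq\Norm{\chan D(\bs\psi)}_2^2$ can be extremely heavy-tailed: nothing prevents an instance where $Y=\Theta(1)$ with probability $\Theta(\eps^2)$ and $Y=0$ otherwise, so that $T=\Theta(\eps^2)$ yet the signal is carried by a $\Theta(\eps^2)$-fraction of inputs. In that regime you need $B\ge\Omega(1/\eps^2)$ contexts just to hit the support of the signal, and then the averaged collision noise over those $B$ contexts has standard deviation $\Theta\bigl(1/(m\sqrt{B\dout})\bigr)=\Theta\bigl(\eps/(m\sqrt{\dout})\bigr)$, which must be $\le\Theta(\eps^2/\dout)$; this forces $m\ge\Omega(\sqrt{\dout}/\eps)$ and hence $Bm\ge\Omega(\sqrt{\dout}/\eps^3)$. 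Median-of-means does not help here, because in the hard instance \emph{most} contexts carry zero signal, so the median of the per-context estimators is near~$0$ in the ``no'' case as well. Nor does a sharper per-context variance bound help: the $1/\eps$ loss comes from the \emph{shape} of the distribution of~$Y$, not from the within-context sampling noise.

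The paper's fix is exactly the missing idea: a level-set decomposition. Rather than estimating $T$, one observes that $\E[Y]\ge\eps^2/2$ forces, for some dyadic level $k\in[t]$ with $t=O(\log(1/\eps))$, that $\PR{Y>\eps^2 2^{k-3}}\ge 2^{-k}/t$. One then runs, for every level $k$ in parallel, an $\ell_2$ \emph{tester} (not estimator) at threshold $\eps 2^{(k-3)/2}$ on $2^{k+1}t$ fresh Haar-random inputs, using the Chen--Li--O'Donnell state $\ell_2$ certifier \textsc{CertifyL2} as a black box; the geometric trade-off between the number of repetitions $2^{k+1}t$ and the per-repetition cost $O(\sqrt{\dout}/(\eps^2 2^{k-3}))$ makes each level cost $\tilde O(\sqrt{\dout}/\eps^2)$, and summing over $t$ levels gives the $\log^3(1/\eps)$ overhead. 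In short: replacing your single-scale averaging by this multi-scale testing is what recovers the $1/\eps^2$ dependence.
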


The proof uses the following result of Bao and Yao~\cite{Bao23Junta}, of which we provide a (somewhat different) proof in \cref{app:BaoYao} for completeness:

\begin{lem}[{\cite[Proposition 15]{Bao23Junta}}] \label{lem:BaoYao}
    If $\chan L \in \spr d$ is the difference between two channels, then
    \begin{equation*}
        \frac{d+1}d \E\Brac{\norm{\chan L(\bs\psi)}_2^2}
        = \norm{\choi L}_2^2 + \norm{\chan L(I/d)}_2^2
    \end{equation*}
    where $\bs\psi \in \dens d$ is Haar random.
\end{lem}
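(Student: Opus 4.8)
The plan is to reduce everything to the Haar second-moment formula \cref{lem:sym} via the elementary ``swap trick'' $\tr(AB) = \tr\big((A\otimes B)\swap n\big)$ valid for $A,B\in\C^{n\times n}$. First I would record that since $\chan L$ is the difference of two channels it is Hermitian-preserving, so $\chan L(\bs\psi)$, $\choi L$, and $\chan L(I)$ are all Hermitian; in particular $\norm X_2^2 = \tr(X^2)$ for $X$ equal to any one of these operators. Applying the swap trick on $\C^{\dout\times\dout}\otimes\C^{\dout\times\dout}$ gives, pointwise for every pure state $\psi\in\dens d$,
\begin{equation*}
    \norm{\chan L(\psi)}_2^2 = \tr\big(\chan L(\psi)^2\big) = \tr\big((\chan L\otimes\chan L)(\psi\otimes\psi)\cdot\swap{\dout}\big),
\end{equation*}
using $(\chan L\otimes\chan L)(\psi\otimes\psi) = \chan L(\psi)\otimes\chan L(\psi)$. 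Taking the expectation over Haar random $\bs\psi$ and substituting $\E[\bs\psi^{\otimes2}] = \tfrac1{d(d+1)}\big(I + \swap d\big)$ from \cref{lem:sym} reduces the claim to evaluating the two traces $\tr\big((\chan L\otimes\chan L)(I_{d^2})\,\swap{\dout}\big)$ and $\tr\big((\chan L\otimes\chan L)(\swap d)\,\swap{\dout}\big)$.

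For the first, $(\chan L\otimes\chan L)(I_{d^2}) = \chan L(I)\otimes\chan L(I)$, so running the swap trick in reverse gives $\tr\big((\chan L(I)\otimes\chan L(I))\swap{\dout}\big) = \tr\big(\chan L(I)^2\big) = \norm{\chan L(I)}_2^2 = d^2\norm{\chan L(I/d)}_2^2$. For the second, I would expand $\swap d = \sum_{i,j=1}^d \ketbra{i}{j}\otimes\ketbra{j}{i}$, so that $(\chan L\otimes\chan L)(\swap d) = \sum_{i,j}\chan L(\ketbra ij)\otimes\chan L(\ketbra ji)$ and
\begin{equation*}
    \tr\big((\chan L\otimes\chan L)(\swap d)\,\swap{\dout}\big) = \sum_{i,j=1}^d \tr\big(\chan L(\ketbra ij)\,\chan L(\ketbra ji)\big).
\end{equation*}
On the other hand, expanding $\choi L = \tfrac1d\sum_{i,j}\chan L(\ketbra ij)\otimes\ketbra ij$ and tracing out the second register of $\choi L^2$, only the terms whose two pairs of second-register indices match survive, giving $\norm{\choi L}_2^2 = \tr\big(\choi L^2\big) = \tfrac1{d^2}\sum_{i,j}\tr\big(\chan L(\ketbra ij)\chan L(\ketbra ji)\big)$; hence the second trace equals $d^2\norm{\choi L}_2^2$. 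Combining, $\E\big[\norm{\chan L(\bs\psi)}_2^2\big] = \tfrac{d^2}{d(d+1)}\big(\norm{\choi L}_2^2 + \norm{\chan L(I/d)}_2^2\big) = \tfrac{d}{d+1}\big(\norm{\choi L}_2^2 + \norm{\chan L(I/d)}_2^2\big)$, and multiplying through by $\tfrac{d+1}{d}$ gives the lemma.

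This is essentially a bookkeeping argument and I do not expect a real obstacle; the one step that needs care is matching the $\swap d$ contribution with $\norm{\choi L}_2^2$, i.e.\ checking that tracing out the (trivial-looking) second register of $\choi L^2$ forces the indices to pair up so that precisely $\sum_{i,j}\tr\big(\chan L(\ketbra ij)\chan L(\ketbra ji)\big)$ remains. An alternative, purely Kraus-operator route is available: write $\chan L(X) = \sum_a \pm A_a X \adj{A_a}$ via \cref{eq:kraus}, expand both sides, and invoke \cref{lem:tr-choi-prod} together with the analogous expressions for the other two terms in terms of $\tr(\adj{A_a}A_b)$; but the second-moment approach above seems shortest.
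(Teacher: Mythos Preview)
Your argument is correct. Both your proof and the paper's ultimately rest on the same Haar second-moment identity \cref{lem:sym}, but the bookkeeping is organized differently. The paper first proves the bilinear generalization $\tfrac{d+1}{d}\E[\tr(\chan L(\bs\psi)\chan K(\bs\psi))] = \tr(\choi L\choi K) + \tr(\chan L(I/d)\chan K(I/d))$ (\cref{lem:Choi_l2_dist}) by reducing via \cref{eq:kraus} to the rank-one case $\chan L(X)=AX\adj A$, $\chan K(X)=BX\adj B$, then invoking \cref{lem:tr-Haar-prod} and \cref{lem:tr-choi-prod}; \cref{lem:BaoYao} is the diagonal case $\chan K=\chan L$. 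You instead use the swap trick $\tr(AB)=\tr((A\otimes B)\swap{})$ together with a standard-basis expansion of $\swap d$, avoiding any Kraus decomposition. Your route is slightly more direct for this specific statement and needs fewer auxiliary lemmas; the paper's route yields the bilinear version as a byproduct (which it uses elsewhere, e.g.\ in \cref{lem:a1}). Amusingly, the ``alternative Kraus-operator route'' you sketch at the end is essentially the paper's proof.
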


\begin{proof}[Proof of \cref{thm:Inco_l2_chan_cert}]
    Chen, Li and O'Donnell~\cite[Lemma 6.2]{CLO22} proved that for all fixed states $\sigma \in \dens d$ and $\delta, \eta > 0$, there exists an algorithm $\textsc{CertifyL2}(\sigma, \delta, \eta)$ that takes as input $O\Paren{\sqrt{d} \log(1/\delta) \big/ \eta^2}$ copies of an unknown state $\rho \in \dens d$, performs unentangled and non-adaptive measurements on the copies of $\rho$, and then accepts with probability at least $1-\delta$ if $\rho = \sigma$ and rejects with probability at least $1-\delta$ if $\norm{\rho - \sigma}_2 > \eta$. Let $t = \ceil*{\log\Paren{1/\eps^2}} + 4$ and $\delta = \eps^2 / 384 t$, and assume without loss of generality that $\eps$ is small enough so that $\delta \le 1/3$. The algorithm is \cref{alg:cap}, and its query complexity is
    \begin{equation*}
        \sum_{k=1}^t 2^{k+1} t \cdot O\Paren{\dout^{1/2} \log(1/\delta) \big/ \eps^2 2^{k-3}}
        \le O\Paren{t^2 \dout^{1/2} \log(1/\delta) \big/ \eps^2}
        \le O\Paren{\dout^{1/2} \log^3(1/\eps) \big/ \eps^2}.
    \end{equation*}
    If $\chan M = \chan N$, then by a union bound \cref{alg:cap} accepts rejects with probability at most
    \begin{equation*}
        \sum_{k=1}^t 2^{k+1} t \cdot \delta
        = \Paren{2^{t+2} - 4} \eps^2 / 384
        \le 2^{\log(1/\eps^2) + 7} \eps^2 / 384
        = 1/3.
    \end{equation*}

    \begin{algorithm}
    \caption{Channel certification in $\ell_2$ distance between Choi states} \label{alg:cap}
    \begin{algorithmic}[1]
    \For{$k \in [t]$}
        \For{$2^{k+1} t$ times}
            \State Sample a Haar random state $\bs\psi \in \dens\din$.
            \State Run $\textsc{CertifyL2}\Paren{\chan N(\bs\psi), \delta, \eps 2^{(k-3)/2}}$ on copies of $\chan M(\bs\psi)$.
        \EndFor
    \EndFor
    \If{all runs of \textsc{CertifyL2} accepted} \textbf{accept}.
    \Else{} \textbf{reject}.
    \EndIf
    \end{algorithmic}
    \end{algorithm}
    
    Now suppose $\norm{\choi M - \choi N}_2 \ge \eps$. Below we prove that there exists some fixed $k \in [t]$ such that
    \begin{equation} \label{eq:binary-search}
        \PR{\Norm{\chan M(\bs\psi) - \chan N(\bs\psi)}_2^2 > \eps^2 2^{k-3}} \ge 2^{-k}/t,
    \end{equation}
    so \cref{alg:cap} accepts with probability at most
    \begin{equation*}
        \Paren{1 - (1 - \delta) 2^{-k}/t}^{2^{k+1} t}
        \le \exp\Paren{-(1-\delta) 2^{-k}/t \cdot 2^{k+1} t}
        = \exp\Paren{-(1-\delta) 2}
        \le \exp(-4/3)
        < 1/3.
    \end{equation*}
    Toward establishing \cref{eq:binary-search} for some $k \in [t]$, let $\chan L = \chan M - \chan N$ and $\bs X = \norm{\chan L(\bs\psi)}_2^2$. By \cref{lem:BaoYao},
    \begin{equation} \label{eq:eps-X}
        \eps^2
        \le \norm{\choi L}_2^2
        \le \norm{\choi L}_2^2 + \norm{\chan L(I/\din)}_2^2
        = \frac{\din+1}{\din} \E[\bs X]
        \le 2 \E[\bs X].
    \end{equation}
    Define disjoint intervals
    \begin{align*}
        &P_0 = \left[0, \frac{\eps^2}4 \right],
        &&P_k = \left(\frac{\eps^2}4 2^{k-1}, \frac{\eps^2}4 2^k\right] \quad \text{for $k \in [t]$.}
    \end{align*}
    By the triangle inequality $\norm{\chan L}_1 \le \norm{\chan M}_1 + \norm{\chan N}_1 = 2$, so
    \begin{equation*}
        0 \le \bs X \le \norm{\chan L(\bs\psi)}_1^2 \le \norm{\chan L}_1^2 \le 4 \le \eps^2/4 \cdot 2^t
    \end{equation*}
    pointwise, so there exists a unique $k$ such that $\bs X$ is in $P_k$, and therefore
    \begin{equation*}
        \E[\bs X]
        = \sum_{k=0}^t \PR{\bs X \in P_k} \E[\bs X \mid \bs X \in P_k].
    \end{equation*}
    Since the expectation of a random variable is at most its maximum possible value, it follows that
    \begin{equation*}
        \E[\bs X]
        \le \sum_{k=0}^t \PR{\bs X \in P_k} \cdot \frac{\eps^2}4 2^k
        = \frac{\eps^2}4 \sum_{k=0}^t \PR{\bs X \in P_k} \cdot 2^k.
    \end{equation*}
    If $\PR{\bs X \in P_k} < 2^{-k}/t$ for all $k \neq 0$, then it follows from this inequality and the trivial bound $\PR{\bs X \in P_0} \le 1$ that
    \begin{equation*}
        \E[\bs X]
        < \frac{\eps^2}4 \Paren{1 + \sum_{k=1}^t 2^{-k}/t \cdot 2^k}
        = \eps^2/2,
    \end{equation*}
    which contradicts \cref{eq:eps-X}. Thus there exists $k \in [t]$ such that $\PR{\bs X \in P_k} \ge 2^{-k}/t$ as desired.
\end{proof}

As corollaries, we obtain upper bounds for channel certification in ACID and diamond distances:

\generalub*
\begin{proof}
    By Cauchy-Schwarz, $\norm{\choi M - \choi N}_2 \ge \norm{\choi M - \choi N}_1 / \sqrt{\din \dout}$, so if $\norm{\chan M - \chan N}_{\lsd} \ge \eps$ then $\norm{\choi M - \choi N}_2 \ge \eps / \sqrt{\din \dout}$. The result follows by applying \cref{thm:Inco_l2_chan_cert} with proximity parameter $\eps / \sqrt{\din \dout}$.
\end{proof}

\generalubdiamond*
\begin{proof}
    Fawzi et al.~\cite[Lemma C.1]{FFGO23} proved that $\norm{\choi M - \choi N}_2 \ge \norm{\chan M - \chan N}_\diamond \big/ \Paren{\din \dout^{1/2}}$, so if $\norm{\chan M - \chan N}_\diamond \ge \eps$ then $\norm{\choi M - \choi N}_2 \ge \eps / \din \dout^{1/2}$. The result follows by applying \cref{thm:Inco_l2_chan_cert} with proximity parameter $\eps / \din \dout^{1/2}$.
\end{proof}

Finally, we remove the log factors from \cref{thminf:ub-general} in the case where $\chan N$ is the completely depolarizing channel:

\begin{thm}[Upper bound for the completely depolarizing channel] \label{thm:ub-depol}
    Let $\chan N \in \cptp \din \dout$ be the completely depolarizing channel, i.e.\ $\chan N(X) = \tr(X) I/\dout$. Then there is an ancilla-free, non-adaptive algorithm that makes $O\Paren{\din \dout^{3/2} / \eps^2}$ queries to a channel $\chan M$, and decides whether $\chan M = \chan N$ or $\norm{\chan M - \chan N}_{\lsd} \ge \eps$ with success probability at least $2/3$.
\end{thm}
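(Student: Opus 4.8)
The plan is to reduce, via Cauchy--Schwarz and \cref{lem:BaoYao}, to the task of certifying that the output of $\chan M$ on a \emph{random} input is on average the maximally mixed state in $\ell_2$ distance, and then to show that for this target the certification can be done by a constant number of calls to Chen, Li and O'Donnell's $\ell_2$-state-certifier $\textsc{CertifyL2}$ (the algorithm used in the proof of \cref{thm:Inco_l2_chan_cert}), with no binary search over scales and no confidence boost — which is exactly what removes the logarithmic factors of \cref{thminf:ub-general}.

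First I would set up the reduction. Let $\chan L = \chan M - \chan N$, so that $\chan L(\psi) = \chan M(\psi) - I/\dout$ for every unit-trace $\psi$. If $\norm{\chan M - \chan N}_\lsd \ge \eps$ then $\norm{\choi L}_1 \ge \eps$, hence $\norm{\choi L}_2 \ge \eps/\sqrt{\din\dout}$ by Cauchy--Schwarz (as $\choi L \in \C^{\din\dout \times \din\dout}$), and therefore by \cref{lem:BaoYao},
\begin{equation*}
    \E_{\bs\psi}\Brac{\norm{\chan M(\bs\psi) - I/\dout}_2^2} = \frac{\din}{\din+1}\Paren{\norm{\choi L}_2^2 + \norm{\chan L(I/\din)}_2^2} \ge \frac{\din}{\din+1}\cdot\frac{\eps^2}{\din\dout} \ge \frac{\eps^2}{2\din\dout},
\end{equation*}
where $\bs\psi \in \dens\din$ is Haar random, whereas this expectation is $0$ when $\chan M = \chan N$. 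So it suffices to distinguish $\E_{\bs\psi}\norm{\chan M(\bs\psi) - I/\dout}_2^2 = 0$ from $\E_{\bs\psi}\norm{\chan M(\bs\psi) - I/\dout}_2^2 \ge \eps^2/(2\din\dout)$ using ancilla-free non-adaptive queries.

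The crux is a second-moment bound. Write $\bs g = \tr(\chan M(\bs\psi)^2)$ and $\mu = \E[\bs g] - 1/\dout = \E_{\bs\psi}\norm{\chan M(\bs\psi) - I/\dout}_2^2 \ge 0$; note $\mu = 0$ exactly iff $\chan M$ is the completely depolarizing channel. I would prove that
\begin{equation*}
    \Var_{\bs\psi}[\bs g] \le C\mu^2
\end{equation*}
for an absolute constant $C$. The approach is a Weingarten computation: since $\bs g - 1/\dout = \tr(W\,\bs\psi^{\otimes 2})$ with $W = (\adj{\chan L} \otimes \adj{\chan L})(\swap\dout)$ Hermitian, one has $\E[(\bs g - 1/\dout)^2] = \frac{1}{\din(\din+1)(\din+2)(\din+3)}\sum_{\pi \in S_4}\tr(W_{12}\,W_{34}\,V_\pi)$, where $V_\pi$ permutes the four tensor copies of $\C^\din$. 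The four permutations that preserve the partition $\{\{1,2\},\{3,4\}\}$ sum to $(\tr W + \tr(W\swap\din))^2 = \din^2(\din+1)^2\mu^2$, so after subtracting $\mu^2$ the diagonal contribution is $\le 0$; the remaining twenty ``mixing'' terms all vanish when $\chan M$ is completely depolarizing (then $W = 0$), and one must bound each of them by $O\Paren{\din^4 \mu^2}$, using trace-preservation of $\chan M$ together with the identities $\tr W + \tr(W\swap\din) = \din(\din+1)\mu$ and $\tr(W^2) = \dout^2\norm{\choi{\chan L \adj{\chan L}}}_2^2 = O(\din^2\mu^2)$ (the last being where complete positivity of $\chan M$ enters). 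I expect this term-by-term bookkeeping to be the main obstacle; \emph{a priori} some of the twenty terms involve partial traces of $W$ over $\din$-dimensional factors, so keeping track of the powers of $\din$ they contribute, and of the cancellations among them, is the delicate part.

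Granting the bound, the rest is routine. Since $\bs g - 1/\dout \ge 0$ pointwise (the purity of a $\dout$-dimensional state is at least $1/\dout$), Paley--Zygmund gives, in the far case,
\begin{equation*}
    \PRs{\bs\psi}{\norm{\chan M(\bs\psi) - I/\dout}_2^2 \ge \mu/2} \ge \frac14 \cdot \frac{\mu^2}{\Var_{\bs\psi}[\bs g] + \mu^2} \ge \frac{1}{4(C+1)},
\end{equation*}
so with constant probability a Haar random $\bs\psi$ satisfies $\norm{\chan M(\bs\psi) - I/\dout}_2 > \xi$ where $\xi := \eps/(3\sqrt{\din\dout})$. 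The algorithm is: for $T = O(1)$ rounds, independently draw a Haar random $\bs\psi \in \dens\din$, make enough queries to $\chan M$ on input $\bs\psi$ to produce $O(\sqrt\dout/\xi^2) = O(\din\dout^{3/2}/\eps^2)$ copies of $\chan M(\bs\psi) \in \dens\dout$, and run $\textsc{CertifyL2}\Paren{I/\dout,\ \tfrac{1}{10T},\ \xi}$ on them; accept iff all $T$ runs accept. If $\chan M = \chan N$ then every run sees $\chan M(\bs\psi) = I/\dout$ exactly, accepts with probability $\ge 1 - \tfrac{1}{10T}$, and all accept with probability $\ge 9/10$. If $\norm{\chan M - \chan N}_\lsd \ge \eps$ then in each round $\bs\psi$ is ``good'' ($\norm{\chan M(\bs\psi) - I/\dout}_2 > \xi$) with probability $\ge \tfrac{1}{4(C+1)}$ and, conditioned on that, $\textsc{CertifyL2}$ rejects with probability $\ge 1 - \tfrac{1}{10T}$, so taking $T$ a sufficiently large constant pushes the overall rejection probability to at least $2/3$. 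The query count is $T \cdot O(\din\dout^{3/2}/\eps^2) = O(\din\dout^{3/2}/\eps^2)$; the tester is ancilla-free and non-adaptive because every query uses the $\din$-dimensional input $\bs\psi$ and $\textsc{CertifyL2}$ performs unentangled non-adaptive measurements on single copies of the $\dout$-dimensional output. Unlike \cref{thm:Inco_l2_chan_cert}, no binary search over scales is needed (the second-moment bound pins the ``far'' witnesses to the single scale $\xi$), and since the null hypothesis holds exactly, constant confidence per call suffices — hence no logarithmic factors.
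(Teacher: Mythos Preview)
Your approach is essentially the same as the paper's: both reduce via Cauchy--Schwarz and \cref{lem:BaoYao} to the single-scale $\ell_2$ problem, use an anti-concentration bound on $\bs X=\norm{\chan M(\bs\psi)-I/\dout}_2^2$ to avoid the binary search of \cref{thm:Inco_l2_chan_cert}, and then run a constant number of rounds of $\ell_2$ state certification. The two cosmetic differences are that the paper (i) invokes the anti-concentration as a black box from Fawzi et al.\ (who prove $\PR{\bs X\ge\E[\bs X]/2}\ge 1/1000$) rather than reproving it, and (ii) uses the random-POVM-plus-Paninski pipeline instead of \textsc{CertifyL2}; both primitives have the same $O(\sqrt{\dout}/\xi^2)$ cost, so this is immaterial.

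Your Weingarten sketch is where caution is needed. The identity $\tr W+\tr(W\,\swap\din)=\din(\din+1)\mu$ is correct, but the claim $\tr(W^2)=O(\din^2\mu^2)$ is false in general. For instance, take $\chan M(X)=\bra0 X\ket0\,\rho+\tr((I-\kb0)X)\,I/\dout$ with $\rho$ far from maximally mixed and set $c=\norm{\rho-I/\dout}_2^2$; then $W=c\,\kb{00}$, so $\tr(W^2)=c^2$, while $\mu\approx 2c/\din^2$ gives $\din^2\mu^2\approx 4c^2/\din^2\ll c^2$. What \emph{is} true in this example (and is all you need) is $\tr(W^2)=O(\din^4\mu^2)$, since $\pi=(13)(24)$ contributes exactly $\tr(W^2)$ to the Weingarten sum. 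So the term-by-term bookkeeping you flag as the ``main obstacle'' is indeed delicate, and your stated route through $\tr(W^2)$ will not close it as written. The paper sidesteps this entirely by citing Fawzi et al.; if you wish to be self-contained you will need a more careful argument (likely using positivity of $W$ on the symmetric subspace, or the structure of $\chan M$ as a CPTP map, to control the twenty mixing terms uniformly by $O(\din^4\mu^2)$).
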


\begin{proof}
For a probability distribution $P$ with finite support, let $v(P)$ denote the total variation distance between $P$ and the uniform distribution. Paninski~\cite{Pan08} gave an algorithm $\textsc{TestMixed}\Paren{\delta, d}$ that takes as input $O(\sqrt{d} / \delta^2)$ samples from a probability distribution $P$ on $[d]$, and decides whether $P$ is the uniform distribution or $v(P) \ge \delta$ with success probability at least $2/3$. We may assume without loss of generality that the success probability of $\textsc{TestMixed}\Paren{\delta, d}$ is at least $1 - 1/30000$, by repetition and majority vote. For a universal constant $c$ and all $d \in \N$, Fawzi et al.~\cite[Lemma E.1]{FFGO23} gave a random $cd$-outcome POVM $\bs P_d$ on $\C^{d \times d}$ such that for all fixed density matrices $\rho \in \dens d$, if $\bs P_d(\rho)$ denotes the probability distribution defined by performing $\bs P_d$ on $\rho$, then
\begin{equation*}
    \PR{v(\bs P_d(\rho)) \ge \frac{\norm{\rho - I/d}_2}{20}} \ge 1/2.
\end{equation*}
The algorithm is \cref{alg:depol}. If $\chan M = \chan N$, then by a union bound \cref{alg:depol} accepts with probability at least $1 - 10^4/30000 = 2/3$.

\begin{algorithm}
\caption{Testing identity to the completely depolarizing channel}
\label{alg:depol}
\begin{algorithmic}[1]
    \For{$10^4$ times}
        \State Independently sample a Haar random state $\bs\psi \in \dens\din$ and a POVM $\bs P_{\dout}$.
        \State Run $\textsc{TestMixed}\Paren{\eps / 20\sqrt{2 (\din+1) \dout}, c \dout}$ on samples from $(\bs P_{\dout}(\chan M(\bs\psi)))$.
    \EndFor
    \If{all executions of $\textsc{TestMixed}$ accepted} \textbf{accept}.
    \Else{} \textbf{reject}.
    \EndIf
\end{algorithmic}
\end{algorithm}

Now consider the case where $\norm{\chan M - \chan N}_{\lsd} \ge \eps$. Let $\bs\psi \in \dens\din$ be the Haar random state from \cref{alg:depol}, and let $\bs X = \Norm{\chan M(\bs\psi) - I/\dout}_2^2$. Fawzi et al.~\cite[middle of Page 40]{FFGO23} proved that $\PR{\bs X \ge \E[\bs X]/2} \ge 1/1000$. If $\bs X \ge \E[\bs X]/2$ and $v(\bs P_{\dout}(\chan M(\bs\psi))) \ge \sqrt{\bs X} / 20$, an event which occurs with probability at least $1/2000$ by the definition of $\bs P_{\dout}$, then
\begin{align*}
    \eps
    &\le \norm{\choi M - I/\din\dout}_1 \\
    &\le \sqrt{\din \dout} \norm{\choi M - I/\din\dout}_2 &&\text{Cauchy-Schwarz} \\
    &\le \sqrt{(\din + 1) \dout \E[\bs X]} &&\text{\cite[Lemma C.2]{FFGO23}}\footnotemark \\
    &\le \sqrt{2 (\din+1) \dout \bs X} \\
    &\le \sqrt{2 (\din+1) \dout} \cdot 20 v(\bs P_{\dout}(\chan M(\bs\psi))),
\end{align*}
\footnotetext{This also follows from \cref{lem:BaoYao} with $\chan L = \chan M - \chan N$.}and rearranging gives $v(\bs P_{\dout}(\chan M(\bs\psi))) \ge \eps / 20\sqrt{2 (\din+1) \dout}$. By the definition of $\textsc{TestMixed}$, it follows that any given iteration of \cref{alg:depol} rejects with probability at least $(1 - 10^{-5}) \cdot 1/2000 > 1/4000$, and so overall \cref{alg:depol} accepts with probability at most $(1 - 1/4000)^{10^4} < 0.09$.
\end{proof}

\subsection{Upper bounds for erasure, unitary, and pure state replacement channels}

In this subsection we give dimension-independent upper bounds for testing identity to erasure, unitary, and pure state replacement channels in ACID distance, without ancillae or adaptivity. Along the way, we prove that testing identity to \emph{any} channel $\chan N$ in ACID distance has essentially the same complexity as that of testing identity to $\chan I \otimes \chan N$ in ACID distance. We build up to this result through a series of lemmas, starting with the following:

\begin{lem}[{Gentle measurement lemma~\cite[Lemma 9.4.1]{Wil19}}] \label{lem:gentle}
    Let $\rho$ be a density matrix and let $0 \le \Lambda \le I$. Then
    \begin{equation*}
        \Norm{\rho - \frac{\sqrt\Lambda \rho \sqrt\Lambda}{\tr(\Lambda\rho)}}_1
        \le 2\sqrt{1 - \tr(\Lambda \rho)}.
    \end{equation*}
\end{lem}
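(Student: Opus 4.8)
The plan is to reduce to the case of a pure $\rho$ via purification and monotonicity of the trace distance under the partial trace, and then to dispatch the pure case using the closed form \cref{eq:pure-state-td} for the trace distance between pure states together with the operator inequality $\sqrt\Lambda \ge \Lambda$ (in the Loewner order). Throughout one assumes $\tr(\Lambda\rho) > 0$, since otherwise $\sqrt\Lambda\rho\sqrt\Lambda = 0$ and the quotient is conventionally undefined (and there is nothing to prove).

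First I would treat the pure case. Suppose $\rho = \psi$ for a pure state $\ket\psi$, and write $p = \tr(\Lambda\psi) = \bra\psi \Lambda \ket\psi > 0$. Then $\ket{\tilde\psi} \coloneqq \sqrt\Lambda\ket\psi / \sqrt p$ is a unit vector with $\tilde\psi = \sqrt\Lambda\psi\sqrt\Lambda / \tr(\Lambda\psi)$, so by \cref{eq:pure-state-td} the left-hand side of the lemma equals $2\sqrt{1 - \Mag{\braket\psi{\tilde\psi}}^2}$. Since $0 \le \Lambda \le I$, every eigenvalue $\lambda$ of $\Lambda$ lies in $[0,1]$ and hence satisfies $\sqrt\lambda \ge \lambda$; by functional calculus this gives $\sqrt\Lambda \ge \Lambda$ in the Loewner order, so $\bra\psi \sqrt\Lambda \ket\psi \ge \bra\psi \Lambda \ket\psi = p$ and therefore $\Mag{\braket\psi{\tilde\psi}} = \bra\psi \sqrt\Lambda \ket\psi / \sqrt p \ge \sqrt p$. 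Consequently $1 - \Mag{\braket\psi{\tilde\psi}}^2 \le 1 - p$, which yields the claimed bound in this case.

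For general $\rho$, let $\ket\phi \in \C^d \otimes \C^d$ be a purification of $\rho$, with $\rho$ on the first register, so that $(\sqrt\Lambda \otimes I)\ket\phi$ has squared norm $\bra\phi (\Lambda \otimes I) \ket\phi = \tr(\Lambda\rho) = p$. Applying the pure case to $\ket\phi$ and the operator $\Lambda \otimes I$ (which again satisfies $0 \le \Lambda \otimes I \le I$) gives
\[
    \Norm{\phi - \frac{(\sqrt\Lambda \otimes I)\,\phi\,(\sqrt\Lambda \otimes I)}{\tr\Paren{(\Lambda \otimes I)\phi}}}_1 \le 2\sqrt{1 - p}.
\]
Tracing out the second register cannot increase trace distance, and it sends $\phi$ to $\rho$ and $(\sqrt\Lambda \otimes I)\phi(\sqrt\Lambda \otimes I)/p$ to $\sqrt\Lambda\rho\sqrt\Lambda/p$ (conjugation by $\sqrt\Lambda$ on the first register commutes with the partial trace on the second), so the general bound follows.

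There is no serious obstacle here: the only point requiring a little care is the degenerate case $\tr(\Lambda\rho) = 0$, which is excluded, and the one substantive ingredient is the elementary operator inequality $\sqrt\Lambda \ge \Lambda$ valid whenever $0 \le \Lambda \le I$.
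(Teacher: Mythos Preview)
The paper does not prove \cref{lem:gentle}; it simply cites it from Wilde's textbook. Your argument is correct and is precisely the standard proof one finds there: establish the pure-state case via \cref{eq:pure-state-td} and the Loewner inequality $\sqrt\Lambda \ge \Lambda$, then lift to mixed states by purifying and invoking monotonicity of trace distance under partial trace. Nothing to add.
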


Using \cref{lem:gentle} we prove the following:

\begin{lem} \label{lem:some-lemma}
    Let $\rho \in \dens{\reg A \reg B}$ be a density matrix for some registers $\reg A$ and $\reg B$, and let $\rho_{\reg A}, \rho_{\reg B}$ be the reduced states of $\rho$ on $\reg A, \reg B$ respectively. Then for all pure states $\psi \in \dens{\reg A}$,
    \begin{equation*}
        \tr\Paren{\rho_{\reg A} \psi} \le 1 - \frac1{16} \Norm{\rho - \psi \otimes \rho_{\reg B}}_1^2.
    \end{equation*}
\end{lem}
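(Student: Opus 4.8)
The plan is to invoke the gentle measurement lemma (\cref{lem:gentle}) with the rank-one projection $\Lambda = \psi \otimes I$, and then tidy up the resulting post-measurement state using the triangle inequality together with the monotonicity of trace distance under the partial-trace channel.

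First, set $p \coloneqq \tr\Paren{\rho_{\reg A}\psi} = \tr\Paren{(\psi \otimes I)\rho}$. Since the trace distance between two density matrices is always at most $2 \le 4$, we may assume $p > 0$; otherwise the claimed inequality is immediate. Write $\psi = \kb\phi$. The one computation that needs care is the factorization
\begin{equation*}
    (\psi \otimes I)\,\rho\,(\psi \otimes I) = \psi \otimes M,
    \qquad\text{where}\qquad
    M \coloneqq (\bra\phi \otimes I)\,\rho\,(\ket\phi \otimes I),
\end{equation*}
which holds precisely because $\psi$ is pure; here $M$ is PSD (as $\rho$ is) with $\tr(M) = \tr\Paren{(\psi \otimes I)\rho} = p$, so that $\tau \coloneqq M/p$ is a density matrix on $\reg B$. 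Applying \cref{lem:gentle} (noting $\sqrt\Lambda = \Lambda$ for a projection) then gives
\begin{equation*}
    \Norm{\rho - \psi \otimes \tau}_1 \le 2\sqrt{1-p}.
\end{equation*}

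To conclude, I would replace $\tau$ by $\rho_{\reg B}$. Tracing out $\reg A$ is a channel and hence cannot increase trace distance, and $\tr_{\reg A}(\psi \otimes \tau) = \tau$ while $\tr_{\reg A}(\rho) = \rho_{\reg B}$, so the bound above yields $\Norm{\tau - \rho_{\reg B}}_1 \le 2\sqrt{1-p}$. Then, using $\Norm{\psi \otimes X}_1 = \Norm{X}_1$ (as $\Norm\psi_1 = 1$), the triangle inequality gives
\begin{equation*}
    \Norm{\rho - \psi \otimes \rho_{\reg B}}_1
    \le \Norm{\rho - \psi \otimes \tau}_1 + \Norm{\tau - \rho_{\reg B}}_1
    \le 4\sqrt{1-p},
\end{equation*}
and squaring and rearranging gives $p \le 1 - \frac1{16}\Norm{\rho - \psi \otimes \rho_{\reg B}}_1^2$, which is the claim. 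No step presents a real obstacle; the only subtlety is the pure-state factorization displayed above, which is exactly where the hypothesis that $\psi$ is pure enters (it fails for general mixed $\psi$).
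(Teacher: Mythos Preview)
Your proof is correct and follows essentially the same approach as the paper: apply the gentle measurement lemma with $\Lambda = \psi \otimes I$, use the factorization $(\psi \otimes I)\rho(\psi \otimes I) = \psi \otimes M$ (which the paper invokes implicitly via the identity $\sigma = \psi \otimes \tr_{\reg A}\sigma$), and then combine the triangle inequality with monotonicity of trace distance under the partial trace to obtain the $4\sqrt{1-p}$ bound. The only cosmetic differences are that you make the rank-one factorization explicit and handle the $p=0$ case separately, and that you order the steps as ``gentle measurement, then triangle inequality'' rather than the paper's ``triangle inequality, then bound each term''.
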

\begin{proof}
    Let
    \begin{equation*}
        \sigma = \frac{(\psi \otimes I) \rho (\psi \otimes I)}{\tr\Paren{(\psi \otimes I) \rho}}
    \end{equation*}
    where $I$ denotes the identity on $\reg B$. By the triangle inequality,
    \begin{equation*}
        \Norm{\rho - \psi \otimes \rho_{\reg B}}_1
        \le \norm{\rho - \sigma}_1 + \Norm{\sigma - \psi \otimes \rho_{\reg B}}_1
        = \norm{\rho - \sigma}_1 + \Norm{\psi \otimes \tr_{\reg A} (\sigma - \rho)}_1
        \le 2\norm{\rho - \sigma}_1,
    \end{equation*}
    where the last inequality holds because applying a channel (in this case, tracing out $\reg A$ and then tensoring with $\psi$) to two density matrices cannot increase the trace distance between them. By \cref{lem:gentle} applied with $\Lambda = \psi \otimes I$,
    \begin{equation*}
        \norm{\rho - \sigma}_1
        \le 2\sqrt{1 - \tr\Paren{(\psi \otimes I) \rho}}
        = 2\sqrt{1 - \tr\Paren{\psi \rho_{\reg A}}},
    \end{equation*}
    and the result follows by combining the above two inequalities and rearranging.
\end{proof}

We use the following result to remove the need for ancillae in our upcoming algorithm:

\begin{lem}[{Fawzi et al.~\cite[Lemma A.1]{FFGO23}}] \label{lem:a1}
    For all channels $\chan P \in \cptp d d$,
    \begin{equation*}
	\E \Brac{\tr\Paren{\chan P(\bs\psi) \bs\psi}} = \frac{1 + d \tr\Paren{\choi P \Phi}}{1+d}
    \end{equation*}
    where $\bs\psi \in \dens d$ is Haar random.
\end{lem}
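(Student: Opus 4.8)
The plan is to evaluate $\E\Brac{\tr\Paren{\chan P(\bs\psi)\bs\psi}}$ by a direct second‑moment computation, relying on the formula $\E\Brac{\bs\psi^{\otimes 2}} = \frac1{d(d+1)}\Paren{I + \swap d}$ from \cref{lem:sym}. Since $\chan P$ is a channel, hence completely positive, it admits a Kraus decomposition $\chan P(X) = \sum_j A_j X \adj{A_j}$ with no plus‑or‑minus signs (cf.\ \cref{eq:kraus}), so that $\tr\Paren{\chan P(\psi)\psi} = \sum_j \tr\Paren{A_j \psi \adj{A_j}\psi}$ for every state $\psi$; it therefore suffices to average each term.

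The key elementary fact is the ``swap trick'': for all $X, Y \in \C^{d \times d}$ one has $\tr\Paren{(X \otimes Y)\swap d} = \tr(XY)$, immediate from $\swap d = \sum_{k,l}\ketbra{kl}{lk}$. Applying it with $X = A_j$, $Y = \adj{A_j}$ gives $\tr\Paren{A_j\psi\adj{A_j}\psi} = \tr\Paren{(A_j \otimes \adj{A_j})\,\psi^{\otimes 2}\,\swap d}$, so taking the Haar average and using linearity, \cref{lem:sym}, the identity $\swap d \cdot \swap d = I$, and the swap trick once more,
\begin{align*}
    \E\Brac{\tr\Paren{A_j\bs\psi\adj{A_j}\bs\psi}}
    &= \tr\Paren{(A_j \otimes \adj{A_j})\,\E\Brac{\bs\psi^{\otimes 2}}\,\swap d} \\
    &= \frac1{d(d+1)}\tr\Paren{(A_j \otimes \adj{A_j})\Paren{I + \swap d}\swap d} \\
    &= \frac1{d(d+1)}\Paren{\tr\Paren{A_j \otimes \adj{A_j}} + \tr\Paren{(A_j \otimes \adj{A_j})\swap d}} \\
    &= \frac1{d(d+1)}\Paren{\Mag{\tr(A_j)}^2 + \tr\Paren{A_j\adj{A_j}}}.
\end{align*}
Summing over $j$: trace‑preservation of $\chan P$ means $\sum_j \adj{A_j}A_j = I_d$, so $\sum_j \tr\Paren{A_j\adj{A_j}} = \tr(I_d) = d$; and applying \cref{lem:tr-choi-prod} with the superoperators $\chan K_j: X \mapsto A_j X \adj{A_j}$ and $\chan I_d$ (so that the ``$B$'' matrix is $I_d$ and its Choi operator is $\Phi_d$) yields $\tr\Paren{\choi{K_j}\Phi_d} = \Mag{\tr(A_j)}^2 / d^2$, whence $\sum_j \Mag{\tr(A_j)}^2 = d^2 \tr\Paren{\choi P \Phi}$ since $\choi P = \sum_j \choi{K_j}$. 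Combining,
\begin{equation*}
    \E\Brac{\tr\Paren{\chan P(\bs\psi)\bs\psi}}
    = \frac{d^2 \tr\Paren{\choi P \Phi} + d}{d(d+1)}
    = \frac{1 + d\tr\Paren{\choi P \Phi}}{1+d}.
\end{equation*}

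The whole argument is a short computation, so I do not anticipate a genuine obstacle; the only points requiring a moment's thought are recognizing that $\sum_j \Mag{\tr(A_j)}^2$ is exactly $d^2\tr\Paren{\choi P \Phi}$ (which \cref{lem:tr-choi-prod} delivers once one notes $\choi P = \sum_j \choi{K_j}$) and invoking trace‑preservation in the form $\sum_j \adj{A_j}A_j = I_d$. If one prefers to bypass Kraus operators, an alternative is to write $\tr\Paren{\chan P(\psi)\psi} = \tr\Paren{(\chan P \otimes \chan I_d)(\psi^{\otimes 2})\,\swap d}$ directly, average via \cref{lem:sym}, and evaluate the two resulting terms using $\tr\Paren{\chan P(I_d)} = d$ together with the observation that $\swap d$ equals $d$ times the partial transpose of $\Phi_d$ on the second register, so $(\chan P \otimes \chan I_d)(\swap d)$ is $d$ times the partial transpose of $\choi P$ and partial transposition preserves traces of products.
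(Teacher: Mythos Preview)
Your proof is correct and essentially the same as the paper's. The paper derives \cref{lem:a1} as the special case $\chan L = \chan P$, $\chan K = \chan I$ of \cref{lem:Choi_l2_dist}, whose proof in turn proceeds by Kraus decomposition, the second-moment formula of \cref{lem:sym} (packaged as \cref{lem:tr-Haar-prod}), and \cref{lem:tr-choi-prod}---exactly the ingredients you use, just routed through the more general intermediate lemma rather than specialized directly.
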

\begin{proof}
    This is the case of \cref{lem:Choi_l2_dist} where $\chan L = \chan P$ and $\chan K = \chan I$.
\end{proof}

Now we reduce the task of testing identity to $\chan I \otimes \chan N$ to that of testing identity to $\chan N$, for an arbitrary channel $\chan N$. To match the context in which we will apply this reduction, we phrase it in terms of ancilla-free, non-adaptive channel testers with perfect completeness, but the proof can easily be adopted to the other query models from \cref{sec:query-models} and to channel testers with imperfect completeness as well.

\begin{thm} \label{thm:N-red-IN}
    Let $\chan N \in \cptp \din \dout$ be a channel, and assume there exists an ancilla-free, non-adaptive algorithm that makes $n$ queries to a channel $\chan Q \in \cptp \din \dout$, accepts with probability $1$ if $\chan Q = \chan N$, and accepts with probability at most $1/2$ if $\norm{\chan Q - \chan N}_{\lsd} \ge \delta$. Then there is an ancilla-free, non-adaptive algorithm that makes $n + O(1/\eps^2)$ queries to a channel $\chan M \in \cptp {\daux \otimes \din} {\daux \otimes \dout}$, accepts with probability $1$ if $\chan M = \chan I_{\daux} \otimes \chan N$, and accepts with probability at most $1/2$ if $\norm{\chan M - \chan I_{\daux} \otimes \chan N}_{\lsd} \ge \eps + \delta$.
\end{thm}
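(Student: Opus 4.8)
The plan is to reduce testing identity to $\chan I_{\daux} \otimes \chan N$ to the assumed test for identity to $\chan N$ by extracting two \emph{marginal channels} of $\chan M$. Viewing $\chan M$ as mapping a $\daux$-dimensional input register together with a $\din$-dimensional input register to a $\daux$-dimensional output register together with a $\dout$-dimensional output register, I would define channels
\begin{align*}
    \chan P(\xi) &\coloneqq \Paren{\chan I_{\daux} \otimes \mrm{tr}_{\dout}} \chan M\Paren{\xi \otimes I_{\din}/\din} \in \cptp{\daux}{\daux},
    & \chan Q(\sigma) &\coloneqq \Paren{\mrm{tr}_{\daux} \otimes \chan I_{\dout}} \chan M\Paren{I_{\daux}/\daux \otimes \sigma} \in \cptp{\din}{\dout}.
\end{align*}
Both are channels (compositions of completely positive trace-preserving maps), and one query to either can be simulated with one query to $\chan M$: the maximally mixed state is prepared as a fresh Haar-random pure state, which is legitimate for a randomized, ancilla-free tester because a fixed channel that is a mixture of channels may be simulated by re-randomizing independently for each query. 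Using $\Phi_{\daux\din} = \Phi_{\daux} \otimes \Phi_{\din}$ under the natural reordering of tensor factors, I would check that $\choi P$ is the reduced state of $\choi M$ on its two $\daux$-dimensional (output and reference) factors and that $\choi Q$ is the reduced state of $\choi M$ on its $\dout$- and $\din$-dimensional factors; in particular, when $\chan M = \chan I_{\daux} \otimes \chan N$ we have $\choi M = \Phi_{\daux} \otimes \choi N$, so $\chan P = \chan I_{\daux}$ and $\chan Q = \chan N$.

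The key step is a converse bounding the error of $\chan M$ by those of $\chan P$ and $\chan Q$. I would apply \cref{lem:some-lemma} with $\rho = \choi M$, taking the two $\daux$-dimensional factors of $\choi M$ as the register $\reg A$ of that lemma (so $\rho_{\reg A} = \choi P$), the remaining factors as $\reg B$ (so $\rho_{\reg B} = \choi Q$), and the pure state $\psi = \Phi_{\daux}$; this gives $\tr(\choi P\,\Phi_{\daux}) \le 1 - \frac1{16}\norm{\choi M - \Phi_{\daux}\otimes\choi Q}_1^2$, equivalently $\norm{\choi M - \Phi_{\daux}\otimes\choi Q}_1 \le 4\sqrt{1 - \tr(\choi P\,\Phi_{\daux})}$. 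The triangle inequality then yields
\begin{equation*}
    \norm{\chan M - \chan I_{\daux} \otimes \chan N}_\lsd \le 4\sqrt{1 - \tr(\choi P\,\Phi_{\daux})} + \norm{\chan Q - \chan N}_\lsd,
\end{equation*}
so if $\norm{\chan M - \chan I_{\daux} \otimes \chan N}_\lsd \ge \eps + \delta$ then either $1 - \tr(\choi P\,\Phi_{\daux}) \ge \eps^2/16$ or $\norm{\chan Q - \chan N}_\lsd \ge \delta$.

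The tester would run two independent non-adaptive phases and accept iff both accept. Phase~2 runs the assumed $n$-query identity-to-$\chan N$ tester on simulated queries to $\chan Q$. Phase~1 repeats the following $m \coloneqq \ceil{32\ln 4/\eps^2} = O(1/\eps^2)$ times with fresh independent randomness: sample a Haar-random $\bs\psi \in \dens{\daux}$, simulate one query to $\chan P$ on input $\bs\psi$, measure its ($\daux$-dimensional) output with the POVM $\{\bs\psi,\, I - \bs\psi\}$, and reject as soon as the outcome $I - \bs\psi$ occurs. By \cref{lem:a1} applied to $\chan P$, a single Phase-1 round fails to reject with probability $\E_{\bs\psi}\tr(\chan P(\bs\psi)\bs\psi) = \bigl(1 + \daux\tr(\choi P\,\Phi_{\daux})\bigr)/(1+\daux)$. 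Completeness: if $\chan M = \chan I_{\daux} \otimes \chan N$ then $\chan P = \chan I_{\daux}$ makes this probability $1$ (so Phase~1 never rejects) and $\chan Q = \chan N$ makes Phase~2 accept with probability $1$. Soundness: if $\norm{\chan M - \chan I_{\daux} \otimes \chan N}_\lsd \ge \eps + \delta$, then in the case $\norm{\chan Q - \chan N}_\lsd \ge \delta$ Phase~2 (hence the tester) accepts with probability at most $1/2$, while in the case $1 - \tr(\choi P\,\Phi_{\daux}) \ge \eps^2/16$ each Phase-1 round fails to reject with probability at most $1 - \frac{\daux}{\daux+1}\cdot\frac{\eps^2}{16} \le 1 - \eps^2/32$ (using $\daux \ge 1$), so over the $m$ independent rounds Phase~1 (hence the tester) accepts with probability at most $(1 - \eps^2/32)^m \le 1/4$. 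The total query complexity is $n + m = n + O(1/\eps^2)$, and the construction is ancilla-free and non-adaptive because the assumed tester is and because all queries across both phases, together with their internal randomization, can be issued in parallel.

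The main obstacle is hitting the $O(1/\eps^2)$ bound rather than $O(1/\eps^4)$. Since \cref{lem:some-lemma} relates $\norm{\chan M - \chan I_{\daux} \otimes \chan N}_\lsd$ to the deviation of $\chan P$ from the identity only through a square root, naively certifying $\chan P$ against $\chan I_{\daux}$ in ACID distance to precision $\Theta(\eps^2)$ via a black-box tester would cost $O(1/\eps^4)$ queries. The resolution is that full ACID-distance certification of $\chan P$ is unnecessary: it suffices to distinguish $\tr(\choi P\,\Phi_{\daux}) = 1$ from $\tr(\choi P\,\Phi_{\daux}) \le 1 - \Theta(\eps^2)$, and because the per-round Phase-1 test is a Bernoulli trial that succeeds with probability exactly $1$ in the completeness case, $O(1/\eps^2)$ rounds suffice to detect any such drop. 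The remaining steps — verifying the register bookkeeping that expresses $\choi P$ and $\choi Q$ as marginals of $\choi M$, and confirming that simulating $\chan P$ and $\chan Q$ with fresh Haar-random pure states faithfully reproduces the fixed channels $\chan P$ and $\chan Q$ — are routine.
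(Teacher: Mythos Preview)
Your proposal is correct and follows essentially the same approach as the paper: you define the same marginal channels $\chan P$ and $\chan Q$, use \cref{lem:some-lemma} and the triangle inequality to obtain the same dichotomy, run the assumed tester on $\chan Q$, and handle $\chan P$ with the same Haar-random-input-and-project test analyzed via \cref{lem:a1}. The only cosmetic differences are that you apply \cref{lem:some-lemma} before splitting rather than after, and you use slightly more Phase-1 repetitions to get soundness $1/4$ instead of $1/2$; your discussion of why fresh Haar-random inputs faithfully simulate queries to $\chan P$ and $\chan Q$ for a non-adaptive tester is a point the paper leaves implicit.
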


\begin{proof}
    Let $\textsc{Certify}$ be the assumed algorithm for testing identity to $\chan N$. Define channels $\chan P \in \cptp \daux \daux$ and $\chan Q \in \cptp \din \dout$ by
    \begin{align*}
        &{\chan P(X)} = \Paren{\chan I_{\daux} \otimes \mrm{tr}_{\dout}} \chan M \Paren{X \otimes I_{\din}/\din},
        &{\chan Q(X)} = \Paren{\mrm{tr}_{\daux} \otimes \chan I_{\dout}} \chan M \Paren{I_{\daux}/\daux \otimes X}.
    \end{align*}
    The algorithm is \cref{alg:tensorI}, where queries to $\chan P$ and $\chan Q$ are implicitly simulated using queries to $\chan M$. If $\chan M = \chan I \otimes \chan N$ then $\chan P = \chan I$ and $\chan Q = \chan N$, and so the \cref{alg:tensorI} accepts with probability $1$.

    \begin{algorithm}
	\caption{Testing identity to $\chan I \otimes \chan N$}
	\label{alg:tensorI}
	\begin{algorithmic}[1]
	    \For{$\ceil{32 \ln(2) / \eps^2}$ times}
		\State Sample a Haar random state $\bs\psi \in \dens\daux$.
		\State Perform the PVM $(\bs\psi, I - \bs\psi)$ on $\chan P(\bs\psi)$.
		\EndFor
		\If{all of the measurement outcomes were $\bs\psi$ \textbf{and} $\textsc{Certify}(\chan Q)$ accepts} \textbf{accept}.
		\Else{} \textbf{reject}.
		\EndIf
	\end{algorithmic}
    \end{algorithm}

    Now suppose that $\norm{\chan M - \chan I \otimes \chan N}_{\lsd} \ge \eps + \delta$. By the triangle inequality,
    \begin{equation*}
        \eps + \delta
        \le \Norm{\choi M - \choi{I \otimes N}}_1
        \le \Norm{\choi M - \Phi_{\daux} \otimes \choi Q}_1 + \Norm{\Phi_{\daux} \otimes \choi Q - \choi{I \otimes N}}_1,
    \end{equation*}
    and since $\choi{I \otimes N} = \Phi_{\daux} \otimes \choi N$,
    \begin{equation*}
        \Norm{\Phi_{\daux} \otimes \choi Q - \choi{I \otimes N}}_1
        = \Norm{\choi Q - \choi N}_1
        = \norm{\chan Q - \chan N}_{\lsd},
    \end{equation*}
    so $\eps + \delta \le \Norm{\choi M - \Phi_{\daux} \otimes \choi Q}_1 + \norm{\chan Q - \chan N}_{\lsd}$. Therefore either $\eps \le \Norm{\choi M - \Phi_{\daux} \otimes \choi Q}_1$ or $\delta \le \norm{\chan Q - \chan N}_{\lsd}$. In the latter case, $\textsc{Certify}(\chan Q)$ accepts with probability at most $1/2$ and so \cref{alg:tensorI} accepts with probability at most $1/2$. In the former case, since $\choi P$ and $\choi Q$ are equal to the reduced states of $\choi M$ on $\dens{\daux \otimes \daux}$ and $\dens{\din \otimes \dout}$ respectively,\footnote{To see this, let $\reg A$ and $\reg B$ be $\daux$-dimensional registers and let $\reg C$ and $\reg D$ be $\din$-dimensional registers, and write $\choi M = (\chan M_{\reg{AC}}  \otimes \chan I_{\reg{BD}}) (\Phi_{\reg{AB}} \otimes \Phi_{\reg{CD}})$, where subscripts indicate which registers a superoperator acts on or a state is in. Tracing out $\reg D$ yields $(\chan M_{\reg{AC}}  \otimes \chan I_{\reg{B}}) (\Phi_{\reg{AB}} \otimes I_{\reg C} / \din)$, and then tracing out $\reg C$ (or more precisely, the $\dout$-dimensional register that $\chan M$ transforms $\reg C$ into) yields $(\chan P_{\reg A} \otimes \chan I_{\reg B}) \Phi_{\reg{AB}} = \choi P$. The argument for $\choi Q$ is similar.}
    \begin{align*}
        \E\Brac{\tr\Paren{\chan P (\bs\psi) \bs\psi}}
        &= \frac{1 + \daux \tr\Paren{\choi P \Phi_{\daux}}} {1+\daux}
        &&\text{\cref{lem:a1}} \\
        &\le \frac{1 + \daux \Paren{1 - \frac1{16} \Norm{\choi M - \Phi_{\daux} \otimes \choi Q}_1^2}}{1+\daux}
        &&\text{\cref{lem:some-lemma}} \\
        &\le \frac{1 + \daux \Paren{1 - \eps^2/16}}{1+\daux}\\
        &= 1 - \frac{\daux \eps^2}{16(1+\daux)}\\
        &\le 1 - \eps^2/32\\
        &\le \exp(-\eps^2/32),
    \end{align*}
    so again \cref{alg:tensorI} accepts with probability at most $\exp(-\eps^2/32 \cdot \ceil{32 \ln(2) / \eps^2}) \le 1/2$.
\end{proof}

Finally, we prove the main result of this subsection:
\erasureub*

\begin{proof}
    First suppose $\chan N$ is an erasure channel, i.e.\ $\chan N = \chan I_{\dout} \otimes \tr_{\din / \dout}$. Since $\tr_{\din / \dout}$ is the \emph{only} channel in $\cptp {\din / \dout} 1$, testing identity to $\tr_{\din / \dout}$ trivially requires zero queries even with perfect completeness and soundness, so the result follows from \cref{thm:N-red-IN}.

    Next suppose $\chan N$ is a unitary channel. We may assume without loss of generality that $\chan N$ is the identity channel, because if we define a channel $\chan P$ by $\chan P(X) = \adj U \chan M(X) U$ then
    \begin{align*}
        \Norm{\chan M - \chan N}_{\lsd}
        &= \Norm{\choi M - \choi N}_1 \\
        &= \Norm{\choi M - (U \otimes I) \Phi \Paren{\adj U \otimes I}} \\
        &= \Norm{\Paren{\adj U \otimes I} \choi M (U \otimes I) - \Phi}_1 \\
        &= \Norm{\choi P - \choi I}_1 \\
        &= \Norm{\chan P - \chan I}_{\lsd},
    \end{align*}
    and queries to $\chan P$ can be simulated using queries to $\chan M$. The identity channel is the erasure channel with input dimension equal to the output dimension, so the result follows by the above argument.

    Finally suppose $\chan N$ is a pure state replacement channel. The algorithm is \cref{alg:replace-pure}; clearly it accepts with probability 1 if $\chan M = \chan N$. If $\norm{\chan M - \chan N}_{\lsd} \ge \eps$, then by \cref{lem:some-lemma} (applied with $\rho = \choi M, \reg A = \C^{\dout \times \dout}, \reg B = \C^{\din \times \din}$),
    \begin{equation*}
        \tr\Paren{\chan M(I/\din) \psi}
        \le 1 - \frac1{16} \Norm{\choi M - \psi \otimes I/\din}_1^2
        = 1 - \frac1{16} \Norm{\choi M - \choi N}_1^2
        \le 1 - \eps^2/16
        \le \exp(-\eps^2/16),
    \end{equation*}
    so \cref{alg:replace-pure} accepts with probability at most $\exp(-\eps^2/16 \cdot \ceil{16 \ln(2) / \eps^2}) \le 1/2$.
\end{proof}

\begin{algorithm}
\caption{Testing identity to a pure state replacement channel}
\label{alg:replace-pure}
\begin{algorithmic}[1]
    \For{$\ceil{16 \ln(2) / \eps^2}$ times}
        \State Perform the PVM $(\psi, I - \psi)$ on $\chan M(I/\din)$.
    \EndFor
    \If{all of the measurement outcomes were $\psi$} \textbf{accept}.
    \Else{} \textbf{reject}.
    \EndIf
\end{algorithmic}
\end{algorithm}

\subsection{Lower bound for the completely depolarizing channel}\label{subsec:depol_lb}

The \emph{total variation distance} between discrete probability distributions $P$ and $Q$ is the quantity
\begin{equation*}
    d_{\mrm{TV}}(P,Q) \coloneqq \frac12 \sum_x |P(x) - Q(x)|.
\end{equation*}
We will use the following bound:

\begin{lem} \label{lem:hellinger}
Let $P_1, \dotsc, P_n$ be probability distributions on $\bits$, and let $U$ be the uniform distribution on $\bits$. Then
\begin{equation*}
    d_{\mrm{TV}}\Paren{\bigotimes_{i=1}^n P_i, U^{\otimes n}}
    \le 2 \sqrt{\sum_{i=1}^n d_{\mrm{TV}}(P_i, U)^2}.
\end{equation*}
\end{lem}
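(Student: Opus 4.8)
The plan is to route through the squared Hellinger distance, exploiting that it tensorizes multiplicatively and — crucially, for \emph{binary} distributions — is \emph{quadratically} rather than merely linearly small in the total variation distance near the uniform point. For distributions $P,Q$ on a common finite set write $A(P,Q) = \sum_x \sqrt{P(x)Q(x)}$ for the affinity and $H^2(P,Q) \coloneqq 1 - A(P,Q)$ for the squared Hellinger distance. I will use two standard facts: expanding the sum of products gives $A\Paren{\bigotimes_{i=1}^n P_i, \bigotimes_{i=1}^n Q_i} = \prod_{i=1}^n A(P_i,Q_i)$, so that $H^2$ of the products is controlled by the individual $H^2(P_i,Q_i)$; and $d_{\mrm{TV}}(P,Q)^2 \le 2H^2(P,Q)$, which follows from Cauchy--Schwarz applied to $\sum_x |\sqrt{P(x)}-\sqrt{Q(x)}|\cdot|\sqrt{P(x)}+\sqrt{Q(x)}|$, since the two resulting sums of squares evaluate to $2H^2(P,Q)$ and $2(2-H^2(P,Q)) \le 4$.

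We may assume $\sum_{i=1}^n d_{\mrm{TV}}(P_i,U)^2 \le 1/4$, as otherwise the right-hand side of the claimed bound is at least $1$, a trivial upper bound on total variation distance. Write $\eps_i = d_{\mrm{TV}}(P_i,U) \in [0,1/2]$. For $P_i$ on $\bits$ with $P_i(0) = p$ one computes $A(P_i,U) = \tfrac1{\sqrt2}\Paren{\sqrt p + \sqrt{1-p}} = \sqrt{\tfrac12 + \sqrt{p(1-p)}} = \sqrt{\tfrac12 + \sqrt{\tfrac14 - \eps_i^2}}$, using $p(1-p) = \tfrac14 - (p-\tfrac12)^2$ and $\eps_i = |p - \tfrac12|$. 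The elementary inequality $\sqrt{\tfrac14 - \eps_i^2} \ge \tfrac12 - 2\eps_i^2$ (both sides are nonnegative since $\eps_i \le 1/2$, and squaring reduces it to $4\eps_i^2 \le 1$) then yields $A(P_i,U) \ge \sqrt{1 - 2\eps_i^2}$.

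Combining these with the Weierstrass inequality $\prod_i(1-a_i) \ge 1 - \sum_i a_i$ for $a_i \in [0,1]$ (applicable since $2\eps_i^2 \le 1/2$) and with $1 - \sqrt{1-y} \le y$ for $y \in [0,1]$,
\begin{align*}
    H^2\Paren{\bigotimes_{i=1}^n P_i,\ U^{\otimes n}}
    = 1 - \prod_{i=1}^n A(P_i,U)
    &\le 1 - \sqrt{\prod_{i=1}^n \Paren{1 - 2\eps_i^2}} \\
    &\le 1 - \sqrt{1 - 2\sum_{i=1}^n \eps_i^2}
    \le 2\sum_{i=1}^n \eps_i^2 .
\end{align*}
Hence $d_{\mrm{TV}}\Paren{\bigotimes_i P_i,\ U^{\otimes n}}^2 \le 2\,H^2\Paren{\bigotimes_i P_i,\ U^{\otimes n}} \le 4\sum_i \eps_i^2$, which is the claim after taking square roots. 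The one point where binariness is genuinely needed is the bound $A(P_i,U) \ge \sqrt{1-2\eps_i^2}$, equivalently $H^2(P_i,U) \le 2\,d_{\mrm{TV}}(P_i,U)^2$; over a general alphabet one only has $H^2(P_i,U) \le d_{\mrm{TV}}(P_i,U)$, which would produce an $\ell_1$ rather than an $\ell_2$ aggregate. Isolating this quadratic saving is the heart of the matter; everything else is routine, and the case split on $\sum_i d_{\mrm{TV}}(P_i,U)^2$ versus $1/4$ is just there to keep the radicands nonnegative.
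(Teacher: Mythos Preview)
Your proof is correct and follows essentially the same route as the paper: pass to Hellinger distance via $d_{\mrm{TV}}^2 \le 2H^2$, use multiplicativity of the affinity over products, and exploit the key binary-specific bound that $H^2(P_i,U)$ is quadratic in $d_{\mrm{TV}}(P_i,U)$. The only cosmetic difference is that the paper parametrizes by $x_i = 2p_i - 1$ and proves the slightly sharper inequality $\tfrac12(\sqrt{1-x_i}+\sqrt{1+x_i}) \ge 1 - x_i^2/2$ directly (equivalently $A(P_i,U) \ge 1 - 2\eps_i^2$ rather than your $A(P_i,U) \ge \sqrt{1-2\eps_i^2}$), which lets it apply Weierstrass without the preliminary case split on $\sum_i \eps_i^2$.
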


\begin{proof}
We use the well-known fact that $d_{\mrm{TV}}(P,Q) \le \sqrt2 \cdot d_{\mrm{H}}(P,Q)$ for all distributions $P,Q$, where
\begin{equation*}
    d_{\mrm{H}}(P,Q) \coloneqq \sqrt{1 - \sum_x \sqrt{P(x) Q(x)}}
\end{equation*}
denotes Hellinger distance. Let $P = \bigotimes_{i=1}^n P_i$ and $Q = U^{\otimes n}$, and write
\begin{equation*}
    P_i = \mrm{Bernoulli}\Paren{\frac{1 + x_i}2}
\end{equation*}
where $-1 \le x_i \le 1$. We use that for $-1 \le x \le 1$,
\begin{equation*}
    0 \ge (\sqrt{1+x}-1) (\sqrt{1-x}-1) = \sqrt{1 - x^2} - \sqrt{1+x} - \sqrt{1-x} + 1 \ge 2 - x^2 - \sqrt{1+x} - \sqrt{1-x},
\end{equation*}
which rearranges to $\sqrt{1-x} + \sqrt{1+x} \ge 2 - x^2$.\footnote{In fact, the stronger inequality $\sqrt{1-x} + \sqrt{1+x} \ge 2 - (2 - \sqrt2) x^2$ holds, but proving this is more time-consuming and is not necessary for our purposes.} It follows that
\begin{align*}
    \frac12 d_{\mrm{TV}}(P,Q)^2
    &\le d_{\mrm{H}}(P,Q)^2 \\
    &= 1 - \sum_{\mathclap{x \in \bits^n}} \sqrt{P(x) Q(x)} \\
    &= 1 - \prod_{i=1}^n \Paren{\sqrt{P_i(0) U(0)} + \sqrt{P_i(1) U(1)}} \\
    &= 1 - \prod_{i=1}^n \frac{\sqrt{1-x_i} + \sqrt{1+x_i}}2 \\
    &\le 1 - \prod_{i=1}^n \Paren{1 - \frac{ x_i^2}{2}} \\
    &\le \frac{1}{2} \sum_{i=1}^n x_i^2 \\
    &= 2 \sum_{i=1}^n d_{\mrm{TV}}(P_i,U)^2. \qedhere
\end{align*}
\end{proof}

Now we prove the following:

\lbdepol*

\begin{proof}
It will be convenient to identify the output space $\C^{\dout \times \dout}$ of $\chan N$ with $\C^{2 \times 2} \otimes \C^{\dout/2 \times \dout/2}$. Define density matrices $\rho_0, \rho_1 \in \dens\dout \cong\dens{2 \otimes \dout/2}$ by\footnote{We conjecture that conjugating $\rho_0$ and $\rho_1$ by a Haar random unitary would lead to an $\Omega\Paren{\din \dout^{3/2} \big/ \eps^2}$ lower bound, similarly to lower bound proofs for ancilla-free state certification~\cite{CLO22,CLHL22}.}
\begin{align*}
    &\rho_0 = \frac1{\dout} ((1+\eps) \kb0 + (1-\eps) \kb1) \otimes I_{\dout/2}, \\
    &\rho_1 = \frac1{\dout} ((1-\eps) \kb0 + (1+\eps) \kb1) \otimes I_{\dout/2}.
\end{align*}
Let $\bs\Pi \in \C^{\din \times \din}$ be the projection onto a Haar random $\din/2$-dimensional subspace of $\C^{\din}$, and define a channel $\bs{\chan M} \in \cptp \din \dout$ in terms of $\bs\Pi$ by
\begin{equation*}
    \bs{\chan M}(X) =
    \tr(X(I - \bs\Pi)) \rho_0 + \tr(X \bs\Pi) \rho_1.
\end{equation*}
Then
\begin{align*}
    &\choi{\bs{\chan M}}
    = \frac1{\din} \rho_0 \otimes \Paren{I - \bs\Pi^\top} + \frac1{\din} \rho_1 \otimes \bs\Pi^\top,
    &&J_{\chan N} = \frac1{\din \dout} I_{\din \dout},
\end{align*}
so $\Norm{\bs{\chan M} - \chan N}_{\lsd} = \Norm{\choi{\bs{\chan M}} - \choi N}_1 = \eps$ pointwise because all of the eigenvalues of $\choi{\bs{\chan M}}$ are $(1\pm\eps)/\din\dout$. Therefore by \cref{lem:derand-tester} it suffices to prove that if $T$ is a deterministic, ancilla-free, non-adaptive channel tester, and if
\begin{equation*}
    \PR{\text{$T(\chan N)$ accepts}} - \PR{\text{$T(\bs{\chan M})$ accepts}} \ge 1/3,
\end{equation*}
where the probability is over both the choice of $\bs{\chan M}$ and the randomness of the output measurements, then $T$ makes $n \ge \Omega(\din/\eps^2)$ queries.

Recalling \cref{dfn:tester000} of ancilla-free, non-adaptive channel testers, write
\begin{equation*}
    T = \Paren{\psi_1, \dotsc, \psi_n, P^{(1)}, \dotsc, P^{(n)}, f}.
\end{equation*}
The difference between the acceptance probabilities of $T$ on any two fixed channels is at most the trace distance between the pre-measurement states corresponding to those channels, so
\begin{equation*}
    1/3
    \le \E \frac12 \Norm{\bigotimes_{j=1}^n \chan N(\psi_j) - \bigotimes_{j=1}^n \bs{\chan M}(\psi_j)}_1 \\
    = \E \frac12 \Norm{I/\dout^n - \bigotimes_{j=1}^n \bs{\chan M}(\psi_j)}_1.
\end{equation*}
Let $\bs p_j = \tr(\bs\Pi \psi_j)$ for $1 \le j \le n$, and note that
\begin{equation*}
    \bs{\chan M}(\psi_j)
    = (1 - \bs p_j) \rho_0 + \bs p_j \rho_1
    = \frac1{\dout} \Paren{(1 + (1 - 2 \bs p_j) \eps) \kb0 + (1 + (2 \bs p_j - 1) \eps) \kb1} \otimes I_{\dout/2}.
\end{equation*}
It follows that
\begin{align*}
    1/3
    &\le \E \Brac{d_{\mrm{TV}} \Paren{\mrm{Bernoulli}(1/2)^{\otimes n}, \bigotimes_{j=1}^n \mrm{Bernoulli} \Paren{\frac12 + \Paren{\bs p_j - \frac12} \eps}}} \\
    &\le 2 \E \sqrt{\sum_{i=1}^n (\bs p_j - 1/2)^2 \eps^2}
    &&\text{\cref{lem:hellinger}} \\
    &\le 2 \eps \sqrt{\sum_{i=1}^n \E \Brac{(\bs p_j - 1/2)^2}}
    &&\text{Cauchy-Schwarz.}
\end{align*}

To compute $\E [(\bs p_j - 1/2)^2]$, write $\bs\Pi = \sum_{j=1}^{\din/2} \kb{\bs u_j}$ where $\ket{\bs u_1}, \dotsc, \ket{\bs u_{\din/2}} \in \C^{\din}$ are the first $\din/2$ columns of a Haar random unitary. Then taking sums from 1 to $\din/2$, for all states $\ket\psi$, by \cref{lem:haar-moment} we have
\begin{align*}
    \E\Brac{\tr(\bs\Pi \psi)^2}
    &= \E\Brac{\Paren{\sum_j \vert\braket{\psi}{\bs u_j}\vert^2}^2} \\
    &= \sum_{i \neq j} \E\left[\vert\braket{\psi}{\bs u_i}\vert^2 \vert\braket{\psi}{\bs u_j}\vert^2\right]
    + \sum_j \E\left[\vert\braket{\psi}{\bs u_j}\vert^4\right] \\
    &= \frac{\din}2 \Paren{\frac{\din}2 - 1} \cdot \frac1{\din(\din+1)} + \frac{\din}2 \cdot \frac2{\din(\din+1)}\\
    &= \frac14 + \frac1{4(\din+1)},
\end{align*}
and clearly $\E\Brac{\tr(\bs\Pi \psi)} = 1/2$. Therefore
\begin{equation*}
    \E\Brac{(\bs p_j - 1/2)^2} = \E\Brac{\bs p_j^2 - 1/4} = \frac1{4(\din+1)},
\end{equation*}
so $1/3 \le 2 \eps \sqrt{n/4(\din+1)}$, implying that $n \ge \Omega(\din / \eps^2)$ as desired.
\end{proof}

\subsection{Upper bound for arbitrary channels in an expanded query model}

For a superoperator $\chan L$, we define a superpoerator $\dual L$ by $\dual L(X) = \chan L\Paren{X^\top}^\top$. We prove the following:

\ubdual*

\begin{algorithm}
\caption{Testing identity to $\chan N$ using $\chan M, \dual M$}
\label{alg:dual}
\begin{algorithmic}[1]
    \For{$100 \dout^4 / \eps^4$ times}
        \State Perform the PVM $(\Phi_{\dout}, I-\Phi_{\dout})$ on $\Paren{\chan M \otimes \dual M} \Phi_{\din}$. \label{line:2}
    \EndFor
    \State Let $\bs p$ be the fraction of measurement outcomes from Line~\ref{line:2} that were $\Phi_{\dout}$.
    \For{$100 \dout^4 / \eps^4$ times}
        \State Perform the PVM $(\Phi_{\dout}, I-\Phi_{\dout})$ on $\Paren{\chan M \otimes \dual N} \Phi_{\din}$. \label{line:6}
    \EndFor
    \State Let $\bs q$ be the fraction of measurement outcomes from Line~\ref{line:6} that were $\Phi_{\dout}$.
    \If{$\bs p - 2 \bs q + \frac\din\dout \Norm{\choi N}_2^2 \le 0.5 \eps^2 / \dout^2$} \textbf{accept}. \label{line:9}
    \Else{} \textbf{reject}.
    \EndIf
\end{algorithmic}
\end{algorithm}

The algorithm behind our proof will be \cref{alg:dual}. The following lemma characterizes the distribution of measurement outcomes in this algorithm:

\begin{lem} \label{lem:dual-ip}
For all Hermitian-preserving superoperators $\chan K, \chan L \in \super \din \dout$,
\begin{equation*}
    \tr\Paren{\Phi_{\dout} \cdot \Paren{\chan K \otimes \dual L} \Phi_{\din}}
    = \frac\din\dout \tr\Paren{\choi L \choi K}.
\end{equation*}
\end{lem}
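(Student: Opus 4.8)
The plan is to reduce to rank-one Kraus operators and then compute both sides directly. Observe first that both sides of the claimed identity are bilinear in the pair $(\chan K, \chan L)$: the left side is manifestly linear in $\chan K$, and it is linear in $\dual L$ — which in turn is linear in $\chan L$ since $\dual L(X) = \chan L(X^\top)^\top$ — while the right side is bilinear through the Choi operators $\choi K$ and $\choi L$. Since every Hermitian-preserving superoperator is a linear combination of maps of the form $X \mapsto A X \adj A$ by \cref{eq:kraus}, it therefore suffices to verify the identity when $\chan K(X) = A X \adj A$ and $\chan L(X) = B X \adj B$ for matrices $A, B \in \C^{\dout \times \din}$.

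Next I would unwind $\dual L$ on such a Kraus term: $\dual L(X) = \chan L(X^\top)^\top = (B X^\top \adj B)^\top = \overline{B}\, X\, B^\top$, where $\overline{B} = B^*$ is the entrywise complex conjugate. Hence $(\chan K \otimes \dual L) \Phi_{\din} = (A \otimes \overline{B}) \Phi_{\din} (\adj A \otimes B^\top)$, and since $\Phi_{\dout} = \kb{\Phi_{\dout}}$ and $\Phi_{\din} = \kb{\Phi_{\din}}$, the left-hand side collapses to $\bigl|\bra{\Phi_{\dout}} (A \otimes \overline{B}) \ket{\Phi_{\din}}\bigr|^2$. Expanding $\ket{\Phi_d} = \frac{1}{\sqrt d} \sum_k \ket{kk}$ (or, equivalently, invoking \cref{eq:Phi-trans} to move $A$ across the maximally entangled state) gives $\bra{\Phi_{\dout}} (A \otimes \overline{B}) \ket{\Phi_{\din}} = \frac{1}{\sqrt{\din \dout}} \sum_{k,l} A_{lk} \overline{B_{lk}} = \frac{1}{\sqrt{\din \dout}} \tr(A \adj B)$, so the left-hand side equals $\frac{1}{\din \dout} |\tr(A \adj B)|^2$.

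For the right-hand side I would exploit the same rank-one structure: $\choi K = (A \otimes I_{\din}) \Phi_{\din} (\adj A \otimes I_{\din}) = \kb u$ with $\ket u = (A \otimes I_{\din}) \ket{\Phi_{\din}}$, and likewise $\choi L = \kb v$ with $\ket v = (B \otimes I_{\din}) \ket{\Phi_{\din}}$. Then $\tr(\choi L \choi K) = |\braket v u|^2 = \bigl|\bra{\Phi_{\din}} (\adj B A \otimes I_{\din}) \ket{\Phi_{\din}}\bigr|^2 = \frac{1}{\din^2} |\tr(\adj B A)|^2$, exactly as in the computation inside the proof of \cref{lem:tr-choi-prod}. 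Multiplying by $\din / \dout$ and using $\tr(\adj B A) = \tr(A \adj B)$ by cyclicity of the trace yields $\frac{1}{\din \dout} |\tr(A \adj B)|^2$, which matches the left-hand side and completes the proof.

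The computation is essentially routine — it parallels \cref{lem:tr-choi-prod}, which handles the square case $\din = \dout$ — so the only real care needed is tracking the transposes and conjugates in the definition of $\dual L$ correctly and being mindful that $A$ and $B$ are rectangular in the index sums; the bilinear reduction at the start is what keeps everything clean.
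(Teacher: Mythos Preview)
Your proof is correct and follows essentially the same approach as the paper: reduce by bilinearity and \cref{eq:kraus} to the rank-one case $\chan K(X)=AX\adj A$, $\chan L(X)=BX\adj B$, then compute directly. The only cosmetic difference is that the paper transforms the left-hand side into $\din\tr(\choi L\choi K)$ in one chain using \cref{eq:Phi-trans} and cyclicity, whereas you evaluate both sides to the common scalar $\frac{1}{\din\dout}|\tr(A\adj B)|^2$ and compare; the content is the same.
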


\begin{proof}
By linearity and \cref{eq:kraus}, we may assume without loss of generality that $\chan K(X) = A X \adj A$ and $\chan L(X) = B X \adj B$ for some matrices $A,B$. Then by the cyclic property of trace and \cref{eq:Phi-trans},
\begin{align*}
    \dout \tr\Paren{\Phi_{\dout} \cdot \Paren{\chan K \otimes \dual L} \Phi_{\din}}
    &= \dout \tr\Paren{\Phi_{\dout} (A \otimes B^*) \Phi_{\din} (\adj A \otimes B^\top)} \\
    &= \dout \tr\Paren{(I \otimes B^\top) \Phi_{\dout} (I \otimes B^*) \cdot (A \otimes I) \Phi_{\din} (\adj A \otimes I)} \\
    &= \din \tr\Paren{(B \otimes I) \Phi_{\din} (\adj B \otimes I) \cdot (A \otimes I) \Phi_{\din} (\adj A \otimes I)} \\
    &= \din \tr(\choi L \choi K). \qedhere
\end{align*}
\end{proof}

Now we prove \cref{thm:ub-dual}:

\begin{proof}
Consider an arbitrary channel $\chan M \in \cptp \din \dout$. Define $\bs p$ and $\bs q$ as in \cref{alg:dual}, and let
\begin{equation*}
    \bs X = \bs p - 2 \bs q + \frac\din\dout \norm{\choi N}_2^2
\end{equation*}
denote the quantity on the left side of the inequality in Line~\ref{line:9}. Then
\begin{align*}
    \E[\bs X]
    &= \tr \Paren{\Phi_{\dout} \cdot (\chan M \otimes \dual M) \Phi_{\din}} - 2 \tr \Paren{\Phi_{\dout} \cdot (\chan M \otimes \dual N) \Phi_{\din}} + \frac\din\dout \norm{\choi N}_2^2 \\
    &= \frac\din\dout \Paren{\tr(\choi M^2) - 2 \tr(\choi M \choi N) + \tr(\choi N^2)}
    &&\text{\cref{lem:dual-ip}} \\
    &= \frac\din\dout \Norm{\choi M - \choi N}_2^2 \\
    &\ge \frac1{\dout^2} \norm{\choi M - \choi N}_1^2
    &&\text{Cauchy-Schwarz} \\
    &= \frac1{\dout^2} \norm{\chan M - \chan N}_{\lsd}^2,
\end{align*}
with equality if $\chan M = \chan N$.

Since $\bs p$ is the average of $100 \dout^4 / \eps^4$ i.i.d.\ Bernoulli random variables, by a Chernoff bound it holds that
\begin{align*}
    \PR{\bs p - \E[\bs p] \ge 0.1 \eps^2 / \dout^2}
    \le \exp\Paren{-2 \cdot \Paren{0.1 \eps^2 / \dout^2}^2 \cdot 100 \dout^4 / \eps^4} = \exp(-2) < 0.14,
\end{align*}
and similarly
\begin{align*}
    \PR{\bs p - \E[\bs p] \le - 0.1 \eps^2 / \dout^2} \le 0.14, \\
    \PR{\bs q - \E[\bs q] \ge \phantom- 0.1 \eps^2 / \dout^2} \le 0.14, \\
    \PR{\bs q - \E[\bs q] \le - 0.1 \eps^2 / \dout^2} \le 0.14.
\end{align*}
Therefore
\begin{align*}
    \PR{\bs X - \E[\bs X] \ge 0.3 \eps^2 / \dout^2}
    &= \PR{(\bs p - \E[\bs p]) - 2(\bs q - \E[\bs q]) \ge 0.3 \eps^2 / \dout^2} \\
    &\le \PR{\bs p - \E[\bs p] \ge 0.1 \eps^2 / \dout^2 \text{ or } \bs q - \E[\bs q] \le - 0.1 \eps^2 / \dout^2} \\
    &\le \PR{\bs p - \E[\bs p] \ge 0.1 \eps^2 / \dout^2} + \PR{\bs q - \E[\bs q] \le - 0.1 \eps^2 / \dout^2} \\
    &\le 0.14 + 0.14 \\
    &\le 1/3,
\end{align*}
and similarly
\begin{equation*}
    \PR{\bs X - \E[\bs X] \le - 0.3 \eps^2 / \dout^2} \le 1/3.
\end{equation*}

Thus if $\chan M = \chan N$, then \cref{alg:dual} rejects with probability at most
\begin{equation*}
    \PR{\bs X \ge 0.3 \eps^2 / \dout^2}
    = \PR{\bs X - \E[\bs X] \ge 0.3 \eps^2 / \dout^2}
    \le 1/3.
\end{equation*}
And if $\norm{\chan M - \chan N}_{\lsd} \ge \eps$, then $\E[\bs X] \ge \eps^2 / \dout^2$, so \cref{alg:dual} accepts with probability at most
\begin{equation*}
    \PR{\bs X \le 0.7 \eps^2 / \dout^2}
    \le \PR{\bs X - \E[\bs X] \le -0.3 \eps^2 / \dout^2} \le 1/3. \qedhere
\end{equation*}
\end{proof}

\subsection{Tomography}\label{sec:tom}

\UBtom*

\begin{proof}
    O'Donnell and Wright~\cite[Theorem 1.10]{OW15} gave an algorithm $\textsc{StateTom}_\delta$ that performs an entangled measurement on $O(d^2/\delta^2)$ copies of a density matrix $\rho \in \C^{d \times d}$, and with probability at least $2/3$ outputs the description of a density matrix $\sigma \in \C^{d \times d}$ such that $\norm{\rho - \sigma}_1 \le \delta$. Our algorithm is to first perform $\textsc{StateTom}_{\eps/2}$ on $\choi M$, yielding the description of a state $\rho \in \dens{\din \otimes \dout}$, and then output the description of a channel $\chan N$ that minimizes $\norm{\rho - \choi N}_1$. If $\textsc{StateTom}_{\eps/2}$ succeeds, then by the triangle inequality
    \begin{equation*}
        \norm{\chan M - \chan N}_J
        \le \norm{\choi M - \rho}_1 + \norm{\rho - \choi N}_1
        \le 2 \norm{\choi M - \rho}_1
        \le 2 \cdot \eps/2
        \le \eps. \qedhere
    \end{equation*}
\end{proof}

\CoherentTomAcid*

\begin{proof}
Let $T_0$ be an ancilla-assisted, coherent, adaptive channel tomography algorithm such that for all channels $\chan M \in \cptp \din \dout$, with probability at least $2/3$, the output of $T(\chan M)$ is a channel $\chan N$ such that $\norm{\chan M - \chan N}_{\lsd} < 1/16$. Our goal is to prove that $T_0$ makes $\Omega\Paren{\din^2 \dout^2 / \log(\din \dout)}$ queries.

Under our assumption that $\dout \ge 4$, Oufkir~\cite{O23}\footnote{This is implicit in the proof of \cite[Lemma 2.2]{O23}, with the $1/8$ constant coming from the inequality $\eps \le 1/4$ in the paragraph preceding the lemma. (The stronger inequality $\eps \le 1/16$ in the surrounding \cite[Theorem 2.1]{O23} is not used in the proof of \cite[Lemma 2.2]{O23}.)} proved that there exists a set of channels $C \subseteq \cptp \din \dout$ of size $|C| \ge \exp\Paren{\Omega\Paren{\din^{2} \dout^{2}}}$ such that for all $\chan M, \chan N \in C$ it holds that $\norm{\chan M - \chan N}_\lsd > 1/8$. Let $T_1$ be the channel tomography algorithm that first executes $T_0$, yielding a measurement outcome $\chan P$, and then performs the following classical post-processing on $\chan P$:
\begin{itemize}
    \item If there exists a channel $\chan N \in C$ such that $\norm{\chan N - \chan P}_{\lsd} < 1/16$, then output that channel $\chan N$. (There cannot exist two such channels $\chan N$, by the triangle inequality and the definition of $C$.)
    \item Else, output an arbitrary channel in $C$.
\end{itemize}
On input $\chan N \in C$, if $T_0$ successfully outputs a channel $\chan P$ such that $\norm{\chan P - \chan N}_{\lsd} < 1/16$, then the above post-processing leads $T_1$ to output $\chan N$. Therefore for all $\chan N \in C$, the probability that $T_1(\chan N)$ outputs $\chan N$ is at least $2/3$. By repetition and majority vote, there exists an ancilla-assisted, coherent, adaptive channel tester $T_2$, making a number of queries proportional to that made by $T_1$ (and hence by $T_0$), such that for all $\chan N \in C$ the probability that $T_2(\chan N)$ outputs $\chan N$ is at least 0.99.

Let $n$ be the number of queries made by $T_2$, and let $\chan V_0, \dotsc, \chan V_n$ be the sequence of non-query operations performed by $T_2$, where $\chan V_0$ takes as input $\kb0$ and $\chan V_n$ outputs a measurement outcome in $C$ (formally, a diagonal density matrix in $\dens{|C|}$). Our goal is to prove that $n \ge \Omega\Paren{\din^2 \dout^2 / \log(\din \dout)}$. For $\chan N \in C$ and $0 \le k \le n$, let
\begin{equation*}
    \rho_{\chan N, k} = \chan V_k (\chan N \otimes \chan I) \chan V_{k-1} (\chan N \otimes \chan I) \dotsb \chan V_0 (\kb0).
\end{equation*}
In particular, $\rho_{\chan N, n}$ is the state at the end of the execution of $T_2(\chan N)$. Let $\ket{\chan N}$ denote the standard basis element indexed by the classical description of $\chan N$. Then by a Fuchs--van de Graaf inequality (\cref{eq:fvdg}),
\begin{equation*}
    \frac12 \Norm{\kb{\chan N} - \rho_{\chan N, n}}_1
    \le \sqrt{1 - \fid{\kb{\chan N}} {\rho_{\chan N, n}}}
    \le \sqrt{1 - 0.99}
    = 0.1.
\end{equation*}

We now assign names to the registers that arise throughout the execution of $T_2(\chan N)$, for a channel $\chan N \in C$. For $0 \le k \le n-1$, write $\rho_{\chan N, k} \in \dens{\reg B_k \reg C_k}$ where $\reg B_k$ is a $\din$-dimensional register, and write $(\chan N \otimes \chan I) \rho_{\chan N, k} \in \dens{\reg B_k^\prime \reg C_k}$ where $\reg B_k^\prime$ is a $\dout$-dimensional register. Thus $\chan N$ transforms $\reg B_k$ into $\reg B_k^\prime$. Also write the initial state of the system as $\kb0 \in \dens{\reg B_{-1}^\prime \reg C_{-1}}$, and write $\rho_{\chan N, n} \in \dens{\reg B_n \reg C_n}$, where $\reg B_n \reg C_n$ is a $|C|$-dimensional register. (For notational convenience we write $\reg B_n \reg C_n$ in a manner that suggests the tensor product of distinct registers, despite being a single register.) Thus $\chan V_k$ transforms $\reg{B}_{k-1}^\prime \reg{C}_{k-1}$ into $\reg{B}_k \reg{C}_k$ for all $0 \le k \le n$.

Henceforth we write $\bs{\chan N}$ to denote a uniform random channel in $C$. Let $\reg A$ be a $|C|$-dimensional register, and for $0 \le k \le n$ define a density matrix $\sigma_k \in \dens{\reg A \reg B_k \reg C_k}$ by
\begin{equation*}
    \sigma_k = \E\Brac{\kb{\bs{\chan N}} \otimes \rho_{\bs{\chan N}, k}}.
\end{equation*}
By \cref{lem:FAapp}
\begin{equation*}
    S(\reg A | \reg{B}_n \reg{C}_n)
    \le S_{\E\Brac{\kb{\bs{\chan N}}^{\otimes 2}}} (\reg A | \reg{B}_n \reg{C}_n) + \Norm{\sigma_n - \E\Brac{\kb{\bs{\chan N}}^{\otimes 2}}}_1 \cdot \frac32 \log|C| + 2,
\end{equation*}
and by \cref{dfn:CondEnt}
\begin{equation*}
    S_{\E\Brac{\kb{\bs{\chan N}}^{\otimes 2}}} (\reg A | \reg{B}_n \reg{C}_n)
    = S\Paren{\E\Brac{\kb{\bs{\chan N}}^{\otimes 2}}} - S(I/|C|)
    = \log|C| - \log|C|
    = 0,
\end{equation*}
and
\begin{equation*}
    \Norm{\sigma_n - \E\Brac{\kb{\bs{\chan N}}^{\otimes 2}}}_1
    = \Norm{\E\Brac{\kb{\bs{\chan N}} \otimes \Paren{\rho_{\bs{\chan N}, n} - \kb{\bs{\chan N}}}}}_1
    = \E \Norm{\rho_{\bs{\chan N}, n} - \kb{\bs{\chan N}}}_1
    \le 0.2,
\end{equation*}
so
\begin{equation*}
    S (\reg A | \reg B_n \reg C_n)
    \le 0.3 \log|C| + 2.
\end{equation*}
Furthermore, by \cref{thm:StrongSubadd} and \cref{dfn:CondEnt}
\begin{equation*}
    S(\reg A | \reg B_0 \reg C_0)
    \ge S(\reg A | \reg B_{-1}^\prime \reg C_{-1})
    =  S(\reg A \reg B_{-1}^\prime \reg C_{-1}) - S(\reg B_{-1}^\prime \reg C_{-1})
    = S\Paren{\frac{I}{|C|} \otimes \kb0} - S(\kb0)
    = \log|C|.
\end{equation*}
Below we will prove that $S(\reg A | \reg B_k \reg C_k) - S(\reg A | \reg B_{k+1} \reg C_{k+1}) \le 2 \log(\din \dout)$ for all $0 \le k \le n-1$. It follows that
\begin{equation*}
    0.7 \log|C| - 2
    \le S(\reg A | \reg B_0 \reg C_0) - S(\reg A | \reg B_t \reg C_t)
    = \sum_{k=0}^{n-1} \Paren{S(\reg A | \reg B_k \reg C_k) - S(\reg A | \reg B_{k+1} \reg C_{k+1})}
    \le n \cdot 2 \log(\din \dout),
\end{equation*}
and therefore $n \ge \Omega(\log|C| / \log(\din\dout)) \ge \Omega\Paren{\din^2 \dout^2 / \log(\din \dout)}$ as desired.

Fix some $0 \le k \le n-1$ and write $\reg B = \reg B_k, \reg B^\prime = \reg B_k^\prime, \reg C = \reg C_k$. Also for $\chan N \in C$ let $S_{\chan N}$ denote entropy with respect to $\rho_{\chan N, k}$ (or $(\chan N \otimes I) \rho_{\chan N, k}$). Then as promised,
\begin{align*}
    &S(\reg A | \reg B_k \reg C_k) - S(\reg A | \reg B_{k+1} \reg C_{k+1}) \\
    \le &S(\reg A | \reg{BC}) - S(\reg A | \reg{B}^\prime \reg C)
    &&\text{\cref{thm:StrongSubadd}} \\
    = &S(\reg{ABC}) - S(\reg{BC}) - S(\reg{A} \reg{B}^\prime \reg{C}) + S(\reg{B}^\prime \reg{C})
    &&\text{\cref{dfn:CondEnt}} \\
    = &\E\Brac{\Paren{\log|C| + S_{\bs{\chan N}} (\reg{BC})} - S(\reg{BC}) - \Paren{\log|C| + S_{\bs{\chan N}} (\reg{B}^\prime \reg{C})} + S(\reg{B}^\prime \reg{C})}
    &&\text{\cref{lem:EntMix}} \\
    = &\E\Brac{S_{\bs{\chan N}} (\reg{BC}) - S(\reg{BC}) - S_{\bs{\chan N}} (\reg{B}^\prime \reg{C}) + S(\reg{B}^\prime \reg{C})} \\
    \le &\E\Brac{(S_{\bs{\chan N}}(\reg B) + S_{\bs{\chan N}} (\reg C)) - (S(\reg C) - S(\reg B)) - (S_{\bs{\chan N}}(\reg C) - S_{\bs{\chan N}}(\reg B^\prime)) + (S(\reg{B}^\prime) + S(\reg{C}))}
    &&\text{\cref{thm:Subadd,thm:Triangle}} \\
    = &\E\Brac{S_{\bs{\chan N}}(\reg B) + S(\reg B) + S_{\bs{\chan N}}(\reg B^\prime) + S(\reg{B}^\prime)} \\
    \le &\E[2\log\din + 2\log\dout]
    &&\text{\cref{thm:maxEntropy}} \\
    = &2\log(\din\dout).
    &&\qedhere
\end{align*}
\end{proof}

\section*{Acknowledgments}
\addcontentsline{toc}{section}{Acknowledgments}

GR, AD, TG are supported by the EPSRC New Horizons grant EP/X018180/1. SS is supported by a Royal Commission for the Exhibition of 1851 Research Fellowship. TG and HA are supported by ERC Starting Grant 101163189 and UKRI Future Leaders Fellowship MR/X023583/1. We thank Min-Hsiu Hsieh, Tony Metger, Jon Wright, Henry Yuen, and Haimeng Zhao for helpful discussions.

\appendix

\section{Barriers to strengthening the results from \texorpdfstring{\cref{sec:E-Phi-def}}{Section 5}}
\label{app:E-Phi-companion-appendix}

\subsection{Examples where \texorpdfstring{\cref{prp:u-inv}}{Theorem 5.5} is tight} \label{app:u-inv-tight}

Recall the statement of \cref{prp:u-inv}:

\uinv*

The following example shows that the first inequality in \cref{prp:u-inv} is sometimes tight to within a constant factor, for a wide range of values of $\E[\fid{\bs\rho}{I/d}]$:

\begin{xmp}
Let $\chan L \in \super d d$ be the transpose superoperator, i.e.\ $\chan L(X) = X^\top$, and let $\bs\rho \in \dens d$ be maximally mixed on a Haar random $r$-dimensional subspace of $\C^d$. Then
\begin{equation*}
    \fid{\bs\rho}{I/d}
    = \tr(\sqrt{\bs\rho})^2 / d
    = r/d
\end{equation*}
pointwise, and
\begin{equation} \label{eq:acid-norm-transpose}
    \norm{\chan L}_\lsd
    = \frac1d \norm{(\chan L \otimes \chan I) \Psi}_1
    = \frac1d \norm{\swap d}_1
    = d,
\end{equation}
 so
\begin{equation*}
\frac{d^2 \E[\fid{\bs\rho}{I/d}] - 1}{d^2(2 - \E[\fid{\bs\rho}{I/d}]) - 1} \norm{\chan L}_\lsd
= \frac{rd - 1}{2d^2 - rd - 1} \cdot d
\ge r/2 - o(1)
\end{equation*}
as $r,d \to \infty$. On the other hand, the state $(\sqrt{\bs\rho} \otimes I) \Psi (\sqrt{\bs\rho} \otimes I)$ is maximally entangled across two $r$-dimensional systems, so
\begin{equation*}
    \norm{\chan L}_{\bs\rho}
    = \norm{(\chan K \otimes \chan I) \cdot (\sqrt{\bs\rho} \otimes I) \Psi (\sqrt{\bs\rho} \otimes I)}_1
    = r
\end{equation*}
pointwise by reasoning similar to that in \cref{eq:acid-norm-transpose}.
\end{xmp}

And the following example shows that the second inequality in \cref{prp:u-inv} is sometimes tight:

\begin{xmp} \label{xmp:comp-pos}
Let $\chan L \in \spr d$ be any completely positive superoperator. Then its Choi operator is PSD, so
\begin{align*}
    \E \norm{\chan L}_{\bs\rho} 
    &= \E \Norm{\Paren{I \otimes \sqrt{\bs\rho}^\top} \cdot (\chan L \otimes \chan I) \Psi \cdot \Paren{I \otimes \sqrt{\bs\rho}^\top}}_1
    &&\text{\cref{prp:equiv-def}} \\
    &= \E \tr\Paren{{\Paren{I \otimes \sqrt{\bs\rho}^\top} \cdot (\chan L \otimes \chan I) \Psi \cdot \Paren{I \otimes \sqrt{\bs\rho}^\top}}}
    &&\text{PSD} \\
    &= \E \tr\Paren{{\Paren{I \otimes {\bs\rho}^\top} \cdot (\chan L \otimes \chan I) \Psi}}
    &&\text{cyclic property of trace} \\
    &= \tr\Paren{{I/d \cdot (\chan L \otimes \chan I) \Psi}}
    &&\text{unitarily invariant} \\
    &= \norm{\chan L}_\lsd
    &&\text{PSD.}
\end{align*}
\end{xmp}

One may object that \cref{xmp:comp-pos} is irrelevant to our ultimate motivation of channel testing, since the difference between two distinct channels cannot be completely positive. The following example also shows that the second inequality in \cref{prp:u-inv} is sometimes tight, and arises for example when $\chan L$ is the difference between two replacement channels, i.e.\ channels of the form $X \mapsto \tr(X) \sigma$ for a fixed density matrix $\sigma$:

\begin{xmp}
    Let $\chan L(X) = \tr(X) A$ for some fixed matrix $A$. Then by \cref{prp:equiv-def}, $\norm{\chan L}_\rho = \norm{A \otimes \rho^\top}_1 = \norm{A}_1$ for all density matrices $\rho$, and in particular $\E \norm{\chan L}_{\bs\rho} = \norm{\chan L}_\lsd$ regardless of the distribution from which $\bs\rho$ is sampled.
\end{xmp}

\subsection{Examples where \texorpdfstring{\cref{prp:u-inv}}{Theorem 5.5} relies on \texorpdfstring{$\bs\rho$}{rho} being random} \label{app:u-inv-random}

The following example shows that the first inequality in \cref{prp:u-inv} may fail to hold if $\bs\rho$ is replaced with a fixed density matrix $\rho$:

\begin{xmp}
    Let $\Pi \in \C^{d \times d}$ be the projection onto an arbitrary $d/2$-dimensional subspace of $\C^d$, and define a state
    $\rho \in \dens d$, reflection $U \in \C^{d \times d}$, and superoperator $\chan L \in \super d d$ by
    \begin{align*}
        &\rho = \frac2d \Pi,
        &&U = I - 2 \Pi,
        &&\chan L(X) = X - UXU
        = 2X\Pi + 2\Pi{X} - 4\Pi{X}\Pi.
    \end{align*}
    Then
    \begin{equation*}
        \norm{\chan L}_\rho
        = \norm{(\chan L \otimes \chan I) \cdot (\sqrt\rho \otimes I) \Psi (\sqrt\rho \otimes I)}_1
        = \frac2d \norm{(\chan L \otimes \chan I) \cdot (\Pi \otimes I) \Psi (\Pi \otimes I)}_1
        = 0.
    \end{equation*}
    On the other hand, by \cref{eq:pure-state-td}
    \begin{equation*}
        \norm{\chan L}_{\lsd}
        = \Norm{\frac1d \Psi - \frac1d (U \otimes I) \Psi (U \otimes I)}_1
        = 2\sqrt{1 - \frac1{d^2} \Mag{\bra\Psi (U \otimes I) \ket\Psi}^2}
        = 2\sqrt{1 - \frac1{d^2} |\tr(U)|^2}
        = 2,
    \end{equation*}
    and $\fid \rho {I/d} = 1/2$, so 
    \begin{equation*}
        \frac{d^2 \fid\rho{I/d} - 1}{d^2(2 - \fid\rho{I/d}) - 1} \norm{\chan L}_{\lsd}
        = \frac{d^2/2 - 1}{d^2 \cdot 3/2 - 1} \cdot 2
        = 2/3 - o(1).
    \end{equation*}
\end{xmp}

And the following example shows that the second inequality in \cref{prp:u-inv} may fail to hold if $\bs\rho$ is replaced with a fixed density matrix $\rho$:

\begin{xmp}
    Let $\chan L$ be any Hermitian-preserving superoperator such that $\norm{\chan L}_\diamond > \norm{\chan L}_{\lsd}$. Let $\psi$ be a pure state such that $\norm{\chan L}_\diamond = \norm{(\chan L \otimes \chan I) \psi}_1$, and let $\rho$ be the reduced state on the first register of $\psi$. Then
        $\norm{\chan L}_\rho
        = \norm{(\chan L \otimes \chan I) \psi}_1
        = \norm{\chan L}_\diamond
        > \norm{\chan L}_{\lsd}$.
\end{xmp} 

\subsection{Concentration of \texorpdfstring{$\norm{\chan L}_{\bs\rho}$}{Lrho} does not directly follow from the triangle inequality} \label{app:S5-nontrivial}

Recall that in \cref{sec:E-Phi-def} we proved that if $\chan L \in \spr d$ is a superoperator, and $\bs\psi \in \dens{d \otimes m}$ is a Haar random state where $m \ge \omega(d)$, then $\norm{(\chan L \otimes \chan I) \bs\psi}_1 = \Theta(\norm{\chan L}_{\lsd})$ with high probability. The reader may wonder, would it not be simpler to prove this by showing that $\ket{\bs\psi}$ is close to maximally entangled across the two registers, and then applying the triangle inequality to show that $\norm{(\chan L \otimes \chan I) \bs\psi}_1$ is close to $\norm{(\chan L \otimes \chan I) \Phi}_1 = \norm{\chan L}_{\lsd}$? Below we carry out this argument and show that it only seems to imply concentration when $m \ge \omega(d^3)$, not $m \ge \omega(d)$.

We will use the case of the following lemma where either $\rho$ or $\sigma$ is maximally mixed:

\begin{lem} \label{prp:lipschitz}
For all superoperators $\chan L \in \spr d$ and density matrices $\rho, \sigma \in \dens d$,
\begin{equation*}
    \norm{\chan L}_\rho - \norm{\chan L}_\sigma
    \le 2 \norm{\chan L}_\diamond \sqrt{1-  \fid \rho \sigma}.
\end{equation*}
\end{lem}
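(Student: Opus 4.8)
The plan is to express $\norm{\chan L}_\rho$ and $\norm{\chan L}_\sigma$ as trace norms of $\chan L \otimes \chan I_d$ applied to \emph{purifications} of $\rho$ and $\sigma$, chosen via Uhlmann's theorem to be as close as possible, and then combine the reverse triangle inequality with the bound $\norm{(\chan L \otimes \chan I_d) X}_1 \le \norm{\chan L}_\diamond \norm{X}_1$ and the pure-state identity \cref{eq:pure-state-td}.

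First I would recall that $\norm{(\chan L \otimes \chan I) \psi}_1$ depends only on the reduced state of $\psi$ on its first register (this is how the $\rho$ norm is motivated before \cref{def:rho-norm}), so that $\norm{\chan L}_\rho = \norm{(\chan L \otimes \chan I_d) \psi_\rho}_1$ for \emph{any} purification $\ket{\psi_\rho} \in \C^d \otimes \C^d$ of $\rho$, and likewise $\norm{\chan L}_\sigma = \norm{(\chan L \otimes \chan I_d) \psi_\sigma}_1$ for any purification $\ket{\psi_\sigma} \in \C^d \otimes \C^d$ of $\sigma$. By Uhlmann's theorem I can choose these purifications so that $\tr(\psi_\rho \psi_\sigma) = \Mag{\braket{\psi_\rho}{\psi_\sigma}}^2 = \fid\rho\sigma$. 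Fixing such purifications, the reverse triangle inequality for the trace norm gives
\[ \norm{\chan L}_\rho - \norm{\chan L}_\sigma = \norm{(\chan L \otimes \chan I_d)\psi_\rho}_1 - \norm{(\chan L \otimes \chan I_d)\psi_\sigma}_1 \le \norm{(\chan L \otimes \chan I_d)(\psi_\rho - \psi_\sigma)}_1 \le \norm{\chan L}_\diamond \norm{\psi_\rho - \psi_\sigma}_1, \]
where the last step uses $\norm{(\chan L \otimes \chan I_d) X}_1 \le \norm{\chan L \otimes \chan I_d}_1 \norm{X}_1 = \norm{\chan L}_\diamond \norm{X}_1$. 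Since $\psi_\rho$ and $\psi_\sigma$ are pure, \cref{eq:pure-state-td} yields $\norm{\psi_\rho - \psi_\sigma}_1 = 2\sqrt{1 - \tr(\psi_\rho \psi_\sigma)} = 2\sqrt{1 - \fid\rho\sigma}$, and the claimed inequality follows.

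I do not anticipate a real obstacle here; the one point to get right is that the purifications must be the ones realizing the Uhlmann fidelity, not the canonical purifications $(\sqrt\rho \otimes I)\ket{\Psi_d}$ and $(\sqrt\sigma \otimes I)\ket{\Psi_d}$, whose overlap is $\Mag{\tr(\sqrt\rho\sqrt\sigma)}^2 \le \fid\rho\sigma$ and would therefore give a bound in the wrong direction. If one prefers to avoid quoting Uhlmann's theorem, the same conclusion can be reached by starting from the canonical purification of $\rho$ and applying the unitary on the second register of the canonical purification of $\sigma$ that maximizes the overlap, which is exactly the content of Uhlmann's theorem in this finite-dimensional setting.
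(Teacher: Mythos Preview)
Your proposal is correct and follows essentially the same route as the paper: choose purifications achieving the Uhlmann fidelity, apply the reverse triangle inequality and the diamond-norm bound, and finish with \cref{eq:pure-state-td}. Your additional remark about why the canonical purifications would yield only a weaker bound is accurate and a nice clarification.
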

\begin{proof}
By Uhlmann's theorem there exist purifications $\psi, \phi$ of $\rho, \sigma$ respectively such that $\fid \psi \phi = \fid \rho \sigma$. So by the triangle inequality and \cref{eq:pure-state-td},
\begin{align*}
    \norm{\chan L}_\rho - \norm{\chan L}_\sigma
    &= \norm{(\chan L \otimes \chan I) \psi}_1 - \norm{(\chan L \otimes \chan I) \phi}_1 \\
    &\le \norm{(\chan L \otimes \chan I) \cdot (\psi - \phi)}_1 \\
    &\le \norm{\chan L}_\diamond \norm{\psi - \phi}_1 \\
    &= 2 \norm{\chan L}_\diamond \sqrt{1 - \fid \psi \phi} \\
    &= 2 \norm{\chan L}_\diamond \sqrt{1-  \fid \rho \sigma}. \qedhere
\end{align*}
\end{proof}

So if $\bs\rho \in \dens d$ is the reduction of a Haar random state in $\C^d \otimes \C^m$, then
\begin{align*}
    \E\Mag{\norm{\chan L}_{\bs\rho} - \norm{\chan L}_{\lsd}}
    &\le 2 \norm{\chan L}_\diamond \E\sqrt{1 - \fid{\bs\rho}{I/d}}
    &&\text{\cref{prp:lipschitz}} \\
    &\le 2d\norm{\chan L}_{\lsd} \E\sqrt{1 - \fid{\bs\rho}{I/d}}
    &&\text{\cref{thm:diamond-acid-relat}} \\
    &\le 2d\norm{\chan L}_{\lsd} \sqrt{1 - \E \fid{\bs\rho}{I/d}}
    &&\text{Cauchy-Schwarz} \\
    &= 2d\norm{\chan L}_{\lsd} \sqrt{1 - \frac{dm+1}{d(d+m)}}
    &&\text{\cref{eq:lb-fid}} \\
    &= 2 \norm{\chan L}_{\lsd} \sqrt{\frac{d(d^2-1)}{d + m}},
\end{align*}
and the latter expression is $o(\norm{\chan L}_{\lsd})$ when $m \ge \omega(d^3)$.

\section{Proof of \texorpdfstring{\cref{lem:BaoYao}}{Lemma 6.2}} \label{app:BaoYao}

We use the following equality:

\begin{lem} \label{lem:tr-Haar-prod}
    For all matrices $X, Y \in \C^{d \times d}$,
    \begin{equation*}
        \E[\tr(\bs\psi X \bs\psi Y)] = \frac1{d(d+1)} (\tr(X) \tr(Y) + \tr(XY)),
    \end{equation*}
    where $\bs\psi \in \dens d$ is Haar random.
\end{lem}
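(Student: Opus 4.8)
The plan is to reduce everything to the second-moment formula \cref{lem:sym} via the swap trick. Recall that for all $A, B \in \C^{d \times d}$ one has $\tr(AB) = \tr\Paren{(A \otimes B)\swap d}$, which follows by expanding $\swap d = \sum_{i,j}\ketbra{ij}{ji}$ and computing directly. Applying this with $A = \bs\psi X$ and $B = \bs\psi Y$, and using $(\bs\psi X) \otimes (\bs\psi Y) = (\bs\psi \otimes \bs\psi)(X \otimes Y)$, gives
\begin{equation*}
    \tr(\bs\psi X \bs\psi Y) = \tr\Paren{(\bs\psi \otimes \bs\psi)(X \otimes Y)\swap d}.
\end{equation*}

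Next I would take expectations and substitute \cref{lem:sym}, namely $\E[\bs\psi^{\otimes 2}] = \frac1{d(d+1)}(I + \swap d)$, to obtain
\begin{equation*}
    \E[\tr(\bs\psi X \bs\psi Y)]
    = \frac1{d(d+1)}\Paren{\tr\Paren{(X \otimes Y)\swap d} + \tr\Paren{\swap d (X \otimes Y)\swap d}}.
\end{equation*}
The first term is $\tr(XY)$ by the swap trick again. For the second term, use the cyclic property of trace together with $\swap d^2 = I$ (equivalently, $\swap d (X \otimes Y)\swap d = Y \otimes X$) to get $\tr(\swap d (X \otimes Y)\swap d) = \tr(X \otimes Y) = \tr(X)\tr(Y)$. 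Combining the two terms yields the claimed identity.

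There is essentially no obstacle here: the only point requiring a line of justification is the swap trick, which is elementary, and the rest is bookkeeping with $\swap d^2 = I$ and cyclicity of the trace. (Alternatively, one could avoid the swap trick entirely and prove the lemma by writing $\bs\psi = \kb{\bs\phi}$ for Haar random $\ket{\bs\phi}$, expanding $\tr(\bs\psi X \bs\psi Y) = \bra{\bs\phi} X \ket{\bs\phi}\bra{\bs\phi} Y \ket{\bs\phi}$, and integrating using \cref{lem:sym} applied to $\E[\kb{\bs\phi}^{\otimes 2}]$ paired against $X \otimes Y$ — but the swap-trick route is cleaner.)
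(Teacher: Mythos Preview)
Your proof is correct and essentially the same as the paper's: both reduce to \cref{lem:sym} and end up computing $\tr\Paren{(X \otimes Y)(I + \swap d)}$. The only cosmetic difference is the first step---the paper uses the rank-1 factorization $\tr(\bs\psi X \bs\psi Y) = \tr(X\bs\psi)\tr(Y\bs\psi) = \tr\Paren{(X \otimes Y)\bs\psi^{\otimes 2}}$ (which is precisely the alternative you describe in your parenthetical), whereas you use the swap trick to reach $\tr\Paren{\bs\psi^{\otimes 2}(X \otimes Y)\swap d}$ and then cycle away the extra $\swap d$.
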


Gu~\cite{Gu13} proved a significant generalization of \cref{lem:tr-Haar-prod} using Weingarten calculus. For completeness and simplicity, below we present a self-contained proof of \cref{lem:tr-Haar-prod} (essentially due to Montanaro and de Wolf~\cite[proof of Proposition 21]{MdW13}) using only \cref{lem:sym}:

\begin{proof}
    We have
    \begin{align*}
        \E[\tr(\bs\psi X \bs\psi Y)]
        &= \E[\bra{\bs\psi} X \kb{\bs\psi} Y \ket{\bs\psi}] \\
        &= \E[\tr(X \bs\psi) \tr(Y \bs\psi)] \\
        &= \tr\Paren{(X \otimes Y) \E[\bs\psi^{\otimes 2}]} \\
        &= \frac1{d(d+1)} \tr\Paren{(X \otimes Y)(I + \swap d)},
    \end{align*}
    where the last equality is by \cref{lem:sym}. Clearly
    \begin{equation*}
        \tr(X \otimes Y) = \tr(X) \tr(Y),
    \end{equation*}
    and furthermore
    \begin{equation*}
        \tr((X \otimes Y) \swap d)
        = \sum_{j,k=1}^d \bra{jk} (X \otimes Y) \swap d \ket{jk}
        = \sum_{j,k} \bra{j} X \kb{k} Y \ket{j}
        = \sum_j \bra{j} X Y \ket{j} = \tr(XY).
    \end{equation*}
    The result follows by combining the above three equations.
\end{proof}

\cref{lem:BaoYao} is the case of the following where $\chan L = \chan K$ is the difference between two channels:

\begin{lem} \label{lem:Choi_l2_dist}
    For all Hermitian-preserving superoperators $\chan L, \chan K \in \spr d$,
    \begin{equation*}
        \frac{d+1}d \E[\tr(\chan L(\bs\psi) \chan K(\bs\psi))]
        = \tr(\choi L \choi K) + \tr(\chan L(I/d) \chan K(I/d)),
    \end{equation*}
    where $\bs\psi \in \dens d$ is Haar random.
\end{lem}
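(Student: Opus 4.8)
The plan is to reduce to the case where $\chan L$ and $\chan K$ are each given by conjugation by a single matrix, and then to invoke \cref{lem:tr-Haar-prod,lem:tr-choi-prod}. The key observation that enables the reduction is that both sides of the claimed identity are \emph{bilinear} in the pair $(\chan L, \chan K)$: the maps $\chan L \mapsto \choi L$ and $\chan L \mapsto \chan L(I/d)$ are linear, and for each fixed $\bs\psi$ the quantity $\tr(\chan L(\bs\psi) \chan K(\bs\psi))$ is linear in $\chan L$ for fixed $\chan K$ and vice versa, a property preserved by taking expectations. So, writing $\chan L(X) = \sum_j \pm A_j X \adj{A_j}$ and $\chan K(X) = \sum_k \pm B_k X \adj{B_k}$ via \cref{eq:kraus} and expanding both sides as signed sums over pairs $(j,k)$, it suffices to prove the identity when $\chan L(X) = A X \adj A$ and $\chan K(X) = B X \adj B$ for matrices $A, B$.

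In that case I would compute directly. By the cyclic property of trace, $\tr(\chan L(\bs\psi) \chan K(\bs\psi)) = \tr(A \bs\psi \adj A B \bs\psi \adj B) = \tr\Paren{\bs\psi \cdot (\adj A B) \cdot \bs\psi \cdot (\adj B A)}$, so \cref{lem:tr-Haar-prod} applied with $X = \adj A B$ and $Y = \adj B A$ gives
\begin{equation*}
    \E[\tr(\chan L(\bs\psi) \chan K(\bs\psi))] = \frac1{d(d+1)}\Paren{\tr(\adj A B)\,\tr(\adj B A) + \tr(\adj A B \adj B A)}.
\end{equation*}
Since $\tr(\adj B A) = \overline{\tr(\adj A B)}$, the first term equals $\Mag{\tr(\adj A B)}^2$, which by \cref{lem:tr-choi-prod} is exactly $d^2 \tr(\choi L \choi K)$. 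For the second term, cyclicity gives $\tr(\adj A B \adj B A) = \tr\Paren{(A \adj A)(B \adj B)} = d^2\, \tr\Paren{\chan L(I/d)\,\chan K(I/d)}$. Multiplying the displayed equation through by $(d+1)/d$ then yields the stated identity.

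I do not expect a genuine obstacle here: the computation is short once \cref{lem:tr-Haar-prod,lem:tr-choi-prod} are available. The only step requiring mild care is the bilinearity reduction at the start, which is needed to legitimately handle the $\pm$ signs appearing in the Kraus-type decomposition \cref{eq:kraus} of Hermitian-preserving (but not completely positive) superoperators; this is routine but should be spelled out explicitly so that the single-matrix case really does imply the general one.
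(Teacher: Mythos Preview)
The proposal is correct and follows essentially the same approach as the paper: reduce by bilinearity and \cref{eq:kraus} to the rank-one case $\chan L(X)=AX\adj A$, $\chan K(X)=BX\adj B$, then apply \cref{lem:tr-Haar-prod} followed by \cref{lem:tr-choi-prod}. The only difference is cosmetic---you spell out the bilinearity reduction a bit more carefully than the paper's one-line ``by linearity'' remark.
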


\begin{proof}
    By linearity and \cref{eq:kraus}, we may assume without loss of generality that $\chan L$ and $\chan K$ are defined by $\chan L(X) = A X \adj A$ and $\chan K(X) = B X \adj B$ respectively for some matrices $A$ and $B$. Then
    \begin{align*}
        \E[\tr(\chan L(\bs\psi) \chan K(\bs\psi))]
        &= \E\Brac{\tr\Paren{A \bs\psi \adj A B \bs\psi \adj B}} \\
        &= \E\Brac{\tr\Paren{\bs\psi \adj A B \bs\psi \adj B A}} \\
        &= \frac1{d(d+1)} \Paren{\tr\Paren{\adj A B} \tr\Paren{\adj B A} + \tr\Paren{\adj A B \adj B A}}
        &&\text{\cref{lem:tr-Haar-prod}} \\
        &= \frac1{d(d+1)} \Paren{\Mag{\tr\Paren{\adj A B}}^2 + \tr\Paren{A \adj A B \adj B}} \\
        &= \frac1{d(d+1)} \Paren{d^2 \tr(\choi L \choi K) + \tr(\chan L(I) \chan K(I))}
        &&\text{\cref{lem:tr-choi-prod}} \\
        &= \frac{d}{d+1} \Paren{\tr(\choi L \choi K) + \tr(\chan L(I/d) \chan K(I/d))}.
        &&\qedhere
    \end{align*}
\end{proof}

\printbibliography[heading=bibintoc]
\end{document}